\newcommand{\erfc}{\mathsf{erfc}}
\newcommand{\mind}{\mathsf{Indsample}}
\newcommand{\ber}{\mathsf{Bern}}
\newcommand{\pois}{\mathsf{Poisample}}
\newcommand{\BR}{\mathbf{R}} 
\newcommand{\inter}{\mathrm{inter}}
\newcommand{\umemp}{U_{M, \mathrm{emp}}}
\newcommand{\id}{\mathsf{id}}
\newcommand{\itt}{\mathsf{in}}
\newcommand{\Suppm}{S_{\actual}}
\newcommand{\Pro}{\mathop{\Pr}}
\newcommand{\dwa}{\mathrm{d}_{\mathrm{W},1}}
\newcommand{\rel}{\mathrm{Rel}}
\newcommand{\sac}{S_{\actual}}
\newtheorem*{rep@theorem}{\rep@title}
\newcommand{\newreptheorem}[2]{
\newenvironment{rep#1}[1]{
 \def\rep@title{#2 \ref{##1}}
 \begin{rep@theorem}\itshape}
 {\end{rep@theorem}}}
\theoremstyle{plain}
\def\centerarc[#1](#2)(#3:#4:#5)
    \def\drawdash[#1](#2)(#3:#4:#5)
     \newcommand{\Ay}{20 + 5*sin(30)}
          \newcommand{\Ayy}{20 + 5*cos(30)}
          \newcommand{\Ayz}{20 - 5*sin(30)}
          \newcommand{\Ayw}{20 - 5*cos(30)}
     \newcommand{\Cy}{20 - 5*sin(30)}
\newcommand{\newangle}{(asin(5/3 * sin(30)))}
\newcommand{\out}{\mathsf{outer}}
\newcommand{\ovp}{\overline{p}}
\newenvironment{proofof}[1]{\par
  \pushQED{\qed}%
  \normalfont \topsep6\p@\@plus6\p@\relax
  \trivlist
  \item[\hskip\labelsep
\emph{    Proof of #1\@addpunct{.}}]\ignorespaces
}{%
  \popQED\endtrivlist\@endpefalse
}
\newcommand{\ignore}[1]{}
\def\colorful{0}
\newcommand{\blue}[1]{{{\color{blue}#1}}}
\newcommand{\red}[1]{{\color{red} {#1}}}
\newcommand{\gray}[1]{{\color{gray}{#1}}}
\newcommand{\lnote}[1]{\footnote{{\bf \color{blue}Li-Yang}: {#1}}}
\newcommand{\rnote}[1]{\footnote{{\bf \color{red}Rocco}: {#1}}}
\newcommand{\blue}[1]{{{#1}}}
\newcommand{\red}[1]{{{#1}}}
\newcommand{\gray}[1]{{{#1}}}
\newcommand{\lnote}[1]{}
\newcommand{\rnote}[1]{}
\newtheorem*{theorem*}{Theorem}
\newtheorem*{noclaim*}{Claim}
\newcommand{\Weight}{\mathrm{W}}
\newcommand{\slab}{\mathrm{slab}}
\newcommand{\spann}{\mathrm{span}}
\newcommand{\rsm}{r_{\mathsf{small}}}
\newcommand{\Nn}{N(0,1)^n}
\newcommand{\median}{\mathrm{median}} 
\newcommand{\region}{\mathrm{Region}}
\newcommand{\ideal}{\mathrm{ideal}} 
\newcommand{\actual}{\mathrm{actual}}
\newcommand{\halfspace}{s}
\newcommand{\Ind}{\mathds{1}}
\renewcommand{\N}{\mathds{N}} 
\renewcommand{\R}{\mathds{R}}
\newcommand{\uniform}{\mathrm{U}_{\mathbb{S}^{n-1}}}
\newcommand{\vol}{\mathrm{vol}}
\begin{document}

\title{Kruskal-Katona for convex sets, with applications}

\author{Anindya De\thanks{{\tt anindyad@cis.upenn.edu}. Supported by NSF grant CCF-1926872}\\
University of Pennsylvania\\
\and Rocco A.\ Servedio\thanks{{\tt rocco@cs.columbia.edu}. Supported by NSF grants CCF-1814873, IIS-1838154, CCF-1563155, and by the Simons Collaboration on Algorithms and Geometry.}\\
Columbia University\\
}

\maketitle

\setcounter{page}{0}

\begin{abstract}
The well-known Kruskal-Katona theorem in combinatorics says that (under mild conditions) every monotone Boolean function $f: \zo^n \to \zo$ has a nontrivial ``density increment.'' This means that the fraction of inputs of Hamming weight $k+1$ for which $f=1$ is significantly larger than the fraction of inputs  of Hamming weight $k$ for which $f=1.$

We prove an analogous statement for convex sets. Informally, our main result says that (under mild conditions) every convex set $K \subset \R^n$  has a nontrivial density increment. This means that the fraction of the radius-$r$ sphere that lies within $K$ is significantly larger than the  fraction of the radius-$r'$ sphere that lies within $K$, for $r'$ suitably larger than $r$.  For centrally symmetric convex sets we show that our density increment result is essentially optimal.

As a consequence of our Kruskal-Katona type theorem, we obtain the first efficient weak learning algorithm for convex sets under the Gaussian distribution.  We show that any convex set can be weak learned to advantage $\Omega(1/n)$ in $\poly(n)$ time under any Gaussian distribution and that any centrally symmetric convex set can be weak learned to advantage $\Omega(1/\sqrt{n})$ in $\poly(n)$ time.  We also give an information-theoretic lower bound showing that the latter advantage is essentially optimal for $\poly(n)$ time weak learning algorithms.  As another consequence of our Kruskal-Katona theorem, we give the first nontrivial Gaussian noise stability bounds for convex sets at high noise rates. Our results extend the known correspondence between monotone Boolean functions over $\zo^n$ and convex bodies in Gaussian space.

\end{abstract}

\thispagestyle{empty}

\newpage


\section{Introduction} \label{sec:intro}

Several results in Boolean function analysis and computational learning theory suggest an analogy between convex sets in Gaussian space and monotone Boolean functions\footnote{Recall that a function $f: \bn \to \bits$ is monotone if $f(x) \leq f(y)$ whenever $x_i \leq y_i$ for all $i \in [n].$} with respect to the uniform distribution over the hypercube.  As an example, Bshouty and Tamon~\cite{BshoutyTamon:96} gave an algorithm that learns monotone Boolean functions over the $n$-dimensional hypercube to any constant accuracy in a running time of $n^{O(\sqrt{n}
)}$. Much later, Klivans, O'Donnell and Servedio~\cite{KOS:08} gave an algorithm that learns convex sets over $n$-dimensional Gaussian space with the same running time. While the underlying technical tools in the proofs of correctness are different, the algorithms in \cite{KOS:08} and \cite{BshoutyTamon:96} are essentially the same:  \cite{BshoutyTamon:96} (respectively \cite{KOS:08}) show that the Fourier spectrum (respectively Hermite spectrum\footnote{The Hermite polynomials form an orthonormal basis for the space of square-integrable real-valued functions over Gaussian space; the Hermite spectrum of a function over Gaussian space is analogous to the familiar Fourier spectrum of a function over the Boolean hypercube.  See \Cref{sec:hermite} for more details.}) of monotone functions (respectively convex sets) is concentrated in the first $O(\sqrt{n})$ levels. Other structural  analogies between convex sets and monotone functions are known as well; for example, an old result of Harris~\cite{harris1960lower} and Kleitman~\cite{kleitman1966families} shows that monotone Boolean functions over $\bits^n$ are positively correlated. 
The famous Gaussian correlation conjecture (now a theorem due to Royen~\cite{royen2014simple})
asserts the same for centrally symmetric convex sets under the Gaussian distribution. We note that while the assertions are analogous, the proof techniques are very different, and indeed the Gaussian correlation conjecture was open for more than  half  a century.  

Despite these analogies between convex sets and monotone functions, there are a number of prominent gaps in our structural and algorithmic understanding of convex sets when compared against monotone functions. We list several examples:
 
\begin{enumerate}
\item Nearly matching $\poly(n)$ upper and lower bounds are known for the query complexity of testing monotone functions over the $n$-dimensional Boolean hypercube~\cite{FLNRRS, KMS15, chakrabarty2016n, belovs2016polynomial, CDST15, CWX17}. However, the problem of convexity testing over the Gaussian space is essentially wide open, with the best known upper bound (in~\cite{chen2017sample}) being $n^{O(\sqrt{n})}$ queries and no nontrivial lower bounds being known.

\item Kearns, Li and Valiant~\cite{KLV:94} showed that the class of all monotone Boolean functions over $\bn$ is \emph{weakly learnable} under the uniform distribution in polynomial time, meaning that the output hypothesis $h$ satisfies $\Pr_{\bx \in \bn} [h(\bx) = f(\bx) ] \ge 1/2 + 1/\poly(n)$, where $f: \bits^n \to \bits$ is the target monotone function. \cite{KLV:94} achieved an advantage of $\Omega(1/n)$ over $1/2$; this advantage was improved by Blum, Burch and Langford~\cite{BBL:98} to $\Omega(n^{-1/2})$ and subsequently by O'Donnell and Wimmer~\cite{OWimmer:09}
to $\Omega(n^{-1/2}\log n )$ which is optimal up to constant factors for $\poly(n)$-time learning algorithms. On the other hand, prior to the current work, nothing non-trivial was known about weak learning convex sets under the Gaussian measure. 

\item Closely related to Item~2 is the folklore fact (see \Cref{app:hermite-weight}) that for every monotone function $f: \bn \rightarrow \bits$, the Fourier weight (sum of squared Fourier coefficients) at levels $0$ and $1$  is at least $\Omega({\frac {\log^2 n} n})$ .\ignore{\gray{In other words, for every monotone function $f$, there is a linear function $\ell(\cdot)$ of unit norm such that $\mathbf{E}_{\bx}[\ell(\bx) \cdot f(\bx)] = \Omega(n^{-1/2})$.}\rnote{Is it saying for every monotone $f: \bits \to \bits$ there is a linear function $\ell(x)=w\cdot x + w_0$ with $\sum_i w_i^2 = 1$ such that $\mathbf{E}_{\bx}[\ell(\bx) \cdot f(\bx)] = \Omega(n^{-1/2})$? Is this exactly right - what about the log factor? Could we just skip this sentence since the rest of the paragraph is all about Fourier/Hermite weight?}}  On the other hand, prior to this work, it was  consistent with the state of our knowledge that there is a convex set whose indicator function $f: \mathbb{R}^n \rightarrow \bits$ has zero Hermite weight (sum of squared Hermite coefficients) at levels $0,1,\dots,o(\sqrt{n})$.\ignore{that the the first $o(\sqrt{n} )$ levels of the Hermite spectrum of $f$ are all empty.}
\end{enumerate}
\paragraph*{Main contributions of this work.}
The main technical contribution of this work is extending a fundamental result on monotone Boolean functions, called the Kruskal-Katona theorem~\cite{Kruskal:63, katona1968theorem}, to convex sets over Gaussian space. We use this result to address items~2 and 3 above. More precisely, we give a weak learning algorithm which achieves an accuracy of $1/2 + \Omega(n^{-1})$  for arbitrary convex sets, and we show that the Hermite weight at levels $0$, $1$ and $2$ of any convex set must be at least $\Omega(n^{-2})$. For centrally symmetric convex sets, we give a weak learner with accuracy $1/2 + \Omega(n^{-1/2})$ and show that the Hermite weight at levels 0 and 2 must be at least $\Omega(1/n).$

For centrally symmetric convex sets, we show that both our weak learning result and our bound on the Hermite weight at low levels are optimal up to $\polylog(n)$ factors; it follows that the corresponding results for general convex sets are also optimal up to a quadratic factor.  

We now explain our analogue of the Kruskal-Katona theorem in more detail.


\ignore{
Bshouty/Tamon: learning monotone functions via Fourier concentration.  KOS: analogous result for learning convex sets under Gaussian distribution, via very similar techniques.  (Explain these results.)  But, the full extent of this analogy is not understood.  What about low degree Fourier weight?  While convex sets have been intensively studied as geometric objects, we don't have such a good algorithmic understanding. For example, much is known about weak learning monotone functions (forward pointer to our discussion of this), but prior to this work nothing was known about weak learning convex sets.  As another example, we don't know much about testing convexity, while testing monotonicity is very well understood.}

\ignore{Main contribution of this work:  we extend a fundamental result about monotone Boolean functions, the Kruskal-Katona theorem, to the setting of convex sets.  (Reiterate density increment stuff from abstract, at same level of detail.)  As an application, we leverage this understanding to get weak learning results and results about Hermite concentration of convex sets at low levels.  For centrally symmetric convex sets, we show that our density increment, weak learning, and Hermite concentration results are essentially optimal. \rnote{This can be 2 or 3 paragraphs I think.}}

\subsection{Background:  the Kruskal-Katona theorem}
We begin by recalling the Kruskal-Katona theorem over the Boolean hypercube. Informally, the Kruskal-Katona theorem is a \emph{density increment}  result --- it asserts that the density of the $1$-set of a monotone function must increases non-trivially over the successive slices of the hypercube. More precisely, let $f: \bn \rightarrow \bits$ be a monotone function and for any $0 \le k \le n$, let $\binom{[n]}{k}$ denote the $k^{th}$ slice of the hypercube (the $n \choose k$-size set of strings that have exactly $k$ ones). Define $\mu_k(f)$ as 
$
\mu_k(f) := \Pr_{\bx \in \binom{[n]}{k}}[f(\bx)=1],
$
i.e.~the density of $f$ restricted to the $k$-th slice of the hypercube.
The Kruskal-Katona theorem states that for every monotone $f$ and every $k \in [0,n-1]$, the density $\mu_k(f)$ satisfies \begin{equation}~\label{eq:KK-cube}
\mu_{k+1}(f)  \ge \mu_k(f)^{1-\frac{1}{n-k}} \ge \mu_k(f) + \frac{\mu_k(f) \ln (1/\mu_k(f))}{n-k}.
\end{equation}
As an example, it is instructive to consider the following specific parameter settings: Suppose $k \in [n/2 - \sqrt{n}, n/2 + \sqrt{n}]$ and that $1/3 \le \mu_k(f)  \le 2/3$ in this range of $k$. Then the theorem says that 
$
\mu_{k+1}(f) \ge \mu_k(f)  + \Theta(1/n). 
$
Consequently, for $k_{\mathsf{up}} = n/2 + \sqrt{n}$ and $k_{\mathsf{down}} = n/2 -\sqrt{n}$, 
$
\mu_{k_{\mathsf{up}}}(f) \ge \mu_{k_{\mathsf{down}}}(f) + \Theta(n^{-1/2}). 
$

We mention here that the original result of Kruskal~\cite{Kruskal:63} and Katona~\cite{katona1968theorem}
is stated in terms of the sizes of the  upper and lower shadows of sets $A \subseteq \binom{[n]}{k}$, and that examples which are exactly extremal for the precise bounds given in those papers can be obtained by considering the ordering of the elements of $\binom{[n]}{k}$ in the so-called \emph{colexicographic order}. While their result is tight, it is often not as convenient to work with as the above formulation. The above version is due to  Bollob{\'{a}}s and Thomason~\cite{BT87} and is the form most often used in computer science applications (for example, the weak learning results for monotone functions given in \cite{BBL:98} and \cite{OWimmer:09} used this version). For completeness, we mention that prior to \cite{BT87}, Lov{\'a}sz~\cite{lovasz2007combinatorial} also gave a simplified version of the Kruskal-Katona theorem, but in this paper, we will refer to \Cref{eq:KK-cube} as the Kruskal-Katona theorem. 

\subsection{Our main structural result:  a Kruskal-Katona type theorem for convex sets}
We now describe our main structural result for convex sets, which is closely analogous to the Kruskal-Katona theorem. In order to do this, we first need to identify an analogue of hypercube slices in the setting of Gaussian space. The most obvious choice is to consider spherical shells; namely, for $r>0$, define the radius-$r$ spherical shell to be $\mathbb{S}^{n-1}_r := \{x \in \mathbb{R}^n: \Vert x \Vert_2 =r\}$. Note that, analogous to slices of the hypercube, spherical shells are the level sets of the Gaussian distribution. 

Given a convex set $K \subseteq \mathbb{R}^n$, we define 
the \emph{shell-density function} $\alpha_
K: (0, \infty) \rightarrow [0,1]$ to be 
\begin{equation}~\label{eq:shell-density-def}
\alpha_K(r) := \Prx_{\bx \sim \mathbb{S}^{n-1}_r} [\bx \in K]. 
\end{equation}
Having defined $\alpha_K(\cdot)$, the most obvious way to generalize Kruskal-Katona to Gaussian space would be to conjecture that for $K$ a convex set $\alpha_K(\cdot)$ is a non-increasing function, and further, that as long as $\alpha_K(r)$ is bounded away from $0$ and $1$, it exhibits a non-trivial rate of decay as $r$ increases (similar to \eqref{eq:KK-cube}). However, a moment's thought shows that this conjecture cannot be true because of the following examples: 
\begin{enumerate}
\item Let $K \subseteq \mathbb{R}^n$ be a convex body with positive Gaussian volume whose closest point to the origin is at some distance $t>0$.  Then the shell density function $\alpha_K(r)$ is zero for $0 < r \leq t$ but subsequently becomes positive.  Thus for $\alpha_K(\cdot)$ to be non-increasing, we require $0^n \in K$. 

In fact, it is easy to see that if $0^n \in K$ and $K$ is convex then $\alpha_K(\cdot)$ is in fact non-increasing (since by convexity the intersection of $K$ with any ray extending from the origin is a line segment starting at the origin). However, this does not mean that there is an actual decay in the value of $\alpha_K$, as witnessed by the next example: 

\item Let $K$ be an origin-centered halfspace, i.e.~$K = \{x : w \cdot x \ge 0\}$ for some nonzero $w \in \R^n$.  $K$ is convex and $0^n \in K$, but $\alpha_K(r) =1/2$ for all $r >0$, and hence $\alpha_K(r)$ exhibits no decay as $r$ increases. 
\end{enumerate}

The second example above shows that in order for $\alpha_K(\cdot)$ to have decay, it is not enough for the origin to belong to $K$; rather, what is needed is to have $B(0^n,s) \subseteq K$ for  some $s>0$, where $B(0^n,s)$ is the ball of radius $s$ centered at the origin. Our Kruskal-Katona analogue, stated below in simplified form, shows that in fact the above examples are essentially the only obstructions to getting a decay for $\alpha_K(r)$. 

In order to avoid a proliferation of parameters at this early stage, for now we only state a corollary of our more general result, \Cref{lem:key-general}  (the more general result does not put any restriction on the value of $\alpha_K(r)$):

\begin{theorem} [Kruskal-Katona for general convex sets, informal statement]\label{thm:informal-convex-density-increment}
Let $K \subseteq \R^n$ be a convex set which contains the origin-centered ball of radius $\rsm$, i.e.~$B(0^n,\rsm) \subseteq K.$   Let $r> \rsm$ be such that $0.1 \le  \alpha_K(r) \le 0.9$ and let $0 \le \kappa \le 1/10$. Then 
\[
\alpha_K((1-\kappa) r) \geq \alpha_K(r) + \Theta\bigg(\kappa \cdot \frac{ \rsm}{r} \bigg).
\]
\end{theorem}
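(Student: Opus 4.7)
The plan is to produce the density increment via a geometric ``cone expansion'' argument on the sphere. The key lemma is: if $r\bu \in K$ for some $\bu \in \mathbb{S}^{n-1}$, then every $\bu' \in \mathbb{S}^{n-1}$ within geodesic distance $\epsilon := c\kappa\rsm/r$ of $\bu$ (for a suitable absolute constant $c>0$) satisfies $(1-\kappa)r\bu' \in K$. Iterating this over all such $\bu$ produces an angular ``halo'' expansion of the good set $G := \{\bu \in \mathbb{S}^{n-1}: r\bu \in K\}$, and a standard spherical isoperimetric bound applied to this expansion yields the desired increment.

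The halo estimate reduces to a purely two-dimensional calculation. Since $B(0^n,\rsm)\cup\{r\bu\}\subseteq K$ and $K$ is convex, $K$ contains the ``ice-cream cone'' $C := \mathrm{conv}(B(0^n,\rsm)\cup\{r\bu\})$. Restricting to the two-plane spanned by $\bu$ and $\bu'$, $C$ becomes the planar convex hull of a disk of radius $\rsm$ and the external point $r\bu$, bounded on the side of $\bu'$ by the tangent line from $r\bu$ to this disk. A direct parameterization of the tangent line shows that the ray $\{s\bu' : s>0\}$ exits $C$ at
\[
s \;=\; \frac{r}{\cos\theta + \sin\theta\cdot\sqrt{r^2-\rsm^2}/\rsm}, \qquad \text{where } \theta = \angle(\bu,\bu').
\]
A first-order expansion yields $s \ge r\bigl(1 - \theta\sqrt{r^2-\rsm^2}/\rsm\bigr)$, so demanding $s \ge (1-\kappa)r$ forces $\theta \le \kappa\rsm/\sqrt{r^2-\rsm^2}$, which in the stated regime $r > \rsm$ is at least $c\kappa\rsm/r$ for an absolute constant $c>0$.

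To conclude, set $G := \{\bu \in \mathbb{S}^{n-1} : r\bu \in K\}$, so $\sigma(G) = \alpha_K(r) \in [0.1,0.9]$, where $\sigma$ is the uniform measure on $\mathbb{S}^{n-1}$. The halo lemma gives that the geodesic $\epsilon$-neighborhood $G_\epsilon$ (with $\epsilon = \Theta(\kappa\rsm/r)$) is contained in $\{\bu : (1-\kappa)r\bu \in K\}$. By L\'evy's spherical isoperimetric inequality, since $\sigma(G)$ is bounded away from $0$ and $1$ by absolute constants, $\sigma(G_\epsilon) - \sigma(G) = \Omega(\epsilon)$, and therefore
\[
\alpha_K((1-\kappa)r) \;\ge\; \sigma(G_\epsilon) \;\ge\; \alpha_K(r) + \Omega(\kappa\rsm/r),
\]
as desired. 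The main technical obstacle is the tangent-line calculation in the cone lemma: obtaining the halo width with the correct linear dependence on $\rsm/r$ is the key geometric insight and uses both convexity of $K$ and the hypothesis $B(0^n,\rsm)\subseteq K$ in an essential way. The spherical isoperimetric step is black-box.
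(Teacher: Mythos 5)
Your proof is correct, and it takes a genuinely different route from the paper. The paper's argument (for the more general \Cref{lem:key-general}, of which this statement is a corollary) reduces the $n$-dimensional problem to two-dimensional cross-sections: it writes $\alpha_K(r)$ as an average of densities of $K$ in random two-dimensional subspaces $\bV$, proves a variant of a lemma of Raz (\Cref{claim:raz}) showing that with probability $\Omega(1)$ a random $\bV$ has cross-sectional density bounded away from $0$ and $1$, and then establishes a two-dimensional increment for each such good $\bV$ using essentially the same tangent-to-the-inscribed-disk calculation that you use in your cone lemma (compare \Cref{clm:angle-include2}). You replace the averaging-plus-Raz machinery with a single geodesic halo expansion of the good set $G=\{\bu: r\bu\in K\}$ followed by L\'evy's spherical isoperimetric inequality, which is conceptually cleaner and bypasses the Raz-type lemma. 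The geometric kernel (the ice-cream cone) is the same; the amplification mechanism is different. Two small points worth tightening: you should note that the ray exits the planar cone through the tangent segment rather than the disk arc (this requires $\theta<\arccos(\rsm/r)$, which holds for $\theta\le\kappa\rsm/r$ when $\rsm/r<1-\kappa$, and the complementary case $\rsm/r\ge1-\kappa$ makes the theorem trivial since then $(1-\kappa)r\le\rsm$), and that convexity of the cone together with $0\in$ cone lets you pass from ``the ray exits at distance $s\ge(1-\kappa)r$'' to ``$(1-\kappa)r\bu'$ is in the cone.''

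More significantly, your argument actually proves \emph{more} than you claim. For a spherical cap $C\subset\mathbb{S}^{n-1}$ with $\sigma(C)\in[0.1,0.9]$, the marginal density of $\langle\bu,e_1\rangle$ at the cap boundary is $\Theta(\sqrt{n})$, so for $\epsilon\lesssim n^{-1/2}$ one has $\sigma(C_\epsilon)-\sigma(C)=\Theta(\epsilon\sqrt{n})$, not merely $\Omega(\epsilon)$. L\'evy's inequality therefore yields
\[
\alpha_K((1-\kappa)r)-\alpha_K(r) \;\ge\; \Omega\bigl(\min\{\kappa\rsm\sqrt{n}/r,\;1\}\bigr),
\]
a $\sqrt{n}$-factor strengthening of the paper's bound. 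This is tight: the halfspace $K=\{x:x_1\ge-\rsm\}$, which the paper uses to illustrate the necessity of the $\rsm/r$ dependence, has increment exactly $\Theta(\kappa\rsm\sqrt{n}/r)$ when $\alpha_K(r)\in[0.1,0.9]$ (which forces $\rsm/r = O(n^{-1/2})$). Feeding this sharper form into the paper's weak-learning argument (\Cref{lem:sphere-learning-asymmetric}, which takes $\kappa=\Theta(n^{-1/2})$ and $\rsm/r=\Theta(n^{-1/2})$) would upgrade the advantage for general convex sets from $\Omega(n^{-1})$ to $\Omega(n^{-1/2})$, matching the centrally-symmetric case and closing the quadratic gap the paper explicitly notes. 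In other words: do not black-box the isoperimetric step --- make it precise, because as written you are leaving a $\sqrt{n}$ factor on the table.
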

A convex set is \emph{centrally symmetric} if $x \in K$ iff $-x \in K$. For centrally symmetric convex sets we obtain a density increment result without requiring an origin-centered ball to be contained in $K$. As with Theorem~\ref{thm:informal-convex-density-increment},  below we give a special case of our main density increment theorem for centrally symmetric sets (see \Cref{lem:key} for the more general result):

\begin{theorem} [Kruskal-Katona for centrally symmetric convex sets, informal statement]\label{thm:informal-centrally-symmetric-density-increment}
Let $K \subseteq \R^n$ be a centrally symmetric convex set. Let $r>0$ be such that  $0.1 \le  \alpha_K(r) \le 0.9$ and let $0 \le \kappa \le 1/10$. Then
\[
\alpha_K((1-\kappa) r) \geq \alpha_K(r) + \Theta(\kappa).
\]
\end{theorem}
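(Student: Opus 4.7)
My plan is to reduce the centrally symmetric case to the general (non-symmetric) Kruskal-Katona theorem \Cref{lem:key-general} via a dichotomy on the inradius of $K$. Let $s^*$ denote the supremum of $s$ with $B(0^n, s) \subseteq K$. Because $\alpha_K(r) \geq 0.1$ forces $K$ to have positive Lebesgue measure, and because a centrally symmetric convex body with nonempty interior must contain an origin-centered ball (the open ball around any interior point, together with its reflection, yields one around the origin by convexity), we have $s^* > 0$.

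\textbf{Thick case: $s^* \geq c\, r$.} Here we invoke the general theorem directly with $\rsm = s^*$, obtaining an increment of $\Omega(\kappa \cdot s^*/r) = \Omega(\kappa)$.

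\textbf{Thin case: $s^* < c\, r$.} By definition of $s^*$ there is a unit vector $v$ with support-function value $h_K(v) = s^*$; central symmetry then gives $K \subseteq \{x : |v \cdot x| \leq s^*\}$, so $K$ lies in a slab of half-width $s^*$. The assumption $\alpha_K(r) \geq 0.1$ combined with this slab containment forces $s^*/r = \Omega(1/\sqrt{n})$ (otherwise even the slab itself captures negligible measure on $\mathbb{S}^{n-1}_r$, since for $\btheta$ uniform on the sphere the coordinate $\btheta_v := v\cdot\btheta$ has marginal density of scale $\sqrt{n}$ near the origin). In this regime I would condition on $\btheta_v$ to write
\[
\alpha_K(r) = \int_{-s^*/r}^{s^*/r} \Pr\!\bigl[r\btheta \in K \,\bigm|\, \btheta_v = t\bigr]\, p_v(t)\, dt,
\]
and track two separate contributions to $\alpha_K((1-\kappa)r) - \alpha_K(r)$. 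First, a \emph{boundary effect}: the integration range widens from $[-s^*/r, s^*/r]$ to $[-s^*/((1-\kappa)r), s^*/((1-\kappa)r)]$, contributing additional mass of order $\kappa \cdot (s^*/r) \cdot p_v(s^*/r)$, which is $\Omega(\kappa)$ in the relevant regime $s^*/r \asymp 1/\sqrt{n}$ since $p_v(s^*/r) = \Theta(\sqrt{n})$. Second, an \emph{interior effect}: each cross section $K \cap \{x : v \cdot x = t\,r\}$ is a centrally symmetric convex subset of an affine hyperplane whose inradius, relative to the effective shell radius $r\sqrt{1-t^2}$, is much larger than $s^*/r$; a recursive application of the theorem in dimension $n-1$ produces a pointwise increment for each slice, which integrates to additional $\Omega(\kappa)$ mass.

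\textbf{Main obstacle.} The delicate step is the thin case, specifically making the boundary-effect rigorous: one must verify that $\Pr[r\btheta \in K \mid \btheta_v = \pm s^*/r]$ is bounded below so that the extra mass from the widened integration range is not wasted on empty slices. An attractive alternative that would sidestep the dichotomy is to prove a one-variable log-concavity statement for $r \mapsto \alpha_K(r)$ (or for $\Phi^{-1}(\alpha_K(r))$, in the spirit of Ehrhard/B-inequality-type results for centrally symmetric Gaussian measure), from which the density-increment claim would follow in the bulk $\alpha_K(r) \in [0.1,0.9]$ by combining concavity with the limiting behavior $\alpha_K(r) \to 1,0$ as $r \to 0, \infty$; I would try this route first since it would give a cleaner, unified argument, and fall back on the dichotomy if the log-concavity proves elusive.
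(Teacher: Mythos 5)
Your proposal takes a genuinely different route from the paper — the paper proves this by reducing to a \emph{two-dimensional} problem: it shows (via a variant of Raz's lemma, \Cref{claim:raz}) that a non-negligible fraction of random two-dimensional subspaces $\bV$ have $\mu_{\bV,1}(K \cap \bV \cap \mathbb{S}^{n-1})$ bounded away from $0$ and $1$, and then obtains the $\Theta(\kappa)$ gain by elementary plane geometry on arcs of concentric circles (\Cref{clm:two-d-increment}), using central symmetry only to ensure each arc of $\overline{C_{V,1}}$ has angular measure $<\pi$. Your dichotomy on the inradius $s^*$ is an attractive idea in principle, and the thick case ($s^* \geq cr$) is fine, but the thin case has several genuine gaps beyond the one you flag.

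First, the boundary-effect estimate $p_v(s^*/r) = \Theta(\sqrt{n})$ only holds when $s^*/r = O(1/\sqrt{n})$, which you have no right to assume. The constraint $\alpha_K(r)\geq 0.1$ together with slab containment gives $s^*/r = \Omega(1/\sqrt{n})$, a \emph{lower} bound only; $s^*/r$ can be much larger. The origin-centered cube with $\vol(K)\approx 1/2$ has $s^*/r = \Theta(\sqrt{\log n / n})$, and there $p_v(s^*/r) = \Theta(\sqrt{n}\,e^{-c^2/2})$ with $c = \Theta(\sqrt{\log n})$, so the boundary mass is $\Theta(\kappa\sqrt{\log n}/n) = o(\kappa)$. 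Your analysis thus silently drops the regime $\omega(1/\sqrt{n}) \leq s^*/r < c$, which is not degenerate. Second, even when $s^*/r = \Theta(1/\sqrt{n})$, the boundary effect needs $\Pr[(1-\kappa)r\btheta\in K \mid \btheta_v\approx s^*/r]$ bounded below, and this genuinely fails: take $K$ to be the convex hull of $B(0^n,s^*)$ with two antipodal far-away points in a direction orthogonal to $v$; then $K$ tapers to a single point as $v\cdot x \to s^*$, so the newly opened shell captures essentially no mass. You identify this as the delicate step, but it is in fact fatal to that route as stated. Third, the proposed interior-effect recursion rests on two false premises: the cross section $K \cap \{v\cdot x = tr\}$ of a centrally symmetric body is \emph{not} centrally symmetric about its own center for $t\neq 0$ (we only have $K_t = -K_{-t}$), so you cannot invoke the symmetric theorem recursively; and the cross section's inradius relative to the effective shell radius $r\sqrt{1-t^2}$ is \emph{at most} $s^*/r$ (the ball $B(0,s^*)$ contributes only $\sqrt{s^{*2}-(tr)^2}$ in the slice at $t$, and the cube shows the true inradius need not be larger), not ``much larger.'' So the recursion would have to go through the general theorem and would not produce a dimension-independent $\Omega(\kappa)$. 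The log-concavity alternative you mention is not obviously true for $r\mapsto\alpha_K(r)$ and would itself be a substantial theorem; the paper sidesteps all of this with the random 2D cross-section device, which is worth internalizing precisely because it avoids any attempt to track where the inradius ``lives.''
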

An important feature of our density increment theorems is that while the results are for convex sets in $\mathbb{R}^n$, the density increment statements are independent of $n$. We give an overview of the high-level ideas underlying our \Cref{thm:informal-convex-density-increment,thm:informal-centrally-symmetric-density-increment} in the next subsection.

\subsection{The ideas underlying the Kruskal-Katona type  \Cref{thm:informal-convex-density-increment,thm:informal-centrally-symmetric-density-increment}} \label{sec:high-level-ideas}

Let us provide an intuitive argument for why results of this sort should hold, focusing on \Cref{thm:informal-centrally-symmetric-density-increment} (\Cref{thm:informal-convex-density-increment} uses similar ideas).  At the highest level, a probabilistic argument is used to reduce the $n$-dimensional geometric scenario to a two-dimensional scenario. In more detail, as described below, the proof essentially combines two extremely simple observations with two technical results.

Recalling the setup of \Cref{thm:informal-centrally-symmetric-density-increment}, (after rescaling) we have a symmetric convex body $K \subset \R^n$ whose intersection with the unit sphere $\mathbb{S}^{n-1}_1$ is a certain fraction $\alpha_K(1)$ of $\mathbb{S}^{n-1}_1$. As stated in \Cref{thm:informal-centrally-symmetric-density-increment}, let us think of this fraction as being neither too close to 0 nor to 1.  The goal is to argue that the intersection of $K$ with the slightly smaller sphere $\mathbb{S}^{n-1}_{1-\kappa}$ is a noticeably larger fraction of $\mathbb{S}^{n-1}_{1-\kappa}$. 

The first simple but crucial observation is that the density of $K$ in $\mathbb{S}^{n-1}_1$ is an average of two-dimensional ``cross-sectional'' densities, and the same is true for the density of $K$ in $\mathbb{S}^{n-1}_{1-\kappa}$.  More precisely, the density of $K$ in $\mathbb{S}^{n-1}_1$ is the average over a random two-dimensional subspace $\bV$ of the density of the two-dimensional convex body $K \cap \bV$ in the two-dimensional unit circle obtained by intersecting $\mathbb{S}^{n-1}_1$ with $\bV$, and the same is true for $\mathbb{S}^{n-1}_{1-\kappa}.$ (See \Cref{eq:avg-1} for a precise formulation.)

The next simple but crucial observation is that within any given specific cross-section (two-dimensional subspace $V$), the density of $K$ in the radius-$(1-\kappa)$ circle must be at least the density of $K$ in the radius-$1$ circle.  In other words, within any specific cross-section, ``density is never lost'' by contracting from radius 1 to radius $1-\kappa.$ As mentioned already in the previous subsection, this is an immediate consequence of convexity and the fact that $K$ contains the origin. (See \Cref{fact:convex-decreasing}.)  

Now the first technical result mentioned above enters the picture:  Fix a particular two-dimensional subspace $V$ and suppose that within $V$, the density of $K$ in the radius-$1$ circle is (like the original density of $K$ in the $n$-dimensional unit sphere $\mathbb{S}^{n-1}_1$) neither too close to 0 nor to 1.  Then using elementary geometric arguments and the central symmetry of $K$, it can be shown that the density of $K$ in the radius-$(1-\kappa)$ circle must be ``noticeably higher'' than the density of $K$ in the radius-1 circle --- i.e.~``density is gained'' within this cross-section by contracting.  See Figure~1 for an illustration and \Cref{clm:two-d-increment} for a precise formulation.

\begin{figure}[t] \label{fig:1}
\begin{tikzpicture}
\centerarc[black, thick](20,20)(0:360:5);
\centerarc[black, thick](20,20)(0:360:3);
\centerarc[green, ultra thick](20,20)(0:30:5);
\centerarc[green, ultra thick](20,20)(330:360:5);
\centerarc[green, ultra thick](20,20)(150:210:5);
\draw[ red, ultra thick] ({12},{\Ay})--({28},{\Ay});
\draw[ red, ultra thick] ({12},{\Cy})--({28},{\Cy});
\draw[ red, ultra thick] ({12},{\Cy})--({12},{\Ay});
\draw[ red, ultra thick] ({28},{\Cy})--({28},{\Ay});
\centerarc[blue, ultra thick](20,20)(30:{\newangle}:3);
\centerarc[blue, ultra thick](20,20)({360-\newangle}:330:3);
\centerarc[blue, ultra thick](20,20)({180-\newangle}:150:3);
\centerarc[blue, ultra thick](20,20)({210}:{180+\newangle}:3);
\draw[ black, dashed] ({20},{20})--({\Ayy},{\Ay});
\draw[ black, dashed] ({20},{20})--({\Ayy},{\Ayz});
\draw[ black, dashed] ({20},{20})--({\Ayw},{\Ayz});
\draw[ black, dashed] ({20},{20})--({\Ayw},{\Ay});
\end{tikzpicture}
\caption{
Two concentric circles of radius $1$ and $1-\kappa$ and their intersections with a symmetric convex set.  The boundary of the convex set is indicated in {\color{red}red}. The {\color{green}green} arcs are the portion of the radius-1 circle which intersects the convex set.  Observe that the fraction of the radius-$(1-\kappa)$ circle which intersects the convex set is larger than the fraction of the radius-$1$ circle which intersects the convex set (by the angular measure of the {\color{blue}blue} arcs).}
\end{figure}
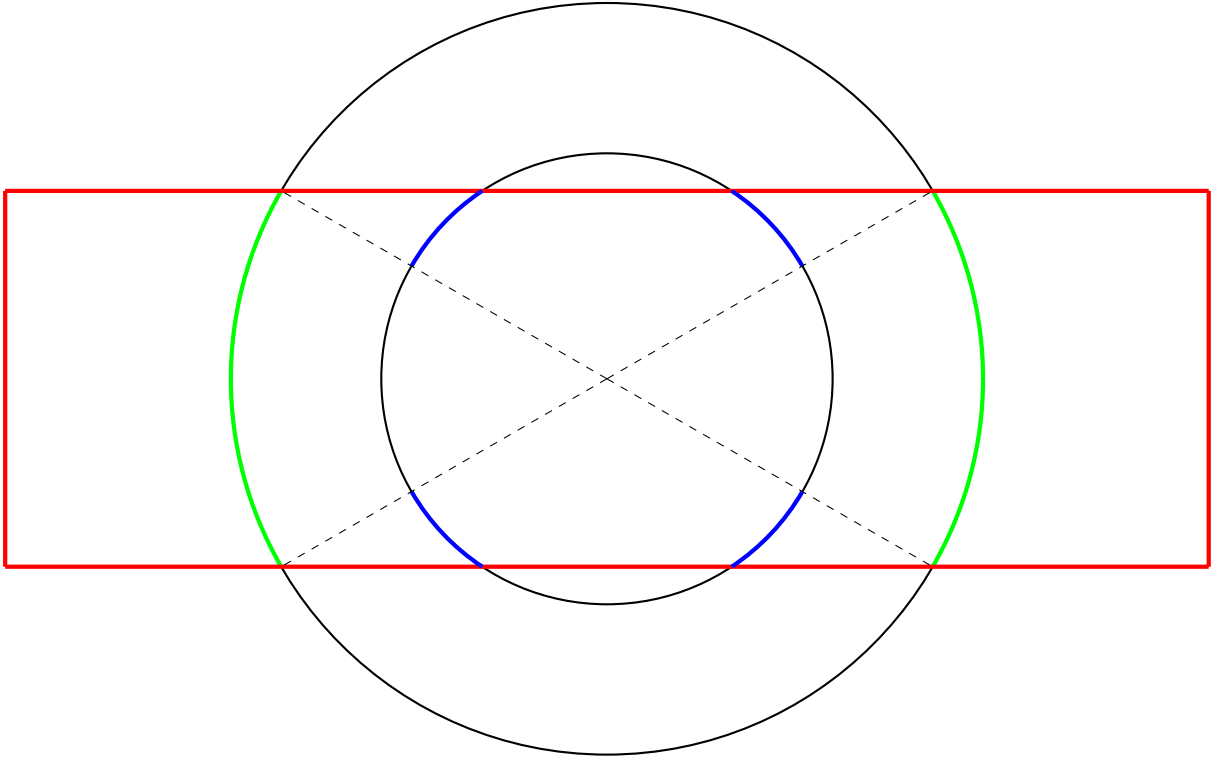

Given this, a natural proof strategy suggests itself:  Suppose that for a random two-dimensional subspace $\bV$, with non-negligible probability the density of $K$ in the two-dimensional circle $\mathbb{S}^{n-1}_1 \cap \bV$ is ``not too far'' from the density of $K$ in the $n$-dimensional sphere $\mathbb{S}^{n-1}_1$. Then, by the preceding paragraph, there would be a noticeable density gain on a non-negligible fraction of subspaces; since density is never lost and the overall density is an average of the density over subspaces, this would give the result.  The second technical result, \Cref{claim:raz}, shows precisely that the above supposition indeed holds.  

We give some elaboration on this second technical result.  It is a variant of a lemma of Raz \cite{raz1999exponential}, who showed that for any subset $A \subset \mathbb{S}^{n-1}$ with $\mu_1(A)$ bounded away from 0 and 1, with high probability a random subspace $\bV$ of $\mathbb{R}^n$ of dimension roughly $1/\epsilon^2$ is such that the density of $A$ in $\mathbb{S}^{n-1} \cap \bV$ is $\pm \eps$-close to the density of $A$ in $\mathbb{S}^{n-1}.$ We establish a variant of this result in a different parameter regime; our requirement is that the measure of $A \cap \bV$ as a fraction of the unit sphere in $\bV$ remain bounded away from $0$ and $1$ with non-negligible probability \emph{even if $\bV$ is a random subspace of dimension only $2$}.  This requires some modification of Raz's original arguments, as we highlight in \Cref{sec:claim:raz:proof}.

\subsection{Applications and consequences of our Kruskal-Katona type results}

\subsubsection{Weak learning under Gaussian distributions}
In \cite{KOS:08} Klivans et al.~showed that convex sets are \emph{strongly learnable} (i.e.~learnable to accuracy $1-\epsilon$ for any $\eps > 0$) in time $n^{{O}(\sqrt{n}/\epsilon^2)}$ under the Gaussian distribution, given only random examples drawn from the Gaussian distribution. Up to a mildly better dependence on $\epsilon$, this  matches the running time of the algorithm of \cite{BshoutyTamon:96} for learning monotone functions on the hypercube. 

However, there is a large gap in the state of the art between monotone Boolean functions on the cube and convex sets in the Gaussian space when it comes to \emph{weak learning}. In particular, while \cite{OWimmer:09} showed that monotone functions can be weakly learned to accuracy $1/2 + \Omega(n^{-1/2} \log n)$ in polynomial time, prior to this work nothing better than the $n^{\sqrt{n}}$ running time of~\cite{KOS:08} was known for weakly learning convex sets to any nontrivial accuracy (even accuracy $1/2 + \exp(-n)$). In particular, the \cite{KOS:08} result 
in and of itself does not imply anything about polynomial-time weak learning; the \cite{KOS:08} result is proved using Hermite concentration, but prior to this work it was conceivable that all of the Hermite weight of a convex body might sit at levels $\Omega(n^{1/2})$, which would necessitate an $n^{\Omega(\sqrt{n})}$ runtime for the \cite{KOS:08} algorithm.

The main algorithmic contribution of this paper is to bridge this gap and give a polynomial-time weak learning algorithm for convex sets. We prove the following:\footnote{As stated  below \Cref{thm:weak-learn-convex} is only for learning under the standard Gaussian distribution $N(0,1^n)$, but since convexity is preserved under affine transformations, the result carries over to weak learning with respect to any Gaussian distribution $N(\mu,\Sigma).$} 
\begin{theorem} [Weak learning convex sets] \label{thm:weak-learn-convex}
There is a $\poly(n)$-time algorithm which uses only random samples from $N(0,1)^n$ and weak learns any unknown convex set $K \subseteq \R^n$ to accuracy $1/2 + \Omega(1/n)$ under $N(0,1^n)$ .
\end{theorem}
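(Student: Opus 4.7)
My plan is to design a polynomial-time algorithm $\mathcal{A}$ that draws $m = \poly(n)$ labeled samples, evaluates candidate hypotheses from three simple families on a held-out subset, and outputs the empirically best one. The families are (i) the two constants $h \equiv 0$ and $h \equiv 1$; (ii) origin-centered ball indicators $h_r(x) = \mathbf{1}[\|x\| \leq r]$ and their complements for $r$ on a $\poly(n)$ grid; and (iii) halfspaces $\mathbf{1}[v \cdot x \geq t]$ where $v = \pm \hat{f}(1)/\|\hat{f}(1)\|$ is the normalized empirical Chow vector and $t$ ranges over a fine grid. Each family contains $\poly(n)$ candidates, so standard uniform convergence with $\poly(n)$ samples makes the output hypothesis within $o(1/n)$ of the true best. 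It suffices to show that for every convex $K$, at least one candidate in one of these families achieves advantage $\Omega(1/n)$.

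The correctness argument is a case analysis governed by a small universal constant $c_0 > 0$ and the \emph{inner radius} $\rsm := \sup\{r : B(0^n, r) \subseteq K\}$. If $|\Pr[K] - 1/2| = \Omega(1/n)$, the majority constant gives advantage $\Omega(1/n)$, so we may assume $\Pr[K] = 1/2 \pm o(1/n)$. I then split on $\rsm$.

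In the ball case $\rsm \geq c_0$, we have $0^n \in K$ and the shell density $\alpha_K(r) := \Pr_{\bx \sim \mathbb{S}^{n-1}_r}[\bx \in K]$ is non-increasing. A monotonicity-plus-$\|\bx\|$-concentration argument shows the transition region $\{r : \alpha_K(r) \in [0.1, 0.9]\}$ must overlap the $O(1)$-window about $\sqrt{n}$ where the chi-density $q$ carries $\Theta(1)$ mass (otherwise $\Pr[K]$ would be forced outside $[1/2 - o(1/n), 1/2 + o(1/n)]$). Applying \Cref{thm:informal-convex-density-increment} on this window yields the pointwise decrement $\alpha_K((1-\kappa)r) - \alpha_K(r) \geq \Omega(\kappa\,\rsm/r)$ for $\kappa \in [0, 1/10]$. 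Integrating against $q$, the ball hypothesis at the crossing $r^*$ where $\alpha_K(r^*) = \Pr[K]$ satisfies $\mathrm{Cov}(h_{r^*}, f) = \int_0^{r^*} q(r) (\alpha_K(r) - \Pr[K])\, dr = \Omega(\rsm/n) = \Omega(1/n)$, which translates to advantage $\Omega(1/n)$ (or its complement does, depending on sign).

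In the halfspace case $\rsm < c_0$, a separating hyperplane gives $K \subseteq H = \{w \cdot x \geq c\}$ with $c > -c_0$. Combined with $\Pr[K] \leq \Pr[H] = \Phi(-c)$ and $\Pr[K] = 1/2 \pm o(1/n)$, this forces $c \leq o(1/n)$, hence $|c| \leq c_0$ and $\Pr[H \setminus K] = O(c_0)$. Now $\mathbf{E}[\bx\, \mathbf{1}_H] = \phi(c)\,w$ has norm $\Omega(1)$, while by Cauchy-Schwarz, for any unit $u$, $|\mathbf{E}[u \cdot \bx\,\mathbf{1}_{H \setminus K}]| \leq \sqrt{\mathbf{E}[(u \cdot \bx)^2]}\sqrt{\Pr[H \setminus K]} = O(\sqrt{c_0})$. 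So the Chow vector $\mathbf{E}[\bx\,\mathbf{1}_K]$ has norm $\Omega(1)$ and direction within angle $O(\sqrt{c_0})$ of $w$; the empirical $\hat{f}(1)$ approximates this to $o(1)$ from $\poly(n)$ samples, and the halfspace at the grid threshold $t$ closest to $c$ differs from $\mathbf{1}_H$ on only $O(\sqrt{c_0})$ of inputs by a Gaussian-stability bound, hence from $f = \mathbf{1}_K$ on only $O(\sqrt{c_0} + c_0) = O(\sqrt{c_0})$, yielding advantage $\Omega(1)$ for $c_0$ small.

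The main obstacle is threading the universal constant $c_0$ so that both structural cases compose correctly: $c_0$ must be small enough that the $O(\sqrt{c_0})$ Cauchy-Schwarz slack in the halfspace case is dominated by the $\Omega(1)$ main term, yet large enough that the ball case covers every $\rsm \geq c_0$ with advantage $\Omega(\rsm/n) = \Omega(1/n)$. A fixed choice such as $c_0 = 1/100$ satisfies both demands. A secondary step is confirming that the $\alpha_K$-transition region lies within an $O(1)$ window of $\sqrt{n}$ in the ball case, which reduces to a short calculation combining the monotonicity of $\alpha_K$ with the $O(1)$-concentration of the chi-distribution.
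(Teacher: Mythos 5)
Your proposal is correct and shares the paper's overall case decomposition (constant hypothesis when $|\vol(K)-1/2|$ is not tiny; a spherical hypothesis when $K$ contains a ball of constant radius around the origin; a halfspace hypothesis when it does not), but it implements two of the three cases by a genuinely different route. In the halfspace case the paper invokes the agnostic halfspace learner of \cite{DKS18-nasty} as a black box (\Cref{lem:halfspace-learning}): it argues that $K$ is within noise rate $O(c)$ of the supporting halfspace $H$ and lets \textsf{Learn-halfspace} do the rest, whereas you construct the candidate halfspace directly from the empirical degree-$1$ Hermite (Chow) vector, observing via Cauchy--Schwarz that $\|\widetilde{K}(1)-\widetilde{H}(1)\|\le\sqrt{\Pr[H\setminus K]}=O(\sqrt{c_0})$ while $\|\widetilde{H}(1)\|=\phi(c)=\Omega(1)$, so the Chow direction is within angle $O(\sqrt{c_0})$ of the true normal and grid search over the threshold suffices. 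This is more self-contained but needs a careful dimension-dependent sample bound for estimating all $n$ Chow coordinates (fine, since $\poly(n)$ samples drive the $\ell_2$ error to $o(1)$). In the ball case the paper works in quantile space and uses the \emph{fixed} hypothesis $h_{1/2}$ (the median-radius ball), reducing everything to the single inequality $\beta(1/4)\ge\beta(3/4)+\Omega(1/n)$ and an integration by monotonicity (\Cref{lem:sphere-learning-asymmetric}); you instead grid over radii, pick out the crossing $r^*$ with $\alpha_K(r^*)=\vol(K)$, and integrate the density increment against the chi density near $r^*$. Both work, but your version leaves two substeps at the sketch level that in the paper are handled cleanly by the quantile reformulation: (i) showing $r^*$ lies within $O(1)$ of $\sqrt n$ (rather than merely that the $[0.1,0.9]$ transition band \emph{overlaps} an $O(1)$-window), so that $q(r)=\Theta(1)$ on the window you integrate over, and (ii) carrying out the $\int q(r)(\alpha_K(r)-\vol(K))\,dr=\Omega(\rsm/n)$ calculation. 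These are fillable with the same density-increment and chi-concentration ingredients you already cite, so there is no genuine gap, but the paper's $\beta(1/4)$-versus-$\beta(3/4)$ comparison is noticeably slicker and yields a fixed hypothesis rather than a grid.
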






For centrally symmetric convex sets we give a result which is stronger in two ways.  We achieve a stronger  advantage, and we show that one of three fixed hypotheses always achieves this stronger advantage:  the empty set, all of $\R^n$, or the origin-centered ball of radius $r_{\median}$ where $r_{\median}$ is the median of the chi-distribution with parameter $n$. 
This result is as follows:\footnote{Similar to the previous footnote, since a centrally symmetric convex set remains centrally symmetric and convex under any linear transformation, \Cref{thm:weak-learn-centrally-symmetric} directly implies an analogous result for any origin-centered Gaussian distribution $N(0^n,\Sigma).$}

\begin{theorem} [Weak learning centrally symmetric convex sets] \label{thm:weak-learn-centrally-symmetric}
For any centrally symmetric convex set $K \subseteq \R^n$, one of the following three hypotheses $h$ has $\Pr_{\bx \sim N(0,1)^n}[h(\bx)=K(\bx)] \geq 1/2 + \Omega(1/\sqrt{n})$: $h=$ the empty set, $h=$ all of $\R^n$, or $h=$ the origin-centered ball of radius $r_{\median}$.
\end{theorem}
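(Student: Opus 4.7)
The plan is to split on the Gaussian mass $p := \Pr_{\bx \sim N(0,1)^n}[\bx \in K]$ and dispatch the two extreme regimes with the trivial hypotheses, leaving the central regime for the ball hypothesis. Fix a small absolute constant $c_0 > 0$. If $p \le 1/2 - c_0/\sqrt{n}$, the empty set has accuracy $1 - p \ge 1/2 + c_0/\sqrt{n}$; symmetrically, if $p \ge 1/2 + c_0/\sqrt{n}$, the hypothesis $h = \R^n$ has accuracy $p$. So I may assume $|p - 1/2| \le c_0/\sqrt{n}$, and the task becomes showing that $h = B := B(0^n, r_{\median})$ has accuracy $\ge 1/2 + \Omega(1/\sqrt{n})$.

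Let $A := \Pr[\bx \in K \cap B]$ and $B' := \Pr[\bx \in K \setminus B] = p - A$. Since $\Pr[\bx \in B] = 1/2$ by definition of $r_{\median}$, a direct computation gives $\Pr[B(\bx) = K(\bx)] = A + (1/2 - B') = 1/2 + (A - B')$, so the goal reduces to proving $A - B' \ge \Omega(1/\sqrt{n})$. Decomposing $\Pr[\bx \in K]$ shell-by-shell against the chi density $f := f_{\chi_n}$ of $\|\bx\|$ and substituting $u = \Phi_{\chi_n}(r)$ yields
\[
A - B' \;=\; \int_0^{1/2} \Bigl[\alpha_K\!\bigl(\Phi^{-1}(\tfrac12 - v)\bigr) - \alpha_K\!\bigl(\Phi^{-1}(\tfrac12 + v)\bigr)\Bigr]\, dv,
\]
with nonnegative integrand because $0^n \in K$ (as $K$ is nonempty and centrally symmetric), which makes $\alpha_K$ non-increasing by convexity.

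I then case-split on $\alpha := \alpha_K(r_{\median})$. If $\alpha \ge 0.9$, monotonicity gives $A \ge 0.9 \cdot 1/2 = 0.45$, hence $A - B' = 2A - p \ge 0.9 - (1/2 + c_0/\sqrt{n}) = \Omega(1)$; if $\alpha \le 0.1$, monotonicity gives $B' \le 0.1 \cdot 1/2 = 0.05$, hence $A - B' = p - 2B' \ge (1/2 - c_0/\sqrt{n}) - 0.1 = \Omega(1)$. The interesting case is $\alpha \in [0.1, 0.9]$, where I invoke \Cref{thm:informal-centrally-symmetric-density-increment}. Using the standard chi-estimates $r_{\median} = \Theta(\sqrt{n})$ and $f(r_{\median}) = \Theta(1)$, a first-order expansion gives $\Phi^{-1}(1/2 \pm v) = r_{\median} \pm \Theta(v)$ for $v$ in a small neighborhood of $0$. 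Applying the theorem at $r = \Phi^{-1}(1/2 + v)$ with the corresponding $\kappa = \Theta(v/\sqrt{n})$ yields
\[
\alpha_K\!\bigl(\Phi^{-1}(\tfrac12 - v)\bigr) - \alpha_K\!\bigl(\Phi^{-1}(\tfrac12 + v)\bigr) \;\ge\; \Theta(v/\sqrt{n})
\]
whenever $\alpha_K(\Phi^{-1}(1/2 + v)) \in [0.1, 0.9]$. A final sub-split handles the two possibilities: if this hypothesis holds throughout $v \in [0, v_0]$ for some absolute constant $v_0$, integration gives $A - B' \ge \Theta(v_0^2/\sqrt{n}) = \Omega(1/\sqrt{n})$, as required; otherwise $\alpha_K(\Phi^{-1}(1/2 + v_1)) < 0.1$ for some $v_1 < v_0$, which upper-bounds $B' \le 0.9 v_1 + 0.1 \cdot (1/2 - v_1) \le 0.05 + 0.8 v_0$ and, for $v_0$ chosen small enough, again gives $A - B' = p - 2B' = \Omega(1)$.

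The main obstacle is the chi-distribution bookkeeping in the last case: one must verify that $r_{\median} = \Theta(\sqrt{n})$, $f(r_{\median}) = \Theta(1)$, and the linear approximation $\Phi^{-1}(1/2 \pm v) = r_{\median} \pm \Theta(v)$ all hold with dimension-uniform implicit constants throughout the range of $v$ in use, and that the resulting $\kappa$ satisfies the $\kappa \le 1/10$ hypothesis of \Cref{thm:informal-centrally-symmetric-density-increment}. These are standard but must be spelled out so that the final $\Omega(1/\sqrt{n})$ bound is genuinely uniform in $n$. The three-way structure of the theorem statement mirrors exactly the three-way case split $p < 1/2 - \Omega(1/\sqrt{n})$, $p > 1/2 + \Omega(1/\sqrt{n})$, and $|p - 1/2| = O(1/\sqrt{n})$.
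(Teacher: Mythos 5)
Your proof is correct and rests on the same two pillars as the paper's: dispatch the imbalanced-volume regime with the trivial hypotheses, and in the balanced regime write the advantage of the ball as a quantile-space integral $\int_0^{1/2}[\alpha_K(\Phi^{-1}(\tfrac12-v))-\alpha_K(\Phi^{-1}(\tfrac12+v))]\,dv$ whose positivity follows from \Cref{fact:convex-decreasing} and whose size is driven by the Kruskal--Katona increment. Where you diverge from the paper's \Cref{lem:sphere-learning} is in how that increment is converted into a lower bound: the paper applies \Cref{lem:key} exactly once, between the $1/4$- and $3/4$-quantiles (\Cref{eq:beta-dec-symmetric}), and then uses monotonicity of $\beta$ to spread that single $\Omega(1/\sqrt n)$ gap across a constant fraction of the quantile range, with \Cref{clm:beta-large} handling the boundary sub-case $\beta(1/4)\ge 3/4$; you instead apply the informal KK theorem at a continuum of radii $\Phi^{-1}(\tfrac12+v)$, $v\in[0,v_0]$, each with $\kappa=\Theta(v/\sqrt n)$, and integrate the resulting increments to get $\Theta(v_0^2/\sqrt n)$. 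Your top-level case split on $\alpha_K(r_\median)$ rather than on $\beta(1/4)$ makes the two boundary cases pleasingly symmetric and removes the asymmetry of the paper's \Cref{clm:beta-large}, at the cost of an extra sub-split inside the middle case. Both routes need the same $\chi_n$-estimates — $r_\median=\Theta(\sqrt n)$, the $\chi_n$ density $O(1)$ (\Cref{fact:chi-squared-2}), and (needed for $\kappa\le 1/10$ and, in your case, for the upper bound $\Phi^{-1}(\tfrac12+v)-\Phi^{-1}(\tfrac12-v)=O(v)$) a matching $\Omega(1)$ density lower bound near the median; you are right that this last point deserves to be spelled out, and indeed the paper's \Cref{clm:chi-percentile} only certifies $A,B\ge 1/4$, leaving the corresponding $O(1)$ upper bound implicit. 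One small clean-up: in your middle case the upper condition $\alpha_K(\Phi^{-1}(\tfrac12+v))\le 0.9$ is automatic from monotonicity once $\alpha_K(r_\median)\le 0.9$, so the only failure mode is the one your sub-split addresses, which makes the argument airtight.
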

From \Cref{thm:weak-learn-centrally-symmetric} it is straightforward to get a $\poly(n)$-time learner for centrally symmetric convex sets with advantage $\Omega(1/\sqrt{n})$. This is entirely analogous to the result of \cite{BBL:98}, who showed that for any monotone function $f: \bn \to \bits$ over the Boolean hypercube, one of the following three functions achieves an advantage of  $\Omega( n^{-1/2})$ with respect to the uniform distribution:  the constant $1$ function, the constant $-1$ function, or the majority function.
We note that the main technical ingredient in proving \Cref{thm:weak-learn-convex} (respectively \Cref{thm:weak-learn-centrally-symmetric}) is \Cref{thm:informal-convex-density-increment} (respectively \Cref{thm:informal-centrally-symmetric-density-increment}). In particular, the proof of \Cref{thm:weak-learn-centrally-symmetric} is a modification of the argument of \cite{BBL:98} which uses the Kruskal-Katona theorem (over the hypercube) to show that one of the functions $\{+1,-1,\mathsf{MAJ}\}$ is a good weak hypothesis for any monotone Boolean function.

\medskip

\noindent {\bf A lower bound for weak learning convex sets.}
We complement \Cref{thm:weak-learn-centrally-symmetric} with an information theoretic lower bound. This lower bound shows that any $\poly(n)$-time algorithm, even one which is allowed to query the target function on arbitrary inputs of its choosing, cannot achieve a significantly better advantage than our simple algorithm achieves for centrally symmetric convex sets:

\begin{theorem} [Lower bound for weak learning centrally symmetric convex sets] \label{thm:our-BBL-lb}
For sufficiently large $n$, for any $s \geq n$, there is a distribution ${\cal D}$ over centrally symmetric convex sets with the following property:  for a target convex set $\boldf \sim {\cal D},$ for any membership-query (black box query) algorithm $A$ making at most $s$ many queries to $\boldf$, the expected error of $A$ (the probability over $\boldf \sim {\cal D}$, over any internal randomness of $A$, and over a random Gaussian $\bx \sim N(0,1^n)$, that the output hypothesis $h$ of $A$ predicts incorrectly on $\bx$) is at least $1/2 - {\frac {O(\log(s) \cdot \sqrt{\log n})}{n^{1/2}}}$.\ignore{\red{$1/2 - {\frac {O(\log(s))}{n^{1/2}}}$}.}
\end{theorem}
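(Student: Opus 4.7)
My plan is to apply Yao's minimax principle: it suffices to exhibit a distribution $\mathcal{D}$ over centrally symmetric convex bodies such that every deterministic algorithm $A$ making at most $s$ queries has expected error at least $1/2 - O(\log(s)\sqrt{\log n}/\sqrt{n})$ over $(\boldf, \bx) \sim \mathcal{D} \times N(0,1)^n$. For any transcript $H = ((q_1,a_1),\dots,(q_s,a_s))$ produced by $A$, the optimal hypothesis is the Bayes predictor based on the posterior $\mu_H$ on $\boldf$; hence the task reduces to upper-bounding the Bayes advantage $\mathbb{E}_H \mathbb{E}_\bx |\Pr_{\boldf \sim \mu_H}[\bx \in \boldf] - 1/2|$ by the target expression.

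For $\mathcal{D}$, the natural candidate is a random rotation of a fixed centrally symmetric reference body $K_0 \subseteq \R^n$ of Gaussian measure $1/2$: draw $\mathbf{U}$ from Haar measure on $O(n)$ and set $\boldf = \mathbf{U}(K_0)$. The body $K_0$ must be chosen rich enough that (i) no fixed hypothesis (such as those appearing in \Cref{thm:weak-learn-centrally-symmetric}) beats advantage $\tilde{O}(1/\sqrt n)$ against this family, and (ii) each membership query reveals only limited information about $\mathbf{U}$. A concrete candidate is a symmetric polytope whose facets come in $\Theta(s^{O(1)})$ parallel pairs with widths tuned so that the Gaussian measure of $K_0$ remains exactly $1/2$; the geometry should be arranged so that no single query point is simultaneously informative about more than a small fraction of the facets.

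To bound the Bayes advantage, observe that $\mathbb{E}_\bx \Pr_{\boldf \sim \mu_H}[\bx \in \boldf] = 1/2$ for every transcript $H$ (since every $\boldf \in \mathcal{D}$ has Gaussian measure $1/2$), so by Cauchy--Schwarz it suffices to bound $\mathrm{Var}_\bx(\Pr_{\boldf \sim \mu_H}[\bx \in \boldf])$. Expanding this square gives $\mathbb{E}_{\mathbf{U}, \mathbf{U}' \sim \mu_H}[\Pr_\bx(\bx \in \mathbf{U} K_0 \cap \mathbf{U}' K_0)] - 1/4$, which by a Hermite/Mehler expansion---using that the indicator of a centrally symmetric body has only even-degree Hermite coefficients---reduces to a quantity controlled by the ``effective rotation mismatch'' $\|\mathbf{U}^\top \mathbf{U}' - I\|_F^2$ between two independent posterior draws. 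The key technical lemma will be that for a typical $H$, this mismatch stays within a $\mathrm{polylog}(s,n)$ factor of the prior value $\Theta(n)$, which I would prove using concentration of Haar measure on $O(n)$ restricted to the set of rotations consistent with the slab-type constraints imposed by the queries.

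The main obstacle is the adaptivity of the query algorithm, since the posterior $\mu_H$ depends on the algorithm's previous choices. I plan to handle this via a ``transcript-indexed'' argument: each query outcome carves out one of two slab-like regions in $O(n)$, so there are at most $2^s$ possible transcripts, and for each one the posterior is simply the restriction of Haar measure to an intersection of $s$ deterministic constraints that can be analyzed non-adaptively. The $\log s$ factor in the final bound should emerge from this bookkeeping over transcripts, while the $\sqrt{\log n}$ factor reflects a tight matching between the low-degree Hermite spectrum of $K_0$ and the Bayes-variance calculation above. A secondary challenge is engineering $K_0$ so that it genuinely has Gaussian measure exactly $1/2$, is centrally symmetric and convex, and has no easy advantage-exposing features beyond the ball hypothesis; this will dictate the precise structure of its facet arrangement.
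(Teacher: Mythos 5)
Your high-level scaffolding — invoke Yao's minimax principle, reduce to bounding the Bayes advantage of the posterior predictive, and handle adaptivity by enumerating the at-most-$2^s$ possible answer-transcripts of a deterministic learner — matches the overall shape of the paper's argument, which in turn follows the template of \cite{BBL:98}. But the hard distribution you propose, and hence the core of the argument, is genuinely different, and I think the differences create gaps you have not closed.

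First, the paper does not fix a reference body $K_0$ and randomize over rotations; it randomizes the \emph{combinatorial structure} of the polytope. Concretely, ${\cal D}_{\ideal}$ is an intersection of slabs $\slab_{\bz_i}(x) = \Ind[-d \le \bz_i \cdot x \le d]$ where the directions $\bz_i$ come from a Poisson point process on $\mathbb{S}^{n-1}$ with rate $\Lambda \approx s^{100}$, and the final ${\cal D}_{\actual}$ is its discretization: a fixed finite support set $\Suppm$ (from a Wasserstein/Dudley covering argument), with each slab included or not by an \emph{independent} $\Bern(p)$ coin. This product-Bernoulli structure over which slabs are present is the engine of the whole argument. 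Under your rotation ensemble $\boldf = \mathbf U K_0$, the facet normals $\mathbf U z_1,\dots,\mathbf U z_m$ are heavily correlated through the single Haar rotation $\mathbf U$, so learning a few facets can in principle pin down most of $\mathbf U$ and hence the entire body; there is no independence to lean on.

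Second, and more fundamentally, you are missing the augmented-oracle idea, which is what makes the posterior analyzable. When a membership query $q$ is answered ``no,'' the event ``$q \notin \boldf$'' is a union over all slabs of the event ``slab $z$ is present and $q$ violates it,'' and conditioning on this union event wrecks independence (or, in your Haar-posterior picture, produces a non-slab, non-convex constraint region in $O(n)$). The paper sidesteps this exactly as \cite{BBL:98} does: strengthen the oracle to return the \emph{first} violated slab under a fixed ordering of $\Suppm$. Conditioned on any transcript, the posterior on the Bernoulli indicators remains a product distribution with at most $s$ coordinates forced to $1$ (\Cref{clm:s-queries-bound1}) and, with high probability, at most $2s/p$ coordinates forced to $0$ (\Cref{clm:s-queries-bound2}, a geometric-sum tail bound), while all remaining coordinates are untouched i.i.d. $\Bern(p)$. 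This is what gives the clean posterior-predictive formula $(1-p)^{|S_{\actual}(x)\setminus \mathcal A_0|}$ and lets the paper compare it to the zero-query $(1-p)^{|S_{\actual}(x)|}$ by controlling $|\mathcal A_0 \cap S_{\actual}(x)|$. Your plan replaces this with ``concentration of Haar measure on $O(n)$ restricted to the set of rotations consistent with the slab-type constraints,'' but the constraint region defined by a ``no'' answer is not a slab and not nice, and I see no route to the required estimate along those lines.

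Third, your variance step is stated but not verified. You assert that $\mathrm{Var}_\bx(\Pr_{\boldf\sim\mu_H}[\bx\in\boldf])$ reduces via a Mehler-type expansion to a function of $\|\mathbf U^\top\mathbf U' - I\|_F^2$, but for a general polytope $K_0$ the overlap $\Pr_\bx[\bx\in\mathbf U K_0 \cap \mathbf U' K_0]$ is not a function of a single scalar invariant of $\mathbf U^\top\mathbf U'$, and the Hermite expansion of $\Ind_{K_0}$ contributes at all even degrees, not just degree $2$. The paper entirely avoids this: its zero-query bound (\Cref{lem:BO-accuracy}) is a direct Poisson calculation. For $x$ with $\|x\|$ near $\sqrt n$, the measure of $\region(x) = \{z : |z\cdot x| > d\}$ is $\tfrac{1}{s^{100}}(1 + O(\log s\sqrt{\log n}/\sqrt n))$, so $\Pr[\boldf(x)=1] = e^{-\Lambda\mu(\region(x))} = \tfrac12 + O(\log s\sqrt{\log n}/\sqrt n)$; the $\sqrt{\log n}$ factor comes from the thickness of the chi-distribution's typical-radius band, not from any Hermite consideration. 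There is also a non-trivial technical step (\Cref{lem:rel-D-ideal-D-actual}, using Dudley's theorem on the rate of convergence of empirical measures in Wasserstein distance and a Poisson-vs-Bernoulli coupling) to pass from the Poisson idealization to the discrete ${\cal D}_{\actual}$ on which the augmented oracle makes sense; your plan has no analogue of this step.

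In short: the Yao/transcript scaffolding is right, but the hard distribution should randomize over independent slab inclusions (not rotations of a fixed body), the augmented-oracle device is essential and missing, and the ``rotation mismatch'' variance bound is an unproved claim that I do not see how to establish.
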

Theorem~\ref{thm:our-BBL-lb} shows that the advantage of our weak learner for centrally symmetric convex sets (Theorem~\ref{thm:weak-learn-centrally-symmetric}) is tight up to a logarithmic factor for polynomial time algorithms. It follows that the advantage of our weak learner for general convex sets (Theorem~\ref{thm:weak-learn-convex}) is tight  up to a quadratic factor. 

\subsubsection{Noise stability / low-degree Hermite weight of convex sets}

A well known structural fact about monotone functions is that they cannot have too little Fourier weight (sum of squared Fourier coefficients) at levels 0 and 1: every monotone function $f: \bn \to \bits$ has at least $\Omega({\frac {\log^2 n}{n}})$ amount of Fourier weight on levels 0 and 1, and further, this $\Omega({\frac {\log^2 n}{n}})$ lower bound is best possible. (For the sake of completeness we give a proof of this in~\Cref{app:hermite-weight}.)

In contrast, it is easy to see that a convex set can have zero Hermite weight (sum of squared Hermite coefficients) at levels 0 and 1:  this is the case for any centrally symmetric set of Gaussian volume $1/2$ (having Gaussian volume $1/2$ implies that the degree-0 Hermite coefficient is zero, and central symmetry implies that each degree-1 Hermite coefficient is also zero.) It is also possible for a convex set to have zero Hermite weight at levels 0 and 2; indeed the set $\{x: x_1 \geq 0\}$ is one such set.  However, we show that any convex set must have some non-negligible Hermite mass at levels 0, 1 and 2.  In particular, we show that the Hermite level $0$-and-$2$ weight of any centrally symmetric convex sets must be at least $\Omega(1/n)$:

\begin{theorem}~\label{thm:centrally-symmetric-weight}
Let $K$ be a centrally symmetric convex set (viewing it as a function $K: \mathbb{R}^n \rightarrow \bits$). Then the Hermite weight of $K$ at levels $0$ and $2$ is at least $\Omega(1/n)$. 
\end{theorem}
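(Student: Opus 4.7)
\textbf{Proof proposal for \Cref{thm:centrally-symmetric-weight}.}

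My plan is to split on the magnitude of the Gaussian volume and then apply \Cref{thm:informal-centrally-symmetric-density-increment} against a single radial degree-$2$ Hermite polynomial. Let $\beta := \mathbb{E}_{\bx \sim N(0,1^n)}[K(\bx)]$ be the degree-$0$ Hermite coefficient (viewing $K$ as $\bits$-valued). If $|\beta| \ge 1/\sqrt{n}$ then the level-$0$ weight alone is $\beta^2 \ge 1/n$ and we are done. So henceforth assume $|\beta| \le 1/\sqrt{n}$, and aim to lower bound the level-$2$ Hermite weight by $\Omega(1/n)$.

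To access the level-$2$ weight I would focus on the single radial, orthonormal degree-$2$ Hermite polynomial
$$H(\bx) \;:=\; \frac{\|\bx\|_2^2 - n}{\sqrt{2n}},$$
a unit vector in the degree-$2$ Hermite subspace (orthogonal to constants since $\mathbb{E}[\|\bx\|_2^2 - n]=0$, orthogonal to all degree-$1$ polynomials since it is an even function of $\bx$, and normalized since $\mathrm{Var}(\|\bx\|_2^2)=2n$). Because $\langle K, H\rangle^2$ is one contribution to the full level-$2$ weight, it suffices to show
$$\bigl|\mathbb{E}[K(\bx)\cdot(\|\bx\|_2^2-n)]\bigr| \;=\; \Omega(1).$$
Since $H$ depends on $\bx$ only through $\|\bx\|_2$, setting $\bs := \|\bx\|_2^2 \sim \chi_n^2$ and $G(s) := 2\alpha_K(\sqrt{s})-1$ rewrites the expectation as $\mathbb{E}[G(\bs)(\bs-n)] = \mathrm{Cov}(G(\bs),\bs)$. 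Central symmetry plus convexity force $0 \in K$, so $\alpha_K$ and hence $G$ are non-increasing; in particular $\mathrm{Cov}(G(\bs),\bs) \le 0$, and it remains to lower bound its magnitude.

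For this I would use the Hoeffding covariance identity
$$\mathrm{Cov}(G(\bs),\bs) \;=\; \tfrac{1}{2}\,\mathbb{E}_{\bs_1,\bs_2\text{ iid}}\bigl[(G(\bs_1)-G(\bs_2))(\bs_1-\bs_2)\bigr],$$
whose integrand is pointwise non-positive. The key step is to identify a ``sign-change point'' $s^\star \in [n-C\sqrt{n},\,n+C\sqrt{n}]$ (for an absolute constant $C$) across which $G$ passes from non-negative to non-positive. This exists because $\mathbb{E}[G(\bs)] = \beta$ with $|\beta|\le 1/\sqrt n$, $G$ is non-increasing, and $\bs$ concentrates at scale $\sqrt n$ around $n$: if $G$ kept one sign uniformly on a sufficiently wide window around $n$ then $|\beta|$ would exceed $1/\sqrt n$. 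Having located $s^\star$, \Cref{thm:informal-centrally-symmetric-density-increment} applied at $r = \sqrt{s^\star}$ translates (via $\kappa = 1-\sqrt{s/s^\star}$) to the slope bound
$$|G(s)| \;=\; \Omega\!\bigl(|s-s^\star|/n\bigr) \qquad \text{for } |s-s^\star| \le s^\star/5,$$
with sign opposite to $s - s^\star$. Restricting the Hoeffding expectation to $E := \{\bs_1 \le s^\star \le \bs_2\}$, which has probability $\Theta(1)$ since $|s^\star-n|=O(\mathrm{SD}(\bs))$, the two-sided slope bounds give $(G(\bs_1)-G(\bs_2))(\bs_1-\bs_2) \le -\Omega((\bs_2-\bs_1)^2/n)$ on $E$. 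Since $\mathbb{E}[(\bs_2-\bs_1)^2\mathbf{1}_E] = \Theta(\mathrm{Var}(\chi_n^2)) = \Theta(n)$, the Hoeffding identity yields $|\mathrm{Cov}(G(\bs),\bs)| = \Omega(1)$, giving $\langle K,H\rangle^2 = \Omega(1/n)$ and completing the proof.

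The main obstacle is careful bookkeeping around $s^\star$: \Cref{thm:informal-centrally-symmetric-density-increment} requires $\alpha_K(r) \in [0.1,0.9]$ to apply directly, so one must separately handle the case where $G$ has a large jump across zero (in which case an $\Omega(1)$-size discrete gap on $E$ actually yields the much stronger bound $|\mathrm{Cov}|=\Omega(\sqrt n)$), and carefully choose interior points of the ``transition region'' when $G$ is continuous there. These steps are mechanical but must be done with care; the heart of the argument is the $n$-independent density increment combined with the elementary Hoeffding identity.
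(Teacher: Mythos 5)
Your overall strategy matches the paper's: both split off the large-bias case (where level-$0$ weight alone suffices), and in the balanced case both test $K$ against the radial degree-$2$ Hermite polynomial, lower-bounding the correlation via the Kruskal--Katona density increment and then applying Cauchy--Schwarz / normalization to extract the $\Omega(1/n)$ bound on level-$2$ weight. The main mechanistic difference is how the correlation is lower-bounded: you pass to $\bs = \|\bx\|_2^2 \sim \chi^2_n$ and invoke the Hoeffding covariance identity for $\mathrm{Cov}(G(\bs),\bs)$, restricting to the event $\bs_1 \le s^\star \le \bs_2$; the paper instead reparameterizes by the $\chi$-quantile function (so the $\chi^2$ density cancels and the correlation becomes $\int_0^1 \overline{\beta}(\nu)\,p_{r_*}(r(\nu))\,d\nu$), applies \Cref{lem:key} \emph{once} at $r_*$ with $\kappa = \Theta(1/\sqrt n)$ to get $|\overline{\beta}(\nu_{\ast,\downarrow})|,|\overline{\beta}(\nu_{\ast,\uparrow})| = \Omega(1/\sqrt n)$, and then uses monotonicity of $\overline{\beta}$ and the sign structure of $p_{r_*}$ to finish. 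Your Hoeffding route and the ``pointwise slope bound'' $|G(s)| = \Omega(|s - s^\star|/n)$ are more demanding than the paper's single-application-plus-monotonicity, but they are workable; both are essentially the same calculation dressed differently.

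There is, however, a genuine gap in your justification of the sign-change point $s^\star$. You assert that $s^\star$ exists and lies in $[n - C\sqrt n,\, n + C\sqrt n]$ ``because \ldots $\bs$ concentrates at scale $\sqrt n$ around $n$: if $G$ kept one sign uniformly on a sufficiently wide window around $n$ then $|\beta|$ would exceed $1/\sqrt n$.'' That implication is false as a free-standing statement: a non-increasing $G$ could be uniformly positive but of size, say, $1/(2\sqrt n)$ across that window (with the zero-crossing far to the right), and $\beta$ would still be $O(1/\sqrt n)$. What actually rules this out is \emph{another} invocation of \Cref{lem:key}: a plateau of $\alpha_K$ near $1/2$ over a wide radial range is exactly what the density increment forbids for centrally symmetric convex $K$. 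Concretely, if the crossing radius were as large as $2\sqrt n$, then contracting by a constant factor $\kappa$ would drive $\alpha_K$ to $1/2 + \Omega(1)$ throughout the bulk of the $\chi$ distribution, forcing $\vol(K) \ge 1/2 + \Omega(1)$, contradicting $|\beta| \le 1/\sqrt n$. This is exactly what the paper proves in its \Cref{clm:r-ast-bound1}, and you cannot skip it. A related point: centering $G$ at $1/2$ rather than at $\vol(K)$ (as the paper does with $\overline\beta = \alpha_K - \vol(K)$) makes existence of the crossing a genuine issue you must argue (e.g.\ by noting $\alpha_K(r)\to 0$ as $r\to\infty$ for $K\ne\R^n$); centering at $\vol(K)$, the crossing exists automatically by the intermediate value theorem since the $\chi$-average of $\alpha_K$ is $\vol(K)$.
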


Since a suitably scaled origin-centered cube has Hermite weight $O(\log^2(n)/n)$ at levels $0$ and $2$ (see \Cref{fact:cube}), our lower bound is tight up to logarithmic factors.\ignore{\gray{In fact, the $\tilde{O}(1/n)$ upper bound on the Hermite weight holds for the origin centered cube up to any $O(1)$-levels.}\rnote{I don't quite understand this, can we rephrase?}} For general convex sets we prove a quadratically weaker lower bound:
 \begin{theorem}~\label{thm:centrally-asymmetric-weight}
Let $K$ be an arbitrary convex set (viewing it as a function $K: \mathbb{R}^n \rightarrow \bits$). Then the Hermite weight at levels $0$, $1$ and $2$ is at least $\Omega(1/n^2)$. 
\end{theorem}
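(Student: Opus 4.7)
The plan is a case analysis based on the Gaussian measure of $K$ and on whether the origin lies ``deep inside'' $K$, reducing the main case to Theorem~\ref{thm:informal-convex-density-increment}. Throughout, view $K$ as $\pm 1$-valued and let $v^* := E[\bx \mid \bx \in K]$.

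If $\gamma(K) \notin [1/3, 2/3]$ then the level-$0$ Hermite coefficient $\hat K(\emptyset) = 2\gamma(K) - 1$ has magnitude at least $1/3$, so the level-$0$ weight alone is $\Omega(1)$. Assume henceforth $\gamma(K) \in [1/3, 2/3]$, and fix an absolute constant $r_0 := 1/10$.

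\emph{Case I: $B(0, r_0) \not\subseteq K$.} I will show the level-$1$ weight is $\Omega(1)$. By the supporting-hyperplane theorem (if $0 \in K$) or by hyperplane separation (if $0 \not\in K$), there exist a unit vector $w$ and a scalar $t$ with $|t| = O(1)$ such that $K$ is contained in $\{x:\langle w,x\rangle \leq t\}$ (in which case $t \in [0, r_0]$) or in $\{x:\langle w,x\rangle \geq t\}$ (in which case $t \in (0, \Phi^{-1}(2/3)]$, the upper bound forced by $\gamma(K) \geq 1/3$). Let $\mu$ denote the conditional density of $\by := \langle w, \bx\rangle$ given $\bx \in K$: by Pr\'ekopa--Leindler, $\mu$ is log-concave on the relevant half-line, and $\mu(y) \leq \phi(y)/\gamma(K) \leq 3\phi(0) < 1.2 =: M$. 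A short calculation (the extremal log-concave density bounded by $M$ on $(-\infty, t]$ is uniform on $[t - 1/M, t]$) shows $|t - E_\mu[\by]| \geq 1/(4M) = \Omega(1)$. Combined with $|t|$ being a sufficiently small constant, this forces $|\langle w, v^*\rangle| = \Omega(1)$. Since the level-$1$ Hermite coefficient of $K$ in direction $w$ equals $2\gamma(K)\langle w, v^*\rangle$, the level-$1$ weight is at least $(2\gamma(K)\langle w, v^*\rangle)^2 = \Omega(1)$.

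\emph{Case II: $B(0, r_0) \subseteq K$.} Then $\alpha_K$ is non-increasing, and I will lower-bound a single level-$2$ Hermite coefficient: take $p(x) := (\|x\|_2^2 - n)/\sqrt{2n}$, a degree-$2$ polynomial of unit $L^2(\gamma_n)$-norm. Writing $\br \sim \chi_n$,
\[
\sqrt{2n}\,\langle K, p\rangle \;=\; E[K(\bx)(\|\bx\|_2^2 - n)] \;=\; 2\, E_{\br}\bigl[\alpha_K(\br)(\br^2 - n)\bigr].
\]
Integrate by parts with $v(r) := \int_0^r (s^2 - n)\mu_n(s)\,ds$, which satisfies $v(0) = v(\infty) = 0$, $v \leq 0$, and (by a direct chi-$n$ computation) $-v(r) = \Theta(\sqrt n)$ throughout the bulk $[\sqrt n - C, \sqrt n + C]$ for a suitable absolute constant $C$:
\[
E_{\br}\bigl[\alpha_K(\br)(\br^2 - n)\bigr] \;=\; \int_0^\infty (-v(r))\, d(-\alpha_K)(r),
\]
a Stieltjes integral of a pointwise nonnegative integrand. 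Since $\gamma(K) \in [1/3, 2/3]$ and the bulk carries $\br$-mass $\geq 0.9$ (for $C$ a large enough constant), an averaging/intermediate-value argument produces $r^* \in [\sqrt n - C, \sqrt n + C]$ with $\alpha_K(r^*) \in [0.2, 0.8] \subset [0.1, 0.9]$. Applying Theorem~\ref{thm:informal-convex-density-increment} at $r^*$ with $\kappa = 1/\sqrt n$ (and again at $r^* + 1$ if $\alpha_K(r^*+1) \in [0.1, 0.9]$; otherwise $\alpha_K(r^*) - \alpha_K(r^*+1) = \Omega(1)$ already) yields $\alpha_K(r^*-1) - \alpha_K(r^*+1) \geq \Omega(1/n)$. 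Hence the Stieltjes integral restricted to $[r^*-1, r^*+1]$ contributes at least $\min_{r \in [r^*-1, r^*+1]}(-v(r)) \cdot \Omega(1/n) = \Omega(\sqrt n) \cdot \Omega(1/n) = \Omega(1/\sqrt n)$. Therefore $|\langle K, p\rangle| = \Omega(1/n)$, and the level-$2$ Hermite weight is at least $\langle K, p\rangle^2 = \Omega(1/n^2)$.

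The principal technical subtlety will be the averaging step in Case II that exhibits $r^*$ in the bulk with $\alpha_K(r^*)$ bounded away from $\{0,1\}$, together with the bookkeeping needed to handle both the smooth-drop subcase (where the density increment gives the $\Omega(1/n)$ change) and the sharp-drop subcase (where $\alpha_K$ exits $[0.1,0.9]$ within the window and the total drop is $\Omega(1)$ anyway). I expect Theorem~\ref{thm:centrally-symmetric-weight} to follow the same outline as Case II but invoking Theorem~\ref{thm:informal-centrally-symmetric-density-increment} in place of Theorem~\ref{thm:informal-convex-density-increment}: the stronger increment $\Omega(\kappa)$ gains a factor of $\sqrt n$ inside the integral and yields the promised $\Omega(1/n)$ bound.
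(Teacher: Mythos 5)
Your proposal is correct (modulo a sign: the integration-by-parts identity should read $\mathbf{E}_{\br}[\alpha_K(\br)(\br^2-n)] = -\int_0^\infty(-v(r))\,d(-\alpha_K)(r)$, but since you only use the magnitude this is harmless) and takes a genuinely different route from the paper's proof. The paper's case split is at the scale $|\vol(K)-1/2| \lessgtr c/n$: if the bias exceeds $c/n$, the level-$0$ weight is already $\Omega(1/n^2)$; otherwise it splits on whether $B(0,\tau)\subseteq K$. Its shallow-ball case (\Cref{lem:W1-large}) approximates $K$ by the supporting halfspace $\halfspace$, uses the near-balance of $\vol(K)$ to argue $\Pr[K\neq\halfspace]$ is a small constant, and then bounds the correlation of $K$ with the halfspace's linear form via Plancherel and Cauchy--Schwarz. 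Your constant-scale split ($\vol(K)\in[1/3,2/3]$) would break that argument — the error $\Pr[K\neq\halfspace]$ can then be a large constant — so you instead observe that the conditional density of $\langle w,\bx\rangle$ given $\bx\in K$ is bounded by $\phi(0)/\vol(K)$, forcing the conditional mean away from the supporting threshold and hence forcing a constant-size degree-$1$ coefficient. This is a more robust argument that proves a sharper statement (constant level-$\le 1$ weight under only a constant bias bound). In the deep-ball case the paper locates a crossing radius $r_*$ with $\alpha_K(r_*)=\vol(K)$ and must separately show $r_*=\Theta(\sqrt n)$ using another application of Kruskal--Katona (\Cref{clm:r-ast-asymmetric}); your integration-by-parts formulation with the antiderivative $v$ and a simple bulk-averaging step to exhibit $r^*$ with $\alpha_K(r^*)\in[0.2,0.8]$ sidesteps this. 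The informality in "intermediate value" is fine since you cover the jump case, where the drop is $\Omega(1)$ and the contribution to the Stieltjes integral is even larger. Both proofs ultimately apply the Kruskal--Katona density-increment theorem once at a bulk radius where $\alpha_K$ is bounded away from $\{0,1\}$ and pair the resulting $\Omega(1/n)$ drop with a weight of $\Theta(\sqrt n)$ ($|v|$ for you; $|\ovp_{r_*}|$ for the paper), giving the same $\Omega(1/n^2)$ bound.
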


\paragraph{Noise stability of convex sets at high noise rates.}
One motivation for understanding the low-level Hermite weight of bodies in $\R^n$ comes from its connection to the notion of noise stability. Recall that for $t \ge 0$ the \emph{Ornstein-Uhlenbeck} operator $P_t$ in the Gaussian space is defined as follows: for $f: \mathbb{R}^n \rightarrow \mathbb{R}$ the function $P_t f: \mathbb{R}^n \rightarrow \mathbb{R}$ is defined as
\[
P_tf(x) := \Ex_{\by \sim N(0,1)^n} [f(e^{-t}x + \sqrt{1-e^{-2t}} \by)]. 
\]
With the definition of $P_t$, the noise stability of $f$  at noise rate $t$ (denoted by $\mathsf{Stab}_t(f)$ ) is defined to be
\[
\mathsf{Stab}_t(f) := 
\Ex_{\bx}[f(\bx) P_t f(\bx)] = \Ex_{\bx, \by \sim N(0,1)^n} [f(\bx) f(e^{-t}\bx + \sqrt{1-e^{-2t}} \by)]. 
\]
The quantity $\mathsf{Stab}_t(f)$ is a measure of how sensitive $f$ is to perturbation in its input. In particular, as $t$ becomes large $\mathsf{Stab}_t(f)$ measures the correlation of $f$ on positively but only very mildly correlated inputs. On the other hand, as $t\rightarrow 0$, for $\bits$-valued functions $f$ this quantity measures the so-called \emph{Gaussian surface area} of the region $f^{-1}(1)$ (denoted by $\mathsf{surf}(f)$). If the set $\mathcal{A} =f^{-1}(1)$ has a smooth or piecewise smooth boundary,  $\mathsf{surf}(f)$ is defined as
\[
\mathsf{surf}(f) = \int_{x \in\partial \mathcal{A}} \gamma_n(x) d\sigma(x), 
\]
where $\gamma_n(x) = (2\pi)^{-n/2} \cdot \exp(-\Vert x\Vert_2^2/2)$ is the standard Gaussian measure, $\partial \mathcal{A}$ is the boundary of $\mathcal{A},$ and $d\sigma(x)$ is the standard surface measure of $\mathbb{R}^n$. Ledoux~\cite{Ledoux:94}  (and implicitly, earlier Pisier~\cite{Pisier:86}) showed that 
for $t>0$, 
\begin{equation}~\label{eq:surfnoise}
\Prx_{\bx, \by \sim N(0,1)^n} [f(\bx)  = f(e^{-t} \bx + \sqrt{1-e^{-2t}} \by)] \ge 1- \frac{2\sqrt{t}}{\sqrt{\pi}} \mathsf{surf}(f). 
\end{equation}
Hence when $t$ is small the surface area of $f$ provides a good lower bound on the noise stability of $f$ (and as $t\rightarrow 0$ this inequality in fact becomes tight). 
We refer to  \cite{Janson:97} for a detailed discussion.

In \cite{Ball:93} Ball showed that the Gaussian surface area of any 
convex set $K \subseteq \mathbb{R}^n$ is at most $O(n^{1/4})$. Consequently, we get that for any convex set $K : \mathbb{R}^n \rightarrow \{-1,1\}$, 
\begin{equation} \label{eq:stab}
\mathsf{Stab}_t(K) \ge 1- O(\sqrt{t} \cdot n^{1/4}).
\end{equation}
While this bound is meaningful for $t = o(n^{-1/2})$, it is vacuous once $t$ exceeds $Cn^{-1/2}$. In fact, the above inequality can be extended (see \cite{KOS:08}) to show that 
$
\mathsf{Stab}_t(K) \ge \exp (- O(t\sqrt{n})); 
$ however this bound is still quite weak for $t= \omega(n^{-1/2})$.

Theorems~\ref{thm:centrally-symmetric-weight} and~\ref{thm:centrally-asymmetric-weight} yield the first nontrivial noise stability lower bounds for convex sets for large $t$. These bounds follow from a simple and standard fact from Hermite analysis (see Proposition~11.37 of \cite{ODBook}) which is that 
$\mathsf{Stab}_t(f) \ge e^{-2t} W^{\le 2}[f]$ where $W^{\le 2}[f]$ is the weight at levels $0$, $1$ and $2$ of the Hermite spectrum of $f$. Combining this fact with \Cref{thm:centrally-symmetric-weight,thm:centrally-asymmetric-weight}, we get the following corollary:

\begin{corollary}~\label{corr:noise-stab}
For any $t \ge 0$ and any convex set $K \subseteq \R^n$, it holds that $\mathsf{Stab}_t(K) \ge e^{-2t}/{n^2}.$ This bound can be improved to $\mathsf{Stab}_t(K) \ge e^{-2t} /{n}$ if $K$ is centrally symmetric and convex.
\end{corollary}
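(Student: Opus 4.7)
The corollary follows almost immediately by plugging the low-degree Hermite weight bounds (\Cref{thm:centrally-symmetric-weight} and \Cref{thm:centrally-asymmetric-weight}) into the standard Hermite decomposition of noise stability, so my plan is simply to make this combination precise.

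First, I would recall the Hermite expansion of the Ornstein–Uhlenbeck semigroup: writing $f = \sum_S \hat{f}(S) H_S$ in the orthonormal Hermite basis, the operator $P_t$ is diagonalized with eigenvalue $e^{-t|S|}$ on $H_S$, so
\[
\mathsf{Stab}_t(f) \;=\; \Ex_{\bx}[f(\bx) \, P_t f(\bx)] \;=\; \sum_{S} e^{-t|S|}\,\hat{f}(S)^2.
\]
Since every term in this sum is nonnegative, truncating to $|S|\le 2$ and using the monotonicity of $|S|\mapsto e^{-t|S|}$ (so that $e^{-t|S|}\ge e^{-2t}$ for $|S|\le 2$) yields the elementary inequality
\[
\mathsf{Stab}_t(f) \;\ge\; \sum_{|S|\le 2} e^{-t|S|}\,\hat{f}(S)^2 \;\ge\; e^{-2t} \, W^{\le 2}[f],
\]
which is the Hermite-analytic fact stated just before the corollary.

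Next, I would apply this with $f = K$ (viewed as a $\{-1,1\}$- or $\{0,1\}$-valued function on $\R^n$). For the general case, \Cref{thm:centrally-asymmetric-weight} gives $W^{\le 2}[K]\ge \Omega(1/n^2)$, so $\mathsf{Stab}_t(K)\ge e^{-2t}\cdot \Omega(1/n^2)$, as claimed. For the centrally symmetric case, note that central symmetry forces all odd-degree Hermite coefficients to vanish, so $W^{\le 2}[K]=W^{=0}[K]+W^{=2}[K]$; \Cref{thm:centrally-symmetric-weight} then gives $W^{\le 2}[K]\ge \Omega(1/n)$, which upgrades the bound to $\mathsf{Stab}_t(K)\ge e^{-2t}\cdot \Omega(1/n)$.

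There is no real obstacle here: both ingredients (the Hermite-eigenvalue identity for $P_t$ and the level-$\le 2$ weight lower bounds) are already in hand, and the only thing to check is the bookkeeping that the two theorems indeed bound $W^{\le 2}[K]$ and not some differently indexed quantity, which is why I would explicitly invoke the parity argument in the symmetric case.
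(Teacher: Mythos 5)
Your proposal matches the paper's argument: the paper also simply cites the eigenvalue identity $\mathsf{Stab}_t(f) \geq e^{-2t} W^{\leq 2}[f]$ (Proposition 11.37 of \cite{ODBook}) and plugs in \Cref{thm:centrally-symmetric-weight} and \Cref{thm:centrally-asymmetric-weight}. Your parity observation in the symmetric case is correct but unnecessary, since $W^{\leq 2}[K] \geq W^{=0}[K]+W^{=2}[K]$ holds trivially for any $K$ and \Cref{thm:centrally-symmetric-weight} already lower bounds the latter.
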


We note that the bound given by \Cref{corr:noise-stab} is significantly better than the bound that follows from \cite{Ball:93} as described above for $t \gg \frac{\log n}{\sqrt{n}}$. Beyond the quantitative aspect, we feel that there is an interesting qualitative distinction between our noise stability bound and the noise stability bound that follows from~\cite{Ball:93} for convex sets (as well as several others in the literature, as we explain below). 

Roughly speaking, one can analyze the noise stability of sets in two limiting cases of noise rates.  The first (i) is the \emph{low noise rate (LNR) regime}: here the noise rate $t$ is close to zero and hence the correlation $e^{-t}$ is close to $1$.  The second (ii) is the \emph{high noise rate (HNR) regime}: here  the noise rate $t$ is large so the correlation $e^{-t}$ is close to $0$. Note that the noise stability bounds of Ball \eqref{eq:stab}  are most interesting in the LNR regime, whereas ours (\Cref{corr:noise-stab}) are most interesting in the HNR regime.  A number of other results in the literature, such as \cite{Nazarov:03, Kane14GL, DRST14, HKM12}, also show  noise stability bounds for various families of functions (such as polytopes with few facets and low-degree polynomial threshold functions) in the LNR regime.

The regime of applicability of noise stability bounds is closely connected to the methods used to prove those bounds. In particular, noise stability in the LNR regime is essentially controlled by the surface area of a set (this connection is made explicit in \Cref{eq:surfnoise}): if the surface area is low then the noise stability is high and vice versa. Not surprisingly, since surface area is a geometric quantity, methods of geometric analysis are used to show noise stability bounds in the LNR regime (as in \cite{Ball:93} and the other works cited above). However, this connection between surface area and noise stability breaks down in the HNR regime.  For intution on why this occurs, note that the noise stability of $K:\mathbb{R}^n \rightarrow \{-1,1\}$ at noise rate $t$ is captured by $\Pr[K(\bg) =   K \ (\bg' )]$ where $(\bg,\bg')$ is a correlated pair of $n$-dimensional Gaussians with variance $1$ and correlation $e^{-t}$ in each coordinate. In the LNR regime,  i.e., when $e^{-t}$ is close to $1$, $\bg$ and $\bg'$ can be visualized as tiny perturbations of each other, and it is intuitively plausible that the above probability can be understood by analyzing the geometry of $K$. However, 
in the HNR regime, $e^{-t}$ approaches $0$ and hence (at least at an intuitive level) the resulting geometric picture is essentially indistinguishable from the case when $e^{-t} =0$, i.e., $\bg$ and $\bg'$ are totally uncorrelated. Thus, it is not clear how geometric arguments can be used to prove 
lower bounds on noise stability in the HNR regime. In fact, 
\Cref{app:stability} gives an example of two functions which have the same surface area (and hence essentially the same noise stability in the LNR regime), but in the HNR regime the noise stability of the second function is exponentially worse than the first.  (In fact, the proof of this separation relies on \Cref{corr:noise-stab} --- one of the functions is convex and the other is not.) 

Despite the above intuitions, somewhat surprisingly our noise stability lower bounds are in fact established using geometric arguments.  Our geometric arguments do not directly analyze mildly correlated Gaussians; rather, we use the (easy to prove but) deep connection between noise stability in the HNR regime  and correlation with  low degree polynomials. The phenomenon of noise stability in the HNR regime being completely controlled by correlation with low degree polynomials is, in our view, a completely non-geometric one; it relies on the fact that 
Hermite polynomials are eigenfunctions of the noise operator, or in more detail, on the fact that for every vector $\alpha \in \N^n$ the
Hermite polynomial $h_\alpha$ is an eigenfunction of the noise operator with $e^{-t|\alpha|}$ as the corresponding eigenvalue.   
 (See \Cref{sec:hermite} for basic background
on Hermite analysis). Equipped with this connection, we use geometric arguments to show that convex bodies are either correlated with a degree-$1$ polynomial or else with one special type of degree-$2$ polynomial, namely one corresponding to a Euclidean ball.  

We close this discussion by remarking that, as mentioned before, several papers~\cite{Nazarov:03, Kane14GL, DRST14, HKM12} have studied noise stability in the LNR regime, but much less appears to be known about noise stability in the HNR regime. We believe that studying the noise stability of Boolean-valued functions in the HNR regime is well motivated both from the vantage point of structural analysis and through algorithmic applications such as learning. Our results can be viewed as a step in this direction.

\ignore{

}

\subsection{Directions for future work}

Our results in this paper suggest a number of directions for future work; we close this introduction with a brief discussion of some of these.

One natural goal is to establish quantitatively sharper versions of our results for general convex sets.  While the weak learning results and bounds on low-degree Hermite concentration in this paper are essentially best possible (up to log factors) for centrally symmetric convex sets, there is potentially more room for improvement in our results for general convex sets.

In a different direction, the basic Kruskal-Katona theorem for monotone Boolean functions has been extended in a number of different ways.  Keevash \cite{Keevash:08} and O'Donnell and Wimmer \cite{OWimmer:09} have given incomparable ``stability'' results which extend the Kruskal-Katona theorem by giving information about the approximate structure of monotone functions for which the Kruskal-Katona density increment lower bound is close to being tight. In particular, \cite{OWimmer:09} show that (under mild conditions) if a monotone function $f: \bn \to \bits$ is not noticeably correlated with any single coordinate when restricted to the $k$-th slice of $\bn$, then the density increment $\mu_{k+1}(f) - \mu_k(f)$ must be at least $\Omega({\frac {\log n}{n}})$, strengthening the $\Omega({\frac 1 { n}})$ lower bound which follows from the original Kruskal-Katona theorem.  \cite{OWimmer:09} use this sharper result to give a $\poly(n)$-time weak learning algorithm for monotone functions that achieves advantage $\Omega({\frac {\log n}{\sqrt{n}}})$, which is the best possible by the lower bound of \cite{BBL:98}. As another extension, in \cite{Bukh12} Bukh proves a multidimensional generalization of the Kruskal-Katona theorem.  An intriguing goal for future work is to investigate possible Gaussian space analogues of the \cite{Keevash:08,OWimmer:09,Bukh12} results. In particular, if a Gaussian space analogue of \cite{OWimmer:09} could be obtained, this might lead to a $\poly(n)$-time weak learner for centrally symmetric convex sets achieving advantage $\Omega({\frac {\log n}{\sqrt{n}}})$, which would be quite close to optimal by our lower bound result \Cref{thm:our-BBL-lb}.

A last goal is to obtain quantitatively stronger weak learning results for convex sets that have a ``simple structure.''  In addition to giving an $n^{O(\sqrt{n})}$-time strong learning algorithm for general convex sets, \cite{KOS:08} also gave an $n^{O(\log k)}$-time strong learning algorithm for convex sets that are intersections of $k$ halfspaces.  Is there a $\poly(n)$-time weak learning algorithm that achieves accuracy $1/2 + o(1/\sqrt{n})$ for intersections of a small number of halfspaces?
\ignore{
}

\section{Kruskal-Katona for convex sets} \label{sec:main}


In this section we give formal statements and proofs of our main structural results, \Cref{lem:key,lem:key-general} below, which are analogues of the Kruskal-Katona theorem for convex and centrally symmetric convex sets.
To do this, we first recall the definition of the shell density function $\alpha_K (\cdot)$  from \Cref{eq:shell-density-def}: for $r \geq 0$,
\[
\alpha_K(r) := \Prx_{\bx \in \mathbb{S}^{n-1}_r} [\bx \in K]. 
\]
So $\alpha_K(r)$ is equal to the fraction of the origin-centered radius-$r$ sphere which lies in $K$.  (A view which will be useful later is that it is the probability that a random Gaussian-distributed point $\bg \sim N(0,1)^n$ lies in $K$, conditioned on $\|\bg\|=r.$) An easy fact about the function $\alpha_K(\cdot)$ is the following: 
\begin{fact}~\label{fact:convex-decreasing}
If $K$ is convex and $0^n \in K$ then $\alpha_K(\cdot)$ is non-increasing. 
\end{fact}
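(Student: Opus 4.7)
The plan is to exploit the radial scaling map $\phi_{r',r}(x) := (r'/r) x$, which is a measure-preserving bijection (with respect to the uniform measures on the two spheres) between $\mathbb{S}^{n-1}_r$ and $\mathbb{S}^{n-1}_{r'}$ whenever $r,r' > 0$. Fix $0 < r' \leq r$; I want to show $\alpha_K(r') \geq \alpha_K(r)$.

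First I would observe that if $x \in \mathbb{S}^{n-1}_r \cap K$, then the line segment from $0^n$ to $x$ lies entirely in $K$. This is immediate: $0^n \in K$ by hypothesis, $x \in K$, and $K$ is convex, so every convex combination $\lambda \cdot 0^n + (1-\lambda) x = (1-\lambda)x$ for $\lambda \in [0,1]$ belongs to $K$. Taking $1 - \lambda = r'/r \in [0,1]$ gives $\phi_{r',r}(x) = (r'/r)x \in K$. Since $\|\phi_{r',r}(x)\| = r'$, we have $\phi_{r',r}(x) \in \mathbb{S}^{n-1}_{r'} \cap K$. Thus $\phi_{r',r}(\mathbb{S}^{n-1}_r \cap K) \subseteq \mathbb{S}^{n-1}_{r'} \cap K$.

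Next I would use that $\phi_{r',r}$ is a radial dilation, so if $\mu_s$ denotes the uniform probability measure on $\mathbb{S}^{n-1}_s$, then $\mu_{r'}(\phi_{r',r}(A)) = \mu_r(A)$ for every measurable $A \subseteq \mathbb{S}^{n-1}_r$. Applying this to $A = \mathbb{S}^{n-1}_r \cap K$ and using the inclusion from the previous step yields
\[
\alpha_K(r') = \mu_{r'}\bigl(\mathbb{S}^{n-1}_{r'} \cap K\bigr) \geq \mu_{r'}\bigl(\phi_{r',r}(\mathbb{S}^{n-1}_r \cap K)\bigr) = \mu_r\bigl(\mathbb{S}^{n-1}_r \cap K\bigr) = \alpha_K(r),
\]
which is the desired monotonicity.

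There is no real obstacle here: the only thing to be careful about is identifying the uniform measure on $\mathbb{S}^{n-1}_r$ with the pushforward of the uniform measure on $\mathbb{S}^{n-1}_{r'}$ under the dilation, which is immediate from rotational invariance (both are rotation-invariant probability measures on their respective spheres, and the dilation commutes with rotations). Thus the proof is essentially a one-line convexity-plus-dilation argument.
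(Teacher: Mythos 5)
Your proof is correct and follows the same approach as the paper's: the paper also notes that convexity together with $0^n \in K$ gives $\lambda x \in K$ for all $\lambda \in [0,1]$, and concludes the probability inequality directly from this. Your version simply spells out the measure-preserving radial dilation that the paper leaves implicit.
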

\begin{proof}
By convexity, if $x \in K$ then $\lambda x \in K$ for any $\lambda \in [0,1]$. This immediately implies that $ \Prx_{\bx \in \mathbb{S}^{n-1}_r} [\bx \in K] \leq \Prx_{\bx \in \mathbb{S}^{n-1}_{\lambda r}} [\bx \in K]$ and consequently $\alpha_K(\cdot)$ is non-increasing. 
\end{proof}

We begin with our analogue of the Kruskal-Katona theorem for centrally-symmetric convex sets, since it is somewhat easier to state.  The following is a more general version of \Cref{thm:informal-centrally-symmetric-density-increment}:
\begin{theorem} [Kruskal-Katona for centrally symmetric convex sets] \label{lem:key}
Let $K \subset \mathbb{R}^n$ be a centrally symmetric convex body and let $r>0$ be such that $\alpha
_K(r) \in (0,1)$. Let $0 < \kappa <1/10$. Then 
\[
\alpha_K (r (1-\kappa))  \geq \alpha_K(r) + \kappa \cdot \Theta((\alpha_K(r)(1-\alpha_K(r)))^2).
\]
\end{theorem}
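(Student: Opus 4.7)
The plan follows the high-level strategy sketched in \Cref{sec:high-level-ideas}: reduce the $n$-dimensional density-increment question to a two-dimensional one by averaging over random $2$-planes through the origin. Concretely, let $\bV$ denote a uniformly random $2$-dimensional subspace of $\R^n$, and for such a subspace $V$ write $K_V := K \cap V$ and $C_V(r) := \mathbb{S}^{n-1}_r \cap V$ for the radius-$r$ circle in $V$. A simple rotational invariance argument shows that
\begin{equation}
\alpha_K(r) \;=\; \Ex_{\bV}\!\left[\Prx_{\bx \sim C_{\bV}(r)}[\bx \in K_{\bV}]\right] \;=\; \Ex_{\bV}\!\left[\alpha_{K_{\bV}}(r)\right],
\end{equation}
and likewise for $\alpha_K((1-\kappa)r)$. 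Since $K$ is centrally symmetric it contains $0^n$, and therefore $K_V$ is a centrally symmetric $2$-dimensional convex body containing the origin of $V$; thus \Cref{fact:convex-decreasing} applied in $V$ gives $\alpha_{K_{\bV}}((1-\kappa)r) \geq \alpha_{K_{\bV}}(r)$ \emph{for every} $V$, i.e.\ density is never lost on any cross-section.

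The first substantive step is a purely two-dimensional lemma (corresponding to \Cref{clm:two-d-increment}): if $L \subset \R^2$ is a centrally symmetric convex body and $\alpha_L(r) \in [\beta, 1-\beta]$ for some $\beta \in (0,1/2]$, then
\[
\alpha_L((1-\kappa)r) \;\geq\; \alpha_L(r) + \Omega(\kappa),
\]
where the hidden constant is absolute (and in particular independent of $\beta$, once $\beta$ is bounded below). The proof is elementary and is illustrated by Figure~1: the boundary of $L$ cuts the circle $C_V(r)$ into arcs whose endpoints come in antipodal pairs. One argues, using only central symmetry and convexity, that when the complement has angular measure bounded away from $0$ and from $2\pi$, a ``blue-arc'' portion of the smaller circle that was outside $L$ at radius $r$ must now lie inside $L$ at radius $(1-\kappa)r$, and this gain is at least a constant times $\kappa$ in angular measure.

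The second substantive step is a quantitative ``random $2$-section'' claim in the spirit of Raz~\cite{raz1999exponential}, namely \Cref{claim:raz}: for a random $2$-subspace $\bV$, the density $\alpha_{K_{\bV}}(r)$ is concentrated around $\alpha_K(r)$ in the sense that, writing $\alpha := \alpha_K(r)$, there is a non-negligible probability (of order $\Omega(\alpha(1-\alpha))$) that $\alpha_{K_{\bV}}(r) \in [\alpha/2,\, (1+\alpha)/2]$ (or some similar $\Omega(\alpha(1-\alpha))$-sized window strictly inside $(0,1)$). The standard Raz argument produces such concentration only once the subspace dimension grows like $1/\eps^2$, so the main work here is to rerun that argument in the very low-dimensional regime $\dim(\bV)=2$, settling for a concentration-with-positive-probability statement rather than with-high-probability; I expect this step to be the main obstacle and the place where the $(\alpha(1-\alpha))^2$ factor in the theorem statement is born (one factor of $\alpha(1-\alpha)$ from the probability that $\bV$ is ``good,'' another from the width of the window in which the $2$-dimensional lemma gives an $\Omega(\kappa)$ increment).

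Combining the ingredients: let $\mathcal{E}$ be the event that $\alpha_{K_{\bV}}(r)$ lies in the good window. On $\mathcal{E}$ the two-dimensional lemma gives $\alpha_{K_{\bV}}((1-\kappa)r) - \alpha_{K_{\bV}}(r) \geq \Omega(\kappa)$, while off $\mathcal{E}$ the difference is merely nonnegative by \Cref{fact:convex-decreasing}. Taking expectations over $\bV$ and using the averaging identity yields
\[
\alpha_K((1-\kappa)r) - \alpha_K(r) \;=\; \Ex_{\bV}\!\left[\alpha_{K_{\bV}}((1-\kappa)r) - \alpha_{K_{\bV}}(r)\right] \;\geq\; \Pr[\mathcal{E}] \cdot \Omega(\kappa) \;\geq\; \Omega\!\bigl(\kappa \cdot (\alpha(1-\alpha))^2\bigr),
\]
which is exactly the stated conclusion. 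The only places where $n$ enters are the averaging identity and the Raz-type claim, both of which give $n$-independent bounds, so the final increment is dimension-free as advertised.
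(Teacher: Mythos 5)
Your plan is the paper's plan: average over random $2$-planes, note that no cross-section loses density, prove a Raz-type positive-probability concentration for random $2$-dimensional slices, prove a $2$-dimensional increment lemma, and combine. Structurally you have the right decomposition. The gap is that your $2$-dimensional lemma is false as stated, and as a consequence the final display would, if the lemma held, actually prove the strictly stronger bound $\Omega(\kappa\,\alpha(1-\alpha))$ rather than the $\Omega(\kappa\,(\alpha(1-\alpha))^2)$ you need.

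Concretely, the claim ``$\alpha_L((1-\kappa)r) \geq \alpha_L(r) + \Omega(\kappa)$ with an absolute constant once $\alpha_L(r) \in [\beta, 1-\beta]$'' does not hold uniformly in $\beta$: take $L$ a thin centered strip in $\R^2$, so that $L$ meets the radius-$r$ circle in two tiny antipodal arcs of total density $p$. Shrinking to radius $(1-\kappa)r$ adds only $\Theta(\kappa p)$ to the density, so the increment genuinely vanishes with $p$. The paper's \Cref{clm:two-d-increment} makes this precise, giving a gain of $\pi\kappa(1-p)\sin(\pi p/2) = \Theta(\kappa \min\{p,1-p\})$. And your good event $\mathcal{E}$ unavoidably admits slices with $p$ as small as $\Theta(\beta)$ --- your own proposed window $[\alpha/2,(1+\alpha)/2]$ reaches down to $\alpha/2$, and the paper's window $[\beta/4,9/10]$ likewise --- so the conditional gain on $\mathcal{E}$ is only $\Theta(\kappa\beta)$, not $\Omega(\kappa)$. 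Multiplying $\Pr[\mathcal{E}] = \Omega(\beta)$ by the correct conditional gain $\Theta(\kappa\beta)$ yields $\Theta(\kappa\beta^2) = \Theta(\kappa(\alpha_K(r)(1-\alpha_K(r)))^2)$, exactly as in the paper. Your parenthetical remark attributing the second factor to ``the width of the window in which the $2$-dimensional lemma gives an $\Omega(\kappa)$ increment'' shows you sensed the issue; the formal lemma statement and the combining display just need to carry the $p$-dependence honestly instead of asserting an absolute-constant increment across the whole good window.
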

Intuitively, the above theorem says that at any input $r$ where the shell density function $\alpha_K(r)$ is not too close to 0 or 1, slightly decreasing the input $r$ will cause the shell density function to noticeably increase. 

As was noted earlier, such a density increment statement does not hold for general convex bodies $K$ that contain the origin (for example, if $K = \{x: x_1 \ge 0\}$ then $\alpha_K(r) =1/2$ for all $r>0$). The next theorem establishes a density increment for general convex bodies that contain an origin-centered ball (and implies \Cref{thm:informal-convex-density-increment}):

\begin{theorem} [Kruskal-Katona for general convex sets] \label{lem:key-general}
Let $K \subset \mathbb{R}^n$ be a  convex body that contains the radius-$r$ origin-centered ball $B(0^n, r_{\mathsf{small}})$.  Let $r$ be such that $\alpha
_K(r) \in (0,1)$ and let $0 < \kappa <1/10$. Then 
\begin{eqnarray*}
\alpha_K (r (1-\kappa))  \geq \begin{cases}   \alpha_K(r) + \Theta\big(\kappa \cdot \alpha_K(r
) \cdot \frac{ r_{\mathsf{small}}}{r}\big) \ &\textrm{if } \ 0 < \alpha_K(r) \le 1/2 \\ 
\alpha_K(r) + \Theta\big( \kappa \cdot (1-\alpha_K(r)) \cdot \min \big\{1-\alpha_K(r), \frac{r_{\mathsf{small}}}{r} \big\} \big)
 \ &\textrm{if } \ 1/2 < \alpha_K(r) \le 1. \\
\end{cases} 
\end{eqnarray*} 
\end{theorem}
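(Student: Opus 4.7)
The plan is to prove \Cref{lem:key-general} by reducing the $n$-dimensional problem to a family of $2$-dimensional ones, exactly as in the sketch in \Cref{sec:high-level-ideas} for the centrally symmetric case, but with the inner-ball hypothesis $B(0^n, r_{\mathsf{small}})\subseteq K$ replacing central symmetry. By rotation invariance of the uniform measure on origin-centered spheres, for every $\rho>0$ one has the averaging identity $\alpha_K(\rho)=\mathbb{E}_{V}[\alpha_{K\cap V}(\rho)]$, where $V$ is a Haar-random $2$-dimensional linear subspace of $\mathbb{R}^n$. Applying this at $\rho=r$ and $\rho=r(1-\kappa)$ rewrites the density gain as
\[
\alpha_K(r(1-\kappa))-\alpha_K(r) \;=\; \mathbb{E}_{V}\bigl[\alpha_{L_V}(r(1-\kappa))-\alpha_{L_V}(r)\bigr],\qquad L_V:=K\cap V,
\]
where each $L_V$ is a planar convex body containing the $2$D disk of radius $r_{\mathsf{small}}$, and the integrand is pointwise non-negative by \Cref{fact:convex-decreasing}. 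The task reduces to exhibiting a useful pointwise lower bound that is attained with non-negligible probability over $V$.

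For the pointwise bound, I would first prove the following planar density-increment lemma: for any $2$D convex body $L$ with $B(0,r_{\mathsf{small}})\subseteq L$ and $\alpha_L(r)\in(0,1)$,
\[
\alpha_L(r(1-\kappa))-\alpha_L(r) \;\ge\; c\cdot\min\!\left\{\tfrac{\kappa\,r_{\mathsf{small}}}{r},\;1-\alpha_L(r)\right\}.
\]
The argument is an ``ice-cream cone'' calculation. Fix any boundary point $p\in\partial L\cap \mathbb{S}^1_r$ between an ``in'' arc and an adjacent ``out'' arc. Convexity and the inner disk together force $L\supseteq\mathrm{conv}(B(0,r_{\mathsf{small}})\cup\{p\})$, and the two tangent lines from $p$ to $B(0,r_{\mathsf{small}})$ each make angle $\arcsin(r_{\mathsf{small}}/r)$ with the line joining $p$ to the origin. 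A law-of-cosines parameterization of either tangent shows that along it, the distance to the origin drops from $r$ to $r(1-\kappa)$ only after angular travel (as seen from the origin) of order $\kappa r_{\mathsf{small}}/r$. Thus each such boundary point contributes $\Omega(\kappa r_{\mathsf{small}}/r)$ of new ``in'' angular measure when we shrink the radius; summing over boundary points and accounting for the overlap of contributions when the gap is narrower than $\kappa r_{\mathsf{small}}/r$ yields the claimed bound, capped by $1-\alpha_L(r)$.

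The next ingredient is a Raz-type probabilistic lemma guaranteeing that a Haar-random 2D subspace $V$ places $\beta:=\alpha_{L_V}(r)$ in a useful range with non-negligible probability. Writing $\alpha=\alpha_K(r)$, the statement I would aim for is
\[
\Pr\bigl[\beta\in\bigl[c_1\alpha,\;1-c_2(1-\alpha)\bigr]\bigr]\;\ge\;\Omega(\min\{\alpha,1-\alpha\})
\]
for universal constants $c_1,c_2$. Since $\mathbb{E}[\beta]=\alpha$ and $\beta\in[0,1]$, the Paley--Zygmund inequality (using $\mathbb{E}[\beta^2]\le\mathbb{E}[\beta]=\alpha$) yields $\Pr[\beta\ge\alpha/2]\ge\alpha/4$; the same argument applied to $1-\beta$ gives $\Pr[1-\beta\ge(1-\alpha)/2]\ge(1-\alpha)/4$. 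Combined with Markov-type upper bounds $\Pr[\beta\ge 1-c_0]\le\alpha/(1-c_0)$ and $\Pr[\beta\le c_0]\le (1-\alpha)/(1-c_0)$, these two Paley--Zygmund estimates can be intersected in each regime of $\alpha$ to produce the claimed two-sided control.

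Putting the pieces together: in Case~1 ($\alpha\le 1/2$) the good event has probability $\Omega(\alpha)$, and on it $1-\beta=\Omega(1)$, so the planar lower bound is $\Omega(\kappa r_{\mathsf{small}}/r)$ uncapped, yielding the bound $\Omega(\kappa\alpha\,r_{\mathsf{small}}/r)$ of the first branch. In Case~2 ($\alpha>1/2$) the good event has probability $\Omega(1-\alpha)$, but on it $1-\beta$ is only of order $1-\alpha$, so the planar bound is $\Omega(\min\{\kappa r_{\mathsf{small}}/r,\,1-\alpha\})$; this combines to the claimed $\Omega(\kappa(1-\alpha)\min\{1-\alpha,\,r_{\mathsf{small}}/r\})$ after a short case analysis on whether $\kappa r_{\mathsf{small}}/r$ or $1-\alpha$ is smaller. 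The main technical obstacle is the Raz-type step: since the random subspace has dimension merely $2$, concentration is unavailable and the two-sided control of $\beta$ must be squeezed out of second-moment and Markov inputs handled separately in each regime of $\alpha$. A secondary, essentially bookkeeping, obstacle is the planar accounting when the in-set decomposes into many short arcs, so that per-endpoint contributions must be added without double counting.
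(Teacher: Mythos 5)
Your overall architecture — averaging over Haar-random 2D subspaces, a planar density-increment lemma exploiting the inner disk, and a Raz-type lemma to show the slice density is well-behaved with non-negligible probability — is the same as the paper's. The genuine gap is in the Raz-type step.

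Your proposed replacement of Raz's argument by ``Paley--Zygmund plus Markov'' cannot work. Paley--Zygmund (with $\mathbb{E}[\beta^2]\le\alpha$) yields $\Pr[\beta>\alpha/2]\ge\alpha/4$, while the upper-tail control you invoke is Markov's bound $\Pr[\beta>1-c_0]\le\alpha/(1-c_0)$, which exceeds $\alpha$. Subtracting gives $\Pr\bigl[\beta\in(\alpha/2,1-c_0]\bigr]\ge\alpha\bigl(\tfrac14-\tfrac1{1-c_0}\bigr)$, which is negative for every $c_0>0$; intersecting the two Paley--Zygmund events instead gives $\ge 1/4-1<0$, also vacuous. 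A first-moment (Markov) upper-tail bound is simply too weak here: you need $\Pr[\beta>9/10]=O(c^2)$, not $O(c)$, for the good event to survive with probability $\Omega(\min\{\alpha,1-\alpha\})$. This quadratic improvement is exactly what Raz's construction delivers and what the paper's \Cref{claim:raz} implements. Its proof realizes the random 2D subspace $\bV$ as the span of two \emph{independent} uniform points $\by^{(1)},\by^{(2)}\in\mathbb{S}^{n-1}$, notes that $\Pr[\by^{(1)},\by^{(2)}\in C]=c^2$ unconditionally, and uses the \emph{marginal} uniformity of each $\by^{(i)}$ on $\mathbb{S}^{n-1}\cap\bV$ (not joint conditional independence — this is exactly the Borel--Kolmogorov subtlety flagged in the paper's remark) to show $\Pr[\by^{(1)},\by^{(2)}\in C\mid\mu_{\bV,1}(C\cap\bV)>9/10]>8/10$, whence $\Pr[\mu_{\bV,1}(C\cap\bV)>9/10]<\tfrac{10}{8}c^2$. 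No amount of bare moment bookkeeping on the scalar $\beta$ gives you this; the gain comes from the two-point structure of the sampling.

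Two secondary remarks. First, your planar lemma as stated is strictly stronger than the paper's (\Cref{clm:2d-increment-new}): the second argument of your $\min$ is $1-\alpha_L(r)$ rather than $\Theta\bigl(\kappa(1-\alpha_L(r))\bigr)$. It turns out the stronger form is true in the regime where its min is $1-\alpha_L(r)$ (there, $1-\alpha_L(r)<\kappa r_{\mathsf{small}}/r\le\kappa$ forces every boundary arc to have angular length below $\sqrt{8\kappa}$, so its chord lies outside the radius-$(1-\kappa)$ circle and every arc is wholly absorbed); but your per-boundary-point tangent-cone accounting does not straightforwardly prove it, because when an arc is shorter than $2\arccos(r_{\mathsf{small}}/r)$ the chord — not the tangent cone from the endpoint — is the relevant supporting facet, and the paper's split between \Cref{clm:angle-include} (short arcs, chord argument) and \Cref{clm:angle-include2} (arcs of length $\ge\pi/2$, tangent-to-inner-ball argument) is doing real work that your sketch skips. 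Second, your stronger planar bound, if fed through the combination step, would produce a conclusion in Case~2 that exceeds the one in \Cref{lem:key-general}; this should have alerted you that the claimed planar inequality needs a careful proof rather than a sum-of-endpoint-contributions heuristic.
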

Note that the increment in the above theorem is linearly dependent on $r_{\mathsf{small}}/r$ (and thus vanishes when no origin-centered ball is contained in $K$). It is not difficult to see that this dependence is best possible by considering some fixed $r_{\mathsf{small}}>0$ and the convex set $K = \{x \in \mathbb{R}^n: x_1 + r_{\mathsf{small}}\ge 0\}$.



\subsection{Proof of \Cref{lem:key}} \label{sec:proof-of-key-lemma}
The proofs of \Cref{lem:key} and \Cref{lem:key-general} have a substantial overlap; in particular, the first part of the proofs are identical.
To avoid repetition we will explicitly note the places where the two proofs diverge. 

We now start with the proof of \Cref{lem:key}. Note that because $K$ contains the origin, $\alpha_K(\cdot)$ is a non-increasing function. Set $\beta = \min \{\alpha_K(r), 1- \alpha_K(r)\}$ so that $0 < \beta \leq 1/2$ and thus $\alpha_K(r) \in [\beta, 1-\beta]$. For simplicity of exposition, we now rescale the convex body by a factor of $1/r$; after this rescaling we have that $\alpha_K(1)
 \in [\beta, 1-\beta]$, and thus we need to prove a lower bound on $\alpha_K(1-\kappa)$.


Let $C_1 := K \cap \mathbb{S}^{n-1}$, and let us write\footnote{We include the subscript $1$ on $\mu$ because we will soon be considering spheres of radii other than 1.} $\mu_1(C_1)$ to denote the measure of $C_1$ as a fraction of $\mathbb{S}^{n-1}$, so $\mu_1(C_1)$ satisfies $\mu_1(C_1) \in [\beta, 1-\beta].$   In other words, under the Haar measure (i.e.~the uniform distribution) on the unit sphere, $\mu_1(C_1)$ is the probability that a randomly drawn point lies in $C_1$.  

Our argument makes crucial use of a variant of a lemma due to Raz~\cite{raz1999exponential}. In particular, Raz showed that for any subset $A \subset \mathbb{S}^{n-1}$ with $\mu_1(A)$ bounded away from 0 and 1\ignore{\rnote{This was a condition that Raz required, right - didn't Klartag and Regev dial this down to $e^{-n^{1/3}}$ or something?}} and a random subspace $\bV$ of $\mathbb{R}^n$ of dimension $\approx 1/\epsilon^2$, the Haar measure of $A \cap \bV$ (as a fraction of the unit sphere in $\bV$) is $\epsilon$-close to $\mu_1(A)$  with high probability.  
We adapt Raz's arguments to show a variant of this result. Roughly speaking, our variant implies that under the above conditions, \ignore{in particular, provided the measure of $A$ is bounded away from $0$ and $1$,} the measure of $A \cap \bV$ as a fraction of the unit sphere in $\bV$ is bounded away from $0$ and $1$ with non-negligible probability \emph{even if $\bV$ is a random subspace of dimension only $2$}. This variant is useful for us because once the ambient dimension is $2$, we can use elementary geometric arguments to prove \Cref{lem:key}. 

We now state our variant of Raz's lemma:

\begin{claim}
[Variant of the main lemma of \cite{raz1999exponential}]
 \label{claim:raz}
Let $\bV$ be a uniform random 2-dimensional subspace of $\R^n$ and let $C$ be a subset of $\mathbb{S}^{n-1}$ such that 
$\mu_1(C) \in [\beta,1/2]$ for $0 < \beta \le 1/2$.
 Then 
$$
 \Prx_{\bV} [\mu_{\bV,1}(C \cap \bV) \in [\beta/4, 9/10]] \ge \frac{\beta}{2}. 
 $$
Similarly, if $\mu_1(C) \in [1/2,1-\beta]$, then 
$$
 \Prx_{\bV} [\mu_{\bV,1}(C \cap \bV) \in [1/10, 1-\beta/4]] \ge \frac{\beta}{2}. 
 $$ 
 Here $\mu_{\bV,1}(C \cap \bV)$ denotes the measure of $C \cap \bV$ as a fraction of the unit sphere $\bV \cap \mathbb{S}^{n-1}$ of the $2$-dimensional subspace $\bV$.
\end{claim}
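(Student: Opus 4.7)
The plan is a second-moment (Paley--Zygmund) argument for the lower tail, combined with a separate upper-tail bound. Both steps parallel Raz's approach but require care in the $k=2$ regime. I will treat the first case, $m := \mu_1(C) \in [\beta, 1/2]$; the symmetric case follows by replacing $C$ with $C^c$ and using $\mu_{\bV,1}(C^c \cap \bV) = 1 - \mu_{\bV,1}(C \cap \bV)$.

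Let $X := \mu_{\bV,1}(C \cap \bV)$. Two elementary moment estimates come first. By rotational invariance of the Haar measure, sampling a uniform 2-dimensional subspace $\bV$ and then a uniform unit vector in $\bV$ produces a uniformly distributed point on $\mathbb{S}^{n-1}$, so $\mathbb{E}_\bV[X] = m$. Since $X \in [0,1]$, we have $X^2 \le X$ pointwise, so $\mathbb{E}[X^2] \le m$. Applying Paley--Zygmund with $\lambda = \beta/(4m) \le 1/4$ (using $m \ge \beta$) yields
\[
\Pr[X \ge \beta/4] \;\ge\; (1-\lambda)^2 \frac{m^2}{\mathbb{E}[X^2]} \;\ge\; \left(\tfrac{3}{4}\right)^{\!2} m \;\ge\; \tfrac{9\beta}{16}.
\]

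To conclude via $\Pr[X \in [\beta/4, 9/10]] \ge \Pr[X \ge \beta/4] - \Pr[X > 9/10]$, it suffices to show $\Pr[X > 9/10] \le \beta/16$. For this I plan to sharpen the bound on $\mathbb{E}[X^2]$ via spherical harmonic (Funk--Hecke) analysis. Expanding $\mathbf{1}_C = \sum_{l \ge 0} \phi_l$ into spherical harmonics of degree $l$, the odd-degree terms are annihilated by averaging over an antipodally symmetric great circle, and a direct Gegenbauer-polynomial computation identifies the Funk--Hecke eigenvalues $\lambda_l$ of the arcsine-type kernel governing the joint distribution of two i.i.d.~points on a random great circle. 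This gives
\[
\mathbb{E}_\bV[X^2] \;=\; m^2 \;+\; \sum_{l \ge 2,\ l\ \text{even}} \lambda_l\, \|\phi_l\|_2^2 ,
\]
with (for example) $\lambda_2 = (n-2)/(2(n-1)) \le 1/2$ and $\lambda_l \to 0$ as $l \to \infty$. Combining this sharper second-moment bound with the constraint $m \le 1/2$, and a truncation-plus-Chebyshev-type tail argument applied to $|X - m|$, should then yield the required upper-tail bound $\Pr[X > 9/10] \le \beta/16$.

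The main obstacle is precisely this last step. In Raz's original setting ($k = \Theta(1/\epsilon^2) \gg 1$), dimension amplification controls the contributions of the higher-degree harmonics automatically; at $k = 2$ we sit near the ``Bernoulli limit'' $\mathbb{E}[X^2] = m$ (in which $X$ is concentrated on $\{0,1\}$ and the desired statement outright fails), so the tail is not a priori controlled by any general moment bound. The modification of Raz's argument alluded to in \Cref{sec:claim:raz:proof} consists in tracking the low-$l$ eigenvalues $\lambda_l$ carefully, uniformly in $n$, exploiting the constraint $m \le 1/2$ to rule out the Bernoulli-like pathologies, and converting the sharpened second moment into a tail estimate of order $\beta$ rather than $O(1)$.
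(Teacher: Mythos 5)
Your lower-tail bound via Paley--Zygmund is correct (and is a genuinely different route from the paper's), but the upper tail is a real gap, and it is not one that can be closed by sharpening $\mathbf{E}[X^2]$. Using your own eigenvalue estimate $\lambda_2 = (n-2)/(2(n-1)) \to 1/2$, the best possible Funk--Hecke bound is roughly $\mathbf{E}[X^2] \le m^2 + \tfrac12 m(1-m)$, so $\mathrm{Var}(X)$ can be $\Theta(m)$, and any Markov/Chebyshev-type tail can only yield $\Pr[X > 9/10] = O(m)$. Your Paley--Zygmund step has zero slack --- $9\beta/16 - \beta/16 = \beta/2$ exactly --- so you must prove $\Pr[X > 9/10] \le m/16$, a constant factor beyond what any variance-based bound delivers. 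Worse, even the paper's sharper upper tail $\Pr[X>9/10] < \tfrac{10}{8}m^2$ is not small enough to pair with your weaker lower tail when $\beta$ is near $1/2$: at $\beta = 1/2$ you would have $9\beta/16 - \tfrac{10}{8}\beta^2 = 9/32 - 5/16 < 0$.

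The paper's mechanism is not a second-moment argument on $X$ at all. Following Raz, one draws two \emph{globally independent} uniform points $\by^{(1)},\by^{(2)} \in \mathbb{S}^{n-1}$ and lets $\bY\in\{0,1,2\}$ count how many land in $C$, so that $\Pr[\bY=2]=m^2$ and $\Pr[\bY=0]=(1-m)^2$ hold exactly by independence. Conditioning on $\bV = \spann\{\by^{(1)},\by^{(2)}\}$, each $\by^{(i)}$ is uniform on $\bV\cap\mathbb{S}^{n-1}$ (but the pair is \emph{not} conditionally independent given the span --- the Borel--Kolmogorov subtlety the paper flags), and a union bound gives $\Pr[\bY=2\mid\bV]>8/10$ whenever $\mu_{\bV,1}(C\cap\bV)>9/10$; Markov applied to $\Pr[\bY=2\mid\bV]$, which has exact mean $m^2$, then yields $\Pr[\mu_{\bV,1}(C\cap\bV)>9/10]<\tfrac{10}{8}m^2$. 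The same argument with $\bY=0$ gives $\Pr[\mu_{\bV,1}(C\cap\bV)<\beta/4]<(1-m)^2/(1-\beta/2)$, and for $m=\beta \le 1/2$ an elementary calculation shows $1 - \tfrac{10}{8}\beta^2 - (1-\beta)^2/(1-\beta/2) \ge \beta/2$. The exact two-point product identities and the union-bound workaround for the failed conditional independence are the two ideas your proposal is missing; note in particular that $\mathbf{E}[X^2] \neq \Pr[\bY=2]$ precisely because of that conditional dependence, so no refinement of $\mathbf{E}[X^2]$ can recover the quadratic-in-$m$ upper tail.
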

 We defer the proof of \Cref{claim:raz} to \Cref{sec:claim:raz:proof} and continue with the proof of \Cref{lem:key} assuming \Cref{claim:raz}. 
For $ 0 < \kappa <1$ let $\mathbb{S}^{n-1}_{1-\kappa}$ denote the origin-centered $n$-dimensional sphere of radius $1-\kappa$.  Define $C_{1-\kappa} := K \cap \mathbb{S}^{n-1}_{1-\kappa}$ and let $\mu_{1-\kappa}(C_{1-\kappa})$ denotes the fractional density of $C_{1-\kappa}$ in $\mathbb{S}^{n-1}_{1-\kappa}$. 
Note that $\alpha_K(1) = \mu_1(C_1)$
and 
$\alpha_K(1-\kappa) = \mu_{1-\kappa}(C_{1-\kappa})$. 
For $V$ any 
$2$-dimensional subspace of $\mathbb{R}^n$, define $K_V := K \cap V$, $C_{V, 1} := K \cap V \cap \mathbb{S}^{n-1},$ and $C_{V, 1-\kappa} :=  K \cap V \cap \mathbb{S}^{n-1}_{1-\kappa}$. 
We further define $\mu_{V,1}(C_{V, 1})$ (respectively $\mu_{V,1-\kappa}(C_{V, 1-\kappa})$)  as the measure of $C_{V, 1}$ (respectively $C_{V, 1-\kappa}$) as a fraction of $\mathbb{S}^{n-1}_{1}\cap V$ (respectively $\mathbb{S}^{n-1}_{1-\kappa}\cap V$). 
Note that $\mathbb{S}^{n-1}_{1}\cap V$ (respectively $\mathbb{S}^{n-1}_{1-\kappa}\cap V$) is the origin-centered $2$-dimensional sphere of radius $1$ (respectively $1-\kappa$) inside the subspace $V$.

\Cref{lem:key} and \Cref{lem:key-general} are essentially lower bounds on $ \mu_{1-\kappa}(C_{1-\kappa}) - \mu_1(C_1)$. 
To establish these lower bounds, 
we first observe that the density of $K$ in an $n$-dimensional sphere is an average of two-dimensional ``cross-sectional'' densities; more precisely, for $\bV$ a uniform random $2$-dimensional subspace of $\R^n$, we have that
\begin{equation}~\label{eq:avg-1}
\mu_1(C_1) = \Ex_{\bV} [\mu_{\bV,1}(C_{\bV,1})] \quad \quad \textrm{and} \quad \quad \mu_{1-\kappa}(C_{1-\kappa}) = \Ex_{\bV} [\mu_{\bV,1-\kappa}(C_{\bV,1-\kappa})]. 
\end{equation}
Another simple but crucial observation is that for any fixed $2$-dimensional subspace $V$, it follows directly from \Cref{fact:convex-decreasing}  that 
\begin{equation}~\label{eq:avg-2}
\mu_{V,1}(C_{V, 1}) \leq \mu_{V,1-\kappa}(C_{V, 1-\kappa}).
\end{equation}

The high level idea of our argument is to strengthen \Cref{eq:avg-2} to a strict inequality for a non-negligible fraction of subspaces $V$ and thereby by \Cref{eq:avg-1} obtain an overall density increment.
Towards this end, let us partition 
$C_{V,1-\kappa}$ into two sets $A_{K,V}$ and $B_{K,V} = C_{V,1-\kappa} \setminus A_{K,V}$ as follows: 
\[
A_{K,V} := \bigg\{z \in \mathbb{S}^{n-1}_{1-\kappa} \cap V: \frac{1}{1-\kappa} \cdot z \in C_{V, 1} \bigg\}. 
\]
We observe that $\mu_{V,1-\kappa}(A_{K,V}) = \mu_{V,1}(C_{V, 1})$, and hence we have that
\begin{eqnarray}~\label{eq:diff-reexp}
\mu_{V,1-\kappa}(C_{V, 1-\kappa})- \mu_{V,1}(C_{V, 1}) =
\mu_{V,1-\kappa}(B_{K,V}). 
\end{eqnarray}
The next claim proves a lower bound on $\mu_{V,1-\kappa}(B_{K,V})$. We note that this is the first and essentially the only point of departure between the proofs of \Cref{lem:key} and \Cref{lem:key-general}. 

\begin{claim}~\label{clm:two-d-increment}
Let $\mu_{V,1}(C_{V, 1}) =p \in (0,1)$. Then for all $0 \le \kappa \le \frac{1}{10}$, we have that
\[
\mu_{V,1-\kappa}(B_{K,V}) \ge \frac{2 \pi \cdot \kappa \cdot (1-p)}{2} \cdot \sin (\pi \cdot p/2). 
\]
\end{claim}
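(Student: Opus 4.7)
The plan is to reduce this to a 2D computation in which convexity enters via a chord bound. Since $K$ is centrally symmetric and convex, so is $K_V := K \cap V$; I would work entirely in the 2D plane $V$. Parameterize $\partial K_V$ by its polar radial function $\rho : \mathbb{R} \to (0,\infty]$, so that in direction $\theta$ the boundary lies at distance $\rho(\theta)$ from the origin; central symmetry makes $\rho$ periodic with period $\pi$, and $K_V$ convex is equivalent to $1/\rho$ being a norm on $V$. In these terms, $C_{V,1}$ corresponds to the angle set $\{\theta : \rho(\theta) \ge 1\}$, which by hypothesis has measure $2\pi p$, and $B_{K,V}$ corresponds to $\{\theta : 1-\kappa \le \rho(\theta) < 1\}$, whose normalized measure we want to lower-bound.

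The main convexity input is a chord bound. I would decompose the circle into ``arcs'' (where $\rho \ge 1$) and ``gaps'' (where $\rho < 1$). Consider a single gap of angular width $\gamma$ with endpoints $P_1, P_2$ on the unit circle; since $P_1, P_2 \in K_V$, by convexity the chord $\overline{P_1 P_2}$ lies inside $K_V$, and it sits at perpendicular distance $\cos(\gamma/2)$ from the origin. Consequently, in every direction $\theta$ within the gap,
\[
\rho(\theta) \;\ge\; r_{\mathrm{chord}}(\theta) \;:=\; \frac{\cos(\gamma/2)}{\cos(\theta - \theta_{\mathrm{mid}})},
\]
where $\theta_{\mathrm{mid}}$ is the angular bisector of the gap. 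Thus the set $\{\theta \in \mathrm{gap} : \rho(\theta) \ge 1-\kappa\}$ contains $\{\theta : r_{\mathrm{chord}}(\theta) \ge 1-\kappa\}$, and a short trigonometric calculation gives this set's angular measure as $\gamma - 2\arccos(\cos(\gamma/2)/(1-\kappa))$ in the non-saturated case $1-\kappa > \cos(\gamma/2)$, and the full gap width $\gamma$ otherwise.

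Summing these lower bounds over all gaps (which by central symmetry come in antipodal pairs with total angular width $2\pi(1-p)$) reduces the problem to identifying the worst-case gap configuration. Using convexity of $\cot(\gamma/2)$ on $(0,\pi)$ (which is the small-$\kappa$ leading term of the per-gap contribution after Taylor expansion) and the total width constraint, I would show that the worst case is a single antipodal pair of gaps, each of width $\pi(1-p)$, for which the chord sits at distance $\cos(\pi(1-p)/2) = \sin(\pi p/2)$. Plugging into the chord bound and using either a Taylor expansion in $\kappa$ (valid since $\kappa \le 1/10$) or an integral comparison against $1/\sqrt{1-u^2}$ should then yield the claimed lower bound on $\mu_{V,1-\kappa}(B_{K,V})$, with the $\sin(\pi p/2)$ factor tracking the chord distance and the $(1-p)$ factor tracking the total gap width.

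The main obstacle will be the worst-case-configuration step: per-gap contributions are a nonlinear function of the gap width, with a ``saturation'' regime for small gaps where the chord already lies outside the circle of radius $1-\kappa$. Verifying that merging gaps only decreases the sum requires a pairwise comparison. The cleanest approach is to define $f(\gamma)$ as the per-gap contribution (piecewise according to saturation), note that $f$ restricted to the non-saturated regime is (up to $O(\kappa)$ Taylor correction) a convex decreasing function, and handle the saturated regime separately by observing that merging a saturated gap with another can only decrease saturation and thus the total. Once the extremal two-gap configuration is reached, the remaining evaluation is a routine estimate in one variable.
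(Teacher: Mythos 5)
Your setup matches the paper's: decomposing $\overline{C_{V,1}}$ into gaps, using the chord joining the gap's endpoints (which lies in $K_V$ by convexity) at perpendicular distance $\cos(\gamma/2)$ from the origin, and the resulting trigonometric formula $\gamma - 2\arccos\bigl(\cos(\gamma/2)/(1-\kappa)\bigr)$ (respectively $\gamma$ in the saturated case) for the per-gap angular contribution. This is exactly the paper's \Cref{clm:angle-include} set up in polar terms.

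Where you diverge is in how to pass from the per-gap bound to the global one, and this is where you take on substantially more work than is needed. You propose to identify the \emph{minimizing configuration} of gap widths $\{t_i\}$ subject to $\sum t_i = 2\pi(1-p)$, arguing via convexity of $\cot(\gamma/2)$ and a merging argument that the worst case is two antipodal gaps of width $\pi(1-p)$ each. The paper never needs to find the extremizer. Instead it first replaces the exact per-gap contribution by the cruder but cleaner lower bound $(t\kappa\cos(t/2))/2$ (this is the content of \Cref{clm:angle-include}, obtained from the Lipschitz estimate $\arccos(x)-\arccos(x+\Delta x)\ge\Delta x$ and handling the saturated case directly), and then applies a single uniform pointwise bound: central symmetry forces every $t_i\le\pi(1-p)$ (a gap together with its antipode already fills $2t_i\le 2\pi(1-p)$), so $\cos(t_i/2)\ge\cos(\pi(1-p)/2)=\sin(\pi p/2)$, and then $\sum t_i\cos(t_i/2)\ge\sin(\pi p/2)\cdot 2\pi(1-p)$. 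No convexity, no exchange argument, no saturation bookkeeping.

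The issue with your route as written is real: the step you flag yourself as the ``main obstacle'' is not actually resolved. You justify convexity only at the level of the $O(\kappa)$ Taylor leading term $\cot(\gamma/2)$, which is not the same as convexity of the exact piecewise function $f(\gamma)$; the non-saturated piece has $f'(\gamma_0^+)=-\infty$ at the saturation threshold, so the kink is nontrivial and the merging comparison needs genuine care. Moreover, even if you did nail down the extremizer, evaluating the \emph{exact} formula there gives a strictly stronger quantity than the stated claim, so you would still need a separate estimate (morally, \Cref{clm:angle-include}) to land on the quoted $\frac{2\pi\kappa(1-p)}{2}\sin(\pi p/2)$. In other words, the stated bound is the value of the simplified per-gap lower bound at the two-gap configuration, not the value of the exact formula. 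I would encourage you to notice that once you have $f(t_i)\ge (t_i\kappa\cos(t_i/2))/2$ and $t_i\le\pi(1-p)$, the problem linearizes and the whole extremal-configuration analysis evaporates.
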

\begin{proof}
In this part of the proof we will refer to $\mathbb{S}_1^{n-1} \cap V$ as ``the unit circle'' and to $\mathbb{S}_{1-\kappa}^{n-1} \cap V$ as ``the circle of radius $1-\kappa$.'' 
Observe that $C_{V, 1}$ is a subset of the unit circle, and let us partition $\overline{C_{V, 1}}$ into
a collection of disjoint arcs (whose end points belong to $C_{V, 1}$). Now, for any such arc $\mathcal{F}$, define $\mathcal{F}_{1-\kappa} := \{z (1-\kappa) : z\in \mathcal{F}\}$. Now we note three simple but crucial facts: (i) $\mathcal{F}_{1-\kappa} \subseteq B_{K,V}$; (ii) if $\mathcal{F}$ and $\mathcal{G}$ are disjoint arcs then so are $\mathcal{F}_{1-\kappa}$ and $\mathcal{G}_{1-\kappa}$; and (iii) the angle of any such arc $\mathcal{F} \subseteq \overline{C_{V, 1}}$ is strictly less than $\pi$. (The last fact holds because $C_{V, 1}$ is symmetric and $\mu_{V,1}(C_{V, 1}) >0$.) 

To finish the proof of \Cref{clm:two-d-increment}, we need the following useful claim:

\begin{claim}~\label{clm:angle-include}
Suppose that the angular measure of an arc $\mathcal{F} \subseteq \overline{C_{V, 1}}$ whose end points belong to $C_{V, 1}$ is $0 < t < \pi$. Then the angular measure of the arc $\mathcal{F}_{1-\kappa} \cap K$ is at least $(t \kappa \cdot \cos (t/2)) /2$.\ignore{\rnote{Sorry, right now I am confused:  since $\mathcal{F}_{1-\kappa} := \{z (1-\kappa) : z\in \mathcal{F}\}$, why isn't the angular measure of ${\cal F}_{1-\kappa}$ just the same as the angular measure of ${\cal F}$, i.e. exactly $t$?}}
\end{claim}
\begin{proof}
Without loss of generality assume that the center of the unit circle is $(0,0)$ and that the two endpoints of the arc $\mathcal{F}$ are located at $(\cos 0, \sin 0)$ and $(\cos t, \sin t)$. By definition both endpoints are in the set $C_{V, 1}$ and hence the line segment $L$ joining $(\cos 0, \sin 0)$ and $(\cos t, \sin t)$ is in the convex set $K$. Using this and the fact that the origin lies in $K$, it follows from a simple geometric argument that the angular measure of $\mathcal{F}_{1-\kappa}$ inside $K$ is exactly\ignore{\rnote{I guess it is possible that $\cos(t/2)>1-\kappa$, right --- in this case we are feeding a number bigger than 1 into $\arccos$, which doesn't seem kosher}}
\begin{equation}~\label{eq:lb-ang-measure}
\begin{cases} 
= t &\textrm{if} \ \cos(t/2) > 1-\kappa  \\ 
= 2 \bigg( \frac{t}{2} - \arccos \bigg(\frac{\cos (t/2)}{1-\kappa}\bigg)\bigg) &\textrm{if} \ \cos(t/2) \leq 1-\kappa 
\end{cases}
\end{equation}
If $\cos(t/2) > 1-\kappa$ then by \eqref{eq:lb-ang-measure} we are done, since $t \geq (t \kappa \cdot \cos (t/2)) /2.$ 
If $\cos(t/2) \le 1-\kappa$, then we recall the fact that if $0 \le x,\Delta x$ and $x + \Delta x \le 1$ then
\[
\arccos(x) - \arccos(x+ \Delta x) \ge \Delta x.
\]
Applying this inequality with $x=\cos(t/2)$ and $x + \Delta x = {\frac {\cos(t/2)}{1-\kappa}}$, we get that the angular measure of $\mathcal{F}_{1-\kappa}$ inside $K$ is at least ${\frac {2 \kappa}{1-\kappa}} \cos(t/2)$, which is easily seen to be at least $(t \kappa \cdot \cos (t/2)) /2$ since ${\frac 2 {1-\kappa}} \geq 2 \geq \pi/2 \geq t/2.$
%
%
\end{proof} 

Armed with \Cref{clm:angle-include}, we can now prove \Cref{clm:two-d-increment}. In particular, suppose $\overline{C_{V, 1}}$ is a union of disjoint arcs $\{\mathcal{F}^{(i)}\}_{i \in \mathbb{N}}$ of length $\{t_i\}_{i \in \mathbb{N}}
$.\ignore{\footnote{{\color{red} I am assuming that a polytope intersects a circle at countable number of points. Check??}}}. We now make two observations: 
\begin{enumerate}
\item Since $\mu_{V,1}(C_{V, 1})  = p$, the total angular measure of the arcs, $\sum_{i \in \mathbb{N}} t_i$, is $2\pi (1-p)$. 
\item Since $C_{V, 1}$ (and hence its complement) is centrally symmetric, each $t_i$ is at most $\pi(1-p)$. 
\end{enumerate}
For each arc $\mathcal{F}^{(i)}$, by \Cref{clm:angle-include} the angular measure of $\mathcal{F}^{(i)}_{1-\kappa} \cap K$ is at least $\frac{t_i \kappa \cos (t_i/2)}{2}$. This means that the total angular measure of all the arcs $\mathcal{F}^{(i)}_{1-\kappa}$ is at least
\[
\sum_i \frac{t_i \kappa \cos (t_i/2)}{2} \geq 
 \frac{(1-p)  \cdot 2\pi \cdot \kappa}{2} \cdot \cos (\pi(1-p)/2)
 =
 \frac{(1-p)  \cdot 2\pi \cdot \kappa}{2} \cdot \sin (\pi \cdot p/2), 
\]
where the inequality follows from items 1 and 2 above and the fact that the cosine function is monotonically decreasing in the interval $[0,\pi)$.
Translating from the total angular measure of all the arcs $\mathcal{F}^{(i)}_{1-\kappa}$ to $\mu_{V,1-\kappa}(B_{K,V})$ via facts (i) and (ii) from the beginning of the proof, we get \Cref{clm:two-d-increment}.
\end{proof}

To finish the proof of \Cref{lem:key}, we consider two cases: (I) when $\mu_{1}(C_1) \le 1/2$, and (II) when  $\mu_{1}(C_1) > 1/2$.  In case (I) we set $\beta = \mu_1(C_1) $ and in case (II) we set 
$1-\beta = \mu_1(C_1)$, so in both  cases it holds that $\beta \le 1/2$. We now define a two-dimensional subspace $V \subset \R^n$ to be \emph{good} if 
\begin{enumerate}
\item In case (I), $\beta /4 \le \mu_{V,1}(C_{V, 1})  \le 9/10$; 
\item  In case (II), $1/10 \le \mu_{V,1}(C_{V, 1})  \le 1-\beta/4$. 
\end{enumerate}
Note that by \Cref{claim:raz}, in both cases $\Prx_{\bV} [\bV \textrm{ is good}] \ge \beta/2$. We thus have that 
\begin{align}
\mu_{1-\kappa}(C_{1-\kappa}) &= \mathbf{E}_{\bV}[\mu_{\bV, 1-\kappa}(C_{\bV,1-\kappa})] \ \tag{by \eqref{eq:avg-1}}\nonumber \\ 
&= \mathbf{E}_{\bV}[\mu_{\bV, 1-\kappa}(C_{\bV,1-\kappa}) \ | \  \bV \textrm{ is not good}] \cdot \Pr[\bV \textrm{ is not good}]  \nonumber \\ & \ \ \ \ + \mathbf{E}_{\bV}[\mu_{\bV, 1-\kappa}(C_{\bV,1-\kappa}) \ | \  \bV \textrm{ is  good}] \cdot \Pr[\bV \textrm{ is  good}] \nonumber  \\ 
&\geq \mathbf{E}_{\bV}[\mu_{\bV, 1}(C_{\bV,1}) \ | \  \bV \textrm{ is not good}] \cdot \Pr[\bV \textrm{ is not good}]  \nonumber \\ & \ \ \ \ + \mathbf{E}_{\bV}[\mu_{\bV, 1-\kappa}(C_{\bV,1-\kappa}) \  | \ \bV \textrm{ is  good}] \cdot \Pr[\bV \textrm{ is  good}] \ \ \ \ \ \ \ \text{(by \eqref{eq:avg-2})}\label{eq:ineq-Ck-1} 
\end{align}

By applying \eqref{eq:diff-reexp} and \Cref{clm:two-d-increment}, we get that if $V$ is {good}, then 
\[
\mu_{V,1-\kappa}(C_{V, 1-\kappa}) \ge \mu_{V,1}(C_{V, 1}) + \Theta(\kappa \beta). 
\]
Using \Cref{claim:raz}, we have that $\Prx_{\bV} [\bV \textrm{ is {good}}] \ge \beta/2$. Combining these two inequalities with \eqref{eq:ineq-Ck-1}, we get that 
\begin{align*}
\mu_{1-\kappa}(C_{1-\kappa}) &\ge \mathbf{E}_{\bV}[\mu_{\bV, 1}(C_{\bV,1}) \ | \  \bV \textrm{ is not good}] \cdot \Pr[\bV \textrm{ is not good}]   \\ & \ \ \ \ + \mathbf{E}_{\bV}[\mu_{\bV, 1}(C_{\bV,1}) \ | \  \bV \textrm{ is  good}] \cdot \Pr[\bV \textrm{ is  good}] + \Theta (\kappa \cdot \beta^2) \\ &\ge \mathbf{E}_{\bV}[\mu_{\bV, 1} (C_{\bV,1})] + \Theta (\kappa \cdot \beta^2) = \mu_{1}(C_1) + \Theta (\kappa \cdot \beta^2),
\end{align*}
where the last inequality again uses \eqref{eq:avg-1}. The proof of \Cref{lem:key} is complete modulo the proof of \Cref{claim:raz}, which we give below.

\subsubsection{Proof of \Cref{claim:raz}} \label{sec:claim:raz:proof}

Recall that $C \subset \mathbb{S}^{n-1}$ is such that $\mu_1(C)$ (the measure of $C$ as a fraction of $\mathbb{S}^{n-1}$) satisfies $0 < \min\{\mu_1(C) , 1 - \mu_1(C) \} = \beta  \leq 1/2$. For conciseness let $c$ denote $\mu_1(C),$ so $c \in [\beta,1-\beta].$  We follow the general structure of Raz's original argument with some careful modifications.

Let $\by^{(1)},\by^{(2)}$ be independent uniform random elements of $\mathbb{S}^{n-1}$.  Let $\bY$ be the number of elements of $\{\by^{(1)},\by^{(2)}\}$ that lie in $C$, so $\bY$ is supported in $\{0,1,2\}$. Then we have  
\begin{equation}~\label{eq:Yprob} \Pr[\bY=2]=c^2, \ \ \Pr[\bY=0] = (1-c)^2. 
\end{equation}

Given vectors $u,v \in \R^n$ let $\spann(\{u,v\})$ denote the span of $u$ and $v$. We record a few easy but subtle facts about the distribution of independent uniform random $\by^{(1)},\by^{(2)}$ and their span:
\begin{fact}~\label{fact:paradoxical}
\begin{enumerate}
\item  For $\by^{(1)},\by^{(2)}$ chosen as above, with probability $1$ the vector space $\spann(\{\by^{(1)},\by^{(2)}\})$ is uniform random over all 2-dimensional subspaces of $\mathbb{R}^n$. 
\item For any fixed 2-dimensional subspace $V'$, conditioned on $\spann(\{\by^{(1)},\by^{(2)}\}) = V'$, each of $\by^{(1)},\by^{(2)}$ is uniformly randomly distributed over $V' \cap \mathbb{S}^{n-1}$. 
\end{enumerate}
\end{fact}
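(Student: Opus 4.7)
The plan is to establish both parts by exploiting the rotational invariance of the uniform distribution on $\mathbb{S}^{n-1}$ together with the uniqueness of invariant probability measures on the relevant homogeneous spaces.

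For part~1, I would first observe that $\by^{(1)}$ and $\by^{(2)}$ are linearly independent with probability~$1$: the event that $\by^{(2)}$ lies in the line $\spann(\{\by^{(1)}\})$ has measure zero (given any fixed $y^{(1)}$, the intersection of this line with $\mathbb{S}^{n-1}$ is a two-point set). Hence $\spann(\{\by^{(1)},\by^{(2)}\})$ is almost surely a well-defined element of the Grassmannian $\mathrm{Gr}(2,n)$. Now for any $g \in O(n)$, the pair $(g \by^{(1)}, g \by^{(2)})$ has the same joint distribution as $(\by^{(1)}, \by^{(2)})$, because the uniform distribution on $\mathbb{S}^{n-1}$ is $O(n)$-invariant and the two marginals are independent. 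Pushing forward under the map $(y^{(1)}, y^{(2)}) \mapsto \spann(\{y^{(1)}, y^{(2)}\})$, which is equivariant with respect to the $O(n)$-action on $\mathrm{Gr}(2,n)$, the induced distribution on $\mathrm{Gr}(2,n)$ is $O(n)$-invariant. Since the $O(n)$-action on $\mathrm{Gr}(2,n)$ is transitive and the uniform (Haar) measure on $\mathrm{Gr}(2,n)$ is the unique $O(n)$-invariant probability measure, the span must be uniformly distributed.

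For part~2, fix a $2$-dimensional subspace $V'$ and let $\mathcal{L}$ denote the conditional law of $(\by^{(1)}, \by^{(2)})$ given $\spann(\{\by^{(1)}, \by^{(2)}\}) = V'$. Consider the subgroup $H \subseteq O(n)$ consisting of orthogonal transformations that map $V'$ into $V'$ (equivalently, that fix $V'$ setwise); this includes in particular the subgroup $O(V') \cong O(2)$ acting as the identity on $V'^\perp$. Every element of $H$ preserves the event $\{\spann(\{\by^{(1)}, \by^{(2)}\}) = V'\}$ and preserves the unconditional distribution of $(\by^{(1)}, \by^{(2)})$, so $\mathcal{L}$ is $H$-invariant. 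Projecting onto the first coordinate, the conditional marginal of $\by^{(1)}$ on $V' \cap \mathbb{S}^{n-1}$ is invariant under $O(V')$; since $O(2)$ acts transitively on the unit circle $V' \cap \mathbb{S}^{n-1}$ with unique invariant probability measure the uniform one, $\by^{(1)}$ is conditionally uniform on $V' \cap \mathbb{S}^{n-1}$. The same argument applies to $\by^{(2)}$.

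There is no serious obstacle here; the only mildly delicate point is that the conditioning event $\{\spann(\{\by^{(1)}, \by^{(2)}\}) = V'\}$ has measure zero, so ``conditional distribution'' should be interpreted via the disintegration of the joint law with respect to the span map (or, equivalently, through limits of conditionings on small open neighborhoods of $V'$ in $\mathrm{Gr}(2,n)$). Granting that standard measure-theoretic interpretation, invariance of the unconditional law under $H$ transfers to invariance of a version of the conditional law, and uniqueness of Haar measure on the circle closes the argument.
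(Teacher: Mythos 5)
The paper states \Cref{fact:paradoxical} without proof, so there is no ``paper's proof'' to compare against; the only related material is the Remark immediately following, which cautions that an earlier (and stronger, false) version of this claim --- asserting conditional \emph{independence} of $\by^{(1)}$ and $\by^{(2)}$ --- had a flaw rooted in the Borel--Kolmogorov paradox.

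Your Part~1 is correct and is the standard argument: the pushforward of an $O(n)$-invariant measure under the $O(n)$-equivariant span map is an $O(n)$-invariant measure on the Grassmannian, hence the unique Haar measure. Your Part~2 is also correct in substance, and --- importantly --- you claim only marginal uniformity, not joint independence, so you avoid the pitfall the Remark warns about. You also correctly identify that the delicate point is that the conditioning event is a null set and must be interpreted via disintegration along the span map.

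The one place where I would press you for more care is the sentence ``invariance of the unconditional law under $H$ transfers to invariance of a version of the conditional law.'' As stated, this is exactly the kind of casual inheritance-of-symmetry argument that the Borel--Kolmogorov paradox shows can fail. The disintegration theorem gives you a family $\{\mu_{V'}\}$ defined only up to a $\nu$-null set of $V'$, and the $O(n)$-equivariance relation $\mu_{gV'} = g_*\mu_{V'}$ holds only for almost every $(g,V')$. Concluding that $\mu_{V'}$ is $\mathrm{Stab}(V')$-invariant for a \emph{fixed} $V'$ requires an additional step, since $\mathrm{Stab}(V')$ is a Haar-null subgroup of $O(n)$. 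There are two clean ways to close this gap. One is to appeal to the fact that for a measure on a homogeneous space with an equivariant fibration one can always choose an everywhere-equivariant version of the disintegration (fix one good $V'_0$, define $\mu_{gV'_0} := g_*\mu_{V'_0}$, and check this is well-defined and is still a disintegration). The other, which is perhaps more convincing here, is to compute the conditional law explicitly via the co-area formula: parametrizing the fiber over $V'$ by the angles $(\theta_1,\theta_2)$ of $\by^{(1)},\by^{(2)}$ on the circle $V'\cap\mathbb{S}^{n-1}$, the conditional density is proportional to $|\sin(\theta_1-\theta_2)|^{\,n-2}$. This is manifestly \emph{not} a product (confirming the lack of independence noted in the Remark), but it depends only on $\theta_1-\theta_2$, so integrating out $\theta_2$ gives a constant --- i.e.\ the marginal of $\theta_1$ is uniform, and likewise for $\theta_2$. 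Either fix upgrades your argument from morally right to fully rigorous.
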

\begin{remark}
We remark that conditioned on $\spann(\{\by^{(1)},\by^{(2)}\}) = V'$, the distribution of  $\by^{(1)}$ and $\by^{(2)}$ is no longer independent. An earlier draft of this paper gave an argument that $\by^{(1)}$ and $\by^{(2)}$ are independent conditioned on $\spann(\{\by^{(1)},\by^{(2)}\}) = V'$, but there was a subtle flaw in the argument, which was pointed out to us by Raz in a personal communication~\cite{raz2019pc}, arising from the Borel-Kolmogorov paradox~\cite{Kolmogorov-Borel:wikipedia}.  The purpose of this remark is to highlight the fact that subtle issues can arise when conditioning on measure zero events (such as $\spann(\{\by^{(1)},\by^{(2)}\}) = V'$). 
\end{remark}


\ignore{
Fix any subspace $V'$ in $\R^n$ of dimension 2. We will crucially use the following (this is the main point of departure from Raz's argument):

\begin{lemma} \label{lem:its-cool}
Conditioned on $\spann\{\by^{(1)},\by^{(2)}\}=V'$, the random variables $\by^{(1)},\by^{(2)}$ are independent uniform random variables over $\mathbb{S}^{n-1} \cap V'$, i.e. they are independent uniform random points drawn from the unit circle in $V'$.
\end{lemma}

\begin{proof}
By renaming coordinates, without loss of generality we may suppose that the subspace $V'$ is the span of $e_{n-1}$ and $e_n$, i.e. the set of all points of the form 
\[
(\overbrace{0,\dots,0}^{n-2 \text{~zeros}},x_{n-1},x_n).
\]
We recall that any point  in $\mathbb{S}^{n-1}$ has a unique description in polar coordinates as
\[
(\cos \theta_1,
\sin \theta_1 \cos \theta_2, 
\sin \theta_1 \sin \theta_2 \cos \theta_3, \cdots,
\sin \theta_1 \cdots \sin \theta_{n-2} \cos \theta_{n-1},
\sin \theta_1 \cdots \sin \theta_{n-2} \sin \theta_{n-1})
\]
where $\theta_1,\dots,\theta_{n-2} \in [0,\pi]$ and $\theta_{n-1} \in [0,2\pi)$. Moreover there is a joint distribution ${\cal D}$ of $\btheta=(\btheta_1,\dots,\btheta_{n-1})$, i.e. a distribution over $(n-1)$-tuples of angles, such that if $\btheta \sim {\cal D}$ then the resulting $\by=\by(\btheta) \in \mathbb{S}^{n-1}$ defined by
\begin{align*}
\by_1 &= \cos \btheta_1\\
\by_2 &= \sin \btheta_1 \cos \btheta_2\\
& \vdots\\
\by_{n-1} &= \sin \btheta_1 \cdots \sin \btheta_{n-2} \cos \btheta_{n-1}\\
\by_n &= \sin \btheta_1 \cdots \sin \btheta_{n-2} \sin \btheta_{n-1}
\end{align*}
is uniform over $\mathbb{S}^{n-1}.$ 
Thus we may view the independent uniform points $\by^{(1)},\by^{(2)}$ as defined by $\btheta^{(1)}=(\btheta^{(1)}_1,\dots,\btheta^{(1)}_{n-1})$, $\btheta^{(2)}=(\btheta^{(2)}_1,\dots,\btheta^{(2)}_{n-1})$ where $\btheta^{(1)},\btheta^{(2)}$ are i.i.d.~according to ${\cal D}.$

Up to an event which has probability zero conditioned on $\spann\{\by^{(1)},\by^{(2)}\} \subseteq V'$ (namely, the event that $\spann\{\by^{(1)},\by^{(2)}\} \subseteq V'$ has dimension one), the event ``$\spann\{\by^{(1)},\by^{(2)}\}=V'$'' is the same as the event ``$\by^{(1)}_1=\dots=\by^{(1)}_{n-2}=0, \by^{(2)}_1=\dots=\by^{(2)}_{n-2}=0.$'' This is in turn the same as the event
``$\btheta^{(1)}_1=\cdots=\btheta^{(1)}_{n-2}=\pi/2, \btheta^{(2)}_1=\cdots=\btheta^{(2)}_{n-2}=\pi/2.$'' Thus the conditioned pair of random variables $((\by^{(1)},\by^{(2)})\ | \ \spann\{\by^{(1)},\by^{(2)}\}=V')$ is distributed as 
\begin{align*}
( \by^{(1)} \ | \  \spann\{\by^{(1)},\by^{(2)}\}=V') &= (\overbrace{0,\dots,0}^{n-2 \text{~zeros}},\cos \btheta^{(1)}_{n-1},
\sin \btheta^{(1)}_{n-1}),\\
(\by^{(2)} \ | \ \spann\{\by^{(2)},\by^{(2)}\}=V') &= (\overbrace{0,\dots,0}^{n-2 \text{~zeros}},\cos \btheta^{(2)}_{n-1}, \sin \btheta^{(2)}_{n-1})
\end{align*}
where each $\btheta^{(i)}_{n-1}$ is distributed according to ${\cal D}_{n-1}$ (the marginal of ${\cal D}$ corresponding to the final $(n-1)$-th coordinate). Since $\btheta^{(1)},\btheta^{(2)}$ are independent, so are $\btheta^{(1)}_{n-1}$ and $\btheta^{(2)}_{n-1}$, and so are $( \by^{(1)} \ | \  \spann\{\by^{(1)},\by^{(2)}\}=V')$ and $( \by^{(2)} \ | \  \spann\{\by^{(1)},\by^{(2)}\}=V')$.  As stated by Raz, each random variable $( \by^{(i)} \ | \  \spann\{\by^{(1)},\by^{(2)}\}=V')$ is uniformly distributed over $\mathbb{S}^{n-1} \cap V'$ (this is easy to infer from the definition of the distribution ${\cal D}$), and the proof of \Cref{lem:its-cool} is complete.
\end{proof}
}

We are now ready to prove \Cref{claim:raz}.  We start with the case when $\mu_1(C)  = c  = \beta \leq 1/2$. Later we will consider the case when $\mu_1(C) = c = 1-\beta > 1/2$. 

\medskip

\noindent \textbf{Case I: $\mu_1(C)  = c  = \beta <1/2$.} Our aim is to bound the probabilities $\Prx_{\bV}[\mu_{\bV,1}(C \cap \bV) > 9/10]$ and $\Prx_{\bV}[\mu_{\bV,1}(C \cap \bV) < \beta/4]$. 

First, fix a two-dimensional subspace $V'$ such that $\mu_{V',1}(C \cap V') > 9/10.$   
By the second item in \Cref{fact:paradoxical}, we have that
\[
\Prx_{\red{\by^{(1)},\by^{(2)}}}[\bY=2 \ | \ \spann(\{\by^{(1)},\by^{(2)}\}) = V'] > \frac{8}{10}. 
\]
Since this is true for any such subspace $V'$, we have that 
\[
\Prx_{\red{\by^{(1)},\by^{(2)}}}[\bY=2 \ | \ \mu_{ \red{\spann(\{\by^{(1)},\by^{(2)}\})}    ,1}(C \cap \red{\spann(\{\by^{(1)},\by^{(2)}\}})) > 9/10] > \frac{8}{10}, 
\]
where $\bV$ is a random variable distributed as a uniform random 2-dimensional subspace of $\R^n$.
We thus have \begin{align*}
c^2 &= \Prx_{\red{\by^{(1)},\by^{(2)}}}[\bY=2] \geq
\Prx_{\red{\by^{(1)},\by^{(2)}}}[\bY = 2 \ \& \ \mu_{\red{\spann(\{\by^{(1)},\by^{(2)}\})},1}(C \cap \red{\spann(\{\by^{(1)},\by^{(2)}\})}) > 9/10]\\
&=\Prx_{\red{\by^{(1)},\by^{(2)}}}[\bY = 2 \ | \ \mu_{\red{\spann(\{\by^{(1)},\by^{(2)}\})},1}(C \cap \red{\spann(\{\by^{(1)},\by^{(2)}\})}) > 9/10] \cdot\\
& \ \ \ \ 
\Prx_{\red{\by^{(1)},\by^{(2)}}}[ \mu_{\red{\spann(\{\by^{(1)},\by^{(2)}\})},1}(C \cap \red{\spann(\{\by^{(1)},\by^{(2)}\})}) > 9/10]\\
&>\frac{8}{10} \cdot \Prx_{\red{\by^{(1)},\by^{(2)}}}[ \mu_{\red{\spann(\{\by^{(1)},\by^{(2)}\})},1}(C \cap \red{\spann(\{\by^{(1)},\by^{(2)}\})}) > 9/10],
\end{align*}
which gives
\[
\Prx_{\red{\by^{(1)},\by^{(2)}}}[ \mu_{\red{\spann(\{\by^{(1)},\by^{(2)}\})},1}(C \cap \red{\spann(\{\by^{(1)},\by^{(2)}\})})> 9/10] < {c^2 \cdot \frac{10}{8}}.
\]
\red{Since by the first item of \Cref{fact:paradoxical}, the vector space $\spann(\{\by^{(1)},\by^{(2)}\})$ is with probability 1 a uniform random two-dimensional subspace $\bV$ of $\R^n$, we may restate this last bound as
}
\begin{equation}~\label{eq:bv-lb1}
\Prx_{\bV}[ \mu_{\bV,1}(C \cap \bV)> 9/10] < {c^2 \cdot \frac{10}{8}}.
\end{equation}
An identical argument (now using the event $\bY=0$ rather than the event $\bY=2$) gives 
\begin{equation}~\label{eq:bv-lb2}
\Prx_{\bV}[ \mu_{\bV,1}(C \cap \bV)<\beta/4] < {\frac {(1-c)^2}{1-\beta/2}}.
\end{equation}
Combining (\ref{eq:bv-lb1}) and (\ref{eq:bv-lb2}) (and using $c=\beta$ in this case), we get 
\begin{equation}~\label{eq:bv-lb-f1}
\Prx_{\bV} [\mu_{\bV,1}(C \cap \bV)\in [\beta/4, 9/10]] \ge 1- {\beta^2 \cdot \frac{10}{8}} - {\frac {(1-\beta)^2}{1-\beta/2}}.
\end{equation}

\medskip
\noindent
\textbf{Case II: $\mu_1(C)  = c  = 1-\beta >1/2$.} Applying the above analysis \emph{mutatis mutandis} in this setting, we get 
\begin{equation}~\label{eq:bv-lb-f2}
\Prx_{\bV} [\mu_{\bV,1}(C \cap \bV)\in [1/10, 1-\beta/4]] \ge 1- {\beta^2 \cdot \frac{10}{8}} - {\frac {(1-\beta)^2}{1-\beta/2}}.
\end{equation}

Elementary calculus shows that for all $0 \le \beta \le 1/2$, 
\[
1- {\beta^2 \cdot \frac{10}{8}} - {\frac {(1-\beta)^2}{1-\beta/2}} \ge \frac{\beta}{2}. 
\]
Combining the above inequality with \eqref{eq:bv-lb-f1} (respectively \eqref{eq:bv-lb-f2}) gives Case I (respectively~Case II) of the claim, and the proof of \Cref{claim:raz} is complete. \qed

\subsection{Proof of Theorem~\ref{lem:key-general}} 

The proof of \Cref{lem:key-general} is almost exactly  the same as the proof of \Cref{lem:key} up to the statement of \Cref{clm:two-d-increment} (including exactly the same definitions). The only difference is that now as we rescale to set $r=1$, the guarantee for $K$ is that $B(0^n, r_{\mathsf{inner}}) \subseteq K$ where $r_{\mathsf{inner}} = r_{\mathsf{small}} / r$. Finally, we can assume in the current context that $1-\kappa \ge r_{\mathsf{inner}}$, since otherwise, the conclusion trivially holds.

Instead of \Cref{clm:two-d-increment}, we now have the following claim. 

\begin{claim}~\label{clm:2d-increment-new}
Let $\mu_{V,1}(C_{V, 1}) = p \in (0,1)$. Then, for all $0 \le \kappa \le 1/20$, 
\[
\mu_{V,1-\kappa}(B_{K,V}) \ge \min \bigg\{ {\kappa \cdot r_{\mathsf{inner}}}{}, \frac{(1-p) \cdot \kappa}{4} \bigg\}. 
\]
\end{claim}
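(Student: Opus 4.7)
The plan is to mirror the structure of the proof of \Cref{clm:two-d-increment} while replacing the use of central symmetry (which bounded every $t_i$ by $\pi(1-p)$ for free) by a new geometric estimate that leverages the containment $B(0,r_{\mathsf{inner}})\subseteq K$. As in that proof, I decompose $\overline{C_{V,1}} = \bigsqcup_i \mathcal{F}^{(i)}$ into arcs of angular measure $t_i$, with $\sum_i t_i = 2\pi(1-p)$, so that $\mu_{V,1-\kappa}(B_{K,V})$ equals (a suitable normalization of) $\sum_i \mathrm{ang}(\mathcal{F}^{(i)}_{1-\kappa}\cap K)$, where $\mathrm{ang}(\cdot)$ denotes angular measure.

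The key new ingredient is a \emph{ball bound} for a single arc $\mathcal{F}^{(i)}$ with endpoints $P_1,P_2\in C_{V,1}\subseteq K$. By convexity, $K$ contains the convex hull of $B(0,r_{\mathsf{inner}})\cup\{P_1\}$, whose boundary consists of two tangent lines from $P_1$ to the ball. Writing $r=r_{\mathsf{inner}}$, $\alpha=\arcsin r$, and rotating coordinates so that $P_1=(1,0)$, the ``upper'' tangent is $y=-\tfrac{r}{\sqrt{1-r^2}}(x-1)$, and a scaled arc point $((1-\kappa)\cos\phi,(1-\kappa)\sin\phi)$ lies in the wedge cut out by this tangent and the radial line through $P_1$ iff $(1-\kappa)\sin(\alpha+\phi)\le\sin\alpha$. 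Applying the mean-value theorem to $\arcsin$ on $[\sin\alpha,\sin\alpha/(1-\kappa)]$ (the derivative $1/\sqrt{1-x^2}$ is at least $1/\cos\alpha$ on this interval) shows that this inequality holds for all $\phi\in[0,\,\kappa r/((1-\kappa)\sqrt{1-r^2})]$, producing an arc of angular measure $\ge \kappa r_{\mathsf{inner}}$ that lies in $K$ and is adjacent to $(1-\kappa)P_1$. The symmetric argument at $P_2$ yields a second such wedge-arc, and provided $t_i$ exceeds an absolute constant (say $\pi/2$) these two are disjoint, so the single arc contributes $\ge 2\kappa r_{\mathsf{inner}}$ to $\mathrm{ang}(\mathcal{F}^{(i)}_{1-\kappa}\cap K)$.

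With both \Cref{clm:angle-include} (the triangle bound) and the ball bound in hand, I case-split on whether any arc satisfies $t_i>\pi/2$. If not, then $\cos(t_i/2)\ge 1/\sqrt{2}$ for every $i$, and summing the triangle bound gives $\sum_i t_i\kappa\cos(t_i/2)/2\ge \tfrac{\kappa}{2\sqrt{2}}\sum_i t_i=\tfrac{\pi(1-p)\kappa}{\sqrt{2}}$, which produces the $(1-p)\kappa/4$ term in the claimed min. If instead some arc has $t_i>\pi/2$, the ball bound applied to that single arc already yields $\ge 2\kappa r_{\mathsf{inner}}$ and hence the $\kappa r_{\mathsf{inner}}$ term. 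Either way the claim follows. The main subtlety I anticipate is managing constants cleanly through the case split and verifying that the two endpoint wedges really are disjoint whenever $t_i>\pi/2$ (this uses $\kappa\le 1/20$ and $1-\kappa\ge r_{\mathsf{inner}}$, both of which are given); beyond the tangent-wedge construction above, no further geometric input beyond what was already used in \Cref{clm:two-d-increment,clm:angle-include} is needed.
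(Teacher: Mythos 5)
Your proposal is correct and follows essentially the same route as the paper: the same decomposition of $\overline{C_{V,1}}$ into arcs, the same case split on whether some arc exceeds angular measure $\pi/2$, and (for the large-arc case) the same tangent-from-endpoint-to-inscribed-ball wedge construction that appears as \Cref{clm:angle-include2}. The only cosmetic difference is in the elementary estimate — you bound the $\arcsin$ increment via the mean value theorem while the paper uses $\arcsin x - \arcsin y \ge x - y$ — and you double the contribution by using both endpoints, neither of which changes the substance.
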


\begin{proof}
As in \Cref{clm:two-d-increment}, we will refer to $\mathbb{S}_1^{n-1} \cap V$ as ``the unit circle'' and $\mathbb{S}_{1-\kappa}^{n-1} \cap V$ as ``the circle of radius $1-\kappa$.'' As before, $C_{V, 1}$ is a subset of the unit circle and we partition $\overline{C_{V, 1}}$ into
a collection of disjoint arcs (whose end points belong to $C_{V, 1}$). 
As before, for any such arc  $\mathcal{F}$, define $\mathcal{F}_{1-\kappa} = \{z (1-\kappa) : z\in \mathcal{F}\}$. Now we make two observations: (i) $\mathcal{F}_{1-\kappa} \subseteq B_{K,V}$, (ii) if $\mathcal{F}$ and $\mathcal{G}$ are disjoint arcs, then so are $\mathcal{F}_{1-\kappa}$ and $\mathcal{G}_{1-\kappa}$. 

Note that unlike \Cref{clm:two-d-increment}, now it is possible for a single arc to have measure more than $\pi$. We deal with ``large arcs'' through the following claim (recall that in our setting we have $\mathcal{B}(0^n, r_{\mathsf{inner}}) \subseteq K$):

\begin{claim}~\label{clm:angle-include2}
Suppose the angular measure of an arc $\mathcal{F} \subseteq \overline{C_{V, 1}}$ whose end points belong to $C_{V, 1}$ is $\pi/2 \le t < 2\pi$.  
Then the angular measure of the arc $\mathcal{F}_{1-\kappa} \cap K$ is at least ${\kappa \cdot r_{\mathsf{inner}}}$. 
\end{claim}
\begin{proof}
Without loss of generality we may suppose that one of the two endpoints of the arc $\mathcal{F}$ is $y=(\cos 0, \sin 0)$ and the other is $y' = (\cos t, \sin t)$. Define $\mathcal{F}_{r_{\mathsf{inner}}}$ to be the arc of the radius-$r_{\mathsf{inner}}$ circle corresponding to $\mathcal{F}$, i.e.~$\mathcal{F}_{r_{\mathsf{inner}}} = \{z: z/r_{\mathsf{inner}} \in \mathcal{F}\}$. 
Next, define $z$ to be the point on $\mathcal{F}_{r_{\mathsf{inner}}}$ such that the tangent at $z$ passes through $y$. (Such a point $x$ must exist because the angular measure of the arc $\mathcal{F}$ is at least $\pi/2$.) Recalling that $B(0^n, r_{\mathsf{inner}}) \subseteq K$, we have that the point $z$, the origin, and $y$ all lie in $K$. By elementary trigonometry, it follows that the angular measure of $\mathcal{F}_{1-\kappa}$ inside $K$ is at least 
\begin{eqnarray}
\arccos ({r_{\mathsf{inner}}}) -\arccos \bigg(\frac{r_{\mathsf{inner}}}{1-\kappa}\bigg) &\ge&  \arcsin \bigg(\frac{r_{\mathsf{inner}}}{1-\kappa}\bigg)
-\arcsin ({r_{\mathsf{inner}}}) \nonumber \\ 
&\ge& \bigg(\frac{r_{\mathsf{inner}}}{1-\kappa}\bigg)  - r_{\mathsf{inner}}.  \nonumber 
\end{eqnarray}
The last inequality uses the simple fact that $\arcsin x - \arcsin y \ge x-y$ when $0\le y \le x\le 1$. Using the fact that $\kappa \le 1/10$, we get \Cref{clm:angle-include2}.
\end{proof}

To prove \Cref{clm:2d-increment-new}, we consider two possibilities.  The first is that there is a single arc contained in $\overline{C_{V, 1}}$ whose angular length is at least $\pi/2$; in this case we get \Cref{clm:2d-increment-new} using \Cref{clm:angle-include2}. The other possibility is 
that $\overline{C_{V, 1}}$ is split into arcs $\{\mathcal{F}^{(i)}\}$ of length $\{t_i \}$ where each $t_i \le \pi/2$. Note that the total angular measure of the arcs is $\sum t_i = 2\pi (1-p)$. For each such arc $\mathcal{F}^{(i)}$, by \Cref{clm:angle-include}, we get that the angular measure of $\mathcal{F}^{(i)}_{1-\kappa} \cap K$ is at least $\frac{t_i \kappa \cos (t_i/2)}{2}$. Thus, the total angular measure of the intersection of $K$ with all the arcs $\mathcal{F}^{(i)}_{1-\kappa}$ is 
\[
\sum_i \frac{t_i \kappa \cos(t_i/2)}{2}  \ge \sum_i \frac{t_i \kappa }{4} = \frac{2\pi \kappa (1-p)}{4},
\]
where the inequality holds because each $\cos(t_i/2)$ is at least $1/2.$
Translating from angular measure of $K \cap \cup_i \mathcal{F}^{(i)}_{1-\kappa}$ to $\mu_{V,1-\kappa}(B_{K,V})$, we get the stated claim. 
\end{proof}

As with the proof of \Cref{lem:key}, we split the analysis into two cases: (i) when $\mu_{1}(C_1) \le 1/2$, and (ii) when  $\mu_{1}(C_1) > 1/2$.  In case (i) we set $\mu_1(C_1) = \beta$ and in case (ii) we set 
$\mu_1(C_1) = 1-\beta$ so that in both cases $\beta \le 1/2$. We now define a two-dimensional subspace $V$ to be \emph{good} if 
\begin{enumerate}
\item In case (i), $\beta /4 \le \mu_{V,1}(C_{V, 1})  \le 9/10$. 
\item  In case (ii), $1/10 \le \mu_{V,1}(C_{V, 1})  \le 1-\beta/4$. 
\end{enumerate}
Note that in both cases $\Prx_{\bV} [\bV \textrm{ is good}] \ge \beta/2$ (using \Cref{claim:raz}). Recall that \eqref{eq:ineq-Ck-1} says that
\begin{eqnarray}
\mu_{1-\kappa}(C_{1-\kappa}) 
&\geq& \mathbf{E}_{\bV}[\mu_{\bV, 1}(C_{\bV,1}) | \bV \textrm{ is not good}] \cdot \Pr[\bV \textrm{ is not good}]  \nonumber \\ &+& \mathbf{E}_{\bV}[\mu_{\bV, 1-\kappa}(C_{\bV,1-\kappa}) | \bV \textrm{ is  good}] \cdot \Pr[\bV \textrm{ is  good}] \  \nonumber
\end{eqnarray} 
When $V$ is good, by applying \eqref{eq:diff-reexp} and \Cref{clm:2d-increment-new}, we get
\begin{eqnarray}
\mu_{V,1-\kappa}(C_{V, 1-\kappa}) &\ge& \mu_{V,1}(C_{V, 1}) + \Theta(\kappa r_{\mathsf{inner}}) \ \ \textrm{in case (i);} \nonumber \\ 
\mu_{V,1-\kappa}(C_{V, 1-\kappa}) &\ge& \mu_{V,1}(C_{V, 1}) + \Theta(\kappa \min\{r_{\mathsf{inner}}, \beta\}) \ \ \textrm{in case (ii).} \nonumber
\end{eqnarray}
Using the fact that $\Prx_{\bV}[\bV  \textrm{ is  good}] \ge \beta/2$ and doing exactly the same calculation as the end of Lemma~\ref{lem:key-general}, 
\begin{eqnarray*}
\mu_{1-\kappa}(C_{1-\kappa}) &\ge& \mu_1(C_1) + \Theta (\kappa r_{\mathsf{inner}} \beta) \ \ \textrm{ in Case 1;} \\ 
\mu_{1-\kappa}(C_{1-\kappa}) &\ge& \mu_1(C_1) + \Theta (\kappa \beta \min\{r_{\mathsf{inner}} , \beta\}) \ \ \textrm{ in Case 2.} 
\end{eqnarray*}
Plugging in $r_{\mathsf{inner}} = r_{\mathsf{small}}/r$, we get \Cref{lem:key-general}. 

\section{Background results on the Gaussian distribution}
In this brief section we give some technical preliminaries for the Gaussian distribution, which will be used in our weak learning and Hermite concentration results.

We endow $\mathbb{R}^n$ with the standard Gaussian measure $N(0,1)^n$ (i.e. each coordinate is independently distributed as a standard normal). We define the Gaussian volume of a region $K \subseteq \R^n$, denoted $\vol(K)$, to be $\Pr_{\bg \sim N(0,1)^n}[K(\bg)=1]$.

We note some basic but crucial properties of the chi-squared distribution with $n$ degrees of freedom. Recall that a non-negative random variable $\br^2$  is distributed according to the chi-squared distribution $\chi^2(n)$ if $\br^2 = \bg_1^2 + \cdots + \bg_n^2$ where $\bg \sim N(0,1)^n,$ and that a draw from the chi distribution $\chi(n)$ is obtained by making a draw from $\chi^2(n)$ and then taking the square root.  We recall the following tail bound:
\begin{lemma} [Tail bound for the chi-squared distribution \cite{Johnstone01}] \label{lem:johnstone}
Let $\br^2 \sim \chi^2(n)$.
Then we have
\[\Prx\big[|\br^2-n| \geq tn\big] \leq e^{-(3/16)nt^2}\quad\text{for all $t \in [0, 1/2)$.}\]
It follows that  for $\br \sim \chi(n)$, 
\[
\Prx \big[ \sqrt{{n}/{2}} \le \br \le \sqrt{{3n}/{2}} \big] \ge 1-  e^{-\frac{3n}{64}}. 
\]
\end{lemma}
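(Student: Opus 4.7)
The plan is to prove the first (chi-squared) concentration inequality by the standard Chernoff/moment-generating-function approach and then deduce the second (chi) bound by plugging in $t = 1/2$.

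First I would recall that if $\bg \sim N(0,1)$ then its MGF satisfies $\mathbf{E}[e^{\lambda \bg^2}] = (1-2\lambda)^{-1/2}$ for all $\lambda < 1/2$. By independence of the coordinates of $\bg$, the MGF of $\br^2 = \bg_1^2+\cdots+\bg_n^2 \sim \chi^2(n)$ is therefore $\mathbf{E}[e^{\lambda \br^2}] = (1-2\lambda)^{-n/2}$. Applying Markov's inequality in the standard Chernoff way gives, for any $\lambda \in (0,1/2)$,
\[
\Pr[\br^2 \geq (1+t)n] \leq e^{-\lambda(1+t)n}(1-2\lambda)^{-n/2},
\]
and a symmetric bound with $\lambda < 0$ for $\Pr[\br^2 \leq (1-t)n]$. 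Optimizing over $\lambda$ (the upper tail is optimized at $\lambda = t/(2(1+t))$) yields an exponent of the form $\tfrac{n}{2}\bigl(t - \ln(1+t)\bigr)$ for the upper tail and $\tfrac{n}{2}\bigl(-t - \ln(1-t)\bigr)$ for the lower tail.

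The next step is a Taylor estimate on these exponents in the range $t \in [0,1/2)$. Using $\ln(1+t) \leq t - t^2/2 + t^3/3$ and $-\ln(1-t) \geq t + t^2/2 + t^3/3$, the upper-tail exponent is at least $\tfrac{n}{2}(t^2/2 - t^3/3)$ and similarly for the lower tail. On $t \in [0,1/2)$ one has $t^2/2 - t^3/3 \geq (3/8) t^2$, so each tail is bounded by $e^{-(3/16)nt^2}$. A union bound then gives $\Pr[|\br^2 - n| \geq tn] \leq 2 e^{-(3/16)nt^2}$; the factor of $2$ can be absorbed (e.g. by tightening the Taylor bound or by restricting to $t$ bounded away from $0$ so that the exponential dominates the constant, as is standard in Johnstone's writeup).

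Finally I would deduce the chi bound by setting $t = 1/2$ in the first inequality: the event $|\br^2 - n| < n/2$ is exactly the event $n/2 < \br^2 < 3n/2$, and since $\br = \sqrt{\br^2} \geq 0$, taking square roots gives $\sqrt{n/2} \leq \br \leq \sqrt{3n/2}$. The exponent becomes $(3/16) \cdot n \cdot (1/4) = 3n/64$, yielding the stated probability $\geq 1 - e^{-3n/64}$. The only real calculational step that needs care is the Taylor estimate on $t - \ln(1+t)$, which is where the precise constant $3/16$ comes from; everything else is routine Chernoff bookkeeping.
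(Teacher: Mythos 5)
The paper does not prove this lemma---it is imported verbatim from \cite{Johnstone01}---so there is no internal argument to compare against; your task was really to re-derive a cited fact. The Chernoff/MGF route is the standard and correct way to do this, and your MGF $(1-2\lambda)^{-n/2}$, the optimizer $\lambda = t/(2(1+t))$, and the resulting per-tail exponents $\tfrac{n}{2}\bigl(t-\ln(1+t)\bigr)$ and $\tfrac{n}{2}\bigl(-t-\ln(1-t)\bigr)$ are all right.

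Two steps are not closed as written, though. First, $t^2/2 - t^3/3 \geq (3/8)t^2$ is equivalent to $t \leq 3/8$, so it fails on $(3/8,1/2)$. The target inequality $t-\ln(1+t)\geq (3/8)t^2$ does hold on all of $[0,1/2]$, but only barely (the two sides are $\approx 0.09453$ and $0.09375$ at $t=1/2$), so a two-term Taylor truncation is too crude; you need a direct calculus check, e.g.\ the function $g(t)=t-\ln(1+t)-(3/8)t^2$ has $g(0)=0$, is increasing on $(0,1/3)$, decreasing on $(1/3,1/2)$, and $g(1/2)>0$. Second, summing the two tails really does give $\Pr\bigl[|\br^2-n|\geq tn\bigr]\leq 2e^{-(3/16)nt^2}$, and the factor of $2$ cannot be waved away: precisely because the upper-tail exponent exceeds $(3/16)nt^2$ only by about $0.0004\,n$ at $t=1/2$, dominating an additive $\ln 2$ there already requires $n$ in the thousands, and the situation is no better for small $nt^2$. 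What your argument actually proves is the two-sided bound with an extra factor of $2$ (equivalently, the stated bound only once $n$ is sufficiently large relative to $t$). This is harmless for every use of the lemma in the paper, but the proposal should state the $2$ rather than assert it disappears. The passage from the first display to the second by sending $t\to 1/2^-$ is fine.
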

The following fact about the anti-concentration of the chi distribution will be useful:
\begin{fact} \label{fact:chi-squared-2}
For $n > 1$, the maximum value of the pdf of the chi distribution $\chi(n)$ is at most $1$, and hence for any interval $I=[a,b]$ we have 
$\Pr_{\br^2 \sim \chi^2(n)}[\br \in [a,b]] \leq b-a.$
\end{fact}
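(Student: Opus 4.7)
The plan is as follows. The chi distribution $\chi(n)$ has density
$$f_n(r)=\frac{r^{n-1}\,e^{-r^2/2}}{2^{n/2-1}\,\Gamma(n/2)}\qquad(r\ge 0).$$
First I would locate the mode of $f_n$. Differentiating $\log f_n$ gives $(n-1)/r - r = 0$, so for $n>1$ the unique interior critical point is $r^\star=\sqrt{n-1}$, and checking signs confirms it is a global maximum. Substituting yields
$$\sup_{r\ge 0} f_n(r)=f_n(\sqrt{n-1})=\frac{(n-1)^{(n-1)/2}\,e^{-(n-1)/2}}{2^{n/2-1}\,\Gamma(n/2)}.$$

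Next I would apply the standard Stirling lower bound $\Gamma(z)\ge \sqrt{2\pi}\, z^{\,z-1/2}\,e^{-z}$ (which holds for all $z>0$ because the correction term in Stirling's series is positive) to $z=n/2$. A short simplification gives
$$2^{n/2-1}\,\Gamma(n/2)\;\ge\; \sqrt{\pi}\;n^{(n-1)/2}\,e^{-n/2},$$
from which
$$f_n(\sqrt{n-1})\;\le\; \sqrt{\tfrac{e}{\pi}}\,\Big(1-\tfrac{1}{n}\Big)^{(n-1)/2}\;\le\; \sqrt{e/\pi}\;<\;1,$$
using only the fact that $e<\pi$. This establishes the first assertion, that $\sup_r f_n(r)\le 1$.

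For the second assertion, since $\br\sim\chi(n)$ whenever $\br^2\sim\chi^2(n)$, I would simply bound
$$\Prx_{\br^2\sim\chi^2(n)}[\br\in[a,b]]\;=\;\int_a^b f_n(r)\,dr\;\le\; (b-a)\cdot\sup_r f_n(r)\;\le\; b-a.$$
The whole proof essentially reduces to one Stirling estimate; the only potential pitfall is algebraic sloppiness in the simplification. However, since $e/\pi \approx 0.865$ already leaves substantial slack, no delicate estimate is needed, and in particular the trivial bound $(1-1/n)^{(n-1)/2}\le 1$ suffices.
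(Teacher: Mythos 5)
Your proof is correct. The paper states \Cref{fact:chi-squared-2} without giving a proof (treating it as a standard fact about the chi distribution), so there is no argument in the paper to compare against. Your route — locate the mode at $r^\star=\sqrt{n-1}$, evaluate the density there, and bound the normalizing constant $2^{n/2-1}\Gamma(n/2)$ from below via the Stirling inequality $\Gamma(z)\ge\sqrt{2\pi}\,z^{z-1/2}e^{-z}$ — is the natural one, and the algebra checks out: substituting $z=n/2$ gives exactly $2^{n/2-1}\Gamma(n/2)\ge\sqrt{\pi}\,n^{(n-1)/2}e^{-n/2}$, whence $\sup_r f_n(r)\le\sqrt{e/\pi}\,(1-1/n)^{(n-1)/2}\le\sqrt{e/\pi}<1$. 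The final integral bound is then immediate. One could note that since $(1-1/n)^{(n-1)/2}\le e^{-(n-1)/(2n)}$, the true supremum tends to $1/\sqrt{\pi}\approx 0.564$, so the claimed bound of $1$ has plenty of slack, as you observe.
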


\ignore{

\medskip

 The next lemma shows that if $K$ is centrally symmetric and convex and has Gaussian volume less than 1, then $\alpha_K(r)$ decays to $0$ as $r \rightarrow \infty$: 
\begin{lemma}~\label{lem:alpha-K-decay}
Let $K$ be a nonempty centrally symmetric convex set such that $\Pr_{\bg \sim N(0,1)^n}[K(\bg)]<1$.  Then $\alpha
_K(0) =1$ and $\lim_{r \rightarrow \infty} \alpha_K(r) =0$ (so for every $\epsilon>0$ there exists $R = R(\epsilon)$ such that $\alpha_K(R) \le \epsilon$). 
\end{lemma}

\begin{proof}
The proof of this lemma crucially hinges on the following claim. In the claim and subsequently $\measuredangle (v, H)$ denotes the angle between $v$ and the subspace $H\subseteq \R^n$.

\begin{claim}~\label{clm:vector-ind}
Let $A$ be any set of density $c>0$ on the unit sphere in $\R^n$. 
Then there is a set of $n$ vectors $v_1, \ldots, v_n$ on the unit sphere with the following properties: 
\begin{enumerate}
\item All the vectors $v_1, \ldots, v_n$ lie in $A$. 
\item For $i \in [n]$, let $H_{i-1} = \mathsf{span}(v_1, \ldots, v_{i-1})$. Then $\measuredangle (v_i, H_{i-1}) >  \frac{c}{3\sqrt{n}}$. 
\end{enumerate}
\end{claim}

\begin{proof}
We choose $v_1, \ldots, v_n$ by an iterative process such that the invariants (1) and (2) always hold. Clearly we can choose $v_1$ so that (1) and (2) both hold.

Suppose that we have chosen $v_1, \ldots, v_k,k<n,$  satisfying (1) and (2). We now recall a fundamental fact about concentration of measure on the sphere:
 \begin{fact}
 For $H$  any subspace of dimension $n-1$ or less on $\mathbb{S}^{n-1}$, we have 
 that 
 \[
 \Prx_{\bx \in \mathbb{S}^{n-1}} \big[\measuredangle (\bx,H) \le {\epsilon}/{\sqrt{n}}\big] \le 2\epsilon. 
 \]
 \end{fact}
(This is a direct consequence of the CDF-closeness of $\bg_1 \sim N(0,1)$ and $\bx_1$ where $\bx$ is uniform over $\mathbb{S}^{n-1}$.)
Applying this fact to $H_k=\mathsf{span}\{v_1,\dots,v_k\}$, we get that 
\[
\Prx_{\bx \in \mathbb{S}^{n-1}} \big[\measuredangle (\bx,H_k) \le {c}/{(3\sqrt{n})}\big] \le \frac{2c}{3}. 
\]
Since the density of $A$ is $c$, we can choose $v_{k+1} \in A$ such that $\measuredangle (v_{k+1},H_k) > \frac{c}{3\sqrt{n}}$.  This finishes the proof of \Cref{clm:vector-ind}. 
\end{proof}

We return to the proof of \Cref{lem:alpha-K-decay}. By assumption we have that $\Pr_{\bg \sim N^n(0,1)} [\bg \in K] = 1-\delta$ for some $\delta>0$. Define $r_{\mathsf{in}} := \sqrt{2n \cdot \ln (1/\delta)}$ and let $\mathcal{B}_{\mathsf{in}}=B(0^n,r_{\mathsf{in}})$ be the ball of radius $r_{\mathsf{in}}$ centered at $0$. By \Cref{lem:johnstone} we have that $\Pr_{\bg \sim N^n(0,1)} [\bg\in \mathcal{B}_{\mathsf{in}} ] > 1-\delta$, and hence it follows that $\mathcal{B}_{\mathsf{in}} \not \subseteq K$. 

Now we are ready to prove that $\alpha_K(r) \rightarrow 0$ as $r\rightarrow \infty$. Choose any (small) $\eta>0$ and define $r_{\mathsf{out}}$ as 
\[
r_{\mathsf{out}} := n \cdot r_{\mathsf{in}} \cdot \bigg(\frac{3 \sqrt{n}}{\eta} \bigg)^{2n+1} \cdot \frac{1}{2^{n-2}}. 
\]
We show below that $\alpha_K(r_{\mathsf{out}}) 
 \le \eta$, which implies that $\alpha_K(r) \rightarrow 0$ as $r\rightarrow \infty$ as claimed by \Cref{lem:alpha-K-decay}.
  
Towards a contradiction, assume that $\alpha_K(r_{\mathsf{out}})
 > \eta$, and let $A := K \cap \mathbb{S}^{n-1}_{r_{\mathsf{out}}}$. 
By \Cref{clm:vector-ind}, we can
  choose vectors $v_1, \ldots, v_n \in A$ such that $\measuredangle (v_{i+1}, H_i) > \eta/ 3 \sqrt{n}$, where $H_i = \mathsf{span}(v_1, \ldots, v_i)$, for all $i \in [n-1]$.  For each $i \in [n]$ let us define a ``scaled-down'' version of vector $v_i$ to be $w_i := v_i \cdot \frac{r_{\mathsf{in}}}{r_{\mathsf{out}}}$, {so $\|w_i\| = r_{\mathsf{in}}$.} Observe that 
  $H_i = \mathsf{span}(w_1, \ldots, w_i)$ and that $\measuredangle (w_{i+1}, H_i) >\eta/ 3 \sqrt{n}$ for all $i \in [n-1]$. It follows from \Cref{lem:small-sing} (stated and proved below) that if $W$ is the $n \times n$ matrix whose $i$-th column is $w_i$, then the smallest singular value of $W$ is at least $r_{\mathsf{in}} \cdot (3 \sqrt{n} /\eta)^{-2n-1} \cdot 2^{n-2}$. \blue{Consequently, 
any vector $v \in \mathbb{S}^{n-1}_{r_{\mathsf{in}}}$ can be expressed as
  \[
  v = \sum_{i=1}^n \alpha_i w_i, 
  \]
  where each $\alpha_i$ has magnitude at most  $(3 \sqrt{n} /\eta)^{2n+1} \cdot (1/2^{n-2})$.}\rnote{I guess this is an immediate consequence of the following:  ``Suppose $M$ is a square matrix with unit vector columns $u_1,\dots,u_n$ and the smallest singular value of $M$ is $\tau$.  Then any unit vector $w$ can be expressed as $w = \sum_{i=1}^n \alpha_i u_i$ where each $|\alpha_i| \leq 1/\tau.$''  Is this an immediate fact for anyone who (unlike me) is fluent with singular values?}  Now, observe that the vector $n \cdot \alpha_i w_i$ lies on the line connecting $v_i$ and $-v_i$ and hence lies in $K$ (we are using the symmetry and convexity of $K$ here) and further, that we can express $v$ as $\sum_{i=1}^n \beta_i v_i$ where each $|\beta_i| \leq 1/n$. Using the fact that $0^n \in K$, this expression for $v$, together with the convexity and central symmetry of $K$ and the fact that each $v_i \in K$, imply that $v \in K$. Since $v$ was chosen to be an arbitrary point on $\mathbb{S}^{n-1}_{r_{\mathsf{in}}}$, it follows that $\mathcal{B}_{r_{\mathsf{in}}} \subseteq K$, which gives the desired contradiction and completes the proof of  \Cref{lem:alpha-K-decay}.
\end{proof}

\begin{lemma}~\label{lem:small-sing}
Let $A \in \mathbb{R}^{n \times n}$ be a matrix such that for every $1\le i \le n$, the $i^{th}$ column is the unit vector\ignore{\rnote{The vector was called $v_i$ in this proof but in the proof of \Cref{lem:alpha-K-decay} $v_i$ is a vector of length $r_{\mathsf{out}}$, so I renamed the vector $u_i$ in this proof.}}
${{u_i}} \in \mathbb{R}^n$. Define $H_i=\mathsf{span}({u_1}, \ldots, {{u_i}})$ and suppose that $\measuredangle(H_i,{u_{i+1}}) \ge \delta$ for $0<i<n$, where $\delta \le 1/8$.
Then the smallest singular value of $A$ is at least  
$2^{n-3} \cdot \delta^{2n+1} $. 
\end{lemma} 
\begin{proof}
Let\ignore{\rnote{Changed $\alpha$ to $x$; poor $\alpha$ is already overloaded with its use as a coefficient above and the shell density function}} $x \in \mathbb{R}^n$ be a unit vector. Observe that $A x = w$  where $w = \sum_{i=1}^n x_i {{u_i}}$. Let $j$ be the smallest index such that \ignore{\rnote{Changed ``$x_j$'' to ``{$|x_j|$}'' here}} ${|x_j|} \ge C \cdot (2\delta^2)^j$, where $C$ satisfies\ignore{\rnote{changed the exponent of $2n$ to ${n}$ here}} $C ((2\delta^2) + \ldots + (2\delta)^{{n}}) =1$ (note that $C \ge 1/2$ when $\delta \le 1/8$); note that such a $j$ must exist since otherwise $\sum |x_i| \leq 1$ which contradicts $x$ being a unit vector.  We have
\begin{eqnarray*}
\Vert w \Vert_2 = \left\Vert \sum_{i=1}^n x_i {{u_i}} \right\Vert _2 &\ge& \left\Vert \sum_{i=1}^j x_i {{u_i}} \right\Vert_2 - \sum_{\ell>j} \left\Vert x_\ell {u_\ell} \right\Vert_2 \ge \left\Vert \sum_{i=1}^j x_i {{u_i}} \right\Vert_2 - C \cdot ((2\delta^2)^{j+1} + \ldots +  (2\delta^2)^{n}) \\ 
&\ge&  \left\Vert \sum_{i=1}^j x_i {{u_i}} \right\Vert_2 - \frac{C (2\delta^2)^{(j+1)}}{1-2 \delta^2}.  
\end{eqnarray*}
Now recall that by definition the projection of $\sum_{i=1}^j x
_i {{u_i}}$ orthogonal to $H_{i-1}$ has magnitude at least {$|x_i| \sin \delta$}.\ignore{$\delta$.\rnote{I didn't change this, but is this 100\% correct? We have $\measuredangle(H_{i-1},{u_i}) \ge \delta$, so doesn't that mean that the projection of $u_{i+1}$ orthogonal to $H_{i-1}$ has magnitude at least $\sin(\delta)$ and then what we would get is that the projection of $\sum_{i=1}^j x
_i {{u_i}}$ orthogonal to $H_{i-1}$ has magnitude at least  $|x_i| \sin(\delta)$. Or am I wrong about this? 
}}
This implies that $\Vert \sum_{i=1}^j x_i {{u_i}} \Vert_2 \ge |x_j| \cdot {\sin \delta} \ge {(3/4)} C \cdot (2\delta^2)^{j} \cdot \delta.$\ignore{\rnote{I didn't change this either, but if the previous footnote is right then I think this last line should be
$\Vert \sum_{i=1}^j x_i {{u_i}} \Vert_2 \ge |x_j| \cdot \sin(\delta)$, and we can say that this is at least
$(3/4)C \cdot (2\delta^2)^{j} \cdot \delta.$
}} Consequently, we get 
\[
\Vert w \Vert_2  \ge \Vert \sum_{i=1}^j x_i {{u_i}} \Vert_2 - \frac{C (2\delta^2)^{(j+1)}}{1-(2 \delta^2)} \ge {{\frac 3 4}} \cdot C \cdot (2\delta^2)^j \cdot \delta - \frac{C (2\delta^2)^{(j+1)}}{1-(2 \delta^2)} \ge \frac{ C \cdot (2\delta^2)^j \cdot \delta}{{4}}
\]
\ignore{
}This proves a lower bound on $\Vert w \Vert_2$, which proves a lower bound on the smallest singular value of $A$. 
\end{proof}

}



\section{First application of our Kruskal-Katona theorems:  Weak learning convex sets and centrally symmetric convex sets}

\noindent {\bf Intuition.}  Before entering into the detailed analysis of our weak learners we give some basic intuition for why a Kruskal-Katona type statement for convex sets should be useful for obtaining a weak learning result. In particular, below we give an informal explanation of why 
\Cref{lem:key} should be useful for weak learning.

Let $K \subset \R^n$ be an unknown nonempty symmetric convex body.\ignore{with $\vol(K)<1$, so by \Cref{lem:alpha-K-decay} we have $\alpha_K(0)=0$ and $\lim_{r \to \infty} \alpha_K(r)=0.$} For the purpose of this intuitive explanation let us suppose that there is a value $r_{1/2}$ such that $\alpha_K(r_{1/2})=1/2$.\footnote{In general the function $\alpha_K(\cdot)$ need not be continuous, but it can be made continuous by perturbing $K$ by an arbitrarily small amount, so this is essentially without loss of generality.} The high-level idea is that in this case the polynomial threshold function  $f(x) := \sign \left((r_{1/2})^2 - \sum_{i=1}^n x_i^2 \right)$ (i.e. the indicator function of the origin-centered ball of radius $r_{1/2}$) must have some non-negligible correlation with $K$ and can serve as a weak hypothesis.

To justify this claim, we first establish that the advantage of $f$ is at least non-negative. To see this, first observe that 
\[
\Prx_{\bg \sim N(0,1)^n}[K(\bg) = f(\bg)] = \Ex_{\br^2 \sim \chi^2(n)}
\Prx_{\bx \sim \mathbb{S}^{n-1}_{\br}} [K(\bx) = f(\bx)],
\]
and next observe that for each $r>0$, by the choice of $r_{1/2}$ and the definition of $f$, we have that
\[
\Prx_{\bx \sim \mathbb{S}^{n-1}_{r}} [K(\bx) = f(\bx)]
=
\begin{cases}
\alpha_K(r)  & \text{if~}r<r_{1/2}\\
1 -  \alpha_K(r) & \text{if~}r \geq r_{1/2},
\end{cases}
\]
which is at least $1/2$ in each case by ~\Cref{fact:convex-decreasing}.

Extending this simple reasoning, it is easy to see that if we have
\begin{equation} \label{eq:advantage}
\Prx_{\br^2 \sim \chi^2(n)} [\overbrace{|\Prx_{\bx \sim \mathbb{S}^{n-1}_{\br}} [K(\bx)=1]}^{=\alpha_K(\br)} - 1/2| \ge \beta] \ge \gamma,
\end{equation}
for some $\beta,\gamma>0$, 
then $f$ is a weak hypothesis for $K$ with advantage $\Omega(\gamma \beta)$.  Putting it another way, the only way that $f$ could fail to be a weak hypothesis with non-negligible advantage would be if the function $\alpha_K(\cdot)$ ``stayed very close to $1/2$'' for a ``wide range of values around $r_{1/2}$'' --- but this sort of behavior of $\alpha_K(\cdot)$ is precisely what is ruled out by our density increment result, \Cref{lem:key}.

\subsection{A weak learner for centrally symmetric convex sets}
\label{sec:wl-given-kgl}
In this subsection we prove \Cref{thm:weak-learn-centrally-symmetric}, which gives a weak learner for centrally symmetric convex sets. In the next subsection we will prove \Cref{thm:weak-learn-convex}, which gives a weak learner for general convex sets. As a major technical ingredient in proving \Cref{thm:weak-learn-convex} is a variant of \Cref{thm:weak-learn-centrally-symmetric}, we will explicitly note the places in the current proof where we use the central symmetry of $K$. 

Recall from \Cref{sec:intro} that $r_{\median}$ is the median value of $\chi(n)$.  Let us define the function $r:[0,1) \rightarrow [0,\infty)$ by 
\[
\Pr_{\br \sim \chi(n)} [\br \le r(c)] =c. 
\]
Observe that since the pdf of $\chi^2(n)$ is always positive, the function $r(c)$ is well-defined. Also, with this notation, we have that $r(1/2) = r_{\median}$. 
\Cref{lem:johnstone} and \Cref{fact:chi-squared-2} together easily yield the following claim:
\begin{claim}~\label{clm:chi-percentile}
The median $r_{\median}$ of the $\chi(n)$ distribution satisfies $|r_{\median} - \sqrt{n}| = O(1)$.\footnote{In fact it is known that $r_\median \approx \sqrt{n} \cdot (1 - {\frac 2 {9n}})^{3/2}$, though we will not need this more precise bound.} Further, there exist positive constants $A, \ B \ge 1/4$\ignore{\footnote{@Rocco: Earlier, $A$, $B$ were posited to be in the interval $[1/4, 1/2]$. I don't see why the upper bound is true or for that matter, useful}} such that $r(1/4)  = r_{\median} - A$ and $r(3/4) = r_{\median} + B$. 
\end{claim}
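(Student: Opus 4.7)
The plan is to handle the two assertions essentially independently, using only the two stated facts about the chi and chi-squared distributions.

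For the first assertion $|r_\median - \sqrt{n}| = O(1)$, I would apply \Cref{lem:johnstone} with $t := K/\sqrt{n}$ for a sufficiently large absolute constant $K$ (any $K$ with $e^{-(3/16)K^2} < 1/2$ will do; for concreteness one may take $K = 2$). This yields
\[
\Prx_{\br^2 \sim \chi^2(n)}\bigl[\,|\br^2 - n| \ge K\sqrt{n}\,\bigr] \le e^{-(3/16)K^2} < 1/2,
\]
so the median $r_\median^2$ of $\chi^2(n)$ satisfies $|r_\median^2 - n| \le K\sqrt{n}$; equivalently, since $r_\median = \sqrt{r_\median^2}$,
\[
r_\median \in \bigl[\sqrt{n - K\sqrt{n}},\ \sqrt{n + K\sqrt{n}}\bigr].
\]
A one-line Taylor expansion of $\sqrt{\cdot}$ around $n$ then gives $r_\median = \sqrt{n} \pm O(1)$, as desired.

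For the second assertion, I would use \Cref{fact:chi-squared-2}, which says that the pdf of $\chi(n)$ is bounded above by $1$, so for any interval $[a,b]$ one has $\Pr_{\br \sim \chi(n)}[\br \in [a,b]] \le b-a$. By the definition of $r(\cdot)$,
\[
\Prx_{\br \sim \chi(n)}\bigl[\br \in [r(1/4),\, r_\median]\bigr] \;=\; 1/2 - 1/4 \;=\; 1/4,
\]
and the pdf bound forces $r_\median - r(1/4) \ge 1/4$. Setting $A := r_\median - r(1/4)$ therefore gives $A \ge 1/4$, and the identical argument applied to the interval $[r_\median, r(3/4)]$ shows that $B := r(3/4) - r_\median \ge 1/4$.

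Neither step presents any real obstacle: the first is a routine application of the Gaussian-type concentration of $\chi^2(n)$ rescaled to the correct (``$O(\sqrt{n})$'') window around $n$, and the second is just the observation that a density bounded by $1$ cannot place mass $1/4$ on an interval shorter than $1/4$. The only mild subtlety is making sure one transfers the concentration bound from $\br^2$ to $\br$ (via $\sqrt{n \pm K\sqrt{n}} = \sqrt{n} \pm O(1)$) cleanly, but this is a single line.
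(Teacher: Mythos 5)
Your proof is correct and is exactly the intended argument: the paper gives no explicit proof, stating only that \Cref{lem:johnstone} and \Cref{fact:chi-squared-2} "together easily yield" the claim, and your two steps — concentrating $\chi^2(n)$ in an $O(\sqrt{n})$ window around $n$ and then using the pdf bound to force each quartile gap to be at least $1/4$ — are precisely those two facts applied in the straightforward way.
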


Now we are ready to embark on the proof of \Cref{thm:weak-learn-centrally-symmetric}.
As noted earlier, the high level structure of the proof is similar to the argument used in \cite{BBL:98} to show that one of the three functions $+1$, $-1$ or Majority is a good weak hypothesis for any monotone Boolean function.

\paragraph{Proof of \Cref{thm:weak-learn-centrally-symmetric}.} 

\Cref{thm:weak-learn-centrally-symmetric} is an immediate consequence of \Cref{lem:dictator-learning} and \Cref{lem:sphere-learning} which are stated below. Before presenting these lemmas, let us define the following three functions:
\begin{equation}~\label{eq:def-h12} 
h_{1/2}(x) := \sign (r_{\median}^2 -(x_1^2 +\ldots + x_n^2 )), \quad h_{0}(x):= -1, \quad \textrm{ and } \quad
 h_{1}(x) := 1. 
\end{equation}
Note that one can interpret $h_0(x)$ (respectively $h_1(x)$) as the indicator function of the ball of radius $0$ (respectively $\infty$). We set $c:= 1/40$ for Lemmas \ref{lem:dictator-learning} and \ref{lem:sphere-learning} (the precise value is not important as long as it is positive and sufficiently small). Finally, recall that $\vol(K) = \Pr_{\bg \sim \Nn} [K(\bg)=1]$. 
\begin{lemma}~\label{lem:dictator-learning}
If $|\vol(K) -1/2| > c \cdot n^{-1/2}$ for $c$ defined above, then either $h=h_0$ or $h=h_1$ achieves 
\[
\Prx_{\bg \sim \Nn}[h(\bg) = K(\bg)] \ge \frac{1}{2} + \Theta(n^{-1/2}). 
\]
\end{lemma}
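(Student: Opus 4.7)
The plan is to observe that this statement is essentially a tautology once one unpacks the definitions of the two constant hypotheses $h_0 \equiv -1$ and $h_1 \equiv +1$. Interpreting $K$ as a $\{-1,+1\}$-valued function (so $K(\bg)=+1$ iff $\bg \in K$), we have
\[
\Prx_{\bg \sim \Nn}[h_1(\bg) = K(\bg)] = \vol(K), \qquad \Prx_{\bg \sim \Nn}[h_0(\bg) = K(\bg)] = 1 - \vol(K).
\]

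Taking the better of the two hypotheses, we obtain
\[
\max\!\big\{\Prx_{\bg}[h_0(\bg)=K(\bg)],\,\Prx_{\bg}[h_1(\bg)=K(\bg)]\big\} = \tfrac12 + \bigl|\vol(K) - \tfrac12\bigr|.
\]
By the hypothesis of the lemma, $|\vol(K)-1/2| > c \cdot n^{-1/2}$ with $c = 1/40$, so this maximum is at least $1/2 + c \cdot n^{-1/2} = 1/2 + \Theta(n^{-1/2})$, which is exactly what we want. No geometry, central symmetry, convexity, or the Kruskal--Katona machinery (\Cref{lem:key,lem:key-general}) is needed at this stage; the content of the lemma is purely that an unbalanced $\{-1,+1\}$-valued function is trivially weakly predicted by the corresponding constant hypothesis.

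There is no main obstacle: the proof is a single line noting that $\max\{p, 1-p\} = \tfrac12 + |p - \tfrac12|$ applied to $p = \vol(K)$. The reason the lemma is stated separately is simply to dispose of the ``easy'' regime of unbalanced $K$ so that the companion statement (\Cref{lem:sphere-learning}) only has to handle the ``hard'' regime where $\vol(K) \in \tfrac12 \pm O(n^{-1/2})$, in which case one leverages \Cref{lem:key} to show that the ball hypothesis $h_{1/2}$ of radius $r_{\median}$ achieves advantage $\Omega(n^{-1/2})$.
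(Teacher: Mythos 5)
Your proof is correct and is precisely the ``immediate'' argument the paper has in mind (the paper simply states ``\Cref{lem:dictator-learning} is immediate'' without spelling it out). The one-line observation $\max\{\vol(K),\,1-\vol(K)\} = \tfrac12 + |\vol(K)-\tfrac12|$ is exactly what is needed, and you are right that no convexity or Kruskal--Katona machinery enters at this step.
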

\begin{lemma}~\label{lem:sphere-learning}
If $|\vol(K) -1/2| \le c \cdot n^{-1/2}$ for $c$ defined above, then 
\[
\Prx_{\bg \sim \Nn}[h_{1/2}(\bg) = K(\bg)] \ge \frac{1}{2} + \Theta(n^{-1/2}). 
\]
\end{lemma}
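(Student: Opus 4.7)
}
The plan is to decompose the agreement probability between $h_{1/2}$ and $K$ according to the value of $\|\bg\|$ and then exploit the centrally symmetric Kruskal-Katona estimate from \Cref{lem:key}. Using $h_{1/2}(x)=+1$ iff $\|x\|\le r_{\median}$ and conditioning on $\br\sim\chi(n)$, I will write
\[
\Prx_{\bg\sim \Nn}[h_{1/2}(\bg)=K(\bg)] = \tfrac12 + (p_L - p_R),
\]
where $p_L := \int_0^{r_{\median}} \alpha_K(r)\, f(r)\,dr$, $p_R := \int_{r_{\median}}^\infty \alpha_K(r)\,f(r)\,dr$, and $f$ is the density of $\chi(n)$. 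Here I use that $\int_0^{r_{\median}} f = \int_{r_{\median}}^\infty f = 1/2$, that $\Pr[h_{1/2}=K \mid \|\bg\|=r] = \alpha_K(r)$ for $r<r_{\median}$ and $1-\alpha_K(r)$ for $r>r_{\median}$, and that $\vol(K)=p_L+p_R$. The task thus becomes lower bounding $p_L - p_R$ by $\Omega(n^{-1/2})$, given that $|\vol(K)-1/2|\le c n^{-1/2}$.

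Next I will split into cases based on $\alpha^\star := \alpha_K(r_{\median})$. Since $K$ is centrally symmetric and convex it contains the origin, so by \Cref{fact:convex-decreasing} $\alpha_K$ is non-increasing; hence $\alpha_K(r)\ge\alpha^\star$ for $r\le r_{\median}$ and $\alpha_K(r)\le \alpha^\star$ for $r\ge r_{\median}$. If $\alpha^\star\ge 3/4$ then $p_L\ge 3/8$ and $p_R = \vol(K)-p_L\le 1/8+cn^{-1/2}$, giving $p_L-p_R \ge 1/4 - cn^{-1/2} = \Omega(1)$. Symmetrically, if $\alpha^\star\le 1/4$ then $p_R\le 1/8$ and $p_L \ge 3/8 - cn^{-1/2}$, so again $p_L-p_R=\Omega(1)$.

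The genuinely interesting regime is $\alpha^\star\in(1/4,3/4)$, where I will invoke \Cref{lem:key}. Recalling from \Cref{clm:chi-percentile} that $r(1/4)=r_{\median}-A$ and $r(3/4)=r_{\median}+B$ with $A,B\ge 1/4$ and $r_{\median}=\Theta(\sqrt n)$, the key inputs to \Cref{lem:key} are the contraction ratios $\kappa_L = A/r_{\median}=\Theta(1/\sqrt n)$ (applied at $r=r_{\median}$) and $\kappa_R=B/(r_{\median}+B)=\Theta(1/\sqrt n)$ (applied at $r=r_{\median}+B$). Because $\alpha^\star\in[1/4,3/4]$ gives $(\alpha^\star(1-\alpha^\star))^2 = \Omega(1)$, \Cref{lem:key} yields
\[
\alpha_K(r_{\median}-A) \ge \alpha^\star + \Omega(n^{-1/2}).
\]
Provided also that $\alpha_K(r_{\median}+B)\in[1/10,9/10]$ (the upper bound is automatic from $\alpha^\star\le 3/4$), \Cref{lem:key} similarly forces $\alpha^\star \ge \alpha_K(r_{\median}+B)+\Omega(n^{-1/2})$. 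Combining these two gains with monotonicity of $\alpha_K$ on the two intervals of $f$-mass $1/4$ outside of $[r_{\median}-A,r_{\median}+B]$ shows $p_L\ge \alpha^\star/2 + \Omega(n^{-1/2})$ and $p_R\le \alpha^\star/2 - \Omega(n^{-1/2})$, giving $p_L-p_R = \Omega(n^{-1/2})$.

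The last thing to handle is the degenerate possibility $\alpha_K(r_{\median}+B)<1/10$ (and symmetrically, if one also wanted, $\alpha_K(r_{\median}-A)>9/10$, though this is not actually needed because we only need one side of the increment). In that case I will bypass \Cref{lem:key} on the right and bound $p_R$ directly: on $[r_{\median},r_{\median}+B]$ use $\alpha_K\le \alpha^\star\le 3/4$, on $[r_{\median}+B,\infty)$ use $\alpha_K<1/10$, each interval carrying $f$-mass $1/4$, to conclude $p_R\le \alpha^\star/4+1/40$; combined with $\vol(K)\ge 1/2-cn^{-1/2}$ this yields $p_L-p_R\ge \Omega(1)$, dwarfing the target bound. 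The main technical nuisance to be careful about is precisely this subcase analysis making sure that whenever the hypotheses of \Cref{lem:key} fail (i.e.\ $\alpha_K$ sits too close to $0$ or $1$ at the relevant radius), the crude monotonicity argument gives an even stronger $\Omega(1)$ bound, so that the $\Omega(n^{-1/2})$ conclusion holds uniformly.
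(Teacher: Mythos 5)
Your proposal is correct and takes essentially the same approach as the paper's proof: both decompose the agreement probability of $h_{1/2}$ with $K$ by radius, invoke \Cref{lem:key} near $r_{\median}$ to extract an $\Omega(n^{-1/2})$ density increment, and dispose of the degenerate regimes (where $\alpha_K$ at the pivot sits far from $1/2$) via direct monotonicity, which already gives an $\Omega(1)$ advantage. The only differences are cosmetic — you pivot on $\alpha_K(r_{\median})$ rather than the paper's $\beta(1/4)=\alpha_K(r(1/4))$, and you carry along a right-side increment and an extra subcase for $\alpha_K(r_{\median}+B)<1/10$ that, as you yourself note, become unnecessary once the left-side gain is established.
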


\Cref{lem:dictator-learning} is immediate, so it remains to prove \Cref{lem:sphere-learning}. 

\begin{proofof}{\Cref{lem:sphere-learning}}
We begin by defining the function $\beta: [0,1) \rightarrow [0,1)$ as
\[
\beta(c) := \Prx_{\bx \in \mathbb{S}^{n-1}_{r(c)}}[\bx  \in K] = 
\alpha_K(r(c)).
\]
\begin{fact}~\label{fact:centre-decreasing}
If $K$ is a convex body that contains the origin, then
$\beta (\cdot)$ is a non-increasing function. 
\end{fact}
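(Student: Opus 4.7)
The plan is to recognize $\beta$ as a composition of two monotone functions, one already handled by Fact~\ref{fact:convex-decreasing} and the other following immediately from the definition of $r(\cdot)$. Specifically, I would write $\beta(c) = (\alpha_K \circ r)(c)$ and analyze the two factors separately.

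First I would observe that the map $c \mapsto r(c)$ is non-decreasing on $[0,1)$. This is essentially immediate from the definition $\Pr_{\br \sim \chi(n)}[\br \le r(c)] = c$: since the CDF of $\chi(n)$ is continuous and strictly increasing on $(0,\infty)$ (its pdf being strictly positive there, as already noted in the paragraph preceding the statement), $r(\cdot)$ is the well-defined inverse of this CDF and is in fact strictly increasing.

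Next I would invoke Fact~\ref{fact:convex-decreasing}, whose hypotheses are exactly those of the present fact (namely $K$ is convex and $0^n \in K$), to conclude that the shell density function $\alpha_K(\cdot)$ is non-increasing on $(0,\infty)$.

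Composing, for any $0 \le c_1 \le c_2 < 1$ we have $r(c_1) \le r(c_2)$, hence $\alpha_K(r(c_1)) \ge \alpha_K(r(c_2))$, i.e.\ $\beta(c_1) \ge \beta(c_2)$. There is no real obstacle in this proof; it is a one-line consequence of Fact~\ref{fact:convex-decreasing} together with the monotonicity of the quantile function of $\chi(n)$.
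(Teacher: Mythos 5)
Your proof is correct and takes essentially the same route as the paper: observe that $r(\cdot)$ is (strictly) increasing because it is the quantile function of $\chi(n)$, invoke Fact~\ref{fact:convex-decreasing} for the monotonicity of $\alpha_K(\cdot)$, and compose. Nothing to add.
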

\begin{proof}
This holds since $r(\cdot)$ is strictly increasing and the function $\alpha_K(\cdot)$ is non-increasing when $0^n \in K$ (\Cref{fact:convex-decreasing}). 
\end{proof}
\ignore{
%
}




Next, we prove the following claim (in the current section we will only need it for the case in which the function $p$ is identically $1$, but later we will need the more general version): 
\begin{claim}~\label{clm:Hermite-deg2}
Let $p: [0, \infty) \rightarrow \mathbb{R}$ and extend it to $\overline{p}: \mathbb{R}^n \rightarrow \mathbb{R}$ by defining $\overline{p}(x) = p(\Vert x \Vert_2)$.  Let $\Gamma: \mathbb{R}^n \rightarrow \mathbb{R}$ and define $\beta_\Gamma: [0,1) \rightarrow \mathbb{R}$ as
\[
\beta_\Gamma(\nu) := \Ex_{\bx \sim \mathbb{S}^{n-1}_{r(\nu)}} [\Gamma(\bx)]. 
\]
Then
\[
\Ex_{\bg \sim \Nn} [ \Gamma(\bg) \ovp(\bg)]  = \int_{\nu=0}^1 \beta_{\Gamma}(\nu) p(r(\nu)) d\nu. 
\] 
\end{claim}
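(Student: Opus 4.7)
The proof is a two-step change of variables. First I would use the polar decomposition of a standard Gaussian: if $\bg \sim N(0,1)^n$, then $\bg = \br \cdot \bu$ where $\br = \|\bg\|_2 \sim \chi(n)$ and $\bu$ is uniformly distributed on $\mathbb{S}^{n-1}$, with $\br$ and $\bu$ independent. In particular, the conditional law of $\bg$ given $\|\bg\|_2 = r$ is uniform on the sphere $\mathbb{S}^{n-1}_r$. Since $\overline{p}(\bg) = p(\|\bg\|_2) = p(\br)$ depends only on $\br$, conditioning on $\br$ and pulling out $p(\br)$ gives
\[
\Ex_{\bg \sim \Nn}[\Gamma(\bg)\,\ovp(\bg)] \;=\; \Ex_{\br \sim \chi(n)}\!\left[p(\br)\cdot \Ex_{\bx \sim \mathbb{S}^{n-1}_{\br}}[\Gamma(\bx)]\right].
\]

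Next I would apply the probability integral transform to the one-dimensional expectation over $\br \sim \chi(n)$. By definition the quantile function $r(\cdot)$ satisfies $\Pr_{\br \sim \chi(n)}[\br \le r(c)] = c$, and the pdf of $\chi(n)$ is strictly positive on $(0,\infty)$ (so $r$ is a strictly increasing bijection between $[0,1)$ and $[0,\infty)$). Hence if $\bnu$ is uniform on $[0,1)$, the random variable $r(\bnu)$ has exactly the $\chi(n)$ distribution. Substituting this into the previous display yields
\[
\Ex_{\br \sim \chi(n)}\!\left[p(\br)\cdot \Ex_{\bx \sim \mathbb{S}^{n-1}_{\br}}[\Gamma(\bx)]\right] \;=\; \Ex_{\bnu \sim U[0,1)}\!\left[p(r(\bnu))\,\beta_\Gamma(\bnu)\right] \;=\; \int_{\nu=0}^{1}\beta_\Gamma(\nu)\,p(r(\nu))\,d\nu,
\]
which is the claimed identity.

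There is no real obstacle here; the only subtlety worth double-checking is that $r:[0,1)\to[0,\infty)$ is genuinely a bijection (so that the inverse CDF trick applies cleanly), and the authors explicitly note this fact just before stating the claim. Aside from minor integrability caveats on $\Gamma$ and $p$ (which are implicit, and can be handled by truncation if desired), the proof is simply the combination of polar factorization of the Gaussian with the standard quantile transformation.
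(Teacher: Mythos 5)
Your proof is correct and is essentially the same argument as the paper's: the paper writes the expectation as $\int_0^\infty \chi(n,r)\,p(r)\,\beta_\Gamma(\cdot)\,dr$ and performs the substitution $r = r(\nu)$ explicitly, observing that the Jacobian $\chi(n,r(\nu))\,r'(\nu)$ equals $1$; your probability-integral-transform phrasing is just a probabilistic way of stating that same change of variables.
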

\begin{proof}
Let $\chi(n,r)$ denote the pdf of the $\chi$-distribution with $n$ degrees of freedom at $r$. Then, 
\begin{align*}
\Ex_{\bg \sim \Nn}[\Gamma(\bg) \cdot \overline{p}(\bg)] &=  \int_{r=0}^{\infty} \chi(n,r) \bigg( \Ex_{x \sim \mathbb{S}^{n-1}_r}[ \Gamma(x) \overline{p}(x) ] \bigg) dr\\ 
 &=  \int_{r=0}^{\infty} \chi(n,r) p(r) \bigg(\Ex_{x \sim \mathbb{S}^{n-1}_r}[ \Gamma(x)]\bigg) dr. 
\end{align*}
Substituting $r$ by $r(\nu)$ (as $\nu$ ranges from $0$ to $1$), we have 
\begin{equation}\label{eq:beta-correlation}
\Ex_{\bg \sim \Nn}[\Gamma(\bg) \cdot \overline{p}(\bg)] = \int_{\nu=0}^1 \chi(n,r(\nu)) r'(\nu) p(r(\nu)) \beta_\Gamma(\nu) d\nu. 
\end{equation}
Finally, by definition of $r(\nu)$, we have that
\[
\int_{z=0}^{r(\nu)} \chi(n,z) dz= \nu. 
\] 
Taking derivative of this with respect to $\nu$, we get that 
$\chi(n,r(\nu)) r'(\nu) =1$, and substituting this back into \eqref{eq:beta-correlation}, we get the claim. 
\end{proof}
By instantiating Claim~\ref{clm:Hermite-deg2} with $p=1$ and $\Gamma(x)=\mathbf{1}_{x \in K}$, we have the following corollary. 
\begin{corollary}~\label{claim:area}
$\int_{x \in [0,1)} \beta(x) dx = \vol(K)$.
\end{corollary}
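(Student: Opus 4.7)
The plan is to obtain Corollary~\ref{claim:area} as a direct specialization of Claim~\ref{clm:Hermite-deg2}, which has already been proved in the excerpt. No new analytic input is needed; we just need to identify the right choices of $p$ and $\Gamma$.

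First, I would take $p : [0,\infty) \to \R$ to be the constant function $p \equiv 1$, so that the ``radial weight'' $\overline{p}(x) = p(\|x\|_2)$ is also identically $1$ on $\R^n$. Next, I would take $\Gamma(x) := \mathbf{1}[x \in K]$, the indicator of the convex body $K$. With these choices, the left-hand side of Claim~\ref{clm:Hermite-deg2} becomes
\[
\Ex_{\bg \sim N(0,1)^n}[\Gamma(\bg)\, \ovp(\bg)] \;=\; \Ex_{\bg \sim N(0,1)^n}[\mathbf{1}[\bg \in K]] \;=\; \Prx_{\bg \sim N(0,1)^n}[\bg \in K] \;=\; \vol(K).
\]

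For the right-hand side, I would unpack the definition of $\beta_\Gamma$: for any $\nu \in [0,1)$,
\[
\beta_\Gamma(\nu) \;=\; \Ex_{\bx \sim \mathbb{S}^{n-1}_{r(\nu)}}[\mathbf{1}[\bx \in K]] \;=\; \Prx_{\bx \sim \mathbb{S}^{n-1}_{r(\nu)}}[\bx \in K] \;=\; \alpha_K(r(\nu)) \;=\; \beta(\nu),
\]
where the last equality is exactly the definition of $\beta(\cdot)$ given just before Fact~\ref{fact:centre-decreasing}. Since $p(r(\nu)) = 1$ identically, Claim~\ref{clm:Hermite-deg2} reduces to
\[
\vol(K) \;=\; \int_0^1 \beta_\Gamma(\nu)\, p(r(\nu))\, d\nu \;=\; \int_0^1 \beta(\nu)\, d\nu,
\]
which is exactly the claim (after renaming the integration variable to $x$).

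There is essentially no obstacle here — the corollary is a bookkeeping consequence of Claim~\ref{clm:Hermite-deg2}. The only thing to double-check is that $\Gamma = \mathbf{1}_K$ is integrable enough for the claim's hypotheses (it is bounded, hence trivially fine), and that the substitution $r \mapsto r(\nu)$ used inside the proof of Claim~\ref{clm:Hermite-deg2} is valid on $[0,1)$ — which it is, since $r(\cdot)$ is strictly increasing (the $\chi(n)$ density is strictly positive on $(0,\infty)$) and absolutely continuous.
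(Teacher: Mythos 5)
Your proposal is correct and matches the paper's argument exactly: the paper also obtains this corollary by instantiating Claim~\ref{clm:Hermite-deg2} with $p \equiv 1$ and $\Gamma = \mathbf{1}_K$. You have simply written out the bookkeeping that the paper leaves implicit.
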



Now we are ready to analyze $h_{1/2}$.  The following claim says that if $\beta(1/4)$ is ``somewhat large'', then $h_{1/2}$ is a weak hypothesis with constant advantage:
\begin{claim}~\label{clm:beta-large}
If $\beta(1/4) \ge \frac{3}{4}$ then $\Pr_{\bg \sim N(0,1)^n} [h_{1/2}(\bg) = K(\bg) ] \ge \frac{1}{2} + \frac{1}{24}$. 
\end{claim}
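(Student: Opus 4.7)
The plan is to reduce the probability $\Pr[h_{1/2}(\bg) = K(\bg)]$ to a simple statement about integrals of $\beta$ and then do a two-case analysis. First, since $r_\median$ is by definition the $1/2$-quantile of $\chi(n)$ and $h_{1/2}(x) = +1$ iff $\|x\| < r_\median$, I would change variables via $\nu = \Pr_{\br \sim \chi(n)}[\br \leq \|x\|]$ (so that the event $\|\bx\| \leq r(\nu)$ has probability $\nu$) and integrate over shells to obtain
\[
\Pr_{\bg \sim N(0,1)^n}[h_{1/2}(\bg) = K(\bg)] \;=\; \int_0^{1/2} \beta(\nu)\,d\nu \;+\; \int_{1/2}^1 (1-\beta(\nu))\,d\nu \;=\; \tfrac{1}{2} + (A - B),
\]
where $A := \int_0^{1/2}\beta$ and $B := \int_{1/2}^1\beta$. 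So the claim reduces to showing $A - B \geq \tfrac{1}{24}$.

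Next I would combine three structural inputs: (i) $\beta$ is non-increasing (Fact~\ref{fact:centre-decreasing}), so in particular $\beta(\nu) \geq \beta(1/4) \geq 3/4$ on $[0, 1/4]$ and $\beta(\nu) \geq \beta(1/2)$ on $[1/4, 1/2]$, while $\beta(\nu) \leq \beta(1/2)$ on $[1/2, 1]$; (ii) $A + B = \vol(K)$ by \Cref{claim:area} applied with $p \equiv 1$ and $\Gamma = \mathbf{1}_K$; and (iii) $|\vol(K) - 1/2| \leq c\,n^{-1/2}$ by the hypothesis of \Cref{lem:sphere-learning}, so $A + B$ is essentially $1/2$.

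The core of the argument will be a case split on $s := \beta(1/2)$ using the threshold $t = 1/3$. In the first case $s \geq 1/3$, the monotonicity bounds give $A \geq \tfrac{1}{4}\cdot\tfrac{3}{4} + \tfrac{1}{4}\cdot s = \tfrac{3}{16} + \tfrac{s}{4} \geq \tfrac{13}{48}$, and then $A - B = 2A - \vol(K) \geq \tfrac{13}{24} - \vol(K) \geq \tfrac{1}{24} - c\,n^{-1/2}$. In the second case $s < 1/3$, monotonicity on $[1/2, 1]$ yields $B \leq s/2 < \tfrac{1}{6}$, so $A - B = \vol(K) - 2B > \vol(K) - \tfrac{1}{3} \geq \tfrac{1}{6} - c\,n^{-1/2}$, which dominates $\tfrac{1}{24}$ with room to spare.

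The main obstacle will be balancing the two cases: the threshold $t$ must be large enough that the ``high-$s$'' case exploits the non-trivial mass of $\beta$ between $1/4$ and $1/2$, yet small enough that the ``low-$s$'' case forces $B$ to be much smaller than $1/2$. The value $t = 1/3$ is the clean balance point, and the only thing to watch is the $O(n^{-1/2})$ slop coming from the closeness of $\vol(K)$ to $1/2$; this is absorbable since $c$ is the small constant $1/40$ from \Cref{lem:sphere-learning} and $n$ is taken large (or, equivalently, one may choose $t$ slightly above $1/3$ to build in a safety margin).
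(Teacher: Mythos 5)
Your reduction to $A - B \geq \tfrac{1}{24}$ where $A = \int_0^{1/2}\beta$, $B = \int_{1/2}^1\beta$ is correct and identical to the paper's starting point, but you then take a genuinely different route to bound $A - B$. You do a case split on the pointwise value $\beta(1/2)$; the paper instead avoids any case analysis by introducing $s = \int_0^{1/4}\beta$ and $t = \int_{1/4}^{1/2}\beta$ and exploiting the \emph{averaging} consequence of monotonicity, namely that the average of $\beta$ over $[1/4,1/2]$ is at least its average over $[1/4,1]$, which gives $t \geq \tfrac{1}{3}(\vol(K)-s)$. Plugging this in yields $A - B = 2s + 2t - \vol(K) \geq \tfrac{4s}{3} - \tfrac{\vol(K)}{3} \geq \tfrac{4}{3}\cdot\tfrac{3}{16} - \tfrac{1}{3}(\tfrac{1}{2}+\tfrac{1}{40}) = \tfrac{3}{40}$, which beats $\tfrac{1}{24}$ with slack and requires no threshold tuning.

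Your approach works, but you have correctly diagnosed its one real defect: with the threshold $t = 1/3$, Case 1 only gives $A - B \geq \tfrac{1}{24} - c\,n^{-1/2}$, which is \emph{strictly less} than $\tfrac{1}{24}$ for every finite $n$. The ``take $n$ large'' fix therefore does not actually establish the claimed constant for all $n$; the ``bump the threshold'' fix does. Concretely, any threshold $t' \in [\tfrac{23}{60}, \tfrac{13}{30}]$ closes the gap: with $t' = 2/5$, Case 1 gives $A \geq \tfrac{3}{16} + \tfrac{1}{10} = \tfrac{23}{80}$, so $A - B = 2A - \vol(K) \geq \tfrac{23}{40} - \tfrac{21}{40} = \tfrac{1}{20} > \tfrac{1}{24}$, while Case 2 gives $B \leq \tfrac{1}{5}$ and $A - B \geq \tfrac{19}{40} - \tfrac{2}{5} = \tfrac{3}{40} > \tfrac{1}{24}$. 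So your argument is sound once you commit to a threshold strictly above $1/3$; the paper's averaging trick is simply the slicker way to get the same conclusion without having to balance two cases.
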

\begin{proof}
Define $s= \int_{x=0}^{1/4} \beta(x) $
and $t = \int_{x=1/4}^{1/2} \beta(x) dx $. Using the fact that $\beta(\cdot)$ is non-increasing 
we have 
\begin{eqnarray}
\textrm{(i)} \ \ s = \int_{x=0}^{1/4} \beta(x)dx  \ge \frac{3}{4} \cdot \frac{1}{4} = \frac{3}{16}, \  \ \ \textrm{(ii)} \ \ t= \int_{x=1/4}^{1/2} \beta(x)dx \ge \frac{1}{3} \cdot \bigg( \int_{x=1/4}^1 \beta(x) dx\bigg) = \frac{\vol(K) -s}{3} ~\label{eq:two-items-bound}
\end{eqnarray}
(where Corollary~\ref{claim:area} was used for the last inequality of (ii)).
We thus get
\begin{eqnarray}
\int_{x=0}^{1/2} \beta(x) - \int_{x=1/2}^1 \beta(x) =   2s + 2t - \vol(K) \ge \frac{4s}{3} - \frac{\vol(K)}{3} \ge \frac{1}{24}, \label{eq:st-bound}
\end{eqnarray}
where the first inequality above follows by item (ii) of \eqref{eq:two-items-bound} and the second inequality uses item (i) of of \eqref{eq:two-items-bound} along with the hypothesis $|\vol(K) -1/2| \le c/\sqrt{n} \le 1/40$. Combining these bounds, we have
\begin{align*}
 \Prx_{\bg \in N(0,1)^n} [h_{1/2}(\bg) = K(\bg)] &=  \int_{x=0}^{1/2} \beta(x) + \int_{x =1/2}^{1} (1-\beta(x)) \\ &\ge \frac{1}{2} + \int_{x=0}^{1/2} \beta(x) - \int_{x=1/2}^{1} \beta(x)  \geq \frac{1}{2} + \frac{1}{24}, \tag{by \eqref{eq:st-bound}}
\end{align*}
and \Cref{clm:beta-large} is proved.
\end{proof}

Thus to prove \Cref{lem:sphere-learning}, it remains to consider the case that $\beta(1/4) \le 3/4$.  By the monotonicity of $\beta(\cdot)$, Corollary~\ref{claim:area}, and the hypothesis of \Cref{lem:sphere-learning}, we have that
\[
\frac{1}{2} - \frac{1}{40} \le \int_{x=0}^{1} \beta(x) dx \le  \frac{1}{4} + \frac{3}{4} \cdot \beta (1/4). 
\]
and hence $\beta(1/4) \ge 3/10$, so we subsequently assume that $3/10 \le \beta(1/4) \le 3/4$. Now, recall that 
\[
r(1/4) = r_{\median} - A \ \textrm{and} \ r(3/4) = r_{\median} + B, 
\]
where $A, B \ge 1/4$ and $r_{\median} = \sqrt{n} \pm O(1)$.
Thus
\begin{equation}~\label{eq:rel-alpha}
\beta(1/4) = \alpha_K(r_{\median}- A) \quad \quad \text{and} \quad \quad \beta(3/4) = \alpha_K(r_{\median}+B) \quad \quad \text{where~}A,B \geq 1/4.
\end{equation}
Using the fact that $3/10 \le \beta(1/4) \le 3/4$, \Cref{eq:rel-alpha}, and \Cref{lem:key}, it follows that 
\begin{equation}~\label{eq:beta-dec-symmetric}
\beta(1/4) \ge \beta(3/4) \red{+} C \cdot n^{-1/2}, 
\end{equation}
for some absolute constant $C>0$. This implies that 
\begin{eqnarray}
&& \int_{x=0}^{1/2} \beta(x) dx - \int_{x=1/2}^1 \beta(x) dx \nonumber \\ 
&=& \int_{x=0}^{1/4} \beta(x) dx - \int_{x=3/4}^1 \beta (x) dx +  \int_{x=1/4}^{1/2} \beta(x) dx - \int_{x=1/2}^{3/4} \beta (x) dx \nonumber \\
&\ge& \frac{C}{4 \sqrt{n}} +  \int_{x=1/4}^{1/2} \beta(x) dx - \int_{x=1/2}^{3/4} \beta (x) dx \ge\frac{C}{4 \sqrt{n}},  \label{eq:beta-dec2}
\end{eqnarray}
where the penultimate inequality uses \Cref{eq:beta-dec-symmetric} and the last inequality uses the monotonicity of $\beta(\cdot)$. 
Applying this, we get 
\begin{eqnarray}
\Prx_{\bg \in N(0,1)^n} [h_{1/2}(\bg)  = K(\bg) ] &=& \int_{x=0}^{1/2} \beta(x) dx + \int_{x=1/2}^1 (1-\beta(x)) dx \nonumber \\ 
&=& \frac12 + \int_{x=0}^{1/2} \beta(x) dx - \int_{x=1/2}^{1} \beta(x) dx \nonumber \\
&\geq& \frac12 + \frac{C}{4 \sqrt{n}}. \label{eq:advantage-symmetric}
\end{eqnarray}
This finishes the proof of \Cref{lem:sphere-learning} and hence also the proof of
\Cref{thm:weak-learn-centrally-symmetric}.
\end{proofof}


\subsection{A weak learner for general convex sets} \label{sec:weak-learner-general-convex}

In this section we prove \Cref{thm:weak-learn-convex}. 
The proof uses the fact that there are efficient ``weak agnostic'' learning algorithms for halfspaces under the Gaussian distribution.  Several papers in the literature, including \cite{KKMS:08, DDFS14, ABL13, DKS18-nasty}, can be straightforwardly shown to yield a result which suffices for our purposes.  For concreteness we will use the following:

\begin{theorem} [Theorem~1.2 from \cite{DKS18-nasty}, taking ``$d=1$''] \label{thm:DKS18}
There is an algorithm \textsf{Learn-halfspace} with the following guarantee: Let $f: \R^n \rightarrow \bits$ be a target halfspace such that the 
 algorithm gets access to samples of the form $(\bg, h(\bg))$ where $\bg \sim \Nn$  and $h: \R^n \to \bits$ satisfies $\Pr_{\bg} [h(\bg) \not = f(\bg)] \le \epsilon$. Then \textsf{Learn-halfspace} runs in time $\mathsf{poly}(n,1/\epsilon)$ and outputs a hypothesis halfspace $f': \mathbb{R}^n \rightarrow \bits$ such that $\Pr_{\bg \sim \Nn} [f(\bg) \not = f'(\bg)] \le \eps^c$, where $c>0$ is an absolute constant.
\end{theorem}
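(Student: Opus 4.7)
My plan is to reduce the problem to Chow-parameter estimation for a Gaussian halfspace, exploiting the fact that the degree-$0$ and degree-$1$ Hermite coefficients already suffice (up to polynomial loss) to reconstruct any halfspace under $N(0,1)^n$. For each $i \in \{0,1,\ldots,n\}$ define $\chi_i(g) := \Ex_{\bg \sim N(0,1)^n}[g(\bg) \cdot \bg_i]$ with the convention $\bg_0 \equiv 1$. By Cauchy--Schwarz, the hypothesis $\Pr_{\bg}[h(\bg) \neq f(\bg)] \leq \epsilon$ implies $|\chi_i(h) - \chi_i(f)| \leq 2\sqrt{\Ex[\bg_i^2]\cdot\epsilon} = O(\sqrt{\epsilon})$ for every $i$. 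Drawing $m = \poly(n/\epsilon)$ labeled samples from the noisy oracle and averaging $h(\bg^{(j)}) \bg^{(j)}_i$ over the sample yields empirical estimates $\widetilde{\chi}_i$ that, with high probability, satisfy $|\widetilde{\chi}_i - \chi_i(f)| = O(\sqrt{\epsilon})$ simultaneously for all $i$, via standard sub-Gaussian concentration.

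Next I would reconstruct a candidate halfspace from the estimates. Writing the target as $f(x) = \sign(w^\ast \cdot x - \theta^\ast)$ with $\|w^\ast\|_2 = 1$, a direct Gaussian calculation gives $\chi_i(f) = \phi(\theta^\ast)\, w^\ast_i$ for $i \geq 1$ and $\chi_0(f) = 1 - 2\Phi(\theta^\ast)$, where $\phi, \Phi$ are the standard Gaussian pdf and cdf. In the ``non-degenerate'' regime $|\theta^\ast| \leq T := c'\sqrt{\log(1/\epsilon)}$ we have $\phi(\theta^\ast) = \Omega(\mathrm{poly}(\epsilon))$, so the normalized vector $\widetilde{w} := (\widetilde{\chi}_1,\dots,\widetilde{\chi}_n)/\|(\widetilde{\chi}_1,\dots,\widetilde{\chi}_n)\|_2$ is at angular distance $\widetilde O(\sqrt{\epsilon})$ from $w^\ast$; simultaneously, $\widetilde{\chi}_0$ pins down $\theta^\ast$ within $O(\sqrt{\epsilon})$. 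Doing a $1$-dimensional grid search over candidate thresholds $\theta$ (choosing the one minimizing empirical disagreement with $h$) and outputting $f'(x) = \sign(\widetilde{w}\cdot x - \theta)$, I can appeal to the standard robustness estimate that two Gaussian halfspaces whose normals are $\alpha$-close in angle and whose thresholds are $\beta$-close disagree on an $O(\alpha + \beta)$ fraction of inputs. This gives $\Pr_{\bg}[f'(\bg) \neq f(\bg)] = O(\sqrt{\epsilon} \cdot \polylog(1/\epsilon))$, i.e.\ the desired bound with exponent $c = 1/2 - o(1)$.

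In the ``degenerate'' regime $|\theta^\ast| > T$, the target $f$ is already $\epsilon$-close to a constant function, and the sign of $\widetilde{\chi}_0$ tells me which constant hypothesis to output. Combining the two cases gives a $\poly(n,1/\epsilon)$-time algorithm outputting a halfspace with error $\epsilon^c$. The main obstacle I anticipate is the brittleness of the direction-recovery step as $\phi(\theta^\ast)$ shrinks: if $|\theta^\ast|$ is near the boundary $T$, small Chow errors translate into large angular errors. I would handle this by choosing $T$ so that the two regimes overlap, using the constant-function hypothesis whenever $\widetilde\chi_0$ certifies that the Bayes-optimal trivial predictor already has error $O(\epsilon^{c})$, and only invoking the directional reconstruction otherwise. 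Sharpening $c$ beyond $1/2$ would require a more delicate analysis (e.g.\ iterative refinement or localization around the current hypothesis, in the style of \cite{DKS18-nasty}), but this is not needed for the stated ``weak agnostic'' guarantee.
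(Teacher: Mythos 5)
This statement is not proved in the paper at all --- it is simply cited as Theorem~1.2 of \cite{DKS18-nasty}, with the surrounding text noting that several other works (including \cite{KKMS:08, DDFS14, ABL13}) would also serve. So there is no ``paper's own proof'' to compare against: you have supplied a self-contained argument where the paper only gives a reference. Your Chow-parameter reconstruction is a legitimate route to the \emph{adversarial-label-noise} version of the statement (which is the version actually stated in the theorem; the paper explicitly remarks that \cite{DKS18-nasty} in fact handles the stronger nasty-noise model, but that this is not needed). It is closest in spirit to the approaches in \cite{KKMS:08} and \cite{DDFS14}, whereas \cite{DKS18-nasty} uses a more refined localization scheme precisely to push the error exponent higher; the tradeoff is that your argument is more elementary.

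A few inaccuracies worth flagging, none fatal. First, the Chow-parameter formula has a stray factor of two: for a $\bits$-valued halfspace $f(x)=\sign(w^\ast\cdot x - \theta^\ast)$ with $\|w^\ast\|_2=1$, one has $\chi_i(f) = 2\phi(\theta^\ast)\,w^\ast_i$ for $i\geq 1$ (writing $u = w^\ast\cdot \bg$, $\Ex[\sign(u-\theta^\ast)\,u] = 2\phi(\theta^\ast)$), which of course does not affect anything quantitatively. Second, and more substantively, the claimed exponent $c = 1/2 - o(1)$ is too optimistic for the argument as written. A cleaner way to bound the Chow error is via Bessel's inequality: $\sum_{i\geq 1}(\chi_i(h)-\chi_i(f))^2 \leq \Ex[(h-f)^2] = 4\epsilon$, so the vector of degree-1 coefficients has $\ell_2$ error $O(\sqrt{\epsilon})$ while its true norm is $2\phi(\theta^\ast)$; the angular error is therefore $O(\sqrt{\epsilon}/\phi(\theta^\ast))$. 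For this to be polynomially small while still covering the degenerate regime $|\theta^\ast| > T$ (where the constant hypothesis has error $\Phi(-T)$), one must balance $\sqrt{\epsilon}/\phi(T)$ against $\Phi(-T)$, which yields an exponent closer to $c\approx 1/4$ rather than $1/2$. Since the theorem only requires \emph{some} absolute constant $c>0$, this is sufficient for the downstream use in \Cref{corr:DKS}, so the proposal does establish the statement --- just with a weaker constant than you claim, and via a different (more elementary) mechanism than the cited source.
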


We note that the type of noise in the above theorem statement is referred to in the literature as \emph{adversarial label noise} (see~\cite{KSS:94}); while we will not need this stronger guarantee, the algorithm of \cite{DKS18-nasty} in fact also works in the stronger \emph{nasty noise} model. An immediate corollary of \Cref{thm:DKS18} is the following. 
\begin{corollary}~\label{corr:DKS}
There is a positive constant $\zeta>0$ such that the algorithm \textsf{Learn-halfspace} has the following guarantee: let $f: \mathbb{R}^n \rightarrow \bits$ be a target halfspace such that the 
 algorithm gets access to samples of the form $(\bg, h(\bg))$ where $\bg \sim \Nn$  and $h: \R^n \to \bits$ satisfies $\Pr_{\bg} [h(\bg) \not = f(\bg)] \le \zeta$. Then  \textsf{Learn-halfspace} runs in $\poly(n)$ time and outputs a halfspace $f': \mathbb{R}^n \rightarrow \bits$ such that $\Pr_{\bg \sim \Nn} [f(\bg) \not = f'(\bg)] \le 1/16$. 
\end{corollary}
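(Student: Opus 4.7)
The plan is to derive the corollary as an immediate instantiation of \Cref{thm:DKS18} with a sufficiently small (but constant) choice of the parameter $\epsilon$. Specifically, let $c > 0$ be the absolute constant guaranteed by \Cref{thm:DKS18}, and set
\[
\zeta := (1/16)^{1/c}.
\]
Note that $\zeta$ is a positive absolute constant (since $c$ is).

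Now suppose we are given sample access of the form $(\bg, h(\bg))$ where $\bg \sim N(0,1)^n$ and $h$ satisfies $\Pr_{\bg}[h(\bg) \neq f(\bg)] \le \zeta$. Since $\zeta \le \zeta$, the hypothesis of \Cref{thm:DKS18} is satisfied with ``$\epsilon = \zeta$'', and so running \textsf{Learn-halfspace} produces a halfspace $f'$ with
\[
\Prx_{\bg \sim N(0,1)^n}[f(\bg) \neq f'(\bg)] \le \zeta^c = 1/16,
\]
which is the desired error bound.

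For the running time, \Cref{thm:DKS18} gives a runtime of $\poly(n, 1/\epsilon) = \poly(n, 1/\zeta)$. Since $\zeta$ is an absolute constant independent of $n$, the quantity $1/\zeta$ is also an absolute constant, and hence $\poly(n, 1/\zeta) = \poly(n)$. This establishes the corollary. There is no substantive obstacle here: the only ``step'' is to observe that fixing $\epsilon$ to the constant $(1/16)^{1/c}$ turns the $\epsilon$-dependent guarantee of \Cref{thm:DKS18} into the constant-error, $\poly(n)$-time guarantee asserted in \Cref{corr:DKS}.
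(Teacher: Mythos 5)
Your proof is correct and is exactly the intended ``immediate'' instantiation: the paper itself gives no explicit argument for \Cref{corr:DKS} beyond calling it an immediate corollary of \Cref{thm:DKS18}, and setting $\eps = \zeta := (1/16)^{1/c}$ (a positive absolute constant) so that $\eps^c = 1/16$ and $\poly(n,1/\eps) = \poly(n)$ is precisely the reasoning being appealed to.
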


For this section we set $c=\min\{1/40,\zeta/8\}$ where $\zeta$ is  the positive constant from \Cref{corr:DKS}. We also recall the definition of the functions $h_0(\cdot)$, $h_1(\cdot)$ and $h_{1/2}(\cdot)$  from \eqref{eq:def-h12}.  
Next, we recall \Cref{lem:dictator-learning} from 
\Cref{sec:wl-given-kgl}
(which we state below for convenience): 
%
%
\begin{replemma}{lem:dictator-learning}
If $|\vol(K) -1/2| > c \cdot n^{-1/2}$, then either $h=h_0$ or $h=h_1$ achieves 
\[
\Pr_{\bg \sim \Nn}[h(\bg) = K(\bg)] \ge \frac{1}{2} + \Theta(n^{-1/2}). 
\]
\end{replemma}
The next lemma (which is a key technical ingredient) gives a weak learner  for convex sets if there is a point outside $K$ which is close to the origin:
\begin{lemma}~\label{lem:halfspace-learning}
Let $K$ be a convex body such that $|\vol(K)-1/2| \le 
c \cdot n^{-1/2}$ and for which there is a point $z^\ast \not \in K$ such that $\Vert z^\ast \Vert_2 \leq c$. Then the output of the algorithm \textsf{Learn-halfspace}, when given samples of the form $(\bg, K(\bg))$, is a halfspace $f': \mathbb{R}^n \rightarrow \bits$ such that 
\[
\Prx_{\bg \sim \Nn} [f'(\bg) = K(\bg)] \ge \frac78.
\]
\end{lemma}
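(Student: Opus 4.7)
The plan is to apply \Cref{corr:DKS} with a carefully chosen target halfspace obtained via a separating-hyperplane argument. Because $K$ is convex and $z^\ast \notin K$, the separating hyperplane theorem furnishes a unit vector $w \in \R^n$ and a scalar $b \in \R$ with $w \cdot x \le b$ for all $x \in K$ and $w \cdot z^\ast > b$. Let $H := \{x \in \R^n : w \cdot x \le b\}$, so $K \subseteq H$ and the claim will be that $H$ is a halfspace which closely approximates $K$ under $N(0,1)^n$.

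First I would bound $|b|$. The upper bound is immediate: $b < w \cdot z^\ast \le \|z^\ast\|_2 \le c$. For the lower bound, since $w\cdot \bg \sim N(0,1)$ for $\bg \sim N(0,1)^n$, we have $\vol(H) = \Phi(b)$ where $\Phi$ is the standard Gaussian CDF; combined with $K \subseteq H$ and $\vol(K) \ge 1/2 - c/\sqrt{n}$ this gives $\Phi(b) \ge 1/2 - c/\sqrt{n}$, hence $b \ge \Phi^{-1}(1/2 - c/\sqrt{n}) \ge -c$ for large $n$ (using that $\Phi^{-1}$ has derivative $\sqrt{2\pi}$ at $1/2$). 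Thus $|b| \le c$.

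Next, since $\Phi$ is $(2\pi)^{-1/2}$-Lipschitz, $|\vol(H) - 1/2| = |\Phi(b)-\Phi(0)| \le c/\sqrt{2\pi}$. Combined with $|\vol(K) - 1/2| \le c/\sqrt{n}$ and the inclusion $K \subseteq H$, this yields
\[
\Prx_{\bg \sim \Nn}[K(\bg) \neq H(\bg)] = \vol(H) - \vol(K) \le \frac{c}{\sqrt{2\pi}} + \frac{c}{\sqrt{n}} \le 2c.
\]
Since $c \le \zeta/8$, we have $2c \le \zeta$, so the hypothesis of \Cref{corr:DKS} is satisfied with $H$ as the target halfspace and $K$ as the (slightly noisy) labelling function. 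Therefore \textsf{Learn-halfspace} outputs a halfspace $f'$ with $\Prx_{\bg}[f'(\bg)\neq H(\bg)] \le 1/16$, and the triangle inequality gives
\[
\Prx_{\bg \sim \Nn}[f'(\bg) \neq K(\bg)] \le \tfrac{1}{16} + 2c \le \tfrac{1}{16} + \tfrac{1}{20} < \tfrac{1}{8},
\]
which is the desired bound of $7/8$ agreement.

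The essential step is the separating-hyperplane construction yielding an $H$ whose boundary lies within distance $O(c)$ of the origin; everything else is a routine Gaussian-CDF estimate and an application of the agnostic halfspace learner. The only minor subtlety is the quantitative bound on $\Phi^{-1}(1/2 - c/\sqrt{n})$, which is easily handled by a first-order Taylor expansion of $\Phi$ at $0$.
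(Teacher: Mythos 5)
Your proof is correct and takes essentially the same route as the paper: separate $K$ from $z^\ast$ by a halfspace $H$ whose boundary lies within distance $O(c)$ of the origin, bound $\Pr_{\bg}[K(\bg)\neq H(\bg)]$ by $O(c)$, invoke \Cref{corr:DKS} with $H$ as the target, and finish with a triangle inequality. The only minor variation is that you derive the lower bound on the offset $b$ from the volume constraint $\vol(H)\geq\vol(K)$ rather than insisting the separating hyperplane pass through $z^\ast$, which is a perfectly sound (and arguably cleaner) way to obtain $|b|\leq c$.
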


\begin{proof}
Using the supporting hyperplane theorem (see page~510 in \cite{luenberger1984linear}), there exists a unit vector $\ell \in \mathbb{R}^n$ and a threshold $\theta \in \mathbb{R}$ such that 
\begin{enumerate}
\item $K \subseteq \{ x:  \ell \cdot x  -  \theta > 0\}$, but 
\item $ \ell \cdot z^\ast  - \theta \leq0$. 
\end{enumerate}
Thus we get a halfspace $r (x) = \sign(\ell \cdot x - \theta)$ such that $r(z^\ast) =-1$ and $K \subseteq r^{-1}(1)$. Next, using the fact that $z^\ast$ lies on the hyperplane $\{x:  \ell \cdot x  -  \theta=0\}$, we get that $|\theta| \le c$. Hence (using the fact that the pdf of an $N(0,1)$ Gaussian is everywhere at most 1) we get that 
\begin{eqnarray*}
\Prx_{\bg \sim \Nn}[r(\bg) = 1] \le \frac12  + c. 
\end{eqnarray*}
This implies that 
\begin{eqnarray}~\label{eq:false-neg}
\Prx_{\bg \sim \Nn}[K(\bg) =1| r(\bg)=1] \ge \frac{\frac12 - \frac{c}{\sqrt{n}}}{\frac12 +c} \ge 1-2c -\frac{2c}{\sqrt{n}}. 
\end{eqnarray}
On the other hand, by construction of $r(\cdot)$, we have that $\Prx_{\bg \sim \Nn}[K(\bg) =-1| r(\bg)=-1]=1$.  Combining this with \eqref{eq:false-neg}, we get that 
\begin{eqnarray}\label{eq:false-pos}
\Prx_{\bg \sim \Nn}[K(\bg) = r(\bg)] \ge 1-2c -\frac{2c}{\sqrt{n}} \ge 1-4c. 
\end{eqnarray}
Recalling that $4c \le \zeta/2$, if we run the algorithm  \textsf{Learn-halfspace} on samples of the form $(\bg, K(\bg))$, then by \Cref{corr:DKS} the output $f'$ satisfies 
\[
\Prx_{\bg \sim \Nn}[r(\bg) = f'(\bg)] \ge \frac{15}{16}. 
\]
Combining this with \eqref{eq:false-pos}, we get 
\[
\Prx_{\bg \sim \Nn}[K(\bg) = f'(\bg)] \ge 1-4c- \frac{1}{16} > \frac78
\]
and the proof of \Cref{lem:halfspace-learning} is complete.
\end{proof}
The last lemma we need for this section is a variant of \Cref{lem:sphere-learning} from \Cref{sec:wl-given-kgl}. 
\begin{lemma}~\label{lem:sphere-learning-asymmetric}
If $K \subseteq \mathbb{R}^n$ is a convex body such that (i) $|\vol(K) -1/2| \le c \cdot n^{-1/2}$\ignore{for $c$ defined above} and (ii) $\mathcal{B}(0^n, c) \subseteq K$, then 
\[
\Prx_{\bg \sim \Nn}[h_{1/2}(\bg) = K(\bg)] \ge \frac{1}{2} + \Theta(n^{-1}). 
\]
\end{lemma}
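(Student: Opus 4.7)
The plan is to mirror the proof of \Cref{lem:sphere-learning} essentially step for step, using \Cref{lem:key-general} in place of \Cref{lem:key}. Since $0^n \in \mathcal{B}(0^n,c) \subseteq K$, \Cref{fact:convex-decreasing} still applies, so $\alpha_K(\cdot)$ and hence $\beta(\nu) := \alpha_K(r(\nu))$ is non-increasing; also \Cref{claim:area} gives $\int_0^1 \beta(\nu)\,d\nu = \vol(K)$, and \Cref{clm:chi-percentile} gives $r(1/4) = r_{\median}-A$, $r(3/4) = r_{\median}+B$ with $A,B \geq 1/4$ and $r_{\median} = \sqrt{n} \pm O(1)$. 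As in \Cref{lem:sphere-learning}, I first dispose of the case $\beta(1/4) \geq 3/4$: the proof of \Cref{clm:beta-large} uses only monotonicity of $\beta$, the identity for $\vol(K)$, and $|\vol(K)-1/2| \leq 1/40$, so it carries over unchanged and gives advantage at least $1/24 \gg 1/n$.

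For the remaining case $\beta(1/4) \leq 3/4$, the same integral argument as before (monotonicity plus $\vol(K) \geq 1/2 - c/\sqrt{n}$) yields $\beta(1/4) \geq 3/10$ for all sufficiently large $n$. I then want to lower bound $\beta(1/4) - \beta(3/4)$ by $\Omega(1/n)$. Set $r := r_{\median} + B$, so that $r = \sqrt{n}\pm O(1)$ and $(1-\kappa)r = r_{\median} - A$ for some $\kappa = \Theta(1/\sqrt{n})$. Split on the value of $q := \beta(3/4) = \alpha_K(r)$:
\begin{itemize}
\item If $q < 1/10$, then $\beta(1/4) - \beta(3/4) \geq 3/10 - 1/10 = 1/5$, already $\Omega(1) \gg 1/n$.
\item If $1/10 \leq q \leq 1/2$, apply case~1 of \Cref{lem:key-general} with $r_{\mathsf{small}} = c$ and $r = \sqrt{n}\pm O(1)$: the increment is $\Theta(\kappa \cdot q \cdot r_{\mathsf{small}}/r) = \Theta(1/n)$.
\item If $1/2 < q \leq 3/4$ (we may assume $q \leq \beta(1/4) \leq 3/4$), then $1-q \geq 1/4 \geq r_{\mathsf{small}}/r$ for large $n$, so case~2 of \Cref{lem:key-general} gives an increment of $\Theta(\kappa \cdot (1-q) \cdot r_{\mathsf{small}}/r) = \Theta(1/n)$.
\end{itemize}
Combining the cases yields $\beta(1/4) - \beta(3/4) \geq C/n$ for an absolute constant $C>0$.

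With this in hand, I would finish exactly as in \Cref{lem:sphere-learning}: by monotonicity of $\beta$,
\[
\int_0^{1/2}\beta(\nu)\,d\nu - \int_{1/2}^1 \beta(\nu)\,d\nu \geq \int_0^{1/4}\beta(\nu)\,d\nu - \int_{3/4}^{1}\beta(\nu)\,d\nu \geq \frac{1}{4}\bigl(\beta(1/4) - \beta(3/4)\bigr) \geq \frac{C}{4n},
\]
and then \Cref{clm:Hermite-deg2} with $p\equiv 1$ and $\Gamma = h_{1/2}\cdot K$ (or equivalently the direct computation of $\Pr[h_{1/2}(\bg) = K(\bg)] = \int_0^{1/2}\beta + \int_{1/2}^1 (1-\beta)$) produces the claimed advantage $1/2 + \Omega(1/n)$.

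The only real obstacle is bookkeeping: verifying that $\alpha_K(r)$ is in a range where the stated density increment of \Cref{lem:key-general} is actually $\Omega(1/n)$ rather than degenerating (e.g.\ the factor $\alpha_K(r)$ or $1-\alpha_K(r)$ becoming $o(1)$). The case split above handles this, because whenever $\alpha_K$ at the larger radius is too small or too large to give a useful increment, we already get the required gap between $\beta(1/4)$ and $\beta(3/4)$ directly from the constraint $\beta(1/4) \in [3/10, 3/4]$ and monotonicity.
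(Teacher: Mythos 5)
Your proof is correct and follows essentially the same route as the paper's: replace \Cref{lem:key} with \Cref{lem:key-general} in the argument of \Cref{lem:sphere-learning}, with $\kappa$ and $r_{\mathsf{small}}/r$ both $\Theta(n^{-1/2})$ so the increment becomes $\Theta(1/n)$, and then run the same integral calculation. Your explicit case split on $q=\beta(3/4)$ is a bit more careful than the paper's terse ``apply \Cref{lem:key-general}'' step, since it verifies that the increment is genuinely $\Omega(1/n)$ at the radius where the theorem is applied rather than degenerating when $\alpha_K$ at the larger radius is near $0$ or $1$; the paper leaves this implicit.
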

\begin{proof}
The proof is essentially the same as the proof of \Cref{lem:sphere-learning}. In particular, the proof including all the notation is the same up to and including \Cref{eq:rel-alpha}. The first and only point of departure from \Cref{lem:sphere-learning} is \Cref{eq:beta-dec-symmetric}. Using the fact that $3/4 \ge \beta(1/4) \ge 3/10$, \eqref{eq:rel-alpha} and applying \Cref{lem:key-general} (instead of \Cref{lem:key}), where the ``$\kappa$'' and ``$ \frac{r_{\mathsf{small}}}{r}$'' of  \Cref{lem:key-general} are both $\Theta(n^{-1/2})$, it follows that 
\begin{equation}~\label{eq:beta-dec-symmetric-1}
\beta(1/4) \ge \beta(3/4) - C \cdot n^{-1}
\end{equation}
for some absolute constant $C>0$. Note that \Cref{eq:beta-dec-symmetric-1} is the analogue of \Cref{eq:beta-dec-symmetric} in the current setting. Finally substituting \Cref{eq:beta-dec-symmetric-1} for \Cref{eq:beta-dec-symmetric} and otherwise doing the same calculation as the one leading up to \eqref{eq:advantage-symmetric}, we get that
\[
\Prx_{\bg \in N(0,1)^n} [h_{1/2}(\bg)  = K(\bg) ] \geq  \frac{1}{2} + \frac{C}{4n}. 
\]
This finishes the proof of \Cref{lem:sphere-learning-asymmetric}.
\end{proof}
Theorem~\ref{thm:weak-learn-convex} is now a straightforward combination of \Cref{lem:dictator-learning}, \Cref{lem:halfspace-learning} and \Cref{lem:sphere-learning-asymmetric}.


\section{Hermite mass at low weight levels for convex sets}~\label{sec:hermite-convex}

In this section we prove lower bounds on the Hermite weight of convex sets at levels $0$, $1$ and $2$. As mentioned in the introduction, this immediately implies a  lower bound on the noise stability of convex sets at large noise rates. We begin by proving \Cref{thm:centrally-symmetric-weight}, which gives a lower bound on the Hermite weight of centrally symmetric convex sets up to level $2$. \Cref{thm:centrally-asymmetric-weight} extends this to general convex sets, though the bound is quadratically worse than for centrally symmetric convex sets. Finally, \Cref{sec:symmetric-fourier-weight-tightness} shows that \Cref{thm:centrally-symmetric-weight} is tight up to constant factors. 

Throughout the section we can and do assume that $n$ is at least some sufficiently large absolute constant.

\subsection{Hermite mass at low weight levels for centrally symmetric convex sets}
\label{sec:hermite-centrally-symmetric}
In this subsection we prove \Cref{thm:centrally-symmetric-weight}. This will follow as an immediate consequence of the following two lemmas,
by instantiating \Cref{lem:constant-bias-Hermite-weight} with $\delta = \frac{c}{\sqrt{n}}$ where $c$ is the constant appearing in \Cref{lem:unbias-Hermite-weight}:
\begin{lemma}~\label{lem:constant-bias-Hermite-weight}
Let $\delta>0$ and $K \subseteq \mathbb{R}^n$.\ignore{Define $\vol(K) = \Pr_{\bg \sim \Nn}[\bg \in K]$.} 
If $|\vol(K) - 1/2| \ge\delta$, then $\mathsf{W}^{= 0}[K] \ge 4 \delta^2$. 
\end{lemma}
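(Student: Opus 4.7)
The plan is to observe that \Cref{lem:constant-bias-Hermite-weight} follows directly from the definitions of Hermite weight and Gaussian volume, together with the convention (used throughout the noise stability discussion in the paper) that a convex set $K$ is identified with its $\pm 1$-valued indicator function $K: \R^n \to \{-1,1\}$.

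Recall that the degree-$0$ Hermite polynomial is the constant $1$, so the degree-$0$ Hermite coefficient of $K$ is simply
\[
\widehat{K}(\emptyset) \;=\; \Ex_{\bg \sim N(0,1)^n}[K(\bg)].
\]
Since $K$ takes values in $\{-1,1\}$ and $\vol(K) = \Pr_{\bg}[K(\bg)=1]$, we have
\[
\Ex_{\bg}[K(\bg)] \;=\; \vol(K) - (1 - \vol(K)) \;=\; 2\vol(K) - 1.
\]

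By the definition of the level-$0$ Hermite weight, $\mathsf{W}^{=0}[K] = \widehat{K}(\emptyset)^2$. Combining the two displays above with the hypothesis $|\vol(K) - 1/2| \geq \delta$ gives
\[
\mathsf{W}^{=0}[K] \;=\; (2\vol(K) - 1)^2 \;=\; 4\bigl(\vol(K) - \tfrac12\bigr)^2 \;\geq\; 4\delta^2,
\]
as claimed. There is no real obstacle here: the lemma is a one-line consequence of unwinding the definitions, and its role in the paper is simply to handle the ``easy'' case in the proof of \Cref{thm:centrally-symmetric-weight} where $K$ has noticeably biased Gaussian volume, leaving the more delicate near-balanced case to be handled by \Cref{lem:unbias-Hermite-weight} via the Kruskal–Katona-type density increment.
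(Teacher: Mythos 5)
Your proof is correct and takes essentially the same approach as the paper, which simply invokes the identity $\mathsf{W}^{=0}[K] = 4\left|\vol(K)-1/2\right|^2$ for $\pm 1$-valued $K$; you have just spelled out the one-line computation behind that identity. The only cosmetic difference is notation: the paper indexes Hermite coefficients by $S \in \N^n$ (so the level-$0$ coefficient is $\widetilde{K}(0^n)$), whereas you wrote $\widehat{K}(\emptyset)$.
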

\begin{lemma}~\label{lem:unbias-Hermite-weight}
There exists an absolute constant $c>0$ such that the following holds: Let $K \subseteq \mathbb{R}^n$ be a centrally symmetric convex set. If $|\vol(K) - 1/2| \le \frac{c}{\sqrt{n}}$, then $\mathsf{W}^{=2}[K] = \Omega(1/n)$. 
\end{lemma}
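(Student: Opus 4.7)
The plan is to exhibit a single unit-norm Hermite polynomial of degree $2$ with which $K$ has correlation $\Omega(1/\sqrt{n})$; since $\mathsf{W}^{=2}[K]$ equals the sum of squared Hermite coefficients against any orthonormal basis of the level-$2$ subspace, Cauchy--Schwarz then gives $\mathsf{W}^{=2}[K] \ge \widehat{K}(q)^2 = \Omega(1/n)$. The natural choice, motivated by the proof of \Cref{lem:sphere-learning} (where the ball indicator $h_{1/2}$ is shown to correlate with $K$), is the ``radial'' polynomial
\[
q(x) \;:=\; \frac{\|x\|_2^2 - n}{\sqrt{2n}} \;=\; \frac{1}{\sqrt{n}} \sum_{i=1}^n \frac{x_i^2 - 1}{\sqrt{2}},
\]
which is unit-norm and lies entirely in the level-$2$ Hermite subspace.

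Next, I will invoke \Cref{clm:Hermite-deg2} with $p(r) = r^2 - n$ and $\Gamma = \mathbf{1}_K$ (so that $\ovp(x) = \sqrt{2n}\,q(x)$) to obtain
\[
\widehat{K}(q) \;=\; \frac{1}{\sqrt{2n}} \int_0^1 \beta(\nu) \,(r(\nu)^2 - n)\, d\nu,
\]
where $\beta(\nu) := \alpha_K(r(\nu))$ is the non-increasing function from the proof of \Cref{lem:sphere-learning}. Because $\Ex_{\br \sim \chi(n)}[\br^2] = n$, the factor $g(\nu) := r(\nu)^2 - n$ has mean zero on $[0,1]$, so its antiderivative $U(\nu) := \int_0^\nu (n - r(t)^2)\, dt$ is non-negative and satisfies $U(0) = U(1) = 0$. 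Riemann--Stieltjes integration by parts against the monotone function $\beta$ then yields
\[
|\widehat{K}(q)| \;=\; \frac{1}{\sqrt{2n}} \int_0^1 U(\nu)\,\bigl(-d\beta(\nu)\bigr).
\]

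I will lower-bound this by restricting to $\nu \in [1/4, 3/4]$. For the height of $U$ there, \Cref{clm:chi-percentile} together with chi-squared concentration (\Cref{lem:johnstone}, or the finer estimate $r_{\median} \approx \sqrt{n}(1 - 2/(9n))^{3/2}$ noted after that claim) gives $r(1/4) \le \sqrt{n} - \Omega(1)$ and $r(3/4) \ge \sqrt{n} + \Omega(1)$ for $n$ large; factoring $n - r(1/4)^2$ and $r(3/4)^2 - n$ as differences of squares then yields $n - r(1/4)^2 \ge \Omega(\sqrt{n})$ and $r(3/4)^2 - n \ge \Omega(\sqrt{n})$, whence by monotonicity of $r(t)^2$ and unimodality of $U$ (whose maximum lies at the unique zero $\nu_0 \in [1/4, 3/4]$ of $g$) one has $\min_{\nu \in [1/4,3/4]} U(\nu) \ge \Omega(\sqrt{n})$. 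For the $-d\beta$-mass of $[1/4, 3/4]$, I will use a case analysis: the hypothesis $|\vol(K) - 1/2| \le c/\sqrt{n}$ combined with $\int_0^1 \beta(\nu)\,d\nu = \vol(K)$ and monotonicity of $\beta$ forces $\beta(1/4) \ge 3/10$ and $\beta(3/4) < 9/10$, so in the generic regime $\beta(3/4) \in [1/10, 9/10]$ the centrally-symmetric Kruskal--Katona theorem \Cref{lem:key}, applied at $r = r(3/4)$ with $\kappa = 1 - r(1/4)/r(3/4) = \Theta(1/\sqrt{n})$, yields $\beta(1/4) - \beta(3/4) \ge \Omega(1/\sqrt{n})$; in the boundary regime $\beta(3/4) < 1/10$ the bound $\beta(1/4) - \beta(3/4) \ge 3/10 - 1/10 = 1/5$ is immediate.

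Combining the two estimates gives
\[
|\widehat{K}(q)| \;\ge\; \frac{1}{\sqrt{2n}} \cdot \Omega(\sqrt{n}) \cdot \Omega(1/\sqrt{n}) \;=\; \Omega(1/\sqrt{n}),
\]
so $\mathsf{W}^{=2}[K] \ge \widehat{K}(q)^2 = \Omega(1/n)$, as required. The main obstacle I anticipate is the case analysis for the density increment: \Cref{lem:key} is only effective when $\alpha_K$ is bounded away from $0$ and $1$, so I must carefully verify, using only the near-balanced volume hypothesis and monotonicity, that whenever $\beta(3/4)$ or $\beta(1/4)$ drifts to an extreme, the $\Omega(1/\sqrt{n})$ density increment across $[1/4, 3/4]$ survives from the volume constraint alone. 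Everything else---the identification of $q$ as a unit-norm level-$2$ Hermite polynomial, the integration by parts, and the chi-distribution estimates controlling $U$---is routine bookkeeping.
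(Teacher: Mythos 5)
Your proposal is correct, and it takes a genuinely different route from the paper. The paper's proof adaptively chooses the test polynomial $\overline{p}_{r_\ast}(x)=r_\ast^2-\|x\|^2$, where $r_\ast$ is the ``crossing radius'' with $\alpha_K(r_\ast)=\vol(K)$; it then shows $\Ex[(K-\vol(K))\cdot\overline{p}_{r_\ast}]=\Omega(1)$ by observing the integrand $\overline{\beta}(\nu)p_{r_\ast}(r(\nu))$ is pointwise nonnegative (both factors change sign at $\nu_\ast$) and lower-bounding it near $\nu_\ast$ via Kruskal--Katona; controlling $r_\ast$ requires an auxiliary Kruskal--Katona argument (\Cref{clm:r-ast-bound1}). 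You instead correlate $K$ against the fixed unit-norm level-$2$ polynomial $q(x)=(\|x\|^2-n)/\sqrt{2n}$, integrate by parts against the monotone $\beta$ to reduce to $\int U(-d\beta)$ where $U(\nu)=\int_0^\nu(n-r(t)^2)dt\ge 0$ is a potential vanishing at both endpoints, and bound $\min_{[1/4,3/4]}U=\Omega(\sqrt n)$ and $\beta(1/4)-\beta(3/4)=\Omega(1/\sqrt n)$ separately. The adaptive centering in the paper automatically lands at a ``balanced'' value of $\alpha_K$, so \Cref{lem:key} applies without fuss; your fixed centering at the $\chi^2$ mean instead needs an explicit boundary case ($\beta(3/4)<1/10$, handled trivially from the near-balance volume constraint) and needs the sharper location estimate $r_\median=\sqrt n-\Theta(1/\sqrt n)$ (which the paper mentions but does not use) to guarantee $r(3/4)>\sqrt n+\Omega(1)$, hence $U(3/4)=\Omega(\sqrt n)$. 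In exchange you get a single fixed witness polynomial, you avoid the auxiliary control of $r_\ast$ and the subtraction of $\vol(K)$, and your $\beta(1/4)-\beta(3/4)$ estimate is literally the same density-increment step (\eqref{eq:beta-dec-symmetric}) already proved in the weak-learning argument of \Cref{lem:sphere-learning}, so the reuse is clean. Both arguments spend comparable effort on chi-distribution bookkeeping; neither is strictly simpler, but yours is self-contained once $U=\Omega(\sqrt n)$ and the density increment are in hand.
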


\Cref{lem:constant-bias-Hermite-weight} follows immediately from the fact that $\mathsf{W}^{= 0}[K] = 4 \left| \vol(K) - 1/2 \right|^2$. In the rest of this subsection we prove \Cref{lem:unbias-Hermite-weight}.
To do this, we recall the definitions of the functions $r(\cdot)$ and $\beta(\cdot)$ from \Cref{sec:wl-given-kgl}. Namely, for $r: [0,1) \rightarrow [0, \infty)$,  $r(\nu)$ is defined as 
\[
\Prx_{\br  \sim \chi(n)} [\br \leq r(\nu)] = \nu,
\]
and $\beta: [0,1) \rightarrow [0,1]$ is defined as 
\[
\beta(\nu) = \Prx_{x \in \mathbb{S}^{n-1}_{r(\nu)}} [x \in K]. 
\]
Let us now define the function $\overline{\beta} : [0,1) \rightarrow [-1,1]$ as $\overline{\beta}(\nu) = \beta(\nu) -\vol(K)$. By performing an arbitrarily small perturbation of $K$, we may assume that $\beta(\cdot)$ is a continuous function; since $\overline{\beta}$ is non-increasing, there exists a value $\nu_\ast \in [0,1)$ such that  $\overline{\beta}(\nu_\ast)=0$. 
Let us define $r_{\ast} := r(\nu_\ast)$. Now we define $p_{r_\ast}: [0,\infty) \rightarrow \mathbb{R}$ and $\ovp_{r_\ast}: \mathbb{R}^n \rightarrow \mathbb{R}$ as 
\[
p_{r_\ast}(r) := r_\ast^2 -r^2 \quad \quad \text{and} \quad \quad \ovp_{r_\ast}(x) := p_{r_\ast}(\Vert x \Vert_2). 
\]
The next claim  uses our Kruskal-Katona theorem for centrally symmetric convex sets (\Cref{lem:key}) to prove upper and lower bounds on $r_\ast$: 

\begin{claim}~\label{clm:r-ast-bound1}
 There exists an absolute constant $c>0$ such that the following holds: Let $K$ be a symmetric convex set such that $|\vol(K) - 1/2| \le c$. Then, 
 \[
\frac{n}{4} \le r_\ast^2 \le  4n. 
 \]
\end{claim}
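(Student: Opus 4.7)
I would establish both bounds by contradiction, in each case iterating the Kruskal--Katona theorem \Cref{lem:key} a bounded number of times and then invoking the chi-concentration bound of \Cref{lem:johnstone}. Throughout I would use that any nonempty centrally symmetric convex set contains the origin (so $\alpha_K$ is non-increasing by \Cref{fact:convex-decreasing}), and that $\alpha_K(r_\ast) = \vol(K) \in [1/2-c,\,1/2+c]$ by the definition of $r_\ast$.

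For the upper bound $r_\ast^2 \le 4n$, suppose for contradiction that $r_\ast > 2\sqrt{n}$. For $c$ a small absolute constant, $\alpha_K(r_\ast) \in [1/4,\,3/4]$, so the factor $(\alpha_K(\cdot)(1-\alpha_K(\cdot)))^2$ appearing in \Cref{lem:key} is $\Omega(1)$ at $r=r_\ast$. Applying \Cref{lem:key} with $\kappa=1/10$ repeatedly, each step multiplies the radius by $0.9$ and adds an absolute positive constant to $\alpha_K$; because only a bounded number $k_0 = O(1)$ of iterations are performed, the intermediate values of $\alpha_K$ remain in a fixed compact subinterval of $(0,1)$, so each increment really is $\Omega(1)$. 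I would choose $k_0$ so that $\alpha_K((0.9)^{k_0} r_\ast) \ge 1/2 + \delta$ for some absolute constant $\delta > 0$, while still $(0.9)^{k_0} r_\ast > \sqrt{3n/2}$ (possible when $r_\ast > 2\sqrt{n}$ because $(0.9)^{k_0} \cdot 2 > \sqrt{3/2}$ for $k_0$ not too large). By \Cref{lem:johnstone}, the event $\br \le \sqrt{3n/2}$ has probability at least $1 - e^{-3n/64}$, and by monotonicity of $\alpha_K$ we then have $\alpha_K(\br) \ge 1/2+\delta$ on this event. Hence $\vol(K) \ge (1-e^{-3n/64})(1/2+\delta) > 1/2+\delta/2$ for $n$ large, contradicting $\vol(K) \le 1/2+c$ once $c < \delta/2$.

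For the lower bound $r_\ast^2 \ge n/4$, an entirely symmetric argument applies. Suppose $r_\ast < \sqrt{n}/2$. Reading \Cref{lem:key} in the inverse direction by setting $r = (10/9)\tilde r$ and $\kappa=1/10$ gives $\alpha_K((10/9)\tilde r) \le \alpha_K(\tilde r) - \Omega(1)$ as long as intermediate values of $\alpha_K$ lie in a compact subinterval of $(0,1)$. Iterating a bounded number of times $k_1 = O(1)$ starting from $\tilde r = r_\ast$, I would obtain a radius $r^\dagger = (10/9)^{k_1} r_\ast$ with $\alpha_K(r^\dagger) \le 1/2 - \delta'$ for an absolute constant $\delta' > 0$, while still $r^\dagger < \sqrt{n/2}$ (possible because $(10/9)^{k_1}/2 < 1/\sqrt{2}$ for the right choice of $k_1$). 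Then by monotonicity of $\alpha_K$ and \Cref{lem:johnstone}, on the event $\br \ge \sqrt{n/2}$ (probability $\ge 1-e^{-3n/64}$) we have $\alpha_K(\br) \le 1/2-\delta'$, yielding $\vol(K) \le e^{-3n/64} + (1/2-\delta') < 1/2 - \delta'/2$, contradicting $\vol(K) \ge 1/2-c$ once $c < \delta'/2$.

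The main delicate point is calibrating the absolute constants so that $O(1)$ iterations of \Cref{lem:key} suffice to produce a fixed constant-sized gap in $\alpha_K$ while the iterated radii $(0.9)^{k_0} r_\ast$ and $(10/9)^{k_1} r_\ast$ still straddle $\sqrt{3n/2}$ and $\sqrt{n/2}$ respectively. This is possible because the density increment in \Cref{lem:key} is a uniform $\Omega(1)$ per step when $\alpha_K \approx 1/2$, and only a bounded number of steps is needed; the specific numerical thresholds $n/4$ and $4n$ are recovered by fixing the absolute constant $c$ in the hypothesis small enough relative to $\min(\delta,\delta')$ and to the constants implicit in the $\Theta$ of \Cref{lem:key}.
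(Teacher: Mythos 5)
Your proposal is correct and follows essentially the same route as the paper's proof: iterate \Cref{lem:key} at radius scale $\sqrt{n}$ to manufacture a constant-size gap in $\alpha_K$, then use the chi-concentration of \Cref{lem:johnstone} to transfer that gap into a contradiction with the value of $\vol(K)$. The paper's only cosmetic streamlining is that it does not bother to hit an explicit threshold $1/2\pm\delta$; instead it contracts directly to $r_{\ast,\mathsf{out}}=\sqrt{2n}$, obtains $\alpha_K(r_{\ast,\mathsf{out}})\ge\vol(K)+\kappa$, and then derives the self-contradiction $\vol(K)\ge(\vol(K)+\kappa)(1-\kappa/4)\ge\vol(K)+\kappa/2$, which bypasses the need to compare against the hypothesis $|\vol(K)-1/2|\le c$ in the final inequality.
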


\begin{proof}
We will show $r_\ast \le 2\sqrt{n}$. The other direction is similar. Towards a contradiction, 
assume that $r_\ast > 2\sqrt{n}$. 
Define $r_{\ast,\out} := \sqrt{2n}$. Since $r_{\ast,\out} < r_\ast/\sqrt{2}$, by applying \Cref{lem:key} we get that
\begin{equation}\label{eq:inc-r-ast-r}
\alpha_K(r_{\ast,\out}) \ge \alpha_K(r_\ast) + \kappa = \vol(K) + \kappa, 
\end{equation}
for an absolute constant $\kappa>0$. 
Next, we have \begin{eqnarray}
\vol(K) = \int_{r=0}^{\infty} \chi(n,r) \alpha_K(r) dr \ge \int_{r=0}^{r_{\ast,\out}} \chi(n,r) \alpha_K(r) dr  \ge \alpha_K(r_{\ast,\out}) \cdot \int_{r=0}^{r_{\ast,\out}} \chi(n,r) dr, \label{eq:bias-calc1}
\end{eqnarray}
where the last inequality uses that $\alpha_K(\cdot)$ is non-increasing. By applying  \Cref{lem:johnstone}, we have 

\begin{equation}~\label{eq:kappa-gap2}
\int_{r=0}^{r_{\ast,\out}} \chi(n,r) dr = 1 -\Prx_{\bg \sim \Nn}[\Vert \bg \Vert_2 \ge r_{\ast, \out}]  \ge 1- e^{-\frac{3n}{64}}. 
\end{equation}
Since as stated earlier we may assume that $n \ge \frac{64}{3} \ln (4/\kappa)$, the right hand side is at least $1-\frac{\kappa}{4}$. Plugging this back into \eqref{eq:bias-calc1} and applying \Cref{eq:inc-r-ast-r}, we get 
\begin{eqnarray}~\label{eq:kappa-gap1}
\vol(K) \ge \alpha_K(r_{\ast,\out}) \cdot (1-\kappa/4) \ge (\vol(K) + \kappa) \cdot (1-\kappa/4) \ge \vol(K) + \frac{\kappa}{2}. 
\end{eqnarray}
This contradiction implies that we must have $r_{\ast} \le 2\sqrt{n}$. As stated earlier, the proof of the other direction is similar. 
\end{proof}

The main ingredient in the proof of \Cref{lem:unbias-Hermite-weight} is the following:

\begin{claim}~\label{lem:integral-lower-bound}
There is an absolute constant $c>0$ such that the following holds: Let $K \subset \R^n$ be a symmetric convex body such that $|\vol(K) -1/2| \le c$. Then 
\[
\Ex_{\bg \sim \Nn}[(K(\bg)-\vol(K))  \cdot {\ovp_{r_\ast}}(\bg)] = \Omega(1). 
\]
\end{claim}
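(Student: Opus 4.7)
The plan is to first rewrite the desired inner product as a single integral in $r$, and then to lower-bound it via a density-increment argument driven by Kruskal-Katona (\Cref{lem:key}). Applying \Cref{clm:Hermite-deg2} with $\Gamma(x) = K(x) - \vol(K)$ and $p = p_{r_\ast}$, and changing variables from $\nu$ back to $r$ via $d\nu = \chi(n,r)\,dr$, the expectation becomes
\[
\Ex_{\bg \sim \Nn}[(K(\bg)-\vol(K))\ovp_{r_\ast}(\bg)] = \int_0^\infty \chi(n,r)\,(\alpha_K(r)-\vol(K))\,(r_\ast^2 - r^2)\,dr.
\]
Since $\alpha_K$ is non-increasing (\Cref{fact:convex-decreasing}) and $\alpha_K(r_\ast) = \vol(K)$ by the choice of $r_\ast$, the two factors $\alpha_K(r)-\vol(K)$ and $r_\ast^2 - r^2$ always have matching signs, so the integrand is pointwise non-negative. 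The task is therefore to exhibit a set of $r$ of $\Omega(1)$ probability under $\chi(n)$ on which the integrand is $\Omega(1)$.

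To obtain a quantitative density increment near $r_\ast$, I would apply \Cref{lem:key} at $r_\ast$ with $\kappa = 1/r_\ast$. By \Cref{clm:r-ast-bound1}, $r_\ast \ge \sqrt{n}/2$, so $\kappa \le 2/\sqrt{n} \le 1/10$ for $n$ large enough; and since $\alpha_K(r_\ast) = \vol(K) \in [1/2-c,\,1/2+c]$, the prefactor $(\alpha_K(r_\ast)(1-\alpha_K(r_\ast)))^2$ in \Cref{lem:key} is $\Omega(1)$, yielding $\alpha_K(r_\ast - 1) \ge \vol(K) + \Omega(1/\sqrt{n})$. By monotonicity, the same lower bound holds for every $r \le r_\ast - 1$. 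Running the analogous argument ``from above'' by applying \Cref{lem:key} at $r_1 = r_\ast + 1$ with $\kappa = 1/r_1$ gives $\alpha_K(r) \le \vol(K) - \Omega(1/\sqrt{n})$ for every $r \ge r_\ast + 1$. The only wrinkle here is that the KK prefactor now depends on $\alpha_K(r_1)$ rather than on $\alpha_K(r_\ast)$; but $\alpha_K(r_1) \le \vol(K) \le 1/2+c$, so either $\alpha_K(r_1) \in [0.1,0.9]$ and KK delivers the $\Omega(1/\sqrt{n})$ gap, or $\alpha_K(r_1) \le 0.1$ which already gives the much stronger gap $\vol(K) - \alpha_K(r_1) \ge \Omega(1)$.

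With these one-sided density bounds in hand, I would choose an interval $[a,b]$ of length $\Theta(1)$ that is (i) separated from $r_\ast$ by at least $1$, and (ii) carries $\Omega(1)$ probability under $\chi(n)$. For (ii), \Cref{lem:johnstone} with $t = \Theta(1/\sqrt{n})$ shows that $\Prx_{\br\sim \chi(n)}[\br \le r_\median - C] = o(1)$ for any fixed constant $C$, so $\Prx_{\br\sim \chi(n)}[\br \in [r_\median - C, r_\median]] \ge 1/2 - o(1) = \Omega(1)$ (and symmetrically on the other side of $r_\median$). Depending on whether $r_\ast \ge r_\median + 1$, $r_\ast \le r_\median - 1$, or $|r_\ast - r_\median| < 1$, I would take $[a,b]$ to be a subinterval of $[r_\median - C, r_\median]$, of $[r_\median, r_\median + C]$, or a shifted window like $[r_\median - 10, r_\median - 2]$, respectively; in each case $|r - r_\ast| \ge 1$ throughout the interval. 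On such an interval, $|r_\ast^2 - r^2| = |r_\ast - r|(r_\ast + r) \ge 1 \cdot \sqrt{n}/2 = \Omega(\sqrt{n})$ (using $r_\ast + r \ge r_\ast \ge \sqrt{n}/2$), while the density bounds give $|\alpha_K(r) - \vol(K)| \ge \Omega(1/\sqrt{n})$, so the integrand is $\Omega(1)$ pointwise. Multiplying by the $\Omega(1)$ probability of the interval yields $I = \Omega(1)$, as required.

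The main obstacle I anticipate is the boundary case $|r_\ast - r_\median| < 1$: here any window close to $r_\median$ is also close to $r_\ast$, so the needed separation from $r_\ast$ forces the window to be shifted $O(1)$ into the tail of $\chi(n)$. Because $\chi(n)$ has $\Theta(1)$ standard deviation around $r_\median$, this shift costs only a constant factor in probability, and the argument goes through uniformly in $r_\ast$. A secondary technicality, already flagged above, is the possible degeneracy of the Kruskal-Katona prefactor when $\alpha_K(r_\ast + 1)$ is near $0$; but this degeneracy is always favorable, since it directly produces a constant density gap that is much stronger than the $\Omega(1/\sqrt{n})$ we need.
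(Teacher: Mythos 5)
Your overall structure is the same as the paper's: rewrite the inner product via \Cref{clm:Hermite-deg2}, note pointwise non-negativity from the sign-matching observation, use \Cref{lem:key} to get a density increment across an $O(1)$ shift in radius around $r_\ast$, and then argue that a region separated from $r_\ast$ and carrying $\Omega(1)$ chi-mass contributes $\Omega(1)$ to the integral. The KK step (including your case analysis handling the possibly-degenerate prefactor at $r_\ast + 1$) is fine. The gap is in the step where you establish the $\Omega(1)$ chi-measure of your chosen interval.

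First, the claim that $\Pr_{\br \sim \chi(n)}[\br \le r_\median - C] = o(1)$ for a fixed constant $C$ is false: since $\chi(n)$ has standard deviation $\approx 1/\sqrt{2}$, this probability converges to a positive constant $\Phi(-C\sqrt{2})$, not to $0$. Your conclusion that $\Pr[\br \in [r_\median - C, r_\median]] = \Omega(1)$ still holds (for $C$ large enough), but the stated justification is wrong. More seriously, the case $|r_\ast - r_\median| < 1$ requires a lower bound on $\Pr[\br \in [r_\median - 10, r_\median - 2]]$, and neither \Cref{lem:johnstone} (which upper-bounds tail deviations) nor \Cref{fact:chi-squared-2} (which upper-bounds the density) gives you that; you would need something like an explicit density computation or a Berry--Esseen bound for $\chi(n)$, which you invoke only informally (``this shift costs only a constant factor in probability'').

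The paper avoids all of this by choosing the separation window around $r_\ast$ to have width \emph{less than $1$}. Concretely, the paper uses $r_{\ast,\downarrow} = r_\ast(1-\tfrac{1}{10\sqrt n})$ and $r_{\ast,\uparrow} = r_\ast(1+\tfrac{1}{10\sqrt n})$, so the ``bad'' interval $[r_{\ast,\downarrow}, r_{\ast,\uparrow}]$ has width $r_\ast/(5\sqrt n) \le 2/5$ by \Cref{clm:r-ast-bound1}; then \Cref{fact:chi-squared-2} (a density \emph{upper} bound, hence an anti-concentration bound) shows it carries chi-mass at most $2/5$, so the complement has mass $\ge 3/5$ and one side has mass $\ge 3/10$ — no density lower bound needed, and no case analysis comparing $r_\ast$ to $r_\median$. (Equivalently, the paper never changes variables back from $\nu$ to $r$: working directly in $\nu$ makes the relevant measure Lebesgue on $[0,1]$, and only the widths $\nu_{\ast,\uparrow} - \nu_{\ast,\downarrow}$ need to be controlled.) You can repair your argument by shrinking your separation distance from $1$ to, say, $1/5$: then $\Pr[|\br - r_\ast| \le 1/5] \le 2/5$ via \Cref{fact:chi-squared-2}, one of the two tails $\{\br \le r_\ast - 1/5\}$, $\{\br \ge r_\ast + 1/5\}$ has mass $\ge 3/10$, and on that tail the integrand is still pointwise $\Omega(\chi(n,r))$ because $|r_\ast^2 - r^2| \ge \tfrac{1}{5} r_\ast = \Omega(\sqrt n)$ there.
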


\begin{proof}
Applying \Cref{clm:Hermite-deg2}, we have that
\begin{equation}~\label{eq:expect-equiv-beta1}
\Ex_{\bg \sim \Nn}[(K(\bg)-\vol(K)) \cdot {\ovp_{r_\ast}}(\bg)] = \int_{\nu=0}^{1} \overline{\beta}(\nu)  p_{r_\ast}(r(\nu)) d\nu. 
\end{equation}
A crucial observation is that the value of $\overline{\beta}(\nu)$ is positive (respectively, negative) only if 
$\nu < \nu_\ast$ (respectively, $\nu > \nu_\ast$). Similarly, $p_{r_\ast}(r(\nu))$ is positive  if and only if $r(\nu) \le r_\ast$, which holds if and only if $\nu \le \nu_\ast$. Thus we have that
\begin{equation}~\label{eq:c-positive1}
\overline{\beta}(\nu)  p_{r_\ast}(r(\nu))  \ge 0 \text{~for all~} \nu \in [0,1]. 
\end{equation}

We define the values $r_{\ast, \downarrow}< r_\ast$ and $r_{\ast, \uparrow}>r_\ast$ as 
\begin{equation}~\label{eq:r-star-up-down}
r_{\ast, \downarrow} := r_\ast \cdot \bigg( 1 - \frac{1}{10 \sqrt{n}} \bigg) \quad \quad \textrm{and} \quad \quad r_{\ast, \uparrow} := r_\ast \cdot \bigg( 1 + \frac{1}{10 \sqrt{n}} \bigg). 
\end{equation}

We also choose $\nu_{\ast,\downarrow}$ and $\nu_{\ast,\uparrow}$ to be such that 
$r(\nu_{\ast,\downarrow}) = r_{\ast,\downarrow}$ and $r(\nu_{\ast,\uparrow}) = r_{\ast,\uparrow}$. By \Cref{lem:key}, it follows that
\begin{align}
\alpha_K(r_{\ast,\downarrow} ) &\ge \alpha_K(r_{\ast})  + \frac{\Theta(1)}{\sqrt{n}} =\vol(K) + 
\frac{\Theta(1)}{\sqrt{n}} \textrm{  and } \nonumber \\
\alpha_K(r_{\ast,\uparrow} ) &\le \alpha_K(r_{\ast})  - \frac{\Theta(1)}{\sqrt{n}} = \vol(K) -
\frac{\Theta(1)}{\sqrt{n}}, \label{eq:alpha-k-bounds}
\end{align}
where $\Theta(1)$ is an absolute positive constant (independent of $c$ in the statement of the claim).
Recalling the definition of $\overline{\beta}$, this implies that
\begin{equation}~\label{eq:c-gap1}
\overline{\beta}(\nu_{\ast, \downarrow}) \ge \frac{{c_1}}{\sqrt{n}} 
\quad \quad \textrm{and} \quad \quad \overline{\beta}(\nu_{\ast, \uparrow}) \leq -\frac{{c_1}}{\sqrt{n}}
\end{equation}
{for an absolute constant $c_1>0$.} Likewise, \ignore{choosing $c$ to be a sufficiently small constant,}\Cref{clm:r-ast-bound1} implies that
 $r_\ast = \Theta(\sqrt{n})$.
 Using this, we have that 
\begin{equation}~\label{eq:r-gap1}
\text{for all~} \nu \le \nu_{\ast, \downarrow}, \ \ 
p_{r_{\ast}}(r(\nu)) \ge {c_2}\sqrt{n} \quad \quad \textrm{and} \quad \quad \text{for all~} \nu \ge \nu_{\ast, \uparrow}, \ \ 
p_{r_{\ast}}(r(\nu)) \le -{c_2}\sqrt{n}
\end{equation}
{for an absolute constant $c_2>0$.}  We now can infer that 
\begin{align}
\Ex_{\bg \sim \Nn}[(K(\bg) -\vol(K)) \cdot \overline{p}_{r_\ast}(\bg)] &= \int_{\nu=0}^{1} \overline{\beta}(\nu)  p_{r_\ast}(r(\nu)) d\nu \ \tag{using \eqref{eq:expect-equiv-beta1}} \nonumber \\
&\ge \int_{\nu=0}^{\nu_{\ast, \downarrow}} \overline{\beta}(\nu)  p_{r_\ast}(r(\nu)) d\nu +\int_{\nu_{\ast, \uparrow}}^1 \overline{\beta}(\nu)  p_{r_\ast}(r(\nu)) d\nu \ \tag{using \eqref{eq:c-positive1}}\nonumber \\
&\ge {c_1 c_2}\nu_{\ast, \downarrow} + {c_1 c_2}(1-\nu_{\ast, \uparrow}),  \label{eq:lower-bound-K-f-corr1}
\end{align}
where the last line is using \eqref{eq:c-gap1} and \eqref{eq:r-gap1}.
%
%
%
%
%

Now we observe that combining \Cref{clm:r-ast-bound1} and the definition of $r_{\ast, \downarrow}$ and $r_{\ast, \uparrow}$, we have that
\[
 r_{\ast}-r_{\ast,\downarrow} \le \frac{1}{5} \ \   \textrm{and} \  \ r_{\ast,\uparrow} -r_{\ast}\le \frac{1}{5}, \quad \quad \text{and hence}
 \quad \quad r_{\ast,\uparrow} - r_{\ast,\downarrow} \le \frac25.
\]
By \Cref{fact:chi-squared-2}, this implies that
\[
\nu_{\ast,\uparrow} - \nu_{\ast,\downarrow} \le \frac25.
\]
It follows that \eqref{eq:lower-bound-K-f-corr1} is at least ${c_1 c_2}\cdot (3/5),$ which establishes the claim. 
\end{proof}

Now it is straightforward to prove \Cref{lem:unbias-Hermite-weight}:

\begin{proofof}{\Cref{lem:unbias-Hermite-weight}}
\ignore{\rnote{I had to stare at this for a while to get it so I added some more explanation}} For notational convenience let us define the function $K'(g) := K(g)-\vol(K)$. We first observe that $\Ex_{\bg \sim N(0,1)^n}[K'(\bg)]=0$, which means that the constant Hermite coefficient $\widetilde{K'}(0^n)$ is zero. 

We further observe that $\Var[\overline{p}_{r_\ast}(\bg)]$ is equal to the variance of the chi-squared distribution with $n$ degrees of freedom, which is $2n$. Since $\overline{p}_{r_\ast}(x) = (r_\ast^2 - (x_1^2 + \ldots + x_n^2))$ is a linear combination of degree-$0$ and degree-$2$ Hermite polynomials, the fact that $\Var[\overline{p}_{r_\ast}(\bg)]=2n$ can be rephrased in Hermite terms as $\sum_{|S|=2} \widetilde{\overline{p}_{r_\ast}}(S)^2=2n.$

We thus get that
\begin{align} 
\Ex_{\bg \sim \Nn}[K' (\bg) \cdot {\ovp_{r_\ast}}]
&=
\sum_{|S| =0,2} \widetilde{K'}(S) \widetilde{\ovp_{r_\ast}}(S)\tag{Plancherel} \nonumber\\
&=
\sum_{|S| = 2} \widetilde{K'}(S) \widetilde{\ovp_{r_\ast}}(S) \tag{since $\widetilde{K'}(0^n)=0$} \nonumber\\
&\leq
\sqrt{\mathsf{W}^{= 2}[K] \cdot \sum_{|S| = 2} \widetilde{\ovp_{r_\ast}}(S)^2} 
= \sqrt{\mathsf{W}^{= 2}[K] \cdot 2n}.
\tag{Cauchy-Schwarz} \nonumber
\end{align}


Recalling that $K' (\bg)=K(\bg)-\vol(K)$,
applying \Cref{lem:integral-lower-bound} finishes the proof. 
\end{proofof}

\subsection{Hermite mass at low weight levels for general convex sets}~\label{sec:hermite-centrally-asymmetric}
In this section we prove \Cref{thm:centrally-asymmetric-weight}, which will be a consequence of the following three lemmas. The first is \Cref{lem:constant-bias-Hermite-weight} which we repeat below for convenience:

\begin{replemma}{lem:constant-bias-Hermite-weight}
Let $\delta>0$ and $K \subseteq \mathbb{R}^n$. 
If $|\vol(K) - 1/2| \ge\delta$, then $\mathsf{W}^{= 0}[K] \ge 4 \delta^2$. 
\end{replemma}

\begin{lemma}~\label{lem:W1-large}
There exist positive constants {$0 < \tau < 10^{-4}, c\geq \tau$} such that the following holds: Let $K \subset \mathbb{R}^n$ be a convex set such that  $|\vol(K) - 1/2| \le c/n$ and there is a point $x \not \in K$  with $\Vert x \Vert_2 \le \tau$.  Then $\mathsf{W}^{\le 1}[K] \ge \frac{1}{18}$. 
 \end{lemma}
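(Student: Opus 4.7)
The plan is to apply the supporting hyperplane theorem at the point $z^\ast \notin K$ (with $\|z^\ast\|_2 \leq \tau$) to obtain a halfspace $H$ that $L_1$-approximates $K$, and then to transfer the well-understood low-degree Hermite spectrum of $H$ over to $K$ via Plancherel plus a two-block Cauchy-Schwarz. The intuition is that a convex body of Gaussian volume near $1/2$ that hugs a point near the origin must look, in $L_1$, like a halfspace through (almost) the origin, and such a halfspace concentrates roughly $2/\pi$ of its Hermite mass at level one.

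Concretely, supporting hyperplane gives a unit vector $\ell \in \R^n$ and a threshold $\theta \in \R$ with $K \subseteq \{x : \ell \cdot x \geq \theta\}$ and $\ell \cdot z^\ast \leq \theta$. The second inequality forces $\theta \geq -\|z^\ast\|_2 \geq -\tau$, while the first combined with $\vol(K) \geq 1/2 - c/n$ gives $1 - \Phi(\theta) \geq 1/2 - c/n$, so $\theta \leq \Phi^{-1}(1/2 + c/n) = O(c/n)$. Choosing $n$ large enough that $c/n \leq \tau$ we obtain $|\theta| \leq 2\tau$. Set $H(x) := \mathrm{sign}(\ell \cdot x - \theta)$, viewed as a $\pm 1$-valued function. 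Since $\{K = 1\} \subseteq \{H = 1\}$ we have $\Pr_{\bg}[K(\bg) \neq H(\bg)] = \vol(H) - \vol(K) \leq O(\tau + c/n)$, and hence $\Ex_{\bg \sim \Nn}[K(\bg)\, H(\bg)] \geq 1 - O(\tau + c/n)$.

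Because the Hermite expansion of $H$ lives in polynomials of $\ell \cdot x$, a direct one-dimensional calculation using $\int_\theta^\infty z\,\phi(z)\,dz = \phi(\theta)$ gives $\widetilde{H}(\emptyset) = 1 - 2\Phi(\theta)$ and degree-one Hermite vector $2\phi(\theta)\,\ell$, so
\[
\mathsf{W}^{\leq 1}[H] \;=\; (1 - 2\Phi(\theta))^2 + 4\phi(\theta)^2 \;=\; \tfrac{2}{\pi} + O(\tau^2), \qquad \mathsf{W}^{\geq 2}[H] \;\leq\; 1 - \tfrac{2}{\pi} + O(\tau^2).
\]
Applying Plancherel and Cauchy-Schwarz separately on the two blocks of frequencies, and using $\mathsf{W}^{\geq 2}[K] \leq \|K\|_2^2 = 1$,
\[
\Ex[K \cdot H] \;\leq\; \sqrt{\mathsf{W}^{\leq 1}[K]\cdot \mathsf{W}^{\leq 1}[H]} + \sqrt{\mathsf{W}^{\geq 2}[H]}.
\]
Rearranging gives
\[
\sqrt{\mathsf{W}^{\leq 1}[K]} \;\geq\; \frac{\Ex[K H] - \sqrt{\mathsf{W}^{\geq 2}[H]}}{\sqrt{\mathsf{W}^{\leq 1}[H]}} \;\geq\; \frac{1 - O(\tau + c/n) - \sqrt{1 - 2/\pi + O(\tau^2)}}{\sqrt{2/\pi + O(\tau^2)}},
\]
and with the crude estimates $\sqrt{1 - 2/\pi} < 0.61$ and $\sqrt{2/\pi} < 0.80$, for $\tau < 10^{-4}$ and $n$ sufficiently large the right-hand side exceeds $0.49$, whence $\mathsf{W}^{\leq 1}[K] > 0.24 \gg 1/18$.

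The only step that requires real care is the bookkeeping of constants: one must verify that the additive errors $O(\tau + c/n)$ and $O(\tau^2)$ do not erode the gap between $\sqrt{1 - 2/\pi} \approx 0.60$ and $\Ex[K H] \approx 1$. Because the target is $1/18$ while the argument naturally delivers about $0.24$, the slack is comfortable, and any $\tau < 10^{-4}$ with $c \geq \tau$ works once $n$ exceeds a modest constant depending on $c/\tau$.
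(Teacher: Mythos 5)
Your proof is correct, and the overall scaffolding---supporting hyperplane at the point near the origin, then transfer low-degree Hermite mass from a reference halfspace to $K$---matches the paper's, but the core spectral step takes a genuinely different route. The paper correlates $K$ directly with the \emph{linear polynomial} $v(x) = \ell\cdot x - \theta$, which is itself a degree-$\le 1$ function in the Hermite basis; a single Cauchy--Schwarz over Hermite levels then gives $\mathsf{W}^{\le 1}[K] \ge \mathbf{E}[Kv]^2/\mathbf{E}[v^2]$, and the lower bound on $\mathbf{E}[Kv]$ comes from $\mathbf{E}[Hv] = \mathbf{E}[|v|] = \sqrt{2/\pi}$ together with an $O(\sqrt{\tau})$-sized error term from $\mathbf{E}[(H-K)v]$. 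You instead correlate $K$ with the Boolean halfspace $H$ itself, which carries Hermite mass at every level, and therefore perform a two-block Cauchy--Schwarz together with an explicit computation of $\mathsf{W}^{\le 1}[H] \approx 2/\pi$ and $\mathsf{W}^{\ge 2}[H] \approx 1-2/\pi$. Both routes are elementary and the numerics check out; the paper's one-step version avoids ever computing the Hermite spectrum of a halfspace, while yours yields a somewhat better constant (roughly $0.24$ versus the paper's $1/18$) at the cost of the extra bookkeeping. Two minor remarks. First, your derivation of $|\theta|\le 2\tau$ silently drops a $\Phi^{-1}$ constant of about $\sqrt{2\pi}$, so what you actually get is $|\theta| \le O(\tau)$; this is harmless for the final estimate. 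Second, the paper sidesteps the two-sided bound on $\theta$ by choosing the supporting hyperplane to pass \emph{through} $z^\ast$, which gives $|\theta| \le \tau$ immediately, whereas you obtain the upper bound on $\theta$ by combining the hyperplane with the volume hypothesis $|\vol(K)-1/2|\le c/n$. Either derivation works.
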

 
\begin{lemma}\label{lem:W2-large-asym}
{For the constants $c,\tau$ in \Cref{lem:W1-large}} the following holds:
Let $K \subset \mathbb{R}^n$ be a convex set such that $|\vol(K)-1/2| \leq c/n$ and $B(0^n, \tau) \subseteq K$. Then $\mathsf{W}^{\le 2}[K] =\Omega(1/n^2)$. 
\end{lemma}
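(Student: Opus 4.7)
The plan is to follow the template of \Cref{lem:unbias-Hermite-weight} from the centrally symmetric case, substituting \Cref{lem:key-general} for \Cref{lem:key} and tracking the resulting loss in the density increment. The symmetric argument constructs the degree-$\le 2$ test polynomial $\overline{p}_{r_\ast}(x)=r_\ast^2-\Vert x\Vert_2^2$, where $r_\ast$ is the radius at which the shell density equals $\vol(K)$, shows $\Ex[(K-\vol(K))\,\overline{p}_{r_\ast}]=\Omega(1)$, and applies Plancherel plus Cauchy--Schwarz together with $\sum_{|S|=2}\widetilde{\overline{p}_{r_\ast}}(S)^2=\Var[\chi^2(n)]=2n$ to conclude $\mathsf{W}^{=2}[K]=\Omega(1/n)$. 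In the asymmetric setting the increment supplied by \Cref{lem:key-general} is weaker than in \Cref{lem:key} by a multiplicative factor $r_{\mathrm{small}}/r=\Theta(\tau/\sqrt{n})$, which propagates linearly through the integral estimate to give $\Ex[(K-\vol(K))\,\overline{p}_{r_\ast}]=\Omega(1/\sqrt{n})$ rather than $\Omega(1)$; squaring in Cauchy--Schwarz then costs exactly one factor of $n$, giving $\mathsf{W}^{=2}[K]=\Omega(1/n^2)$.

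In more detail, I would set $\beta(\nu):=\alpha_K(r(\nu))$, $\overline{\beta}(\nu):=\beta(\nu)-\vol(K)$, perturb $K$ by an arbitrarily small amount so that $\alpha_K$ is continuous, and pick $\nu_\ast$ with $\overline{\beta}(\nu_\ast)=0$, writing $r_\ast:=r(\nu_\ast)$. Since $B(0^n,\tau)\subseteq K$ gives $\alpha_K(\tau)=1$ and $\alpha_K$ is non-increasing (\Cref{fact:convex-decreasing}), such an $r_\ast>\tau$ exists. The first substantive step is the analogue of \Cref{clm:r-ast-bound1}, namely $r_\ast=\Theta(\sqrt{n})$. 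Both directions are proved by iterating \Cref{lem:key-general} with $\kappa=1/10$ a constant number of times (since the lemma requires $\kappa\le 1/10$) to move between the target radius and $r_\ast$; using that $\alpha_K(r_\ast)\approx 1/2$ and that $\alpha_K$ remains bounded away from $0$ and $1$ throughout the iteration (the saturation case yields the contradiction immediately by monotonicity), each step contributes an additive $\Omega(\tau/r)$ increase in $\alpha_K$, producing a cumulative gap of $\Omega(\tau/\sqrt{n})$. Combined with \Cref{lem:johnstone} this contradicts $|\vol(K)-1/2|\le c/n$ for sufficiently large $n$.

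With $r_\ast=\Theta(\sqrt{n})$ in hand, I mimic the symmetric proof: set $r_{\ast,\downarrow}:=r_\ast(1-1/(10\sqrt{n}))$, $r_{\ast,\uparrow}:=r_\ast(1+1/(10\sqrt{n}))$, and the corresponding $\nu_{\ast,\downarrow},\nu_{\ast,\uparrow}$. One application of \Cref{lem:key-general} with $\kappa=1/(10\sqrt{n})$, $r_{\mathrm{small}}=\tau$ at $r=r_\ast$ (and the analogous application at $r_{\ast,\uparrow}$) gives $\overline{\beta}(\nu_{\ast,\downarrow})\ge\Omega(\tau/n)$ and $\overline{\beta}(\nu_{\ast,\uparrow})\le -\Omega(\tau/n)$; the elementary computation $p_{r_\ast}(r_{\ast,\downarrow})=r_\ast^2-r_{\ast,\downarrow}^2\ge \Omega(r_\ast^2/\sqrt{n})=\Omega(\sqrt{n})$ shows $|p_{r_\ast}(r(\nu))|\ge\Omega(\sqrt{n})$ for $\nu\notin[\nu_{\ast,\downarrow},\nu_{\ast,\uparrow}]$; the signs of $\overline{\beta}$ and $p_{r_\ast}\circ r$ agree on each side of $\nu_\ast$, so the integrand in \Cref{clm:Hermite-deg2} is pointwise non-negative; and \Cref{fact:chi-squared-2} bounds $\nu_{\ast,\uparrow}-\nu_{\ast,\downarrow}\le r_{\ast,\uparrow}-r_{\ast,\downarrow}=r_\ast/(5\sqrt{n})=O(1)$, so the complementary range has measure $\Omega(1)$. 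This delivers
\[
\Ex_{\bg\sim\Nn}\!\big[(K(\bg)-\vol(K))\cdot\overline{p}_{r_\ast}(\bg)\big]\;=\;\int_0^1\overline{\beta}(\nu)\,p_{r_\ast}(r(\nu))\,d\nu\;\ge\;\Omega(\tau/n)\cdot\Omega(\sqrt{n})\cdot\Omega(1)\;=\;\Omega(1/\sqrt{n}).
\]

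To finish, observe that $K':=K-\vol(K)$ satisfies $\widetilde{K'}(0^n)=0$ and $\overline{p}_{r_\ast}$ has Hermite mass only at levels $0$ and $2$ with $\sum_{|S|=2}\widetilde{\overline{p}_{r_\ast}}(S)^2=2n$; Plancherel and Cauchy--Schwarz then yield
\[
\Omega(1/\sqrt{n})\;\le\;\Ex[K'\cdot\overline{p}_{r_\ast}]\;=\;\sum_{|S|=2}\widetilde{K'}(S)\,\widetilde{\overline{p}_{r_\ast}}(S)\;\le\;\sqrt{\mathsf{W}^{=2}[K]\cdot 2n},
\]
so $\mathsf{W}^{\le 2}[K]\ge\mathsf{W}^{=2}[K]=\Omega(1/n^2)$. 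The main obstacle relative to the symmetric proof lies in establishing $r_\ast=\Theta(\sqrt{n})$: because \Cref{lem:key-general} only applies for $\kappa\le 1/10$ and because its increment is only $\Omega(\tau/\sqrt{n})$ per step rather than $\Omega(1)$, one must iterate a bounded number of times and treat separately the (easy) case in which $\alpha_K$ could saturate near $0$ or $1$ during the iteration. Once $r_\ast=\Theta(\sqrt{n})$ is secured, the rest of the argument is a direct quantitative rewriting of \Cref{lem:unbias-Hermite-weight} with the weakened density increment.
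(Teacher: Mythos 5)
Your proposal is correct and follows essentially the same route as the paper: the paper also reruns \Cref{lem:unbias-Hermite-weight} with \Cref{lem:key-general} in place of \Cref{lem:key}, through an asymmetric analogue of \Cref{clm:r-ast-bound1} (the paper's \Cref{clm:r-ast-asymmetric}) giving $r_\ast=\Theta(\sqrt{n})$, an asymmetric analogue of \Cref{lem:integral-lower-bound} (the paper's \Cref{clm:integral-lowerb}) giving $\Ex[(K-\vol(K))\,\overline{p}_{r_\ast}]=\Omega(1/\sqrt n)$ via $\overline{\beta}(\nu_{\ast,\downarrow})\ge\Omega(\tau/n)$, and then the identical Plancherel/Cauchy--Schwarz step with $\Var[\overline{p}_{r_\ast}]=2n$. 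One small caveat in your writeup of the $r_\ast=\Theta(\sqrt n)$ step: since $r_\ast$ is a priori unbounded, you cannot literally ``iterate \Cref{lem:key-general} a constant number of times to move between the target radius and $r_\ast$''; the clean version (which is also what the paper's terse argument amounts to) is to apply \Cref{lem:key-general} once at a fixed radius of order $\sqrt n$ (e.g.\ $r=\sqrt{2n}/0.9$) and use the monotonicity of $\alpha_K$ to relate $\alpha_K$ at that radius to $\alpha_K(r_\ast)=\vol(K)$, rather than marching from $r_\ast$ down.
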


(As the above three lemmas suggest, the ideas in this section are reminiscent of those in \Cref{sec:weak-learner-general-convex}.)
We first prove \Cref{lem:W1-large} followed by \Cref{lem:W2-large-asym}. 

\begin{proofof}{\Cref{lem:W1-large}}
The proof of this Lemma is quite similar to the proof of \Cref{lem:halfspace-learning}.\ignore{So, the reader might find it helpful to recall the proof of that lemma.} In particular, exactly as in \Cref{lem:halfspace-learning}, using the supporting hyperplane theorem we get that there is a halfspace $\halfspace(x) = \sign(\ell \cdot x  -\theta)$ such that 
 \begin{enumerate}
 \item $K \subseteq \halfspace^{-1}(1)$; 
 \item $\ell$ is a unit vector and $|\theta| \le \tau$. 
 \end{enumerate}
 Now, {using $\tau \le c$}, following the same calculation that gave \eqref{eq:false-pos}, we get that 
 \begin{equation}~\label{eq:false-pos-2}
 \Prx_{\bg \sim \Nn} [K(\bg) = \halfspace(\bg) ] \ge 1- 2\tau -  \frac{2c}{n} \ge 1- 4 \tau, 
 \end{equation} 
{where the last inequality holds for $n$ sufficiently large}.
Next, we have 
 \begin{eqnarray}\label{eq:bounding-deg1-corr} 
 \Ex_{\bg \sim \Nn} [K(\bg) \cdot (\ell \cdot \bg - \theta)] = \Ex_{\bg \sim \Nn} [\halfspace(\bg) \cdot (\ell \cdot \bg - \theta)] - \Ex_{\bg \sim \Nn} [h(\bg) \cdot (\ell \cdot \bg - \theta)], 
 \end{eqnarray}
 where $h: \mathbb{R}^n \rightarrow \{-2,0,2\}$ is defined as $h(x) := \halfspace(x) - K(x)$. We now bound the two expectations in \eqref{eq:bounding-deg1-corr} individually.  For the first, we have that
  \begin{equation}~\label{eq:Khintchine}
 \Ex_{\bg \sim \Nn} [\halfspace(\bg) \cdot (\ell \cdot \bg - \theta)] = \Ex_{\bg \sim \Nn} [|(\ell \cdot \bg - \theta)| ]
 = \Ex_{\bg_1 \sim N(0,1)}[|\bg_1 - \theta|] \ge \Ex_{\bg_1 \sim N(0,1)} [| \bg_1| ] =\sqrt{\frac{2}{\pi}}.
 \end{equation}
\ignore{ The last equality just follows from the fact that since $\ell$ is a unit vector, then $\ell \cdot \bg$ is distributed as a standard normal -- thus, the mean of the absolute value is $\sqrt{2/\pi}$.} On  the other hand, by Cauchy-Schwartz, we have 
 \begin{eqnarray}~\label{eq:CS-bound}
 \Ex_{\bg \sim \Nn} [h(\bg) \cdot (\ell \cdot \bg - \theta)] &\le& \sqrt{\Ex_{\bg \sim \Nn}[h^2(\bg)]} \cdot \sqrt{\Ex_{\bg \sim \Nn}[(\ell \cdot \bg - \theta)^2]}.
 \end{eqnarray}
Since $\Pr_{\bg \sim \Nn} [K(\bg) \not = \halfspace(\bg)] \le 4 \tau$ (by \Cref{eq:false-pos-2}) and $|h|=2$ when $\halfspace \neq K$, we have that $\Ex_{\bg \sim \Nn}[h^2(\bg)] \le 16 \tau$. For the other expectation on the right-hand side of  \eqref{eq:CS-bound}, since $\ell$ is a unit vector we have 
\begin{equation} \label{eq:square-lin}
\Ex_{\bg \sim \Nn}[(\ell \cdot \bg - \theta)^2] = 1+\theta^2.
\end{equation} 
Plugging this back into \eqref{eq:CS-bound}, {observing that $|\theta|\le \tau \le 1$}, we get that
 \[
 \Ex_{\bg \sim \Nn} [h(\bg) \cdot (\ell \cdot \bg - \theta)] \le 4\sqrt{\tau} \cdot \sqrt{1+\theta^2} \le 8 \sqrt{\tau}.  
 \]
Using this and \eqref{eq:Khintchine} and applying this in \eqref{eq:bounding-deg1-corr}, we obtain that 
 \begin{equation} \label{eq:onethird}
 \Ex_{\bg \sim \Nn} [K(\bg) \cdot (\ell \cdot \bg - \theta)] \ge \sqrt{\frac{2}{\pi}} - 8 \sqrt{\tau} \ge \frac{1}{3}, 
\end{equation}
where the last inequality uses $\tau \le 10^{-4}$. 

\ignore{\rnote{Similar to the earlier Hermite/Cauchy-Schwarz argument, this wasn't obvious to me, so I added a little more explanation}}
Next, we observe that by the linearity of $\ell \cdot \bg - \theta$, Plancherel's identity, and Cauchy-Schwarz,we get that (writing $v(g)$ for the function $\ell \cdot g - \theta$)
\begin{align} 
\Ex_{\bg \sim \Nn}[K (\bg) \cdot (\ell \cdot \bg - \theta)]
=
\sum_{|S| \leq 1} \widetilde{K}(S) \widetilde{v}(S) 
&\leq
\mathsf{W}^{\le 1}[K] \cdot \sqrt{\sum_{|S| \le 1} \widetilde{v}(S)^2} \nonumber\\
&\le
\mathsf{W}^{\le 1}[K] \cdot \sqrt{\Ex_{\bg \sim N(0,1)^n}[(\ell \cdot \bg - \theta)^2]}.\label{eq:plancherel}
\end{align} 
Finally, we can combine the above ingredients to  get that
\begin{eqnarray}\nonumber
\mathsf{W}^{\le 1}[K] \ge \frac{\Ex_{\bg \sim \Nn}[K (\bg) \cdot (\ell \cdot \bg - \theta)]^2}{\Ex_{\bg \sim \Nn}[ (\ell \cdot \bg - \theta)^2]} \ge \frac{1}{9(1+\theta^2)} \ge \frac{1}{18}, 
\end{eqnarray}
where the first inequality is from \Cref{eq:plancherel}, the second is from \Cref{eq:square-lin} and \Cref{eq:onethird}, and the final inequality again uses $|\theta| \le 1$. This finishes the proof of \Cref{lem:W1-large}.
\end{proofof}

\begin{proofof}{\Cref{lem:W2-large-asym}}
The proof of this lemma is essentially the same as the proof of \Cref{lem:unbias-Hermite-weight}; the main difference is that we apply \Cref{lem:key-general} instead of \Cref{lem:key}. In particular, we define the functions $r(\cdot)$, $\beta(\cdot)$, $\overline{\beta}(\cdot)$, as well as the quantities $\nu_{\ast}$ and $r_{\ast}$, exactly as in the proof of \Cref{lem:unbias-Hermite-weight} (recall that all these quantities are defined right before \Cref{clm:r-ast-bound1}). 

The following claim is analogous to \Cref{clm:r-ast-bound1}:

\begin{claim}~\label{clm:r-ast-asymmetric}
{For the constants $c,\tau$ in \Cref{lem:W1-large}} the following holds: Let $K$ be a  convex set such that $|\vol(K) - 1/2| \le c$ and ${B}(0^n, \tau) \subseteq K$. 
Then 
 \[
\frac{n}{4} \le r_\ast^2 \le  4n. 
 \]
\end{claim}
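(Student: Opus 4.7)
The plan is to mirror the proof of \Cref{clm:r-ast-bound1}, replacing \Cref{lem:key} with \Cref{lem:key-general} and using the hypothesis $B(0^n,\tau)\subseteq K$ as the origin-centered ball whose radius enters the density increment. I will describe the argument for the upper bound $r_\ast^2\le 4n$; the lower bound $r_\ast^2\ge n/4$ will go through by an entirely symmetric argument.

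Assuming for contradiction that $r_\ast>2\sqrt{n}$, I will set $r_{\ast,\out}:=\sqrt{2n}$. Because \Cref{lem:key-general} requires the contraction parameter $\kappa$ to be less than $1/10$, a single application does not bridge $r_\ast$ and $r_{\ast,\out}$, so I will instead chain applications along a geometric sequence $r_\ast=r_0>r_1>\cdots>r_k=r_{\ast,\out}$ with $r_{i+1}=(1-\tfrac{1}{20})r_i$ (adjusting the final contraction so that the chain terminates exactly at $r_{\ast,\out}$). The argument then proceeds by a case split: if along the chain some $\alpha_K(r_i)$ already exceeds $\vol(K)+\tfrac{1}{4}$, then monotonicity of $\alpha_K$ yields $\alpha_K(r_{\ast,\out})\ge \vol(K)+\tfrac{1}{4}$ directly; otherwise every $\alpha_K(r_i)$ lies in $[\tfrac{1}{4},\tfrac{3}{4}]$, so both $\alpha_K(r_i)$ and $1-\alpha_K(r_i)$ are at least $\tfrac{1}{4}$, and since $r_i\ge r_{\ast,\out}\gg\tau$ the ``$\min$'' appearing in the second branch of \Cref{lem:key-general} is realized by $\tau/r_i$. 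In either branch this gives $\alpha_K(r_{i+1})\ge \alpha_K(r_i)+\Theta(\tau/r_i)$. Telescoping, and noting that $\sum_{i=0}^{k-1} 1/r_i$ is geometric with ratio $20/19$ and therefore equals $\Theta(1/r_{\ast,\out})=\Theta(1/\sqrt{n})$, I conclude $\alpha_K(r_{\ast,\out})\ge \vol(K)+\Omega(\tau/\sqrt{n})$.

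From here the endgame mirrors \Cref{clm:r-ast-bound1}: monotonicity of $\alpha_K$ and \Cref{lem:johnstone} give
\[
\vol(K)\ge \alpha_K(r_{\ast,\out})\cdot(1-e^{-3n/64})\ge \bigl(\vol(K)+\Omega(\tau/\sqrt{n})\bigr)(1-e^{-3n/64}),
\]
which for $n$ sufficiently large (so that $e^{-3n/64}\ll \tau/\sqrt{n}$) rearranges into a contradiction. The lower bound $r_\ast^2\ge n/4$ is handled symmetrically: I will chain \Cref{lem:key-general} \emph{upward} from $r_\ast$ to $r_{\ast,\inner}:=\sqrt{n}/2$ to obtain $\alpha_K(r_{\ast,\inner})\le \vol(K)-\Omega(\tau/\sqrt{n})$, and then contradict this by bounding
\[
\vol(K)\le \Pr[\Vert\bg\Vert_2\le r_{\ast,\inner}]+\alpha_K(r_{\ast,\inner})\le e^{-3n/64}+\vol(K)-\Omega(\tau/\sqrt{n}),
\]
using \Cref{lem:johnstone} for the chi-tail $\Pr[\Vert\bg\Vert_2\le \sqrt{n}/2]\le e^{-3n/64}$.

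The principal obstacle compared to \Cref{clm:r-ast-bound1} is that the density increment furnished by \Cref{lem:key-general} scales as $\tau/r$ rather than as a universal constant; this is what forces the chaining construction above, and also demands that the tail bound of \Cref{lem:johnstone} be super-polynomially small in $n$ so that it is dominated by $\tau/\sqrt{n}$. The case split on whether $\alpha_K$ escapes above $\vol(K)+\tfrac{1}{4}$ (or drops below $\vol(K)-\tfrac{1}{4}$ in the lower-bound direction) is what guarantees a uniform $\Theta(\tau/r_i)$ lower bound on the increment along the entire chain; without it, either branch of \Cref{lem:key-general} could degrade as $\alpha_K$ approaches $0$ or $1$.
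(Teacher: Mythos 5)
Your proof is correct and follows essentially the same approach as the paper: substitute \Cref{lem:key-general} for \Cref{lem:key}, propagate the resulting density increment (now $\Theta(\tau/\sqrt{n})$ rather than a constant) down to $r_{\ast,\out}=\sqrt{2n}$, and contradict the chi-squared tail bound exactly as in \Cref{clm:r-ast-bound1}. Your chaining-plus-case-split is a sound way to respect the $\kappa<1/10$ cap in \Cref{lem:key-general} (a detail the paper's one-line invocation elides); a marginally shorter route is to apply \Cref{lem:key-general} once at $r:=r_{\ast,\out}/(1-\kappa')$ with, say, $\kappa'=1/10$, using $\alpha_K(r)\ge\alpha_K(r_\ast)=\vol(K)$ by monotonicity (since $r<2\sqrt{n}<r_\ast$) together with the same case split on whether $\alpha_K(r)\ge 3/4$.
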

\begin{proof}
The proof is essentially the same as the proof of \Cref{clm:r-ast-bound1}, so we just indicate the changes vis-a-vis the proof of \Cref{clm:r-ast-bound1}. Similar to \Cref{clm:r-ast-bound1}, we will 
show that $r_\ast \le 2\sqrt{n}$, and again the other direction is similar. Towards a contradiction, assume that $r_\ast > 2\sqrt{n}$ and 
define $r_{\ast,\out} = \sqrt{2n}$. Then, by applying \Cref{lem:key-general}, we get that
\begin{equation}\label{eq:inc-r-ast-r1}
\alpha_K(r_{\ast,\out}) \ge \alpha_K(r_\ast) + \kappa = \vol(K) + \kappa 
\end{equation}
where $\kappa = \Theta(\tau/\sqrt{n})$. Note that in contrast with \Cref{eq:inc-r-ast-r}, in which $\kappa$ is an absolute constant, here $\kappa$ is $\Theta(1/\sqrt{n})$. We observe that \eqref{eq:bias-calc1} and \eqref{eq:kappa-gap2} both continue to hold. Further, as long as $n$ is sufficiently large,  $n \ge \frac{64}{3} \ln (4/\kappa)$ continues to hold. Thus, exactly as in \eqref{eq:kappa-gap1}, we get that
\begin{equation}
\vol(K) \ge \alpha_K(r_{\ast,\out}) \cdot (1-\kappa/4) \ge (\vol(K) + \kappa) \cdot (1-\kappa/4) \ge \vol(K) + \frac{\kappa}{2}. 
\end{equation}
This contradiction implies that $r_{\ast} \le 2\sqrt{n}$. As in the proof of \Cref{clm:r-ast-bound1}, 
the proof of the other direction is similar. 
\end{proof}

The following claim is analogous to \Cref{lem:integral-lower-bound}:

\begin{claim}~\label{clm:integral-lowerb}
{For the constants $c,\tau$ in \Cref{lem:W1-large}} the following holds: Let $K$ be a  convex set such that $|\vol(K) - 1/2| \le c$ and ${B}(0^n, \tau) \subseteq K$. Then 
\[
\Ex_{\bg \sim \Nn}[(K(\bg) - \vol(K)) \cdot \ovp_{r_\ast}(\bg)] \ge \Theta(n^{-1/2}). 
\]
\end{claim}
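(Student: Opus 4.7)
The plan is to mimic the proof of \Cref{lem:integral-lower-bound} step-by-step, replacing its single invocation of \Cref{lem:key} by \Cref{lem:key-general}. The weaker density increment in the asymmetric setting will turn the eventual $\Omega(1)$ bound into $\Omega(n^{-1/2})$.

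First I would apply \Cref{clm:Hermite-deg2} with $\Gamma(x) = K(x) - \vol(K)$ and $p = p_{r_\ast}$ to rewrite
\[
\Ex_{\bg \sim \Nn}\bigl[(K(\bg) - \vol(K)) \cdot \ovp_{r_\ast}(\bg)\bigr] = \int_0^1 \overline{\beta}(\nu) \, p_{r_\ast}(r(\nu)) \, d\nu,
\]
and note, as in \eqref{eq:c-positive1}, that this integrand is pointwise non-negative: $\overline{\beta}$ changes sign at $\nu_\ast$ by \Cref{fact:centre-decreasing} (applicable since $0^n \in B(0^n,\tau) \subseteq K$), and $p_{r_\ast}(r(\nu))$ changes sign at the same point by strict monotonicity of $r(\cdot)$.

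Next I would define $r_{\ast,\downarrow}, r_{\ast,\uparrow}$ and the corresponding $\nu_{\ast,\downarrow}, \nu_{\ast,\uparrow}$ exactly as in \eqref{eq:r-star-up-down}. The only structural change from the symmetric proof is the analogue of \eqref{eq:alpha-k-bounds}: by \Cref{clm:r-ast-asymmetric} we still have $r_\ast = \Theta(\sqrt{n})$, but applying \Cref{lem:key-general} with $\kappa = \Theta(1/\sqrt{n})$, $r = r_\ast$, and $r_{\mathsf{small}} = \tau$ now yields only the weaker increment $\Theta(\kappa \cdot \tau/r_\ast) = \Theta(1/n)$, using that $\vol(K) \approx 1/2$ keeps the prefactor ($\alpha_K(r_\ast)$ or $1-\alpha_K(r_\ast)$, whichever applies) bounded below by a constant, and that for large $n$ the $\min\{\cdot, r_{\mathsf{small}}/r\}$ in the second case of \Cref{lem:key-general} is attained by $r_{\mathsf{small}}/r$. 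An analogous application at $r = r_{\ast,\uparrow}$ (contracting to $r_\ast$) gives $\alpha_K(r_{\ast,\uparrow}) \le \vol(K) - \Theta(1/n)$ whenever $\alpha_K(r_{\ast,\uparrow}) \in [1/4,3/4]$; if instead $\alpha_K(r_{\ast,\uparrow}) < 1/4$ then trivially $\overline{\beta}(\nu_{\ast,\uparrow}) \le -1/4 + O(1/n)$, so in all cases $|\overline{\beta}(\nu_{\ast,\downarrow})|, |\overline{\beta}(\nu_{\ast,\uparrow})| \ge c_1/n$ for an absolute constant $c_1 > 0$.

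The remainder of the argument is verbatim that of \Cref{lem:integral-lower-bound}: $|p_{r_\ast}(r(\nu))| \ge c_2\sqrt{n}$ on $[0,\nu_{\ast,\downarrow}] \cup [\nu_{\ast,\uparrow},1]$ since $r_\ast = \Theta(\sqrt{n})$, and $\nu_{\ast,\downarrow} + (1-\nu_{\ast,\uparrow}) \ge 3/5$ by \Cref{fact:chi-squared-2}. Combining with non-negativity of the integrand on $[\nu_{\ast,\downarrow}, \nu_{\ast,\uparrow}]$ yields the integral estimate $\ge (c_1/n)(c_2\sqrt{n})(3/5) = \Omega(n^{-1/2})$, as required. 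The main obstacle is really only the bookkeeping needed to ensure that \Cref{lem:key-general} delivers a $\Theta(1/n)$ increment in whichever of its two cases applies, and at both endpoints $r_{\ast,\downarrow}$ and $r_{\ast,\uparrow}$; this in turn reduces to verifying that $\alpha_K$ is bounded away from $0$ and $1$ in a constant-sized neighborhood of $r_\ast$, which follows from $\vol(K) \approx 1/2$, \Cref{fact:convex-decreasing}, and \Cref{clm:r-ast-asymmetric}.
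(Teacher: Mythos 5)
Your proposal is correct and takes essentially the same route as the paper: rewrite the expectation as $\int_0^1 \overline{\beta}(\nu)\, p_{r_\ast}(r(\nu))\, d\nu$ via \Cref{clm:Hermite-deg2}, exploit the pointwise nonnegativity of the integrand, and replace the use of \Cref{lem:key} by \Cref{lem:key-general} at $r_{\ast,\downarrow}$ and $r_{\ast,\uparrow}$, turning the $\Theta(1/\sqrt{n})$ density increment into $\Theta(1/n)$ and hence the $\Omega(1)$ lower bound into $\Omega(n^{-1/2})$. The brief case split on whether $\alpha_K(r_{\ast,\uparrow}) \ge 1/4$ is a reasonable bit of extra care the paper leaves implicit; the only tiny slip is that in the sub-$1/4$ case $\overline{\beta}(\nu_{\ast,\uparrow}) \le -1/4 + c$ (with $c$ the constant in the hypothesis $|\vol(K)-1/2|\le c$), not $-1/4 + O(1/n)$, though this makes no difference to the conclusion.
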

\begin{proof}
The proof is essentially the same as the proof of \Cref{lem:integral-lower-bound}, so we just indicate the changes vis-a-vis the proof of \Cref{lem:integral-lower-bound}. Equations \eqref{eq:expect-equiv-beta1} and \eqref{eq:c-positive1}  holds exactly as before. We now define 
$r_{\ast, \downarrow}$ and $r_{\ast, \uparrow}$ as 
in \eqref{eq:r-star-up-down}, i.e. 
\[
r_{\ast, \downarrow} = r_\ast \cdot \bigg( 1 - \frac{1}{10 \sqrt{n}} \bigg) \ \textrm{and} \ r_{\ast, \uparrow} = r_\ast \cdot \bigg( 1 + \frac{1}{10 \sqrt{n}} \bigg). 
\]
As before, we define $\nu_{\ast,\downarrow}$ and $\nu_{\ast,\uparrow}$ to be such that 
$r(\nu_{\ast,\downarrow}) = r_{\ast,\downarrow}$ and $r(\nu_{\ast,\uparrow}) = r_{\ast,\uparrow}$. Now applying \Cref{lem:key-general}, we get that
\begin{align}
\alpha_K(r_{\ast,\downarrow} ) &\ge \alpha_K(r_{\ast})  + \frac{\Theta(\tau)}{{n}} =\vol(K) + 
\frac{\Theta(\tau)}{{n}} \textrm{  and } \nonumber \\
\alpha_K(r_{\ast,\uparrow} ) &\le \alpha_K(r_{\ast})  - \frac{\Theta(\tau)}{{n}} = \vol(K) -
\frac{\Theta(\tau)}{{n}}.\label{eq:alpha-k-bounds1}
\end{align} 
Note that in contrast to \eqref{eq:alpha-k-bounds}, where the gap between $\alpha_K(r_{\ast,\downarrow} )$ (or $\alpha_K(r_{\ast,\uparrow} )$) and $\alpha_K(r_{\ast})$ was $\Theta(1/\sqrt{n})$, now the gap is only $\Theta(1/n)$. 
This implies that 
\begin{equation}~\label{eq:c-gap2}
\overline{\beta}(\nu_{\ast, \downarrow}) \ge \frac{{c_1}\tau}{{n}} \ \textrm{and} \ \overline{\beta}(\nu_{\ast, \uparrow}) \leq -\frac{{c_1\tau}}{{n}} 
\end{equation}
{for an absolute constant $c_1>0$.}
Now, by applying \Cref{clm:r-ast-asymmetric}, we get that 
\begin{equation}~\label{eq:r-gap2}
\text{for all~} \nu \le \nu_{\ast, \downarrow}, \ \ 
p_{r_{\ast}}(r(\nu)) \ge {c_2}\sqrt{n} \quad \quad \textrm{and} \quad\quad \text{for all~} \nu \ge \nu_{\ast, \uparrow}, \ \ 
p_{r_{\ast}}(r(\nu)) \le -{c_2}\sqrt{n} 
\end{equation}
{for an absolute constant $c_2>0$.}
This is exactly the same as \eqref{eq:r-gap1} except that we applied \Cref{clm:r-ast-asymmetric} to get this instead of \Cref{clm:r-ast-bound1} as was done in the proof of \Cref{lem:integral-lower-bound}. 
As before, we can now infer that 
\begin{align}
\Ex_{\bg \sim \Nn}[(K(\bg) -\vol(K)) \cdot \overline{p_{r_\ast}}(\bg)] &=\int_{\nu=0}^{1} \overline{\beta}(\nu)  p_{r_\ast}(r(\nu)) d\nu \ \tag{using \eqref{eq:expect-equiv-beta1}} \nonumber \\
&\ge \int_{\nu=0}^{\nu_{\ast, \downarrow}} \overline{\beta}(\nu)  p_{r_\ast}(r(\nu)) d\nu +\int_{\nu_{\ast, \uparrow}}^1 \overline{\beta}(\nu)  p_{r_\ast}(r(\nu)) d\nu \ \tag{using \eqref{eq:c-positive1}}\nonumber \\
&\ge \frac{{c_1 c_2}\tau}{\sqrt{n}} \nu_{\ast, \downarrow} + \frac{{c_1 c_2}\tau}{\sqrt{n}} (1-\nu_{\ast, \uparrow}),
\label{eq:lower-bound-K-f-corr2}
%
%
%
%
%
\end{align}
where the last line is using \eqref{eq:c-gap2} and \eqref{eq:r-gap2}.

Now we observe that combining \Cref{clm:r-ast-asymmetric} and the definition of $r_{\ast, \downarrow}$ and $r_{\ast, \uparrow}$, as before we have that
\[
 r_{\ast}-r_{\ast,\downarrow} \le \frac{1}{5} \ \   \textrm{and} \  \ r_{\ast,\uparrow} -r_{\ast}\le \frac{1}{5} \ \Rightarrow r_{\ast,\uparrow} - r_{\ast,\downarrow} \le \frac25,
\]
which implies (using \Cref{fact:chi-squared-2}) that
\[
\nu_{\ast,\uparrow} - \nu_{\ast,\downarrow} \le \frac25.
\]
Plugging the above into \eqref{eq:lower-bound-K-f-corr2}, \Cref{clm:integral-lowerb} is proved.
\end{proof}
The rest of the proof of \Cref{lem:W2-large-asym} follows exactly the lines of the proof of \Cref{lem:unbias-Hermite-weight}. As before the polynomial $\overline{p}_{r_\ast}(x) = (r_\ast^2 - (x_1^2 + \ldots + x_n^2))$ is a linear combination of degree-$0$ and $2$ Hermite polynomials, $\Ex[(K(\bg) - \vol(K)) ] =0$, and $\mathsf{Var}( \overline{p}_{r_\ast}(\bg)) = 2n$, so the exact same argument as before, but now using \Cref{clm:integral-lowerb} instead of \Cref{lem:integral-lower-bound}, finishes the proof. 
\end{proofof}
 

\ignore{
}

\subsection{\Cref{thm:centrally-symmetric-weight} is almost tight} \label{sec:symmetric-fourier-weight-tightness}

Recall that \Cref{thm:centrally-symmetric-weight} says that any centrally symmetric convex body $K$ (viewed as a function to $\bits$) has $\Weight^{\leq 2}[K] \geq {\Omega(1/n)}.$  In this section we show that this lower bound is  best possible up to polylogarithmic factors:

\begin{fact} \label{fact:cube}
There is a centrally symmetric convex body $K$ (in fact, an intersection of $2n$ halfspaces) in $\R^n$ which, viewed as a function to $\bits$, has $\Weight^{\leq 2}[K] \leq O({\frac {\log^2 n} {n}}).$
\end{fact}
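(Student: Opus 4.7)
The proof takes $K$ to be an appropriately-sized origin-centered axis-aligned cube $K = [-t,t]^n$. I will pick $t > 0$ so that $\vol(K) = 1/2$, which (since $\vol([-s,s]^n) = (2\Phi(s)-1)^n$ is continuous and strictly increasing in $s$) uniquely determines $t$ via $2\Phi(t) - 1 = 2^{-1/n}$, i.e., $\Phi(-t) = (1 - 2^{-1/n})/2 = (\ln 2)/(2n) \cdot (1 + o(1))$ (here $\Phi,\phi$ denote the standard Gaussian CDF and pdf). Mills' ratio $\Phi(-t) = (\phi(t)/t)(1 + O(1/t^2))$ then yields $t = (1 + o(1))\sqrt{2\ln n}$, and more usefully $\phi(t) = \Theta(t/n)$, so $t\phi(t) = \Theta(t^2/n) = \Theta((\log n)/n)$. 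This $K$ is manifestly centrally symmetric and is the intersection of the $2n$ halfspaces $\{x : x_i \leq t\}$ and $\{x : x_i \geq -t\}$.

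Next I use the product structure $K(x) = 2\prod_{i=1}^{n} f(x_i) - 1$ with $f(y) = \mathbf{1}[|y| \leq t]$ (viewing $K$ as a $\bits$-valued function) to show that most low-degree Hermite coefficients vanish. For any nonzero multi-index $\alpha \in \N^n$,
\[
\widetilde{K}(\alpha) \;=\; 2 \prod_{i=1}^{n} \mathbf{E}[f(\bg_i)\, h_{\alpha_i}(\bg_i)],
\]
where $h_k$ denotes the normalized univariate Hermite polynomial, which has parity $k$. Since $f$ is even, $\widetilde{K}(\alpha) = 0$ whenever some $\alpha_i$ is odd. Combined with $\widetilde{K}(0^n) = 2\vol(K) - 1 = 0$, this forces $\mathsf{W}^{= 0}[K] = \mathsf{W}^{= 1}[K] = 0$ and also kills the ``mixed'' degree-$2$ coefficients $\widetilde{K}(e_i + e_j)$ ($i \neq j$), leaving only the $n$ ``pure'' coefficients $\widetilde{K}(2 e_i)$.

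Finally I estimate these surviving coefficients. With $h_2(y) = (y^2 - 1)/\sqrt{2}$, a short integration by parts (using $\phi'(y) = -y\phi(y)$) gives
\[
\mathbf{E}[f(\bg)\, h_2(\bg)] \;=\; \frac{1}{\sqrt{2}} \int_{-t}^{t}(y^2 - 1)\phi(y)\, dy \;=\; -\sqrt{2}\, t\phi(t),
\]
while the remaining factor is $\prod_{j \neq i} \mathbf{E}[f(\bg_j)] = (2\Phi(t) - 1)^{n-1} = \tfrac{1}{2} \cdot 2^{1/n} = \Theta(1)$. Hence $|\widetilde{K}(2 e_i)| = \Theta(t\phi(t)) = \Theta((\log n)/n)$, and summing over the $n$ coordinates,
\[
\mathsf{W}^{\leq 2}[K] \;=\; \mathsf{W}^{= 2}[K] \;=\; \sum_{i=1}^{n} \widetilde{K}(2 e_i)^2 \;=\; O\!\left(n \cdot \frac{\log^2 n}{n^2}\right) \;=\; O\!\left(\frac{\log^2 n}{n}\right),
\]
as required. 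The argument is entirely routine; the only two places meriting any care are the Mills' ratio step (needed to pin down the correct magnitude of $t\phi(t)$, in particular to avoid the tempting but wrong substitution $\phi(t) = \Theta(1/n)$) and the one-dimensional integration by parts, neither of which presents a real obstacle.
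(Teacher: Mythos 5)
Your proposal is correct and follows essentially the same route as the paper: take the origin-centered cube of Gaussian volume $1/2$, exploit the product structure and parity of the one-dimensional indicator to kill the degree-$0$, degree-$1$, and mixed degree-$2$ Hermite coefficients, and estimate the $n$ surviving coefficients $\widetilde{K}(2e_i)$ via the explicit integral $\int_{-t}^t(y^2-1)\phi(y)\,dy = -2t\phi(t)$ together with a Gaussian tail estimate giving $t\phi(t)=\Theta((\log n)/n)$. The paper invokes a two-sided tail inequality where you invoke Mills' ratio, but these are the same estimate.
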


\begin{proof}
The body $K$ is simply an origin-centered axis-aligned cube of size chosen so that $\Vol(K)=1/2$ (and hence the constant Hermite coefficient $\tilde{K}(0^n)$ is precisely $0$). In more detail, let $c =c(n) > 0$ be the unique value such that
\begin{equation} \label{eq:def-of-c}
\Prx_{\bg \sim N(0,1)}[|\bg|\leq c] = (1/2)^{1/n} = 1 - {\frac {\Theta(1)}{n}}
\end{equation}
(so by standard bounds on the tails of the Gaussian distribution we have $c = \Theta(\sqrt{\log n})$).
Let $a: \R \to \{0,1\}$ be the indicator function of the interval $[-c,c]$, so $a(t) := \mathbf{1}[|t| \leq c]$, and let $K_1: \R^n \to \{0,1\}$ be the indicator function of the corresponding $n$-dimensional cube, so
\[
K_1(x_1,\dots,x_n) = \prod_{i=1}^n a(x_i)
\]
and $K: \R^n \to \bits$ (the $\bits$-valued version of $K_1$) is $K :=2K_1-1.$ We have that $\tilde{K}(0^n) = 2 \tilde{K_1}(0^n) - 1$ and $\tilde{K}(\overline{i}) = \tilde{K_1}(\overline{i})$ for every $\overline{i} \in \N^n \setminus \{0^n\}$ so it suffices to analyze the Hermite spectrum of $K_1.$ Since $K_1$ has such a simple structure (product of univariate functions $a(x_1),\dots,a(x_n)$) this is happily simple to do; details are below.

By construction we have that $\tilde{K_1}(0^n) = 1/2$ and hence $\tilde{K}(0^n)=0$ as desired. Since $a$ is an even function we have $\tilde{a}(1)=0$ and hence $\Weight^{=1}(K_1)=0$.Since $\tilde{a}(1)=0$ the only nonzero degree-2 Hermite coefficients of $\tilde{K_1}$ are the coefficients indexed by $2e_i$, $i=1,\dots,n$, all of which are the same, so $\Weight^{\leq 2}(K_1)$ is equal to
\begin{equation} \label{eq:goal}
\Weight^{\leq 2}(K_1) = 
n \cdot \left( \tilde{a}(0)^{n-1} \cdot \tilde{a}(2)\right)^2 = n \cdot \left( \left( \Ex_{\bg \sim N(0,1)}[a(\bg)]\right)^{n-1} \cdot \tilde{a}(2) \right)^2 = \Theta(n) \cdot \tilde{a}(2)^2.
\end{equation}
Recalling that the degree-2 univariate Hermite basis polynomial is $h_2(x) = {\frac {x^2 - 1}{\sqrt{2}}}$, we have that
\begin{align}
\tilde{a}(2) &= \Ex_{\bg \sim N(0,1)}[ a(\bg) h_2(\bg)] = {\frac 1 {\sqrt{2}}} \cdot \Ex_{\bg \sim N(0,1)}[a(\bg)(\bg^2 - 1)]\nonumber \\
&= {\frac 1 {\sqrt{2}}} \int_{-c}^c e^{-x^2/2} (x^2 - 1) dx \nonumber \\
&= -\sqrt{2} c e^{-c^2/2}.\label{eq:expression-for-degree-2-coefficient}
\end{align}
We now recall the following tail bound on the normal distribution (Equation~2.58 of \cite{TAILBOUND}):
\begin{equation} \label{eq:normal-tail}
\phi(t)
\left({\frac 1 t} - {\frac 1 {t^3}} \right) \leq \Prx_{\bg \sim N(0,1)}[\bg \geq t] \leq
\phi(t)
\left({\frac 1 t} - {\frac 1 {t^3}} + {\frac 3 {t^5}}\right),
\end{equation}
where $\phi(t) = {\frac 1 {\sqrt{2 \pi}}} e^{-t^2/2}$ is the density function of $N(0,1)$.
Combining \Cref{eq:normal-tail}, \Cref{eq:expression-for-degree-2-coefficient} and \Cref{eq:def-of-c} we get that $|\tilde{a}(2)| = \Theta({\frac {\log n} n})$, which establishes the claimed fact by \Cref{eq:goal}.
\end{proof}


\section{Lower bounds} 
In this section we prove \Cref{thm:our-BBL-lb}, which we restate here for the convenience of the reader:
\begin{reptheorem}{thm:our-BBL-lb}
For sufficiently large $n$, for any $s \geq n$, there is a distribution ${\cal D}_{\actual}$ over centrally symmetric convex sets with the following property:  for a target convex set $\boldf \sim {\cal D}_{\actual},$ for any membership-query (black box query) algorithm $A$ making at most $s$ many queries to $\boldf$, the expected error of $A$ (the probability over $\boldf \sim {\cal D}_{\actual}$, over any internal randomness of $A$, and over a random Gaussian $\bx \sim N(0,1^n)$, that the output hypothesis $h$ of $A$ predicts incorrectly on $\bx$) is at least $1/2 - {\frac {O(\log(s) \cdot \sqrt{\log n})}{n^{1/2}}}$.
\end{reptheorem}

We note that this lower bound holds even in the membership query (hereafter abbreviated as MQ) model. In this model the learning algorithm has query access to a black-box oracle for the unknown target function $\boldf$; note that a learning algorithm in this model and can simulate a learning algorithm in the model where the algorithm receives only random labeled examples of the form $(\bx, \boldf(\bx))$ (with $\bx \sim \Nn$) with no overhead. 
Thus a lower bound in the MQ model  holds \emph{a fortiori} for the random examples model (which is the model that our algorithms use).  In particular, by instantiating $s= \poly(n)$ in the above theorem, we get that no algorithm which receives $\poly(n)$ samples (and hence no algorithm running in $\poly(n)$ time) can achieve an advantage of $\frac{\omega(\log^{3/2} n)}{\sqrt{n}}$ over random guessing for learning centrally symmetric convex sets. Thus, our algorithm for weak learning of centrally symmetric convex sets, i.e., Theorem~\ref{thm:weak-learn-centrally-symmetric}, achieves an optimal advantage (up to an $O(\log^{3/2} n)$ factor).

Since the proof of \Cref{thm:our-BBL-lb} is somewhat involved we begin by explaining its general strategy:

\begin{enumerate}
\item We start by constructing a ``hard" distribution ${\cal D}_{\ideal}$ over centrally symmetric convex subsets of $\mathbb{R}^n$ (note that ${\cal D}_{\ideal}$ is different from the final distribution ${\cal D}_{\actual}$). We then analyze the case in which the learning algorithm is not allowed to make \emph{any} queries to the target function $\boldf \sim {\cal D}_{\ideal}$. It is easy to see that 
the maximum possible accuracy of any \red{zero-query} learning algorithm is achieved by the so-called \emph{Bayes optimal classifier} (which we denote by $BO_{{\cal D}_{\ideal}}$) which labels each $x \in \mathbb{R}^n$ as follows: 
\[
BO_{{\cal D}_{\ideal}} (x) = \begin{cases} 1 \ &\textrm{if} \ \Pr_{\boldf \sim {\cal D}_{\ideal}}[\boldf(x) = 1] \ge 1/2 \\ \red{0}  \ &\textrm{otherwise}.
 \end{cases}
\]
We show that for ``most" $\bx$ sampled from $\Nn$, the accuracy of $BO_{{\cal D}_{\ideal}} (\bx)$ is close to $1/2$  and in fact, the average advantage over $1/2$ for $\bx\sim \Nn$ is bounded by ${\frac {O(\log(s) \cdot \sqrt{\log n})}{n^{1/2}}}$. 

\item The distribution ${\cal D}_{\ideal}$ is a continuous distribution defined in terms of a so-called \emph{Poisson point process}. While the construction of ${\cal D}_{\ideal}$ is particularly well-suited to the analysis of a zero-query learner, i.e.~of the Bayes optimal classifier (indeed this is the motivation for our introducing ${\cal D}_{\ideal}$), it becomes tricky to analyze ${\cal D}_{\ideal}$ when the learning algorithm is actually allowed to make queries to the target function $\boldf$. To deal with this, we ``discretize" the distribution ${\cal D}_{\ideal}$ to construct the actual hard distribution 
${\cal D}_{\actual}$ (which is finitely supported). The discretization is carefully done to ensure that for ``most" $\bx$ (again sampled from $\Nn$), $\Pr_{\boldf \in {\cal D}_{\ideal}}[\boldf(\bx)=1]$ is close to $\Pr_{\boldf \in {\cal D}_{\actual}}[\boldf(\bx)=1]$.
This implies that the average advantage of the Bayes optimal classifier for $\boldf \sim {\cal D}_{\actual}$ \red{(corresponding to the best possible zero-query learning algorithm)}, denoted by 
$BO_{{\cal D}_{\actual}}$, remains bounded by ${\frac {O(\log(s) \cdot \sqrt{\log n})}{n^{1/2}}}$.


\item Finally, we consider the case when the learning algorithm is allowed to makes $s$ queries to the unknown target function $\boldf$. 
 Roughly speaking, we show that  for any choice of $s$ query points $\overline{y} = (y_1,\ldots, y_s)$,  with high probability over  both $\boldf \sim {\cal D}_{\actual}$ and 
 $\bx \sim \Nn$, the advantage of the optimal classifier is close to that achieved by $BO_{{\cal D}_{\actual}}$ (see \Cref{sec:queries}). The techniques used to prove this crucially rely on the specific construction of   ${\cal D}_{\actual}$, so we refrain from giving further details here. 
 However, using this and  the upper bound on the advantage of  $BO_{{\cal D}_{\actual}}$, we obtain Theorem~\ref{thm:our-BBL-lb}. 
\end{enumerate}

We note that the strategy outlined above  (in particular, steps~2 and 3 and the general flavor of the analysis used to establish those steps) closely follows the lower bound approach of 
Blum, Burch and Langford~\cite{BBL:98}, who
showed that no $s$-query algorithm in the MQ model can achieve an advantage of $\omega(\frac{\log s}{\sqrt{n}})$ over random guessing to learn monotone functions under the uniform distribution on $\bn$. 
Of course, 
the choice of the \emph{hard distribution} is quite different in our work than in \cite{BBL:98}; in particular, a draw from ${\cal D}_{\ideal}$ is essentially a random symmetric polytope with $\mathsf{poly}(s)$ facets where the hyperplane defining each facet is at distance around $O(\sqrt{\log s})$ away from the origin. The distribution ${\cal D}_{\actual}$ is obtained by essentially discretizing ${\cal D}_{\ideal}$ while retaining some crucial geometric properties. In contrast, the hard distribution in \cite{BBL:98} is constructed in one step and is essentially a random monotone DNF of width $O(\log s + \log n)$ with roughly $s$ terms. 
Another significant distribution between our argument and that of \cite{BBL:98} is the technical challenges that arise in our case because of dealing with a continuous domain and the resulting discretization that we have to perform.
 
\red{Finally, we note that in the proof of \Cref{thm:our-BBL-lb}, which we give below, we may assume that $s = 2^{O(\sqrt{n}/\sqrt{\log n})}$, since otherwise the claimed bound trivially holds.}
 
\subsection{The idealized distribution ${\cal D}_{\ideal}$ and the Bayes optimal classifier for it}

We will define the distribution ${\cal D}_{\actual}$ by first defining a related distribution ${\cal D}_{\ideal}$. As mentioned earlier, the distribution ${\cal D}_{\actual}$ will be obtained by discretization of ${\cal D}_{\ideal}$. 
To define ${\cal D}_{\ideal}$, we need to recall the notion of a spatial  Poisson point process; we specialize this notion to the unit sphere $\mathbb{S}^{n-1}$, though it is clear that an analogue of the definition we give below can be given over any bounded measurable set $B \subseteq \mathbb{R}^n$. 

\begin{definition}
A \emph{point process $\bX$} on the \emph{carrier space $\mathbb{S}^{n-1}$} is a stochastic process such that a draw from this process is a sequence of points  $\bx_1, \ldots, \bx_{\boldN} \in
 \mathbb{S}^{n-1}$. \red{(Note that each individual point $\bx_i$ as well as the number of points $\boldN$ are all random variables as described below.) }
 
 A \emph{spatial Poisson point process with parameter $\lambda$ on $\mathbb{S}^{n-1}$} is a point process on $\mathbb{S}^{n-1}$ with the following two properties: 
 \begin{enumerate}
 \item For any subset $B \subseteq \mathbb{S}^{n-1}$, let $\boldN(B)$ denote the number of points which fall in $B$. Then, the distribution of $\boldN(B)$ follows $\mathsf{Poi}(\lambda \mu(B))$ where $\mu(B)$ is the fractional density of $B$ inside $\mathbb{S}^{n-1}$. 
 \item If $B_1, \ldots, B_k  \subseteq \mathbb{S}^{n-1}$ are pairwise disjoint sets, then $\boldN(B_1), \ldots, \boldN(B_k)$ are mutually independent. 
 \end{enumerate}
 
 Finally, we note that the spatial Poisson point porcess with parameter $\lambda$ can be realized as follows: Sample $\boldN \sim \mathsf{Poi}(\red{\lambda}),$ and output $\boldN$ points $\bx_1, \ldots, \bx_{\boldN}$ that are chosen uniformly and independently at random from $\mathbb{S}^{n-1}$. 
\end{definition}

We refer the reader to \cite{last2017lectures} and \cite{daley2007introduction} for details about Poisson point processes. 
We next choose $d>0$ so that for any unit vector $v$,
\begin{equation}~\label{eq:choose-d}
\Pro_{\bu \sim \mathbb{S}^{n-1}} \bigg[| v \cdot \bu | \ge \frac{d}{\sqrt{n}} \bigg] = \frac{1}{s^{100}}. 
\end{equation}
  Note that by symmetry the choice of $v$ is immaterial.  We also recall the following fundamental  fact about inner products with random unit vectors (which is easy to establish using e.g.~\Cref{eq:baum}):
  
  \begin{claim}~\label{clm:inner-product-random}
  Let $v \in \mathbb{S}^{n-1}$. \red{For any $0 < t < 1/2$,}  
  \[
  \Pro_{\bu \in \mathbb{S}^{n-1}} [| v \cdot \bu | \ge t ] = e^{-\Theta(\red{t^2 n})}. 
  \]
  \end{claim}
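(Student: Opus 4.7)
My plan is to reduce to standard one-dimensional Gaussian tail estimates via the representation $\bu \stackrel{d}{=} \bg/\|\bg\|$ where $\bg \sim N(0,1)^n$. By rotational invariance of the uniform measure on $\mathbb{S}^{n-1}$, I may assume without loss of generality that $v = e_1$, so that $v \cdot \bu = \bu_1 = \bg_1/\|\bg\|$. Decomposing $\|\bg\|^2 = \bg_1^2 + \bW$ with $\bW := \sum_{i=2}^n \bg_i^2 \sim \chi^2(n-1)$ independent of $\bg_1$, the event $\{|\bu_1| \geq t\}$ is equivalent to $\{\bg_1^2 \geq \tfrac{t^2}{1-t^2}\,\bW\}$, which couples a Gaussian tail with a well-concentrated chi-squared variable.

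For the upper bound, I would condition on the event $\{\bW \geq n/2\}$, which holds with probability $\geq 1 - e^{-\Omega(n)}$ by \Cref{lem:johnstone}. On this event $|\bu_1| \geq t$ implies $|\bg_1| \geq t\sqrt{n/2}$, and the standard tail bound $\Pr[|\bg_1| \geq s] \leq e^{-s^2/2}$ contributes at most $e^{-t^2 n/4}$. The hypothesis $t < 1/2$ forces $t^2 n < n/4$, so the chi-squared tail $e^{-\Omega(n)}$ can be absorbed into $e^{-\Omega(t^2 n)}$ after adjusting constants (the key inequality is $e^{-3n/64} \leq e^{-(3/16)\, t^2 n}$ whenever $t^2 \leq 1/4$), giving the required $e^{-\Omega(t^2 n)}$.

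For the matching lower bound I would symmetrically condition on $\{\bW \leq 3n/2\}$ (also of probability $\geq 1 - e^{-\Omega(n)}$): on this event $|\bg_1| \geq 2t\sqrt{n}$ already forces $|\bu_1| \geq t$, and the Mills-ratio lower bound $\Pr[|\bg_1| \geq s] \geq \Omega(1/s)\, e^{-s^2/2}$ extracted from \Cref{eq:normal-tail} yields $\Pr[|\bu_1| \geq t] \geq \Omega(1/(t\sqrt{n}))\, e^{-2 t^2 n} - e^{-\Omega(n)}$, which is $e^{-O(t^2 n)}$ in the regime $t \gtrsim \sqrt{\log n / n}$. In the complementary small-$t$ regime the target $e^{-\Theta(t^2 n)}$ is only a polynomial factor below $1$; here a direct calculation using the explicit beta-distribution density $f_{\bu_1^2}(x) \propto x^{-1/2}(1-x)^{(n-3)/2}$ (equivalently, the CLT statement $\sqrt{n}\,\bu_1 \Rightarrow N(0,1)$) gives the required constant lower bound. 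The main technical subtlety throughout is simply ensuring that polynomial prefactors and the chi-squared tail constant do not spoil the exponent, but this is automatic since the conclusion is stated only up to a $\Theta(\cdot)$ inside the exponential.
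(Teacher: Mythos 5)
Your route via the Gaussian representation $\bu \stackrel{d}{=}\bg/\|\bg\|$ differs from the paper's intended one, which is to work directly with the exact marginal density formula \eqref{eq:baum} (so that both tails of the claim follow by bounding $\int_t^1 (1-z^2)^{(n-3)/2}\,dz$ above and below). Your upper-bound argument is sound, and the decomposition $\|\bg\|^2 = \bg_1^2 + \bW$ with $\bW\sim\chi^2(n-1)$ is a perfectly good alternative.

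However, the lower bound as written has a genuine gap. You invoke the union bound $\Pr[A\cap B]\ge \Pr[A]-\Pr[B^c]$ to get
\[
\Pr[|\bu_1|\ge t]\ \ge\ \Omega\!\left(\tfrac{1}{t\sqrt{n}}\right)e^{-2t^2 n}\;-\;e^{-\Omega(n)},
\]
and claim this is $e^{-O(t^2 n)}$ once $t\gtrsim\sqrt{\log n/n}$. But the constant hidden in the $e^{-\Omega(n)}$ term is small: the upper tail $\Pr[\chi^2(n-1) > 3n/2]$ has large-deviations rate $\tfrac12\big(\tfrac32 - 1 - \ln\tfrac32\big)\approx 0.047$, so that term is roughly $e^{-0.05 n}$. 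Meanwhile, for $t$ bounded away from $0$ (say $t\in(1/4, 1/2)$) the Mills-ratio term is at most $e^{-2t^2 n}\le e^{-n/8}$, which is \emph{smaller} than $e^{-0.05 n}$. The subtraction is therefore negative for a whole range of admissible $t$, and the displayed inequality gives nothing. The fix is easy and stays within your framework: do not subtract. Since $\bg_1$ and $\bW$ are independent,
\[
\Pr\big[|\bg_1|\ge 2t\sqrt{n},\; \bW\le 3n/2\big] \;=\; \Pr\big[|\bg_1|\ge 2t\sqrt{n}\big]\cdot \Pr\big[\bW\le 3n/2\big]\;\ge\;\tfrac12\,\Pr\big[|\bg_1|\ge 2t\sqrt{n}\big]
\]
for $n$ large, after which the Mills-ratio estimate gives $\Pr[|\bu_1|\ge t]\ge e^{-O(t^2 n)}$ for $t\gtrsim 1/\sqrt{n}$ directly (the polynomial prefactor $1/(t\sqrt n)$ only costs $O(\log n)$ in the exponent, which is dominated by $t^2 n$ once $t^2 n \gtrsim \log n$, and for $t^2 n = O(\log n)$ the target is only polynomially small and the CLT/beta-density argument you mention suffices). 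The paper's route through \eqref{eq:baum} sidesteps this entirely because there is no auxiliary $\chi^2$ event to control, which is one reason it may be cleaner here.
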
 
\red{Since we have $s = 2^{O(\sqrt{n/\log n})}$,} it follows from this fact that $d = \Theta(\sqrt{\log s})$ in \eqref{eq:choose-d}. 
Next, for any unit vector $z \in \mathbb{S}^{n-1}$, we define the ``slab" function 
\[
\slab_z(x) \coloneqq \Ind\left[-d \leq z \cdot x \leq d\right]. 
\]
It is clear that for any $z$ the function $\slab_z(\cdot)$ defines a centrally symmetric convex set.
Finally, we define the parameter $\Lambda$ to be
\begin{equation}~\label{eq:def-Lambda}
\Lambda := s^{100} \cdot \ln 2. 
\end{equation} 

 A function $\boldf$ is sampled from ${\cal D}_{\ideal}$  as follows:
 
 \begin{itemize}
 
 \item Sample $\bz_1, \ldots, \bz_{\boldN}$ from the spatial Poisson point process on $\mathbb{S}^{n-1}$ with parameter $\Lambda$. 
 \item Set $\boldf$ to be
\[
\boldf(x) = \bigwedge_{i=1}^{\boldN} \slab_{\bz_i}(x). 
\]

 \end{itemize}
 
We have the following observation (whose proof is immediate from the construction): 
\begin{observation}~\label{obs:lb-const-obs}
\begin{enumerate}
\item Any $\boldf \sim  {\cal D}_{\ideal}$ defines a centrally symmetric convex set. 
\item For any point $x \in \mathbb{R}^n$, the value of ${\cal D}_{\ideal}(x) \coloneqq\Pr_{\boldf \sim  {\cal D}_{\ideal}} [\boldf(x) =1]$ is determined by $\Vert x \Vert_2$, \red{the distance of $x$ from the origin.}\ignore{ (if $\Lambda$ and $d$ are fixed). }
\end{enumerate}
\end{observation}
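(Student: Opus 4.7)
The plan is to verify both claims by directly unpacking the construction; both will follow essentially from (a) the standard fact that convexity and central symmetry are preserved under intersection, and (b) the rotational invariance of the uniform measure on $\mathbb{S}^{n-1}$ (and hence of the spatial Poisson process on $\mathbb{S}^{n-1}$).

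For item (1), I would first observe that for any fixed $z \in \mathbb{S}^{n-1}$, the preimage $\slab_z^{-1}(1) = \{x \in \R^n : -d \leq z \cdot x \leq d\}$ is the intersection of the two halfspaces $\{z \cdot x \leq d\}$ and $\{-z \cdot x \leq d\}$, hence is convex, and it is clearly closed under the involution $x \mapsto -x$, hence is centrally symmetric. The set $\boldf^{-1}(1) = \bigcap_{i=1}^{\boldN} \slab_{\bz_i}^{-1}(1)$ is therefore an intersection of centrally symmetric convex sets, which is itself centrally symmetric and convex.

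For item (2), the key observation is that $\boldf(x) = 1$ if and only if none of the points $\bz_1, \ldots, \bz_{\boldN}$ lands in the ``forbidden region'' $B_x := \{z \in \mathbb{S}^{n-1} : |z \cdot x| > d\}$. By the defining property of the Poisson point process with parameter $\Lambda$, the count $\boldN(B_x)$ is distributed as $\mathsf{Poi}(\Lambda \cdot \mu(B_x))$, where $\mu(B_x)$ is the fractional surface measure of $B_x$. Hence
\[
\Pr_{\boldf \sim {\cal D}_{\ideal}}[\boldf(x) = 1] \;=\; \Pr[\boldN(B_x) = 0] \;=\; e^{-\Lambda \mu(B_x)}.
\]
It therefore suffices to show that $\mu(B_x)$ is a function of $\|x\|_2$ alone. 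For this, note that if $R$ is any orthogonal transformation of $\R^n$ then $B_{Rx} = R \cdot B_x$, and $\mu$ is invariant under orthogonal transformations. Since any two points of the same Euclidean norm are related by an orthogonal transformation, $\mu(B_x)$ depends only on $\|x\|_2$, and the claim follows.

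I do not expect any real obstacle: the whole observation is a bookkeeping exercise about the definition of the process and of the slab functions. The only step worth checking carefully is the rotational invariance in item (2), where one must be slightly careful to distinguish the surface measure $\mu$ on $\mathbb{S}^{n-1}$ (which is what enters the Poisson intensity) from the Gaussian measure on $\R^n$; but since $\mu$ is by definition the uniform probability measure on the sphere, orthogonal invariance is immediate.
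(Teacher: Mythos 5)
Your proof is correct and follows exactly the reasoning the paper implicitly relies on (the paper marks the observation as ``immediate from the construction,'' and the same ingredients --- the forbidden region $\region(x) = \{z : |z\cdot x| > d\}$, the $\mathsf{Poi}(\Lambda\,\mu(\region(x)))$ count, and rotational invariance of the Haar measure --- reappear verbatim in the proof of \Cref{lem:BO-accuracy}). Nothing to change.
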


\subsubsection{Analyzing the Bayes optimal classifier for ${\cal D}_{\ideal}$}

We now bound the advantage  of the Bayes optimal classifier (denoted by $BO_{{\cal D}_{\ideal}}$) for ${\cal D}_{\ideal}$, which, \red{as stated earlier, corresponds to the best possible learning algorithm that} makes zero queries to the unknown target function $\boldf \sim {\cal D}_{\ideal}$. 
Observe that  on input $x \in \R^n$, the classifier $BO_{{\cal D}_{\ideal}}(x)$ outputs 1 if ${\cal D}_{\ideal}(x) \geq 1/2$ and outputs 0 on $x$ if ${\cal D}_{\ideal}(x) < 1/2$.  Thus, the expected error of $BO_{{\cal D}_{\ideal}}$  is
\begin{align*}
\opt({\cal D}_{\ideal}) &:= \Ex_{\bx \sim N(0,1)^n}[\min\{{\cal D}_{\ideal}(\bx),1-{\cal D}_{\ideal}(\bx)\}],
\end{align*}
\red{and the expected advantage of $BO_{{\cal D}_{\ideal}}$ is $1/2 - \opt({\cal D}_{\ideal})$.}

The next lemma bounds $\opt({\cal D}_{\ideal})$ \red{and completes Step~1 of the proof outline given earlier:}

\begin{lemma}~\label{lem:BO-accuracy}
We have
\[
\ignore{\bigg| \opt({\cal D}_{\ideal}) - \frac12 \bigg|}
{\frac 1 2} - \opt({\cal D}_{\ideal}) = \frac{O(\log s \sqrt{\log n})}{\sqrt{n}}.  
\] 
\end{lemma}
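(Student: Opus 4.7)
The plan is to reduce the lemma to a local regularity estimate for the function $x\mapsto {\cal D}_{\ideal}(x)$ near the Gaussian concentration shell $\{\|x\|=\sqrt{n}\}$, and then combine this with chi-concentration for $\|\bx\|$.

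First, I would exploit the Poisson structure. By the defining properties of the spatial Poisson process, for any fixed $x\in\R^n$ the number of $\bz_i$ with $|\bz_i\cdot x|>d$ is distributed as $\mathsf{Poi}(\Lambda\cdot p(\|x\|))$, where
\[
p(r)\;:=\;\Prx_{\bu\sim \mathbb{S}^{n-1}}\!\bigl[|\bu\cdot v|>d/r\bigr]
\]
for any unit vector $v$ (well defined by rotational symmetry). Hence ${\cal D}_{\ideal}(x)=e^{-\Lambda p(\|x\|)}$, and the choices of $d$ in \eqref{eq:choose-d} and $\Lambda$ in \eqref{eq:def-Lambda} are precisely calibrated so that $\Lambda p(\sqrt{n})=\ln 2$, i.e.\ ${\cal D}_{\ideal}(x)=1/2$ exactly on the sphere of radius $\sqrt{n}$. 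Using the elementary identity $\min(a,1-a)=\tfrac12-|a-\tfrac12|$,
\[
\tfrac12-\mathrm{opt}({\cal D}_{\ideal})\;=\;\Ex_{\bx\sim N(0,1)^n}\!\left[\bigl|{\cal D}_{\ideal}(\bx)-\tfrac12\bigr|\right],
\]
so it suffices to bound the right-hand side.

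Second, I would establish a local Lipschitz estimate: for all $r$ with $|r-\sqrt{n}|\le \sqrt{n}/10$,
\[
\bigl|{\cal D}_{\ideal}(x)-\tfrac12\bigr|\;\le\;\bigl|\Lambda p(r)-\ln 2\bigr|\;\le\;C\cdot\frac{d^2}{\sqrt{n}}\cdot|r-\sqrt{n}|.
\]
The first inequality follows from $|e^{-a}-e^{-b}|\le|a-b|$ for $a,b\ge 0$. For the second, I would compute the derivative: using the explicit density $c_n(1-t^2)^{(n-3)/2}$ of $\bu\cdot v$ (with $c_n=\Theta(\sqrt{n})$) together with a Mills-ratio estimate for this Beta-like distribution giving $|p'(t)|\asymp (tn)\cdot p(t)$ at $t=d/\sqrt{n}$, the chain rule yields $|(\Lambda p)'(r)|=\Lambda\,|p'(d/r)|\cdot(d/r^2)\asymp \ln 2\cdot d^2/\sqrt{n}$ at $r=\sqrt{n}$, with a similar $O(d^2/\sqrt{n})$ bound uniformly over the stated range. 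Recalling $d=\Theta(\sqrt{\log s})$ from \Cref{clm:inner-product-random}, this gives $d^2/\sqrt{n}=\Theta(\log s/\sqrt{n})$.

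Finally, I would conclude via \Cref{lem:johnstone}: for a suitable constant $C'$, $\Pr_{\bx \sim N(0,1)^n}[|\|\bx\|-\sqrt{n}|\le C'\sqrt{\log n}]\ge 1-n^{-10}$. On this high-probability event the Lipschitz estimate yields $|{\cal D}_{\ideal}(\bx)-1/2|=O(d^2\sqrt{\log n}/\sqrt{n})=O(\log s\cdot\sqrt{\log n}/\sqrt{n})$; on the complement, $|{\cal D}_{\ideal}(\bx)-1/2|\le 1/2$ contributes at most $n^{-10}/2$, which is negligible. Taking expectations yields the claimed bound. I expect the derivative estimate in the second step to be the main obstacle: one needs a quantitatively sharper estimate on the inner-product distribution than the asymptotic $e^{-\Theta(t^2 n)}$ of \Cref{clm:inner-product-random}, via Mills-ratio arguments for the Beta-like density of $\bu\cdot v$.
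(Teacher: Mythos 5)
Your plan follows essentially the same route as the paper: reduce to chi-concentration for $\|\bx\|$, and a quantitative regularity estimate for the tail function $t \mapsto \Pr_{\bu}[|\bu \cdot v| > t]$ near $t = d/\sqrt n$. The paper packages the regularity estimate as a multiplicative perturbation bound (\Cref{lem:corr}, proved from the explicit formula $\Pr[\langle v, \bu\rangle \ge \alpha] \propto \int_\alpha^1 (1-z^2)^{(n-3)/2}\,dz$, which is precisely the ``Mills-ratio for the Beta-like density'' you describe); you re-derive the same fact as a Lipschitz/derivative estimate. Morally the two are interchangeable, and your observation that $d$ and $\Lambda$ are calibrated so that $\Lambda p(\sqrt n) = \ln 2$ exactly (hence ${\cal D}_{\ideal} \equiv 1/2$ on the radius-$\sqrt n$ sphere) is a clean way to state the centering that the paper leaves implicit.

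One technical overreach is worth flagging: the inequality $|\Lambda p(r) - \ln 2| \le C\,(d^2/\sqrt n)\,|r - \sqrt n|$ is \emph{not} valid over the range $|r - \sqrt n| \le \sqrt n / 10$. Writing $q(t):=\Pr[|\bu\cdot v|>t]$, the quantity $\Lambda\, q(d/r)$ swings by a factor of $s^{\Theta(1)}$ over that range (at $r \approx 1.1\sqrt n$ one finds $\Lambda q(d/r) \approx s^{17}$, far larger than $\ln 2 + O(\log s)$), so the derivative of $\Lambda p$ is nowhere near constant there. The Lipschitz constant $O(d^2/\sqrt n)$ is only uniform over the chi-concentration shell $|r - \sqrt n| = O(\sqrt{\log n})$, which is exactly the paper's $\mathrm{S}_{\mathrm{med}}$; there $d^2\cdot |r-\sqrt n|/\sqrt n = O(\log s\cdot\sqrt{\log n}/\sqrt n) = o(1)$ by the standing assumption $s = 2^{O(\sqrt{n/\log n})}$, so $q(d/r)$ stays within a $1+o(1)$ factor of $q(d/\sqrt n)$ and the derivative bound holds. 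Since you in fact only apply the estimate on that high-probability event, the argument goes through once the Lipschitz statement is restricted to the correct, narrower range.
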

\begin{proof}
Define the set $\mathrm{S}_{\mathrm{med}} = \{x \in \mathbb{R}^n: |\Vert x \Vert_2^2  -n |
 \le \red{8} \sqrt{n \red{\ln n}} \}$.
\red{ Intuitively, this is the set of points whose distance from the origin is ``roughly typical'' for the Gaussian distribution; more formally,}
by  \Cref{lem:johnstone},  we have that 
\begin{equation}~\label{eq:relevant-x}
\Prx_{\bg \sim \Nn} [\bg \red{\notin} \mathrm{S}_{\mathrm{med}}] \le \frac{1}{n^5}. 
\end{equation}

We will show that the value of ${\cal D}_{\ideal}(x)$ is close to $1/2$ for every $x \in \mathrm{S}_{\mathrm{med}}$, which easily implies the lemma. To do this, we define $\region(x)$ to be \red{the set of those unit vectors $z$ such that $x$ does \emph{not} lie within the slab defined by $z$, i.e.} 
 \[
 \region (x) := \{ z \in \mathbb{S}^{n-1}: |  z \cdot x  | > d\}. 
 \]  
Observe that the fractional density of $\region(x)$ inside $\mathbb{S}^{n-1}$, which we denote by $\mu_1(\region(x))$, is determined by $\Vert x \Vert_2$. \red{We would like to analyze $\mu_1(\region(x))$ for all points $x \in \mathrm{S}_{\mathrm{med}}$; to do this, we first analyze it for points at distance exactly $\sqrt{n}$ from the origin.}
 So choose any point $a_0 \in \mathbb{R}^n$ such that $\Vert a_0 \Vert_2=\sqrt{n}$. By the definition of $d$ in \eqref{eq:choose-d} and observing that $a_0/\sqrt{n}$ is a unit vector, we have 
\begin{equation}~\label{eq:density-root-n}
 \mu_1(\region(a_0)) = \Prx_{\bu \sim \mathbb{S}^{n-1}} \left[ {\frac {a_0}{\sqrt{n}}} \cdot \bu \ge \frac{d}{\sqrt{n}}
 \right]  = \frac{1}{s^{100}}. 
 \end{equation}
 Next, consider any $b_0 \in \mathrm{S}_{\mathrm{med}}$, and note that $\Vert b_0\Vert_2 = \sqrt{n} (1+\delta)$ where $|\delta| =O(\sqrt{\frac{\log n}{n}})$. Hence 
 \[
\mu_1(\region(b_0)) = \Prx_{\bu \sim \mathbb{S}^{n-1}} \left[ {\frac {b_0}{\sqrt{n} (1+\delta)}} \cdot \bu  \ge \frac{d}{\sqrt{n}(1+\delta)}
 \right],
\] 
\red{where ${\frac {b_0}{\sqrt{n} (1+\delta)}}$ is a unit vector.}
Recalling that we can assume \ignore{by the bound on $s$ stated at the end of that we can assume without loss of generality,} $\log s \sqrt{\log n} \le c_0 \sqrt{n}$ for a sufficiently small positive constant $c_0>0$ \red{and that $d=\Theta({\sqrt{\log s}})$,}\ignore{ -- otherwise, the conclusion of the lemma holds trivially.
However, with this,} we can apply \Cref{lem:corr} to get that
 \begin{equation}~\label{eq:density-perturbation}
\bigg|  \frac{\mu_1(\region(a_0))}{\mu_1(\region(b_0))} - 1 \bigg| = O\bigg(d^2 \cdot \frac{\sqrt{\log n}}{\sqrt{n}} \bigg) = O\bigg(\frac{\log s \sqrt{\log n}}{\sqrt{n}} \bigg). 
 \end{equation}
 From \eqref{eq:density-perturbation} and \eqref{eq:density-root-n}, we get that every $x \in \mathrm{S}_{\mathrm{med}}$ satisfies
\begin{equation}~\label{eq:bound-region}
\mu_1(\region(x)) = \frac{1}{s^{100}} \cdot \bigg( 1+ O\bigg(\frac{\log s \sqrt{\log n}}{\sqrt{n}} \bigg)\bigg). 
\end{equation}

To finish the proof, we observe that sampling $\boldf \sim {\cal D}_{\ideal}$ is equivalent to sampling $\bz_1, \ldots, \bz_{\boldN}$ from the spatial Poisson point process on $\mathbb{S}^{n-1}$ with parameter $\Lambda$.  Let $\mathbf{{Num}}_x$ 
be the random variable defined as \red{$|\{\bz_i\}_{i=1}^{\boldN} \cap \region(x)|$}. Observe that 
\begin{enumerate}
\item $\boldf(x) =1$ iff $\mathbf{{Num}}_x =0$; 
\item $\mathbf{{Num}}_x$ is distributed as $\mathsf{Poi}(\Lambda \cdot \mu(\region(x)))$. 
\end{enumerate}
Putting these two items together with \eqref{eq:bound-region} and \eqref{eq:def-Lambda}, we get that for $x \in \mathrm{S}_{\mathrm{med}}$, 
\[
\Pro_{\boldf \sim {\cal D}_{\ideal}} [\boldf(x) =1] = \Pr[\mathsf{Poi}(\Lambda \cdot \mu(\region(x))))=0] = e^{-\Lambda \cdot \mu(\region(x))} = \frac{1}{2} + O \bigg( \frac{\log s \cdot \sqrt{\log n}}{\sqrt{n}}\bigg). 
\]
Combining the above equation with \eqref{eq:relevant-x}, we get \Cref{lem:BO-accuracy}. 
\end{proof}

\subsection{Discretizing ${\cal D}_{\ideal}$ to obtain ${\cal D}_{\actual}$, and the Bayes optimal classifier for ${\cal D}_{\actual}$} 
We now discretize the distribution ${\cal D}_{\ideal}$ to construct the distribution ${\cal D}_{\actual}$. We begin by recalling some results which will be useful for this construction.

\begin{definition}
Let ${\cal X}_1$, ${\cal X}_2$ be two distributions supported on $\mathbb{R}^n$.
The \emph{Wasserstein distance} between ${\cal X}_1$ and ${\cal X}_2$, denoted by $\dwa({\cal X}_1, {\cal X}_2)$ is defined to be 
\[
\dwa ({\cal X}_1, {\cal X}_2) = \min_{{\cal Z}}\mathbf{E}_{{\cal Z}} [\Vert {\cal Z}_1 -{\cal Z}_2 \Vert_1], 
\]
where $\mathcal{Z}= (\mathcal{Z}_1, \mathcal{Z}_2)$ is a coupling of $\mathcal{X}_1$ and $\mathcal{X}_2$. 
\end{definition}
The following fundamental result is due to Dudley~\cite{dudley1969speed}:

\begin{theorem}~\label{thm:Dudley}
Let ${\cal X}$ be any compactly supported measure on $\mathbb{R}^n$. Let $\bx_1, \ldots, \bx_M$ be $M$ random samples from ${\cal X}$ and let $\bX_M$ be the resulting empirical measure. Then 
\[
\mathbf{E}[\dwa ({\cal X}, \bX_M)]  = O(M^{-1/n}). 
\]
\end{theorem}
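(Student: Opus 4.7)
The plan is to prove the bound via a dyadic partitioning argument (in the spirit of Dobrushin). After translating and rescaling, we may assume $\mathcal{X}$ is supported in the unit cube $[0,1]^n$; the final bound absorbs a multiplicative constant depending on the diameter of the original support. For each integer $k \geq 0$, partition $[0,1]^n$ into $2^{kn}$ axis-aligned cubes of side-length $2^{-k}$, and for each such cube $Q$ write $p_Q := \mathcal{X}(Q)$ and $\hat p_Q := \bX_M(Q)$.

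Using a scale-by-scale transport construction, one establishes the (deterministic) bound
\[
\dwa(\mathcal{X}, \bX_M) \;\leq\; C_n\, 2^{-K} \;+\; C_n \sum_{k=0}^{K} 2^{-k} \sum_{Q\,:\,\text{scale } k} |p_Q - \hat p_Q|,
\]
where the first term accounts for the cost of moving mass within individual scale-$K$ cubes (each having diameter $\sqrt{n}\, 2^{-K}$), and the $k$-th summand accounts for the cost of redistributing mass between sibling cubes at scale $k$ in order to match the two measures' marginals on each scale-$(k-1)$ parent cube.

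Now I would take expectations. Since $M\,\hat p_Q \sim \mathrm{Bin}(M, p_Q)$, we have $\mathbb{E}\,|p_Q - \hat p_Q| \leq \sqrt{p_Q/M}$, and by Cauchy--Schwarz (summing over the $2^{kn}$ cubes at scale $k$),
\[
\mathbb{E}\!\!\!\sum_{Q\,:\,\text{scale }k}\!\!|p_Q - \hat p_Q| \;\leq\; \sum_Q \sqrt{p_Q/M} \;\leq\; \sqrt{2^{kn}/M}.
\]
Substituting this into the deterministic bound gives
\[
\mathbb{E}[\dwa(\mathcal{X}, \bX_M)] \;\leq\; O(2^{-K}) \;+\; O(M^{-1/2}) \sum_{k=0}^{K} 2^{k(n/2 - 1)}.
\]
For $n \geq 3$ the geometric sum is dominated by its top term $2^{K(n/2-1)}$, so choosing $K$ such that $2^K \asymp M^{1/n}$ balances the two summands, and both equal $O(M^{-1/n})$, yielding the claim.

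The main technical obstacle is justifying the displayed deterministic transport bound: one builds a coupling recursively (from fine to coarse scale), at each scale rebalancing the mass imbalance between sibling cubes, and careful bookkeeping shows that the mass transported across the scale-$k$ grid is bounded by the scale-$k$ total variation $\sum_Q |p_Q - \hat p_Q|$, while the distance traveled at that scale is $O(2^{-k})$. Given this inequality, the remainder of the proof is the routine geometric-series optimization above.
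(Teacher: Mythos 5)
The paper cites this result to Dudley~\cite{dudley1969speed} and does not supply its own proof, so there is no internal argument to compare against; your proposal must be judged on its own merits, and it stands up for the parameter regime that matters. The dyadic/multiscale transport inequality you invoke is a standard and correct fact (it appears in essentially this form in modern treatments of empirical Wasserstein convergence, e.g.\ Boissard--Le Gouic and Weed--Bach), and your bookkeeping --- $\mathbb{E}|p_Q - \hat p_Q| \le \sqrt{p_Q/M}$, Cauchy--Schwarz over the $2^{kn}$ cubes at scale $k$, and the choice $2^K \asymp M^{1/n}$ --- is correct and does yield $O(M^{-1/n})$ once the geometric series is dominated by its top term.

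One caveat worth flagging: your argument, and the stated $O(M^{-1/n})$ rate, only hold as written for $n \ge 3$. For $n=2$ the geometric sum is flat, giving an extra $\log M$ factor, and for $n=1$ the sum is dominated by its \emph{bottom} term, giving rate $M^{-1/2}$, which is strictly worse than $M^{-1}$; indeed the theorem as quoted in the paper is not literally correct for $n \in \{1,2\}$ (the true rate has a logarithmic correction at $n=2$ and is $\Theta(M^{-1/2})$ at $n=1$). This does not affect the paper, which applies the theorem only for large $n$, but a careful writeup should either restrict to $n \ge 3$ or note the low-dimensional exceptions. Beyond that, the only substantive gap is the deterministic multiscale transport bound itself, which you explicitly defer; it is true and standard, but it is the crux and would need to be proved (a recursive sibling-rebalancing coupling does the job, exactly as you sketch).
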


Let $\uniform$ denote the Haar measure (i.e., the uniform measure) on $\mathbb{S}^{n-1}$. Instantiating \Cref{thm:Dudley} with $\uniform$, we get the following corollary: 
\begin{corollary}~\label{corr:unif-approx}
For any error parameter $\zeta>0$, there exists $M_{n,\zeta}$ such that for any $M \ge 
M_{n,\zeta}$, there is a distribution $\umemp$ which satisfies the following:
\begin{enumerate}
\item $\dwa(\umemp, \uniform) \le \zeta$. 
\item The distribution $\umemp$ is uniform over its $M$-element support, which we denote by $\Suppm$.
\end{enumerate}
\end{corollary}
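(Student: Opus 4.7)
The plan is to derive this corollary as a direct instantiation of Dudley's theorem (Theorem~\ref{thm:Dudley}) applied to $\uniform$. Since $\mathbb{S}^{n-1}$ is compact, the Haar measure $\uniform$ is a compactly supported probability measure on $\mathbb{R}^n$, so the hypothesis of Theorem~\ref{thm:Dudley} is satisfied. Drawing $M$ i.i.d.\ samples $\bx_1,\dots,\bx_M$ from $\uniform$ and letting $\bX_M$ denote the resulting empirical measure (which places mass $1/M$ on each $\bx_i$), we obtain $\mathbf{E}[\dwa(\uniform,\bX_M)] = O(M^{-1/n})$.

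Given $\zeta > 0$, I would define $M_{n,\zeta}$ to be any integer for which the hidden constant in $O(M^{-1/n})$ forces this expected distance to be at most $\zeta$ for all $M \geq M_{n,\zeta}$; explicitly, $M_{n,\zeta} = (C_n/\zeta)^n$ for the appropriate Dudley constant $C_n$ would do. A standard averaging argument then yields a deterministic realization $x_1,\dots,x_M \in \mathbb{S}^{n-1}$ of the samples whose empirical measure $\umemp$ satisfies $\dwa(\umemp, \uniform) \leq \zeta$, establishing item 1 of the corollary.

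For item 2, I would observe that because $\uniform$ is absolutely continuous with respect to surface measure on $\mathbb{S}^{n-1}$, the event that two of the $\bx_i$ coincide has probability zero; in particular, the event $\{\dwa(\uniform,\bX_M) \leq \zeta\} \cap \{\bx_1,\dots,\bx_M \text{ are distinct}\}$ still has positive probability for $M \geq M_{n,\zeta}$, so we can choose the realization above to additionally have all $x_i$ distinct. Then $\Suppm := \{x_1,\dots,x_M\}$ has exactly $M$ elements and $\umemp$ is uniform over $\Suppm$ by construction.

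There is essentially no technical obstacle: this is a packaging of Dudley's theorem for the sole purpose of fixing notation ($M_{n,\zeta}$, $\umemp$, $\Suppm$) that will be used later to construct ${\cal D}_{\actual}$ by discretizing ${\cal D}_{\ideal}$. The only (trivial) point worth noting is the distinctness of the support, which is handled by the continuity of $\uniform$ as sketched above.
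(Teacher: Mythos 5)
Your proof is correct and matches the paper's approach, which simply instantiates Dudley's theorem (Theorem~\ref{thm:Dudley}) with $\uniform$ and leaves the averaging and distinctness-of-support points implicit. The extra care you take with these last two points is a reasonable elaboration of what the paper treats as immediate.
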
 

We are now ready to construct the distribution ${\cal D}_{\actual}$. We  fix parameters $\zeta$, $p$ and $M$ as follows: 
\begin{equation}\label{eq:set-params}
\zeta \sqrt{\log(1/\zeta)}\coloneqq \frac{1}{\Lambda \cdot\sqrt{n}}, \quad \quad \quad M := \max \left\{ M_{n,\zeta}, \frac{\Lambda^2}{\zeta}\right\}, \quad \quad \quad p :=\frac{\Lambda}{M}. 
\end{equation}
%
%
%

\begin{definition}~\label{def:calD}
A draw of a function $\boldf \sim {\cal D}_{\actual}$ is sampled as follows: For each $z$ in $\Suppm$, define an independent Bernoulli random variable $\bW_z$ which is $1$ with probability $p$. The function $\boldf$ is
\[
\boldf(x) := \bigwedge_{z  \in \Suppm: \bW_z=1}  \slab_z(x). 
\]
Given such a $\boldf$, we define $\rel (\boldf) := \{z \in \Suppm: \bW_z=1\}$
\end{definition}
For intuition, $\rel(\boldf)$ can be viewed as the set of those elements of $\Suppm$ that are ``relevant'' to $\boldf$.
With the definition of ${\cal D}_{\actual}$ in hand,
we define   ${\cal D}_{\actual}(x)$ (analogous to ${\cal D}_{\ideal}(x)$) as follows: 
\[
{\cal D}_{\actual}(x) = \Prx_{\boldf \sim {\cal D}_{\actual}} [\boldf(x) =1]. 
\]

Similar to ${\cal D}_{\ideal}$, we now consider the Bayes optimal classifier $BO_{{\cal D}_{\actual}}(x)$, which corresponds to the output of the best zero-query learning algorithm for an unknown target function $\boldf \sim {\cal D}_{\actual}$. The expected error of $BO_{{\cal D}_{\actual}}$ is given by 
\[
\opt({\cal D}_{\actual}) := \Ex_{\bx \sim \Nn}  [\min\{{\cal{D}}_{\actual}(\bx), 1- {\cal{D}}_{\actual}(\bx)\}]. 
\]
The next lemma is the main result of this subsection and the rest of this subsection is devoted to its proof. It relates $\opt({\cal D}_{\actual})$ to $\opt({\cal D}_{\ideal})$ \red{and completes Step~2 of the outline given earlier:}

\begin{lemma}\label{lem:rel-D-ideal-D-actual}
For ${\cal D}_{\actual}$ and ${\cal D}_{\ideal}$ as defined above and parameters $\zeta$, $M$ and $p$ as set in \eqref{eq:set-params}, 
\[
|\opt({\cal D}_{\actual})  - \opt({\cal D}_{\ideal})| = O(n^{-1/2}).  
\]
\end{lemma}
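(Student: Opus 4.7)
The plan is to reduce to bounding $\Ex_{\bx\sim\Nn}|{\cal D}_{\actual}(\bx) - {\cal D}_{\ideal}(\bx)|$ and handle this comparison through an optimal transport coupling between $\uniform$ and $\umemp$. Since $t \mapsto \min(t, 1-t)$ is $1$-Lipschitz, the first step is the bound
\[
|\opt({\cal D}_{\actual}) - \opt({\cal D}_{\ideal})| \le \Ex_{\bx \sim \Nn} \bigl| {\cal D}_{\actual}(\bx) - {\cal D}_{\ideal}(\bx) \bigr|.
\]
Writing $q(x) := \uniform(\region(x))$ and $\tilde q(x) := \umemp(\region(x))$, the Poisson zero-event formula gives ${\cal D}_{\ideal}(x) = e^{-\Lambda q(x)}$, while \Cref{def:calD} directly yields ${\cal D}_{\actual}(x) = (1-p)^{M \tilde q(x)}$. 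Since $Mp = \Lambda$ and $p = \zeta/\Lambda$ is tiny, a binomial-to-Poisson approximation gives $(1-p)^{M \tilde q(x)} = e^{-\Lambda \tilde q(x)} + O(\zeta)$, so by $|e^{-a} - e^{-b}| \le |a-b|$ it suffices to show $\Lambda \Ex_\bx[|q(\bx) - \tilde q(\bx)|] = O(n^{-1/2})$.

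Next I would invoke Kantorovich--Rubinstein duality. Take an optimal coupling $(\bu, \bv)$ of $(\uniform, \umemp)$ with $\Ex \|\bu - \bv\|_1 \le \dwa(\uniform, \umemp) \le \zeta$, and write $F_x(u) := \Ind[|x \cdot u| > d]$ for the indicator of $\region(x)$. Then $q(x) - \tilde q(x) = \Ex_{(\bu,\bv)}[F_x(\bu) - F_x(\bv)]$, and since $F_x \in \{0,1\}$,
\[
|q(x) - \tilde q(x)| \le \Prx_{(\bu,\bv)}[F_x(\bu) \neq F_x(\bv)].
\]
Swapping expectations via Fubini (using independence of $\bx$ and $(\bu,\bv)$) gives
\[
\Ex_\bx[|q(\bx) - \tilde q(\bx)|] \le \Ex_{(\bu,\bv)} \Prx_{\bx \sim \Nn}[F_\bx(\bu) \neq F_\bx(\bv)].
\]

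For each pair of unit vectors $u, v$, the event $F_\bx(u) \neq F_\bx(v)$ forces $\bigl||\bx \cdot u| - d\bigr| \le |\bx \cdot (u - v)|$. Since $\bx \cdot u \sim N(0,1)$ and $\bx \cdot (u-v) \sim N(0, \|u-v\|_2^2)$, a union bound at radius $r = K \|u-v\|_2$ gives
\[
\Prx_\bx[F_\bx(u) \neq F_\bx(v)] \le O(K \phi(d)\|u-v\|_2) + O(e^{-K^2/2}),
\]
where $\phi$ is the standard Gaussian density. The definition of $d$ in \eqref{eq:choose-d}, combined with \Cref{clm:inner-product-random} and Mill's-ratio asymptotics, pin down $\phi(d) = \Theta(d/s^{100})$. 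Averaging over the coupling (using $\|u-v\|_2 \le \|u-v\|_1$) and multiplying by $\Lambda = \Theta(s^{100})$ yields
\[
\Lambda \Ex_\bx[|q(\bx) - \tilde q(\bx)|] \le O(K d \zeta) + O\bigl(\Lambda e^{-K^2/2}\bigr).
\]
Choosing $K = \Theta(\sqrt{\log s})$ sufficiently large makes the tail term $o(n^{-1/2})$. With $d = \Theta(\sqrt{\log s})$, the calibration $\zeta \sqrt{\log(1/\zeta)} = 1/(\Lambda \sqrt n)$ from \eqref{eq:set-params} (which gives $\sqrt{\log(1/\zeta)} = \Theta(\sqrt{\log s})$, since $\Lambda = s^{100}\ln 2$ and $s \ge n$) yields $K d \zeta = O(\sqrt{\log s}/(\Lambda \sqrt n)) \ll n^{-1/2}$, completing the proof.

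The main technical obstacle is translating the Wasserstein bound on $(\uniform, \umemp)$ --- which a priori controls only $L$-Lipschitz test functions with loss $L\zeta$ --- into a bound against the \emph{discontinuous} indicator $F_x$. The coupling-based argument above sidesteps the need for Lipschitz smoothing by reducing the discrepancy to the probability of crossing the slab boundary at distance $d$; the parameters $d$, $\Lambda$, $\zeta$ defined in \eqref{eq:choose-d} and \eqref{eq:set-params} are calibrated precisely so that this boundary-crossing probability, amplified by $\Lambda$, matches the desired $O(n^{-1/2})$ rate.
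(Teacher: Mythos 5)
Your proposal is correct and follows essentially the same decomposition as the paper.  The intermediate quantity $e^{-\Lambda \tilde q(x)}$ you introduce is precisely ${\cal D}_{\inter}(x)$ from \Cref{def:inter-D}, and your two main steps --- passing from $(1-p)^{M\tilde q}$ to $e^{-\Lambda \tilde q}$, then from $e^{-\Lambda\tilde q}$ to $e^{-\Lambda q}$ via the Wasserstein coupling of $\umemp$ and $\uniform$ --- mirror \Cref{clm:D-Dinter} and \Cref{clm:d-inter-d-ideal}.  The implementations differ mildly but harmlessly: you handle the Poisson-vs-binomial step by a direct Taylor expansion on $(1-p)^k$ whereas the paper goes through the statistical distance bound of \Cref{clm:Poisson-binomial}, and in the slab-crossing estimate you use a single truncation level $K = \Theta(\sqrt{\log s})$ for all coupled pairs, whereas the paper's \Cref{clm:vector-diff} optimizes the threshold per pair (at the cost of then needing concavity of $x\sqrt{\log(1/x)}$ when averaging).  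One caveat: the assertion $\phi(d) = \Theta(d/s^{100})$ overreaches what \eqref{eq:choose-d} plus \Cref{clm:inner-product-random} actually give, since the spherical tail bound $e^{-\Theta(t^2 n)}$ carries an unspecified constant in the exponent; thus the Gaussian tail at $d$ is only $s^{-\Theta(100)}$, not $\Theta(s^{-100})$.  This is harmless, though, because the crude density bound $\phi(d) \le 1$ already yields $\Lambda K \zeta = O(n^{-1/2})$ with the calibration of $\zeta$ from \eqref{eq:set-params}, so the $\phi(d)$ refinement (and the intermediate calculation built on it) can simply be deleted.
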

The proof of \Cref{lem:rel-D-ideal-D-actual} requires several claims.

\begin{claim}\label{clm:vector-diff}
Let vectors $z, z' \in \mathbb{S}^{n-1}$ satisfy $\Vert z- z'\Vert_2\le 1/3$. Then  
\[
\Prx_{\bx \sim \Nn}[\slab_z(\bx) \not = \slab_{z'}(\bx)] \le 5 \Vert z-z'\Vert_2 \sqrt{\ln \bigg(\frac{1}{\Vert z-z'\Vert_2}\bigg)}. 
\]
\end{claim}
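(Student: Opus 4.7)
The plan is to reduce this to a two-dimensional Gaussian calculation. Write $\delta := \|z - z'\|_2$, $\bg := z \cdot \bx$ and $\bg' := z' \cdot \bx$. Since $z, z'$ are unit vectors, each of $\bg, \bg'$ is a standard Gaussian, and $\bg - \bg' = (z-z') \cdot \bx$ is a mean-zero Gaussian with variance $\|z-z'\|_2^2 = \delta^2$. The slab functions $\slab_z, \slab_{z'}$ differ at $\bx$ precisely when one of $\bg, \bg'$ lies in $[-d, d]$ while the other does not, which forces $\bg$ to lie within distance $|\bg - \bg'|$ of either $+d$ or $-d$.

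Next I would split on whether $|\bg - \bg'|$ is small or large, using the threshold $T := \sqrt{2 \ln(1/\delta)}$ (which is well-defined since $\delta \le 1/3 < 1$). By the standard one-dimensional Gaussian tail bound applied to $\bg - \bg' \sim N(0,\delta^2)$,
\[
\Prx[\, |\bg - \bg'| > T \delta \,] \;\le\; 2 e^{-T^2/2} \;=\; 2\delta.
\]
On the complementary event $\{|\bg - \bg'| \le T\delta\}$, a disagreement implies
\[
\bg \;\in\; [d - T\delta,\, d + T\delta] \,\cup\, [-d - T\delta,\, -d + T\delta].
\]
Since $\bg \sim N(0,1)$ has density bounded by $1/\sqrt{2\pi}$, the probability of this event is at most $4 T \delta / \sqrt{2\pi}$.

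Putting the two pieces together and using a union bound,
\[
\Prx_{\bx \sim N(0,1)^n}[\slab_z(\bx) \neq \slab_{z'}(\bx)] \;\le\; \frac{4 T \delta}{\sqrt{2\pi}} + 2\delta \;\le\; \frac{4\sqrt{2}}{\sqrt{2\pi}}\, \delta \sqrt{\ln(1/\delta)} + 2\delta.
\]
The main (minor) obstacle is just verifying the numerical constant: for $\delta \le 1/3$ we have $\sqrt{\ln(1/\delta)} \ge \sqrt{\ln 3} > 1$, so the additive $2\delta$ term can be absorbed into $2 \delta \sqrt{\ln(1/\delta)}$, and $4\sqrt{2}/\sqrt{2\pi} < 2.26$. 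Adding the two contributions gives a constant well below $5$, yielding the claimed bound.
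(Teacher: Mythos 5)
Your proof is correct and follows essentially the same approach as the paper's: split on whether $|(z-z')\cdot\bx|$ exceeds a threshold $T\delta$, bound the large case by a Gaussian tail estimate, and in the complementary case observe that $z\cdot\bx$ must land within $T\delta$ of $\pm d$ and use the density bound for $N(0,1)$. The paper phrases this as a minimization over the threshold $t$ while you simply plug in the near-optimal choice $T=\sqrt{2\ln(1/\delta)}$, but the decomposition and both estimates are the same.
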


\begin{proof}
Define $\mathsf{Bd}_\kappa := \{ y \in \R: ||y| - d| \le \kappa\}$.\ignore{\rnote{Was ``$ \{ y \in \R: ||y| - |d|| \le \kappa\}$'' but I think there is no reason to have absolute value around $d$, right?}}
For any parameter $t>0$ and any $x \in \mathbb{R}^n$, observe that 
\begin{equation}~\label{eq:bounding-diff-vector}
\slab_z(x) \not = \slab_{z'}(x) \ \textrm{only if}  \ (| (z-z') \cdot x | \ge t \Vert z-z' \Vert_2) \ \textrm{and} \ ( z \cdot x  \in \mathsf{Bd}_{t \Vert z-z' \Vert_2}).
\end{equation}
 Let us write $\erfc(t)$ to denote $\Pr_{\bx \sim \Nn}[|\bx| \ge t]$. Recalling that $\erfc(t) \le (e^{-t^2} + e^{-2t^2})/2$ (e.g., see equation 10 in \cite{chiani2003new}), we have that 
\[
\Pro_{\bx \sim \Nn}[| (z-z') \cdot \bx | \ge t \Vert z-z' \Vert_2] \le \frac{e^{-t^2} + e^{-2t^2}}{2}. 
\]
Likewise, using the fact that the density of the standard normal is bounded by $1$ everywhere, we have that
\[
\Pro_{\bx \sim \Nn}[ z \cdot \bx \in \mathsf{Bd}_{t \Vert z-z' \Vert_2}] \le 4t \Vert z-z' \Vert_2. 
\]
Plugging the last two equations back into \eqref{eq:bounding-diff-vector}, we have that
\begin{align*}
\Pro_{\bx \sim \Nn} [\slab_z(\bx) \not = \slab_{z'}(\bx)] &\le \min_{t>0} \left\{ \frac{e^{-t^2} + e^{-2t^2}}{2} + 4t \Vert z-z' \Vert_2\right\}\\
& \le 5 \Vert z-z'\Vert_2 \sqrt{\ln \bigg(\frac{1}{\Vert z-z'\Vert_2}\bigg)}, 
\end{align*}
giving \Cref{clm:vector-diff}.
\end{proof}

The next (standard) claim relates the Poisson point process over a finite set $\mathcal{A}$ to the process of sampling each element independently (with a fixed probability) from $\mathcal{A}$. 
\begin{claim}~\label{clm:Poisson-binomial}
Let $\mathcal{A}$ be any set of size $M$ and let $\Lambda>0$. Consider the following two stochastic processes (a draw from the first process outputs a subset of $\mathcal{A}$ while a draw from the second process outputs a multiset of elements from $\mathcal{A}$): 
\begin{enumerate}
\item The process $\mind(\mathcal{A}, \Lambda)$ produces a subset  $\mathcal{B}_{b} \subseteq \mathcal{A}$ where each element $a \in \mathcal{A}$ is included independently with probability $p= \Lambda/|\mathcal{A}|$.
\item The process $\pois(\mathcal{A}, \Lambda)$ produces a multiset $\mathcal{B}_{p}$ of elements from  $\mathcal{A}$ where we first draw $\bL \sim \mathsf{Poi}(\Lambda)$ and then set $\mathcal{B}_p$ to be a multiset consisting of $\bL$ independent uniform random elements from  $\mathcal{A}$ (drawn with replacement). 
\end{enumerate}
Then the statistical distance $\Vert \mind(\mathcal{A}, \Lambda) -\pois(\mathcal{A}, \Lambda) \Vert_1$ is at most $2\Lambda^2/M$.  
\end{claim}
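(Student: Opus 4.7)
The plan is to use a coupling/Poissonization argument together with the standard Le Cam bound. Let $p = \Lambda/M$, and recall Le Cam's inequality, which states that the total variation distance between $\mathsf{Bin}(M, p)$ and $\mathsf{Poi}(Mp) = \mathsf{Poi}(\Lambda)$ is at most $Mp^2 = \Lambda^2/M$. I will use this as a black box.

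The key idea is to introduce an intermediate stochastic process $\mathcal{D}^{\star}$ which outputs a uniformly random subset of $\mathcal{A}$ of size $\bL \sim \mathsf{Poi}(\Lambda)$, drawn without replacement (for the negligible-probability event $\bL > M$, truncate to $\mathcal{A}$ itself). I will bound the distances $\|\mind(\mathcal{A}, \Lambda) - \mathcal{D}^{\star}\|_1$ and $\|\mathcal{D}^{\star} - \pois(\mathcal{A}, \Lambda)\|_1$ separately, then conclude via the triangle inequality. Throughout, I will view all outputs as multisets over $\mathcal{A}$, so that sets are the special case of multisets with multiplicities in $\{0,1\}$.

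For the first distance, observe that conditioned on the cardinality of its output being $k$, both $\mind(\mathcal{A},\Lambda)$ and $\mathcal{D}^{\star}$ return a uniformly random $k$-element subset of $\mathcal{A}$. The only difference is the law of this cardinality: it is $\mathsf{Bin}(M,p)$ for $\mind$ and $\mathsf{Poi}(\Lambda)$ for $\mathcal{D}^{\star}$. An optimal coupling of the two cardinalities, combined with the observation that matching cardinalities yield matching outputs, shows that
\[
\|\mind(\mathcal{A},\Lambda) - \mathcal{D}^{\star}\|_1 \leq \|\mathsf{Bin}(M,p) - \mathsf{Poi}(\Lambda)\|_1 \leq \Lambda^2/M
\]
by Le Cam.

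For the second distance, couple $\mathcal{D}^{\star}$ and $\pois(\mathcal{A},\Lambda)$ by using the \emph{same} draw $\bL \sim \mathsf{Poi}(\Lambda)$. Given $\bL$, the two processes differ only in that $\mathcal{D}^{\star}$ samples without replacement while $\pois$ samples with replacement; conditioned on the $\bL$ with-replacement draws being distinct (i.e.\ no collision), the resulting sets are identically distributed. By a union bound over the $\binom{\bL}{2}$ pairs of samples, the probability of at least one collision is at most $\binom{\bL}{2}/M$. Taking expectation over $\bL \sim \mathsf{Poi}(\Lambda)$ and using $\mathbb{E}[\bL(\bL-1)] = \Lambda^2$, this gives $\|\mathcal{D}^{\star} - \pois(\mathcal{A},\Lambda)\|_1 \leq \Lambda^2/(2M)$.

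Combining the two bounds by the triangle inequality yields $\|\mind(\mathcal{A},\Lambda) - \pois(\mathcal{A},\Lambda)\|_1 \leq 3\Lambda^2/(2M) \leq 2\Lambda^2/M$, which is the claimed bound. There is no significant obstacle in this argument; the only subtle point is the routine verification that sampling with replacement conditioned on no repeats is exactly uniform sampling without replacement, ensuring the coupling in the second step succeeds exactly on the no-collision event.
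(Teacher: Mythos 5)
Your proof is correct, but it takes a genuinely different route from the paper's. The paper's proof is shorter: it uses the ``thinning'' representation of a Poisson process over a finite set (drawing $\bL \sim \Poi(\Lambda)$ and scattering $\bL$ uniform points over $\mathcal{A}$ is equivalent to drawing $M$ i.i.d.\ $\Poi(p)$ occupation counts, one per element), then compares the product of $\Poi(p)$'s to the product of $\ber(p)$'s coordinate by coordinate using the single-coordinate bound $\Vert\Poi(q) - \ber(q)\Vert_1 \le 2q^2$ and subadditivity over independent coordinates. You instead pass through an intermediate distribution $\mathcal{D}^\star$ (a $\Poi(\Lambda)$-sized uniformly random \emph{set}), splitting the comparison into a ``size'' part handled by Le Cam's theorem and a ``with vs.\ without replacement'' part handled by a birthday-style collision bound. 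Both decompositions are sound; yours replaces the per-bin Poisson factorization (which some readers find less familiar) with two very standard ingredients, at the cost of being somewhat longer and requiring care with the truncation $\bL > M$ (which you handle correctly, since that event is subsumed by the collision event in Step~2 and by the disagreement event in Step~1).

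One small caveat about constants and normalization. The paper's $\Vert\cdot\Vert_1$ is the \emph{unnormalized} $L^1$ distance (i.e.\ $2\,d_{\mathrm{TV}}$), as is evident from its use of $\Vert\Poi(q) - \ber(q)\Vert_1 \le 2q^2$. Your intermediate bounds --- Le Cam's $Mp^2 = \Lambda^2/M$ for Step~1, and the coupling failure probability $\Lambda^2/(2M)$ for Step~2 --- are bounds on $d_{\mathrm{TV}}$, not on $L^1$. So strictly speaking you have shown $d_{\mathrm{TV}} \le 3\Lambda^2/(2M)$, i.e.\ $\Vert\cdot\Vert_1 \le 3\Lambda^2/M$ in the paper's normalization, which is a slightly weaker constant than the stated $2\Lambda^2/M$. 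This does not affect anything downstream (only the order $\Theta(\Lambda^2/M)$ is used, and in fact the subsequent application even over-counts by a factor of $2$), but it is worth noticing that the claimed constant would require either reading $\Vert\cdot\Vert_1$ as TV distance, or slightly sharpening one of your two steps.
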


\begin{proof}
A draw of $\mathcal{B}_p$ from $\pois(\mathcal{A}, \Lambda)$ can equivalently be generated  as follows:  for each $a \in A$, sample $\bx_a \sim \mathsf{Poi}(p)$ independently at random and then include $\bx_a$ many copies of $a$ in $\mathcal{B}_p$.
For $0 \le q \le 1$, let $\ber(q)$ denote a Bernoulli random variable with expectation $q$. Recalling that $\Vert \mathsf{Poi}(q)  - \ber(q) \Vert_1 \le 2q^2$, applying this bound to every $a \in A$ and taking a union bound, we have that 
\[
\Vert \mind(\mathcal{A}, \Lambda) -\pois(\mathcal{A}, \Lambda) \Vert_1 \le \sum_{a \in A}2 p^2  = 2 \frac{\Lambda^2}{M^2} \cdot M = \frac{2 \Lambda^2}{M}. 
\]
\end{proof}

Finally, to prove Lemma~\ref{lem:rel-D-ideal-D-actual}, we will use an intermediate distribution of functions defined as follows:

\begin{definition}~\label{def:inter-D}
For the parameter $\Lambda$ defined earlier, we define the distribution ${\cal D}_{\inter}$ as follows: to sample a draw $\boldf \sim 
{\cal D}_{\inter}$,  we 
(i) first sample $\bL \sim \mathsf{Poi}(\Lambda)$, and (ii) then sample $\bz_1, \ldots, \bz_{\bL} \sim \umemp$. The function $\boldf$ is
\[
\boldf (x) := \bigwedge_{i=1}^{\bL} \slab_{\bz_i}(x).
\]
\end{definition}

As with ${\cal D}_{\actual}$ and ${\cal D}_{\ideal}$, we define ${\cal D}_{\inter}(x)$ and $\opt({\cal D}_{\inter})$ as
\[
{\cal D}_{\inter}(x) := \Pro_{\boldf \sim {\cal D}_{\inter}} [\boldf(x) =1],
\quad \quad \quad
\opt({\cal D}_{\inter}) := \Ex_{\bx \sim \Nn}[\min\{{\cal D}_{\inter}(\bx), 1-{\cal D}_{\inter}(\bx)\} ].
\]
\ignore{
}

Now we are ready for the proof of \Cref{lem:rel-D-ideal-D-actual}:

\begin{proofof}{\Cref{lem:rel-D-ideal-D-actual}}
We begin with the following easy claim which \red{shows that ${\cal D}_{\inter}(x)$ is very close to ${\cal D}_{\actual}(x)$ for every $x$:}\ignore{relates the accuracy of the Bayes optimal estimator for $\boldf \sim {\cal D}_{\inter}$ versus for $\boldf \sim {\cal D}_{\actual}$ at any point $x$.} 

\begin{claim}~\label{clm:D-Dinter}
For any $x \in \mathbb{R}^n$, 
\[
|{\cal D}_{\inter}(x) - {\cal D}_{\actual}(x)| \leq \frac{2\Lambda^2}{M}.
\]
\end{claim}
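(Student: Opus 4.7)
The plan is to observe that for any fixed $x \in \mathbb{R}^n$, the quantity $\boldf(x)$ is determined by the \emph{underlying set} of vectors sampled from $\Suppm$ (it does not depend on multiplicities, since repeating a conjunct in an AND does not change its value). Both ${\cal D}_{\actual}$ and ${\cal D}_{\inter}$ arise by choosing such a collection of vectors according to one of the two sampling processes compared in \Cref{clm:Poisson-binomial}, namely $\mind(\Suppm, \Lambda)$ and $\pois(\Suppm, \Lambda)$ respectively. Concretely, if we define $\psi_x(\mathcal{C}) := \bigwedge_{z \in \mathcal{C}} \slab_z(x) \in \{0,1\}$ for any finite $\mathcal{C} \subseteq \Suppm$ (viewing a multiset through its support), then ${\cal D}_{\actual}(x) = \Ex_{\mathcal{B}_b \sim \mind(\Suppm, \Lambda)}[\psi_x(\mathcal{B}_b)]$ and ${\cal D}_{\inter}(x) = \Ex_{\mathcal{B}_p \sim \pois(\Suppm, \Lambda)}[\psi_x(\mathcal{B}_p)]$.

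From here the result is a direct consequence of \Cref{clm:Poisson-binomial}. Since $\psi_x$ is $\{0,1\}$-valued, the difference of its expectations under two distributions is bounded by the statistical distance between those distributions, and \Cref{clm:Poisson-binomial} (instantiated with $\mathcal{A} = \Suppm$, so $|\mathcal{A}| = M$) gives that this statistical distance is at most $2\Lambda^2/M$. Chaining these two inequalities yields $|{\cal D}_{\inter}(x) - {\cal D}_{\actual}(x)| \le 2\Lambda^2/M$, as required.

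I do not anticipate any real obstacle: the only point that needs a brief comment is the multiset-versus-set issue (so that $\psi_x$ is well-defined and coincides on the two sampling processes), but this is immediate from idempotence of AND. Beyond that, the proof is just a reduction to the already-proved \Cref{clm:Poisson-binomial}.
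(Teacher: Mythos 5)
Your proof is correct and matches the paper's proof essentially verbatim: both reduce to \Cref{clm:Poisson-binomial} by observing that a draw from ${\cal D}_{\inter}$ (resp.\ ${\cal D}_{\actual}$) is obtained by applying a fixed $\{0,1\}$-valued map to a sample from $\pois(\Suppm,\Lambda)$ (resp.\ $\mind(\Suppm,\Lambda)$), so the pointwise difference of acceptance probabilities is bounded by the statistical distance. Your explicit remark that the conjunction depends only on the underlying support (idempotence of AND handles the multiset-vs-set mismatch) is a useful clarification that the paper leaves implicit, but it is the same argument.
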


\begin{proof}
Observe that $\boldf_{\inter} \sim {\cal D}_{\inter}$ 
($\boldf_{\actual} \sim {\cal D}_{\actual}$, respectively) 
can be sampled as follows: Sample $(\bz_1, \ldots, \bz_{\bL}) \sim \pois (\Suppm, \Lambda)$ ($(\by_1, \ldots, \by_{\boldQ} \sim \mind (\Suppm, \Lambda)$, respectively), and set
\[
\boldf_{\inter} (x)= \bigwedge_{i=1}^{\bL} \slab_{\bz_i}(x) \quad \quad \quad
\text{and} \quad \quad \quad \boldf_{\actual} (x)= \bigwedge_{i=1}^{\boldQ} \slab_{\by_i}(x). 
\]
It follows from \Cref{clm:Poisson-binomial} that
$\Vert  \pois (\Suppm, \Lambda) - \mind (\Suppm, \Lambda) \Vert_1 \le 2\Lambda^2/M$ and consequently 
$\Vert {\cal D}_{\inter} - {\cal D}_{\actual} \Vert_1 \le 2\Lambda^2/M$. This implies that 
\[
|{\cal D}_{\inter}(x) - {\cal D}_{\actual}(x)| = \left|
\Pro_{\boldf_{\inter}  \sim {\cal D}_{\inter}}[ \boldf_{\actual}(x)=1] - \Pro_{\boldf_{\actual}  \sim {\cal D}_{\actual}}[ \boldf_{\actual}(x)=1]\right| \le \frac{2\Lambda^2}{M}.  
\]
\end{proof}

Next we relate the average value of ${\cal D}_{\inter}$ (for $\bx \sim N(0,1)^n$) to the average value of ${\cal D}_{\ideal}$:
\begin{claim}~\label{clm:d-inter-d-ideal}
\[
\Ex_{\bx \sim \Nn}[|{\cal D}_{\inter}(\bx) -{\cal D}_{\ideal}(\bx)|] = O(\Lambda \zeta \sqrt{\log(1/\zeta)}). 
\]
\end{claim}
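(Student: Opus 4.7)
\begin{proofof}{\Cref{clm:d-inter-d-ideal} (proposal)}
The plan is to build a natural coupling of $\boldf_{\ideal} \sim \mathcal{D}_{\ideal}$ and $\boldf_{\inter} \sim \mathcal{D}_{\inter}$ that differ only through the distribution of the normal vectors of their defining slabs, and then to propagate a Wasserstein-type closeness estimate from those normal vectors to the Boolean values of the two functions at a random Gaussian point via \Cref{clm:vector-diff}.

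Concretely, I would first use the fact (recorded in the definition of the Poisson point process) that a draw from $\mathcal{D}_{\ideal}$ can be realized by sampling $\bL \sim \mathsf{Poi}(\Lambda)$ and then drawing $\bz_1^{\ideal},\dots,\bz_{\bL}^{\ideal} \sim \uniform$ i.i.d., while a draw from $\mathcal{D}_{\inter}$ is by construction obtained by the same recipe with $\uniform$ replaced by $\umemp$ (see \Cref{def:inter-D}). I couple the two processes by taking a single $\bL \sim \mathsf{Poi}(\Lambda)$ and then, for each $i \in [\bL]$ independently, sampling a pair $(\bz_i^{\ideal}, \bz_i^{\inter})$ from the optimal $\dwa$-coupling of $\uniform$ and $\umemp$, which by \Cref{corr:unif-approx} satisfies $\E[\|\bz_i^{\ideal} - \bz_i^{\inter}\|_1] \le \zeta$, hence (using $\|\cdot\|_2 \le \|\cdot\|_1$) $\E[\|\bz_i^{\ideal} - \bz_i^{\inter}\|_2] \le \zeta$. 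Under this coupling, the event $\boldf_{\ideal}(x) \neq \boldf_{\inter}(x)$ forces $\slab_{\bz_i^{\ideal}}(x) \neq \slab_{\bz_i^{\inter}}(x)$ for some $i \le \bL$.

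Now, starting from
\[
\E_{\bx \sim \Nn}\bigl[\,|\mathcal{D}_{\inter}(\bx) - \mathcal{D}_{\ideal}(\bx)|\,\bigr] \;\le\; \Pr_{\text{coupling},\,\bx}[\boldf_{\inter}(\bx) \neq \boldf_{\ideal}(\bx)],
\]
a union bound over the (random) index $i$ followed by Wald's identity (using independence of $\bL$ from the $(\bz_i^{\ideal},\bz_i^{\inter})$'s and of all these from $\bx$) yields the single-slab reduction
\[
\Pr_{\text{coupling},\,\bx}[\boldf_{\inter}(\bx) \neq \boldf_{\ideal}(\bx)] \;\le\; \Lambda \cdot \E_{(\bz^{\ideal},\bz^{\inter})}\!\Bigl[\Prx_{\bx \sim \Nn}[\slab_{\bz^{\ideal}}(\bx) \neq \slab_{\bz^{\inter}}(\bx)]\Bigr].
\]
It then remains to bound the per-slab expectation by $O(\zeta \sqrt{\log(1/\zeta)})$. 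By \Cref{clm:vector-diff}, on the event $\{\|\bz^{\ideal}-\bz^{\inter}\|_2 \le 1/3\}$ the inner probability is at most $f(\|\bz^{\ideal}-\bz^{\inter}\|_2)$ where $f(t) := 5t\sqrt{\log(1/t)}$; on the complementary event we use the trivial bound $1$, whose contribution is controlled by Markov: $\Pr[\|\bz^{\ideal}-\bz^{\inter}\|_2 > 1/3] \le 3\zeta$. Since $f$ is concave and increasing on a neighborhood of $0$ (straightforward from $f''(t) = -\tfrac{1}{2t\sqrt{-\log t}} - \tfrac{1}{4t(-\log t)^{3/2}} < 0$), Jensen's inequality gives $\E[f(\|\bz^{\ideal}-\bz^{\inter}\|_2)] \le f(\E[\|\bz^{\ideal}-\bz^{\inter}\|_2]) \le f(\zeta) = 5\zeta\sqrt{\log(1/\zeta)}$, and combining with the $3\zeta$ Markov term yields the claim.

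The main thing to be careful about is the interplay between the ``small-distance'' and ``large-distance'' regimes for $\|\bz^{\ideal}-\bz^{\inter}\|_2$: \Cref{clm:vector-diff} is only meaningful for small distances, while Jensen's inequality wants a single concave upper envelope. The cleanest way around this, which I outlined above, is to split the expectation on $\{\|\cdot\|_2 \le 1/3\}$ versus its complement and handle the two parts independently, using Markov on the tail. All other steps are standard coupling and union-bound manipulations.
\end{proofof}
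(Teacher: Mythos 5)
Your proposal is correct and matches the paper's own argument: the same Poisson-thinning coupling between $\mathcal{D}_{\ideal}$ and $\mathcal{D}_{\inter}$ built from the optimal $\dwa$-coupling of $\uniform$ and $\umemp$, followed by the same union-bound/Wald reduction to a per-slab disagreement probability, \Cref{clm:vector-diff}, and concavity of $t\sqrt{\log(1/t)}$ via Jensen. If anything you are slightly more careful than the paper in explicitly splitting off the event $\|\bz^{\ideal}-\bz^{\inter}\|_2 > 1/3$ (where \Cref{clm:vector-diff} does not apply) and controlling it with Markov, a refinement the paper elides.
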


\begin{proof} 
Recall that by \Cref{corr:unif-approx} there exists a coupling $\bZ=(\bz_1, \bz_2)$ between $\umemp$ and $\uniform$ such that $\Ex [\Vert \bz_1 - \bz_2 \Vert_1] \le \zeta$.
We consider the following coupling between ${\cal D}_{\inter}$ and ${\cal D}_{\ideal}$: 
\begin{enumerate}
\item Sample $\bL \sim \mathsf{Poi}(\Lambda)$. 
\item Sample $\{(\bz_1^{(j)}, \bz_2^{(j)})\}_{1 \le j \le \bL}$ independently from $\bZ^{\bL}$. 
\item Define 
\[
\boldf_{\itt}(x) = \bigwedge_{j=1}^{\bL} \slab_{\bz_1^{(j)}}(x)
\quad \quad \quad \text{and} \quad \quad \quad \boldf_{\id}(x) = \bigwedge_{j=1}^{\bL} \slab_{\bz_2^{(j)}}(x).
\]
\end{enumerate}
Observe that $\boldf_{\itt}$ follows the distribution ${\cal D}_{\inter}$ and $\boldf_{\id}$ follows the distribution ${\cal D}_{\ideal}$. Thus, the process above indeed describes a coupling between ${\cal D}_{\inter}$ and ${\cal D}_{\ideal}$. We consequently have 
\begin{align}
|\opt({\cal D}_{\ideal})  - \opt({\cal D}_{\inter})|  &\le  \mathop{\mathbf{E}}_{\bx \sim \Nn} \left[\left| \Pro_{\boldf_{\id}}[\boldf_{\id}(\bx)=1] -\Pro_{\boldf_{\itt}}[\boldf_{\itt}(\bx)=1] \right|
\right] \nonumber\\
&= \mathop{\mathbf{E}}_{\bx \sim \Nn} \left[\left| \mathop{\mathbf{E}}_{\bL \sim \mathsf{Poi}(\Lambda)} \mathop{\mathbf{E}}_{\bZ^{\bL}} \left[ \mathop{\wedge}_{i=1}^{\bL} \slab_{\bz_1^{(i)}}(\bx) - \mathop{\wedge}_{i=1}^{\bL} \slab_{\bz_2^{(i)}}(\bx) \right] \right| \right] \nonumber \\ 
&\le \mathop{\mathbf{E}}_{\bx \sim \Nn} \mathop{\mathbf{E}}_{\bL \sim \mathsf{Poi}(\Lambda)}\left[\left|  \mathop{\mathbf{E}}_{\bZ^{\bL}} \left[ \mathop{\wedge}_{i=1}^{\bL} \slab_{\bz_1^{(i)}}(\bx) - \mathop{\wedge}_{i=1}^{\bL} \slab_{\bz_2^{(i)}}(\bx) \right] \right| \right] \nonumber \\ 
&\le \mathop{\mathbf{E}}_{\bx \sim \Nn} \mathop{\mathbf{E}}_{\bL \sim \mathsf{Poi}(\Lambda)}\left[\left|  \mathop{\mathbf{E}}_{\bZ^{\bL}} \left[ \mathop{\sum}_{i=1}^{\bL} \slab_{\bz_1^{(i)}}(\bx) - \mathop{\sum}_{i=1}^{\bL} \slab_{\bz_2^{(i)}}(\bx) \right] \right| \right] \nonumber \\ 
&\le \mathop{\mathbf{E}}_{\bL \sim \mathsf{Poi}(\Lambda)} \mathop{\mathbf{E}}_{\bZ^{\bL}} \sum_{i=1}^{\bL} \bigg(\mathop{\mathbf{E}}_{\bx \sim \Nn} \left[\left| \slab_{\bz_1^{(i)}}(\bx)-\slab_{\bz_2^{(i)}}(\bx)\right|\right] \bigg). \label{eq:last-modify1}
\end{align}
Now, by \Cref{clm:vector-diff}, we have that
\[
\bigg(\mathop{\mathbf{E}}_{\bx \sim \Nn} \left[\left| \slab_{\bz_1^{(i)}}(\bx)-\slab_{\bz_2^{(i)}}(\bx)\right|\right] \bigg) \le 5 \Vert \bz_1^{(i)}- \bz_2^{(i)} \Vert_2 \sqrt{\log\bigg(\frac{1}{\Vert \bz_1^{(i)}- \bz_2^{(i)} \Vert_2}\bigg) }. 
\]
Plugging this back into \eqref{eq:last-modify1}, we have that
\begin{eqnarray}
|\opt({\cal D}_{\ideal})  - \opt({\cal D}_{\inter})| &\leq& \mathop{\mathbf{E}}_{\bL \sim \mathsf{Poi}(\Lambda)} \mathop{\mathbf{E}}_{\bZ^{\bL}} \sum_{i=1}^{\bL} \left[5 \Vert \bz_1^{(i)}- \bz_2^{(i)} \Vert_2 \sqrt{\log\bigg(\frac{1}{\Vert \bz_1^{(i)}- \bz_2^{(i)} \Vert_2}\bigg) } \right] \nonumber \\
 &\leq& \mathop{\mathbf{E}}_{\bL \sim \mathsf{Poi}(\Lambda)}  \sum_{i=1}^{\bL} \big [5 \cdot \zeta \sqrt{\log(1/\zeta)} \big] \nonumber \\
 &\leq& 5 \Lambda \zeta \sqrt{\log(1/\zeta)},
\end{eqnarray}
where the penultimate inequality used $\Ex [\Vert \bz_1 - \bz_2 \Vert_1] \le \zeta$ and the concavity of the function $x \sqrt{\log(1/x)}$.
\end{proof}
\Cref{lem:rel-D-ideal-D-actual} follows from \Cref{clm:D-Dinter} and \Cref{clm:d-inter-d-ideal}, recalling the values of the parameters set in \eqref{eq:set-params}.
\end{proofof}


\ignore{

}

\subsection{Analyzing query algorithms}  \label{sec:queries}
\Cref{lem:rel-D-ideal-D-actual} and \Cref{lem:BO-accuracy} together imply a bound on  the accuracy of the Bayes optimal classifier for ${\cal D}_{\actual}$ when the algorithm makes zero queries to the target function $\boldf \sim {\cal D}_{\actual}$.
To analyze the effect of queries, it will be useful to first consider an alternate combinatorial formulation of  ${\cal D}_{\actual}(x)$.  
For any point $x \in \mathbb{S}^n$, define $\sac (x) = \{ z \in  \sac: \slab_z(x)=0\}$.\ignore{ and $\# \sac(x) = |\sac (x)|$. }
By definition of ${\cal D}_{\actual}$, we have that for any $x \in \R^n$,
\begin{equation} \label{eq:dactual-useful}
\Pro_{\boldf \sim {\cal D}_{\actual}} [ \boldf(x) =1] = (1-p)^{|\sac (x)|}. 
\end{equation}

Restated in these terms, \Cref{lem:BO-accuracy} and \Cref{lem:rel-D-ideal-D-actual} give us that
\begin{equation}~\label{eq:relate-S(x)}
\Ex_{\bx \sim \Nn} \big[\big| (1-p)^{| S_{\actual}(\bx)|} - 1/2 \big|\big] = O\bigg( \frac{\log s \cdot \sqrt{\log n}}{\sqrt{n}}\bigg)
\end{equation}

We return to our overall goal of analyzing the Bayes optimal classifier when the learning algorithm makes at most $s$ queries to the unknown target $\boldf$. While the actual MQ oracle, when invoked on $x \in \R^n$, returns the binary value of $\boldf(x)$, for the purposes of our analysis we consider an augmented oracle which provides more information and is described below. 

\subsubsection{An augmented oracle, and analyzing learning algorithms that use this oracle}

Similar to \cite{BBL:98}, to keep the analysis as clean as possible it is helpful for us to consider an augmented version of the MQ oracle. 
(Note that this is in the context of ${\cal D}_{\actual}$, so the set $S_{\actual}$ is involved in what follows.) 
Fix an ordering of the elements in $S_{\actual}$, and let $f$ be a function in the support of ${\cal D}_{\actual}$. Recalling the definition of $\rel (f)$  from \Cref{def:calD}, we observe that for any point $x \in \mathbb{R}^n$, 
\[
f(x) =1 \  \textrm{  if and only if  } \ S_{\actual}(x) \cap \rel (f) = \emptyset. 
\]
This motivates the definition of our ``augmented oracle" for $f$. Namely, 
\begin{enumerate}
\item On input $x$, if $f(x)=1$ then the oracle returns $1$ (thereby indicating that $S_{\actual}(x) \cap \rel (f) = \emptyset$). 
\item On input $x$, if $f(x)=0$ then the oracle returns the first $z \in S_{\actual}$ (according to the above-described ordering on $S_{\actual}$) for which $z \in S_{\actual}(x) \cap \rel (f)$.\footnote{We note that the need to define this  ``first $z$" is the main reason that we do not work with ${\cal D}_{\ideal}$ directly and instead discretized it to obtain ${\cal D}_{\actual}$.} 
\end{enumerate}
It is clear that on any query string $x$, the augmented oracle for $f$ provides at least as much information as the standard oracle for $f$. Thus, it suffices to prove a query lower bound for learning algorithms which have access to this augmented oracle.  


 At any point in the execution of the $s$-query learning algorithm, let $X$ represent the list of query-answer pairs that have been received thus far from this augmented oracle.  Let ${\cal D}_{\actual, X}$ denote the conditional distribution of $\boldf \sim {\cal D}_{\actual}$ conditioned on the query-answer list given by $X$. 
 As  in \cite{BBL:98}, the distribution ${\cal D}_{\actual, X}$ is quite clean and easy to describe. To do so, consider a vector $V_X$ whose entries are indexed by the elements of $S_{\actual}$. For $z \in S_{\actual}$, we define ${V}_{X}(z)$ as 
 \[
 {V}_{X}(z) := \Pro_{\boldf \sim {\cal D}_{\actual, X}} [z \in \rel (\boldf)]. 
 \]
Let us also define the Bernoulli random variables $\{{\bW}_{X,z}\}_{z \in S_{\actual}}$, where ${\bW}_{X,z}$ is $1$ if $z \in \rel(\boldf)$ for $\boldf \sim {\cal D}_{\actual, X}$. 

We begin by making the following observation:

\begin{claim}
When $X$ is the empty list (i.e. when zero queries have been made), each $V_X(z)$ is equal to $p$, and the Bernoulli random variables $\{{\bW}_{X,z}\}_{z \in S_{\actual}}$ are mutually independent. 
\end{claim}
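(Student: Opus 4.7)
My plan is to observe that this claim is essentially immediate from the construction of ${\cal D}_{\actual}$ in \Cref{def:calD}, so the ``proof'' really just amounts to unpacking the definitions.

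First I would note that when $X$ is the empty list, no conditioning has been performed, so the conditional distribution ${\cal D}_{\actual, X}$ coincides with ${\cal D}_{\actual}$ itself. Consequently, for each $z \in S_{\actual}$, the Bernoulli random variable $\bW_{X,z}$ (defined as the indicator of the event $z \in \rel(\boldf)$ for $\boldf \sim {\cal D}_{\actual, X}$) has the same distribution as the Bernoulli $\bW_z$ from \Cref{def:calD}.

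Next I would recall that, by \Cref{def:calD}, the set $\rel(\boldf)$ is precisely $\{z \in S_{\actual} : \bW_z = 1\}$, where the $\{\bW_z\}_{z \in S_{\actual}}$ are \emph{mutually independent} Bernoulli random variables each with success probability $p$. Thus for any $z \in S_{\actual}$,
\[
V_X(z) = \Prx_{\boldf \sim {\cal D}_{\actual}}[z \in \rel(\boldf)] = \Pr[\bW_z = 1] = p,
\]
and the joint distribution of $\{\bW_{X,z}\}_{z \in S_{\actual}}$ equals that of $\{\bW_z\}_{z \in S_{\actual}}$, which is a product of independent Bernoullis. This gives both parts of the claim.

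Since this is a direct restatement of the construction, I do not expect any obstacle here; the real work comes in the subsequent claims where one has to track how conditioning on query-answer pairs from the augmented oracle updates the $V_X(z)$ values and preserves (a suitable form of) independence of the $\bW_{X,z}$. This initial claim serves as the base case for that inductive analysis.
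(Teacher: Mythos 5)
Your proposal is correct and matches the paper's treatment: the paper states this claim as an immediate observation with no proof, and your unpacking of \Cref{def:calD} (no conditioning means ${\cal D}_{\actual,X}={\cal D}_{\actual}$, so the $\bW_{X,z}$ are exactly the independent $\mathsf{Bern}(p)$ variables $\bW_z$ used to build $\boldf$) is precisely the justification the authors have in mind.
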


Let us consider what happens when the ``current'' query-answer list $X$ is extended with a new query $x$. We can view the augmented oracle as operating as follows: it proceeds over each entry $z$ in $S_{\actual}(x)$ (according to the specified ordering), and: 
\begin{enumerate}
\item If $V_X(z) =0$, this means that the query-answer pairs already in $X$ imply that $z \not \in \rel(\boldf)$.  Then the augmented oracle proceeds to the next $z$. 
\item If $V_X(z)=1$, this means that the query-answer pairs already in $X$ imply that $z  \in \rel(\boldf)$. In this case, the oracle stops and returns $z$ (recall that this is a vector in $\R^n$, specifically an element of $S_{\actual}$) to the algorithm. Note that this $z$ is the first $z \in S_{\actual}$ (in order) such that $\slab_z(x)=0$. 
\item Finally, if $V_X(z)=p$, then the oracle fixes $\bW_{X,z}$ to $1$ with probability $p$ and to $0$ with probability $1-p$. (Recall that the random variable $\bW_{X,z}$ corresponds to the event that $z \in \rel(\boldf)$.) If $\bW_{X,z}$ is fixed to 0 then the oracle moves on to the next $z$, and if it is fixed to $1$ then the oracle stops and returns 
$z$. As in the previous case, this is then the first $z$ in $S_{\actual}$ such that $\slab_z(x)=0$. 
\end{enumerate}

Finally, we augment $X$ with the query $x$ and the above-defined response from the oracle. Based on the above description of the oracle, it is easy to see that the following holds:
\begin{claim}~\label{clm:conditional-independence}
For any $X$, each entry of $V_X(z)$ is either $0$, $1$ or $p$. Further, for any $X$, the random variables $\bW_{X,z}$ are mutually independent. Consequently, we can sample $\boldf \sim {\cal D}_{\actual, X}$ as 
\[
\boldf(x)  =  \mathop{\bigwedge}_{z \in S_{\actual}: \bW_{X,z}=1} \slab_z(x).  
\]
\end{claim}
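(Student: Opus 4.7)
The plan is to prove the claim by induction on $k = |X|$, the number of query-answer pairs in $X$. The base case $k=0$ is immediate from \Cref{def:calD}: the random variables $\{\bW_z\}_{z \in S_{\actual}}$ are mutually independent $\ber(p)$'s, so each $V_X(z) = p$, and the sampling description of $\boldf \sim {\cal D}_{\actual}$ coincides with the one given in \Cref{def:calD}.

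For the inductive step, suppose the claim holds after $k$ queries with current query-answer list $X$, and consider extending $X$ by one new query $x$ and its augmented oracle response. By inspecting the three cases in the oracle description preceding the claim, the new query-answer pair conveys only the following information about the latent variables $\{\bW_{X,z}\}_{z \in S_{\actual}}$: if the oracle returns $1$, then $\bW_{X,z} = 0$ for every $z \in S_{\actual}(x)$; if the oracle returns some $z^\ast \in S_{\actual}(x)$, then $\bW_{X,z} = 0$ for every $z \in S_{\actual}(x)$ preceding $z^\ast$ in the ordering, and $\bW_{X,z^\ast} = 1$. In either case, the event being conditioned on depends only on a fixed finite subset $T \subseteq S_{\actual}(x)$ of the indices, and the conditioning sets each $\bW_{X,z}$ with $z \in T$ to a deterministic value in $\{0,1\}$.

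By the inductive hypothesis, the $\{\bW_{X,z}\}_{z \in S_{\actual}}$ are mutually independent prior to this conditioning, so fixing a subset of them to deterministic values leaves the remaining variables mutually independent with their marginal distributions unchanged. Updating $V_X$ entry-wise accordingly, the new vector $V_{X'}$ still has entries only in $\{0, 1, p\}$ (values in $T$ become $0$ or $1$; others are unchanged), and the new family $\{\bW_{X', z}\}_{z \in S_{\actual}}$ is mutually independent. Finally, the AND representation of $\boldf$ is preserved because $\boldf(y) = 1$ iff $S_{\actual}(y) \cap \rel(\boldf) = \emptyset$ iff $\bW_{X', z} = 0$ for every $z \in S_{\actual}(y)$, which is exactly $\bigwedge_{z: \bW_{X',z}=1} \slab_z(y) = 1$; this encodes the conditional distribution ${\cal D}_{\actual, X'}$ by construction.

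The one subtle point I anticipate having to handle carefully is the justification that conditioning on the \emph{augmented} oracle's response (as opposed to just $\boldf(x)$) really does reduce to fixing a deterministic set of $\bW_{X,z}$'s. This uses the specific choice of ``first $z$ in the ordering'' in the oracle's definition: the event ``oracle returns $z^\ast$'' is precisely $\{\bW_{X,z} = 0 \text{ for all } z \in S_{\actual}(x) \text{ preceding } z^\ast\} \cap \{\bW_{X,z^\ast} = 1\}$, which is a cylinder event on a finite index set $T$, and independence of this event from the remaining $\bW_{X,z}$ with $z \notin T$ is exactly what permits the clean update. Everything else in the proof is bookkeeping.
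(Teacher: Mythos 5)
Your proof is correct and fills in, via a careful induction on the number of queries, what the paper treats as immediate from the oracle description (the paper simply states the claim with the remark that ``it is easy to see'' it holds, giving no explicit proof). Your identification of the ``subtle point'' is exactly right: the deterministic ``first $z$ in the ordering'' convention is what guarantees the oracle's response is equivalent to conditioning on a cylinder event over a finite index set $T \subseteq S_{\actual}(x)$, so that by mutual independence of $\{\bW_{X,z}\}$ the remaining variables are undisturbed. One small remark on exposition: in your description of the case where the oracle returns $z^\ast$, the set $T$ of indices whose values are newly pinned down consists only of those $z$ preceding (or equal to) $z^\ast$ in $S_{\actual}(x)$ with $V_X(z) = p$; the preceding $z$'s with $V_X(z) = 0$ were already known to be zero and contribute no new information (and no preceding $z$ can have $V_X(z)=1$, else the oracle would have stopped there). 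This does not affect your argument since conditioning on a probability-one event is a no-op, but making the distinction explicit would tighten the write-up.
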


 Next, we have the following two claims (which correspond respectively to Claim~1 and Claim~2 of \cite{BBL:98}:
 \begin{claim}~\label{clm:s-queries-bound1}
 If the learning algorithm makes $s$ queries, then the number of entries in $V_X(\cdot)$ which are set to $1$ is at most $s$. 
 \end{claim}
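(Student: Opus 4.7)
My plan is to prove this by induction on the number of queries, tracking how each query can change the entries of $V_X$.

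First, I would observe the base case: before any queries have been made, $X$ is the empty list, and by the earlier claim every entry $V_X(z)$ equals $p$. So the number of entries equal to $1$ is $0 \leq 0$.

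For the inductive step, I would analyze precisely what happens when a fresh query $x$ is appended to the current list $X$, extending it to a new list $X'$. Following the description of the augmented oracle, it walks through $z \in S_{\actual}(x)$ in the fixed order, touching only those $z$ with $V_X(z) \in \{p\}$ (entries with $V_X(z) = 0$ are skipped, and an entry with $V_X(z) = 1$ causes the oracle to halt immediately and return $z$). For each $z$ with $V_X(z) = p$ that the oracle visits, it fixes $\bW_{X',z}$: if the coin comes up $0$ then $V_{X'}(z)$ is set to $0$ (creating no new $1$-entry), and if the coin comes up $1$ then $V_{X'}(z)$ is set to $1$ and the oracle stops. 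The only remaining case is that the oracle finishes its walk without setting any coin to $1$, in which case $\boldf(x) = 1$ is returned and again no entry changes from $p$ to $1$. In every scenario, at most one entry of $V$ transitions from $p$ to $1$ across this single query, and no entry transitions from $0$ or $p$ back up other than this. Entries that were already $1$ in $V_X$ remain $1$ in $V_{X'}$.

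Combining these observations: letting $N_i$ denote the number of entries of $V$ equal to $1$ after $i$ queries, we have $N_0 = 0$ and $N_{i+1} \leq N_i + 1$ for each $i$. Hence after $s$ queries $N_s \leq s$, which is the claim. The argument is essentially bookkeeping and I do not expect any obstacle; the only point requiring care is to verify that the three cases enumerated above for the oracle's behavior on a query genuinely exhaust all possibilities, which follows directly from the definition of the augmented oracle and Claim~\ref{clm:conditional-independence}.
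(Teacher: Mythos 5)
Your proof is correct and matches the reasoning the paper implicitly relies on: the paper simply declares the claim ``immediate from the above description of the oracle,'' and the content of that immediacy is exactly your observation that each query causes at most one entry of $V_X$ to transition from $p$ to $1$. Your inductive bookkeeping is a fully spelled-out version of the same argument.
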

\Cref{clm:s-queries-bound1} is immediate from the above description of the oracle. The next claim is also fairly straightforward:
 \begin{claim}~\label{clm:s-queries-bound2}
 If the learning algorithm makes $s$ queries, then 
 with probability at least $1-e^{-\frac{s}{4}}$, the number of zero entries in $V_X$ is bounded by $2s/p$. 
 \end{claim}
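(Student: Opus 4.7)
The plan is to observe that each query contributes new zero entries to $V_X$ in a controlled (geometric) way, and then reduce the desired tail bound to a standard Chernoff bound on a Binomial random variable.

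First I would argue that per query, the number of \emph{newly} created zero entries is stochastically dominated by a geometric random variable with parameter $p$. Concretely, fix a query $x_i$ and recall the three-step procedure by which the augmented oracle processes the entries of $S_{\actual}(x_i)$ in order. Entries with $V_X(z)$ already equal to $0$ or $1$ contribute no new zeros; entries with $V_X(z) = p$ are flipped fresh as Bernoulli$(p)$ coins, the processing halts the moment the first coin lands on $1$ (in which case one new $1$ is created and the preceding fresh flips contribute the new zeros for this query), or else the query exhausts $S_{\actual}(x_i)$ without ever landing on $1$. In either case, the number of new zero entries produced by query $i$ is at most the number of fresh Bernoulli$(p)$ flips performed, which is in turn at most $\bG_i$, the waiting time until the first success of an i.i.d. Bernoulli$(p)$ sequence. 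By \Cref{clm:conditional-independence}, across queries the fresh coin flips are independent of all past randomness, so we may couple so that $\bG_1,\ldots,\bG_s$ are i.i.d.\ copies of $\mathrm{Geom}(p)$.

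Next, let $\bZ$ denote the total number of zero entries in $V_X$ after all $s$ queries. By the previous paragraph, $\bZ$ is stochastically dominated by $\sum_{i=1}^s \bG_i$. I would then use the standard duality between sums of geometrics and binomial counts:
\[
\Pr\!\left[\sum_{i=1}^s \bG_i > \frac{2s}{p}\right] \;=\; \Pr\!\left[\bB < s\right],
\]
where $\bB \sim \mathrm{Binomial}(N,p)$ with $N := \lceil 2s/p \rceil$; the point is that $\sum \bG_i > N$ exactly when fewer than $s$ successes have occurred among the first $N$ Bernoulli$(p)$ trials. Since $\mathbb{E}[\bB] = Np \geq 2s$, a multiplicative Chernoff bound gives
\[
\Pr[\bB < s] \;\leq\; \Pr\!\left[\bB < \tfrac{1}{2}\,\mathbb{E}[\bB]\right] \;\leq\; \exp\!\left(-\tfrac{1}{2}\cdot\left(\tfrac{1}{2}\right)^{2}\cdot 2s\right) \;=\; e^{-s/4},
\]
which combined with stochastic dominance yields $\Pr[\bZ > 2s/p] \leq e^{-s/4}$, establishing the claim.

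The argument is essentially routine; the only mildly subtle step is justifying the stochastic dominance, since the oracle's behavior depends on the algorithm's (possibly adaptive) choice of queries and on which entries of $V_X$ have already been fixed to $0$ or $1$. This is handled cleanly by \Cref{clm:conditional-independence}, which guarantees that the conditional distribution of the remaining $\bW_{X,z}$'s (for $z$ with $V_X(z)=p$) is always i.i.d.\ Bernoulli$(p)$; thus every fresh coin flip the oracle performs is an independent Bernoulli$(p)$ regardless of the adversary's strategy, which is exactly what is needed to couple the new-zero counts with i.i.d.\ geometrics. After that, the Chernoff bound is entirely standard.
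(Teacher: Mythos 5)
Your proposal is correct and tracks the paper's argument closely: both proofs establish stochastic dominance of the number of zero entries by a sum of $s$ i.i.d.~$\mathrm{Geom}(p)$ random variables in exactly the same way, using \Cref{clm:conditional-independence} to argue that the fresh coins the oracle flips are independent Bernoulli$(p)$'s regardless of the algorithm's adaptivity. The only divergence is in the final tail estimate. The paper invokes a ready-made concentration inequality for sums of geometric random variables (\Cref{thm:tail-bound-geometric}, citing \cite{janson2018tail}) and plugs in $\lambda=2$ to obtain $e^{-s(1-\ln 2)}\le e^{-s/4}$; you instead use the standard duality identity $\Pr\bigl[\sum_i \bG_i > N\bigr] = \Pr[\bB < s]$ for $\bB\sim\mathrm{Binomial}(N,p)$ and then apply a multiplicative Chernoff bound to $\bB$. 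Your route is marginally more self-contained since it only needs the familiar binomial Chernoff bound, at the cost of an extra translation step; the paper's is a one-liner if you already have the geometric tail bound on hand. Both give the stated $e^{-s/4}$, and the reductions are interchangeable.
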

 \begin{proof}
Given any $X$, on a new query $x$ the oracle iterates over all $z \in S_{\actual}(x)$ and sets $V_X(z)$ to $0$ with probability $1-p$ and $1$ with probability $p$, stopping this process as soon as (a) it sets the first $1$, or (b) it has finished iterating over all $z \in S_{\actual}(x)$, or (c) the current $V_X(z)$ was already set to $1$ in a previous round.
 
Thus, given any $X$, the number of new zeros added to $V_X$ on a new query $x$ is 
 stochastically dominated by $\mathsf{Geom}(p)$, the geometric random variable with parameter $p$. It follows that the (random variable corresponding to the) total number of zeros in $V_X$ is stochastically dominated by a sum of $s$ independent variables, each following $\mathsf{Geom}(p)$. 
We now recall the following standard tail bound for sums of geometric random variables \cite{janson2018tail}:

\begin{theorem}~\label{thm:tail-bound-geometric}
Let $\BR_1$, $\ldots$, $\BR_s$ be independent $\mathsf{Geom}(p)$ random variables.  For $\lambda \ge 1$, 
\[
\Pr\big[\BR_1 + \ldots + \BR_s \ge \frac{\lambda s}{p}\big]  \le e^{-s(\lambda -1 - \ln \lambda)}. 
 \]
\end{theorem}
Substituting $\lambda = 2$, we get that the number of zeros in $V_X$ is bounded by $2s/p$ with probability at least $1- e^{-s/4}$. This finishes the proof. 
 \end{proof}
 
\subsection{Proof of~\Cref{thm:our-BBL-lb}}
All the pieces are now in place for us to finish our proof of \Cref{thm:our-BBL-lb}.
 \red{The high-level idea is that thanks to \Cref{clm:s-queries-bound1} and \Cref{clm:s-queries-bound2}, the distribution ${\cal D}_{\actual,X}$ cannot be too different from ${\cal D}_{\actual}$ as far as the accuracy of the Bayes optimal classifier is concerned; this, together with \Cref{lem:rel-D-ideal-D-actual} and \Cref{lem:BO-accuracy}, gives the desired result.}
 
 Let $\mathcal{E}$ be the event (defined on the space of all possible outcomes of $X$, the list of at most $s$ query-answer pairs)
 that the number of zero entries \red{in $V_X$} is at most $2s/p$. Observe that $\Pr[\overline{\mathcal{E}}] \le e^{-s/4}$ by \Cref{clm:s-queries-bound2}. We now  bound the performance of the Bayes optimal estimator for ${\cal D}_{\actual, X}$ conditioned on the event 
 $\mathcal{E}$. 
 
  Let $\mathcal{A}_1 = \{z \in S_{\actual} : V_X(z) =1\}$ and  $\mathcal{A}_0 = \{z \in S_{\actual} : V_X(z) =0\}$. Using \Cref{clm:conditional-independence} and \Cref{clm:s-queries-bound1}, we have  the following observations: 
\begin{align}
& \bullet \ \ \ \textrm{If }x\in \mathbb{R}^n \ \textrm{is such that }S_{\actual}(x) \cap \mathcal{A}_1 \not =\emptyset, \text{~then~} \Pro_{\boldf \sim {\cal D}_{\actual, X}} [\boldf(x)=0] =1. \nonumber\\
& \bullet \ \ \ \textrm{If }x\in \mathbb{R}^n \ \textrm{is such that }S_{\actual}(x) \cap \mathcal{A}_1  =\emptyset, \text{~then} \Pro_{\boldf \sim {\cal D}_{\actual, X}} [\boldf(x)=1] =(1-p)^{|S_{\actual}(x) \setminus \mathcal{A}_0|}.  \nonumber \\
&\bullet 
\ \ \  \Pro_{\bx \sim \Nn}[S_{\actual}(\bx) \cap \mathcal{A}_1 \not = \emptyset] \le \sum_{z  \in \mathcal{A}_1}  \Pr[\slab_z(\bx)=0]  \leq \frac{|\mathcal{A}_1|}{s^{100}} \le \frac{1}{s^{99}}. \label{eq:negative-points}
\end{align}
 The last inequality uses \Cref{clm:s-queries-bound1} to bound the size of $|\mathcal{A}_1|$ and the definition of $\slab_z(\cdot)$. Next, for any $z \in \mathcal{A}_0$,  observe that 
 \[
 \Ex_{\bx \sim \Nn}[\Ind[z \in S_{\actual}(\bx)]]   = \Pro_{\bx \sim \Nn}[\slab_z(\bx)=1] = \frac{1}{s^{100}}.
 \]
 This immediately implies that 
 \[
 \Ex_{\bx \sim \Nn} \bigg[\sum_{z \in \mathcal{A}_0} \Ind [z \in S_{\actual}(\bx)] \bigg]  = \frac{|\mathcal{A}_0|}{s^{100}} \le \frac{2}{p \cdot s^{99}}. 
 \]
 By Markov's inequality, this implies that 
 \begin{equation}~\label{eq:Markov}
 \Pro_{\bx \sim \Nn} \bigg[ \sum_{z \in \mathcal{A}_0} \Ind [z \in S_{\actual}(\bx)] \ge \frac{2}{ps^{98}}\bigg] \le \frac{1}{s}. 
 \end{equation}
Let us say that $x \in \mathbb{R}^n$ is \emph{good} if $S_{\actual}(x) \cap \mathcal{A}_1 = \emptyset$ and 
 \[
 \sum_{z \in \mathcal{A}_0} \Ind [z \in S_{\actual}(x)] \red{\ge} \frac{2}{ps^{98}}. 
 \]
By \eqref{eq:Markov}, we have that $\Pr_{\bx \sim \Nn}[\bx \textrm{ is good}] \le 1/s$. We observe that for any good $x$, we have
 \[
 |S_{\actual}(x) | - \frac{2}{ps^{98}} \le |S_{\actual}(x) \setminus \mathcal{A}_0| \le |S_{\actual}(x) | .
 \]
It follows that 
 \[
 (1-p)^{|S_{\actual}(x)|}  \cdot (1-p)^{-\frac{2}{ps^{98}}} \ge (1-p)^{|S_{\actual}(x) \setminus \mathcal{A}_0|} \ge (1-p)^{|S_{\actual}(x)|} .
 \]
 Using the fact that $(1-p)^{-\frac{2}{ps^{98}}} \le 1 + \frac{\red{4}}{s^{98}}$, we have that
 \[
 (1-p)^{|S_{\actual}(x)|} \cdot \bigg( 1+ \frac{\red{4}}{s^{98}}\bigg)  \geq  (1-p)^{|S_{\actual}(x) \setminus \mathcal{A}_0|} \ge (1-p)^{|S_{\actual}(x)|}.
 \]
This implies that for any $x \in \R^n$ which is good, 
\begin{equation}~\label{eq:x-good}
\bigg| {\cal D}_{\actual, X}(x) - \frac{1}{2} \bigg| = \bigg| (1-p)^{|S_{\actual}(x) \setminus \mathcal{A}_0|} - \frac{1}{2} \bigg| \le \bigg| (1-p)^{|S_{\actual}(x) |} - \frac{1}{2} \bigg| + \frac{\red{4}}{s^{98}}. 
\end{equation}
Combining this with \eqref{eq:relate-S(x)}, \eqref{eq:Markov} and \eqref{eq:negative-points}, we get that 
\[
\Ex_{\bx \sim \Nn} \bigg[ \bigg| {\cal D}_{\actual, X}(x) - \frac{1}{2} \bigg|\bigg] \le \frac{1}{s} + \frac{\red{4}}{s^{98}} + \frac{1}{s^{99}} + O\bigg( \frac{\log s \cdot \sqrt{\log n}}{\sqrt{n}}\bigg). 
\]
This bounds the error of the Bayes optimal classifier for ${\cal D}_{\actual,X}$ conditioned on $\mathcal{E}$ to be at least $\red{{\frac 1 2}  -} O(\log s \cdot \sqrt{\log n}/\sqrt{n})$. Observing that $\Pr[\mathcal{E}] \ge 1-e^{-s/4}$ and $s\ge n$, the proof of \Cref{thm:our-BBL-lb} is complete. \qed

\section*{Acknowledgement}

We thank Ran Raz for alerting us to an error in an earlier proof of \Cref{claim:raz} and for telling us about the Borel-Kolmogorov paradox.

\bibliography{allrefs}{}
\bibliographystyle{alpha}

\appendix


\section{Hermite analysis over $\R^n$} \label{sec:hermite}

We consider functions $f : \R^n \to \R$, where we think of the inputs $x$ to $f$ as being
distributed according to the standard $n$-dimensional Gaussian  distribution $N(0,1)^n$. In this context we view the space of
all real-valued square-integrable functions as an inner product space with inner product $\langle{f},{h} \rangle= \E_{\bx \sim N(0,1)^n}[f(\bx)h(\bx)]$.
In the case $n = 1$, there is a sequence of Hermite polynomials $h_0(x) \equiv 1, h_1(x) = x, h_2(x) = (x^2 -1)/\sqrt{2},\ldots$ that form a complete orthonormal basis for the space.  These polynomials can be defined via $\exp(\lambda x-\lambda^2/2)=\sum{d=0}^{\infty}(\lambda^d/\sqrt{d!}) h_d(x)$. In the case of general $n$, we have that the collection of $n$-variate
polynomials $\{H_S(x) := \prod_{i=1}^{n} h_{S_i}(x_i)\}_{S \in \N^n}$ forms a complete orthonormal basis for the space. 

Given a square integrable function $f : \R^n \to \R$ we define its Hermite coefficients by $\tilde{f}(S) = \langle{f},{H_S}\rangle,$ for $S\in \N^n$
and we have that $f(x) = \sum_{S}\tilde{f}(S)H_S(x)$ (with the equality holding in $\calL^2$).  Plancherel's and Parseval's identities are easily seen to hold in this setting, i.e.~for square-integrable functions $f,g$ we have $\E_{\bx \sim N(0,1)^n}[f(\bx)g(\bx)] = \sum_{S \in \N^n} \tilde{f}(S) \tilde{g}(S)$ and as a special case $\E_{\bx \sim N(0,1)^n}[f(\bx)^2] = \sum_{S \in \N^n} \tilde{f}(S)^2$.
Since $\tilde{f}(S)=\E_{\bg \sim N(0,1)^n}[f(\bg)]$, we observe that
$\sum_{|S| \geq 1} \tilde{f}(S)^2 = \Var[f(\bg)]$.

For $S \in \N^n$ we write $|S|$ to denote $S_1 + \cdots + S_n$.  For $j=0,1\dots$ we write $\mathsf{W}^{=j}[f]$ to denote the level-$j$ Hermite weight of $f$, i.e.~$\sum_{S\subset \N^n, |S| = j} \tilde{f}(S)^2.$  We similarly write $\mathsf{W}^{\leq j}[f]$ to denote $\sum_{S\subset \N^n, |S| \leq j} \tilde{f}(S)^2.$

\section{Fourier weight of monotone functions} \label{app:hermite-weight}

For completeness we give a proof of the following well known result in the analysis of Boolean functions:

\begin{claim}
Let $f: \bn \rightarrow \bits$ be a monotone function. Then the squared  Fourier weight at levels $0$ and $1$ is $\Omega(\frac{\log^2 n}{n})$, i.e. we have
\[
\sum_{S \subseteq n, |S| \leq 1} \hat{f}(S)^2 = \Omega\left(\frac{\log^2 n}{n}\right).
\]
This lower bound is best possible up to constant factors. 
\end{claim}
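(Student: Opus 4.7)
The key starting observation is the classical identity that for monotone $f:\bn\to\bits$, every degree-$1$ Fourier coefficient coincides with an influence: $\hat f(\{i\}) = \mathbb{E}[f(\bx)\bx_i] = \Pr_\bx[f(\bx^{i\to +1})=+1,\; f(\bx^{i\to -1})=-1] = \mathrm{Inf}_i[f]$, the last equality using monotonicity. Hence
\[
\sum_{|S|\le 1}\hat f(S)^2 \;=\; \hat f(\emptyset)^2 \;+\; \sum_{i=1}^n \mathrm{Inf}_i[f]^2,
\]
and the goal is to lower bound this quantity by $\Omega(\log^2 n/n)$.

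Next I will split into two cases based on $|\hat f(\emptyset)|$. If $|\hat f(\emptyset)| \ge (\log n)/\sqrt n$, then $\hat f(\emptyset)^2 \ge \log^2 n/n$ and we are done immediately. Otherwise $|\hat f(\emptyset)| < (\log n)/\sqrt n$, so for $n$ large enough $\mathrm{Var}[f] \ge 1/2$, and it remains to prove $\sum_i \mathrm{Inf}_i[f]^2 = \Omega(\log^2 n/n)$. Within this case, I subdivide once more: if some $\mathrm{Inf}_i[f] \ge (\log n)/\sqrt n$, then $\sum_i\mathrm{Inf}_i[f]^2 \ge \log^2 n/n$ trivially; so assume $\max_i \mathrm{Inf}_i[f] < (\log n)/\sqrt n$, which forces $\log(e/\mathrm{Inf}_i[f]) \ge \tfrac13 \log n$ for every $i$ (for $n$ large).

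The main analytic ingredient is then Talagrand's sharpening of the edge-isoperimetric / Poincar\'e inequality on the cube:
\[
\sum_{i=1}^n \frac{\mathrm{Inf}_i[f]}{\log(e/\mathrm{Inf}_i[f])} \;\ge\; c\cdot \mathrm{Var}[f].
\]
Combining this with the uniform lower bound $\log(e/\mathrm{Inf}_i[f]) \ge \tfrac13\log n$ yields the total-influence bound $I(f) = \sum_i \mathrm{Inf}_i[f] \ge c'\mathrm{Var}[f]\log n = \Omega(\log n)$. A single application of Cauchy--Schwarz, $I(f)^2 \le n \sum_i \mathrm{Inf}_i[f]^2$, then gives $\sum_i \mathrm{Inf}_i[f]^2 \ge I(f)^2/n = \Omega(\log^2 n/n)$, completing the lower bound. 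The step I expect to be the main (and really only) ``obstacle'' is invoking Talagrand's inequality in the right form; everything else is bookkeeping and case analysis.

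For the tightness claim, I will exhibit the standard extremal example: the balanced tribes function $\mathsf{Tribes}_{w,t}$ with tribe width $w\approx \log_2 n - \log_2\log_2 n$ and $t=n/w$ tribes, tuned so that $\hat f(\emptyset)=0$. A direct computation shows $\mathrm{Inf}_i[f]=\Theta(\log n/n)$ for every coordinate $i$, hence $\sum_{|S|\le 1}\hat f(S)^2 = n\cdot\Theta(\log n/n)^2 = \Theta(\log^2 n/n)$, matching the lower bound up to constants.
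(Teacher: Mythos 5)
Your proof is correct, and it reaches the same conclusion by a genuinely different route than the paper. Both proofs begin identically: identify $\hat f(\{i\})=\mathrm{Inf}_i[f]$ for monotone $f$, dispose of the case $|\hat f(\emptyset)|$ large, and finish the total-influence-large case by Cauchy--Schwarz. The divergence is in the remaining case, where all the work lives. The paper's argument handles it via Friedgut's junta theorem: if $\mathrm{Inf}(f)$ is small (less than $\log n / C$), then Friedgut gives a junta $J$ of size at most $\sqrt n$ carrying almost all the Fourier weight, and since $\sum_{i\in J}\mathrm{Inf}_i[f]=\Omega(1)$ while $|J|\le\sqrt n$, Cauchy--Schwarz over $J$ yields $\sum_i\hat f(i)^2=\Omega(1/\sqrt n)$, which overshoots the needed bound. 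You instead invoke Talagrand's inequality $\sum_i \mathrm{Inf}_i[f]/\log(e/\mathrm{Inf}_i[f])\ge c\,\mathrm{Var}[f]$, subcasing on whether some $\mathrm{Inf}_i[f]\ge(\log n)/\sqrt n$; in the complementary subcase the uniform bound $\log(e/\mathrm{Inf}_i[f])\ge\frac13\log n$ turns Talagrand into $I(f)=\Omega(\log n)$, and Cauchy--Schwarz over all $n$ coordinates gives exactly the claimed $\Omega(\log^2 n/n)$. Both Friedgut and Talagrand ultimately rest on hypercontractivity, so the two routes are comparable in depth, but yours is arguably more streamlined: it avoids invoking a junta structure theorem and avoids the three-way split on the size of $\mathrm{Inf}(f)$, producing the target bound directly rather than a stronger-in-one-case bound. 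One minor remark on the tightness discussion: you say the tribes function can be ``tuned so that $\hat f(\emptyset)=0$,'' which is typically not achievable exactly for integer width and tribe count; what matters, and is true, is that $\hat f(\emptyset)^2=O(\log^2 n/n^2)=o(\log^2 n/n)$, so the level-$0$ contribution is negligible and the computation $n\cdot\Theta(\log n/n)^2=\Theta(\log^2 n/n)$ stands.
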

\begin{proof}
We first show that the  $\Omega(\frac{\log^2 n}{n})$ lower bound on the level $0$ and $1$ Fourier weight cannot be asymptotically improved by considering the so-called \textsf{TRIBES} function. This is a simple read-once monotone DNF (see Section~4.2 of \cite{ODBook}  for the exact definition). In particular, the $n$-variable function $f=\textsf{TRIBES}_n$ has the following properties: 

\begin{enumerate}
\item $f$ is monotone (and hence the influence $\Inf_i(f)$ of variable $i$ on $f$ equals the degree-1 Fourier coefficient $\hat{f}(i)$);
\item $\Pr_{\bx \in \bn} [f(\bx)=1] = \frac{1}{2} \pm O \big( \frac{\log n}{n}\big)$ (see Proposition~4.12 in \cite{ODBook});
\item For all $1 \le i \le n$, the influence of variable $i$ on $f$ is $O \big( \frac{\log n}{n}\big)$ (see Proposition~4.13 in \cite{ODBook}).
\end{enumerate}
Together, Items~1, 2 and~3 imply that the \textsf{TRIBES} function is indeed a tight example for our claim.

We now prove a lower bound on the squared Fourier weight of any monotone $f$.This is done via a  case analysis: 
\begin{enumerate}
\item If $|\widehat{f}(0)| \geq 0.01$, then the squared Fourier weight at level $0$ is at least $10^{-4}$. 
\item We can now assume $|\widehat{f}(0)| < 0.01$. This implies that $\Var[f] > 0.99$. Now suppose that $\Inf(f) = \sum_{i=1}^n \widehat{f}(i)$ is at least $\frac{\log n}{C}$ where $C$ is some absolute constant which is fixed in Step 3. Then we immediately get that the squared level-$1$ weight is at least 
 \[
\sum_{i=1}^n \widehat{f}(i)^2 \ge \frac{1}{n} \cdot \big( {\sum_{i=1}^n \widehat{f}(i)} \big)^2 = \frac{\Inf(f)^2}{n} = \Omega\bigg(\frac{\log^2 n}{n}\bigg). 
 \]
 \item The only remaining case is that $\Var[f] >0.99$ and $\Inf(f) < \frac{\log n}{C}$. Now, by choosing $C>0$ to be large enough, by Friedgut's junta theorem (see Section~9.6 of \cite{ODBook} it follows that there is a set $J \subseteq [n]$ such that (i) $|J| \le \sqrt{n}$ and (ii) the Fourier spectrum of 
 $f$ has total mass at most $0.01$ outside of the variables in $J$. From this, we get
 \[
 \sum_{i \in J} \widehat{f}(i) = \sum_{i \in J} \Inf_i(f) \ge  \sum_{S \subseteq J} \widehat{f}(S)^2 \geq 0.99. 
 \]
 Since $|J| \le \sqrt{n}$, we get that 
 $$
 \sum_{i} \widehat{f}(i)^2 \ge \sum_{i \in J} \widehat{f}(i)^2 \ge \frac{0.99}{\sqrt{n}}, 
 $$
and the proof is complete.
\end{enumerate}



\end{proof}

\section{Surface area and noise stability in the high noise rate regime}\label{app:stability}

In this section we give a concrete example showing that surface area is not a good proxy for noise stability at high noise rates. We do this by exhibiting two functions $\Psi_1$ and $\Psi_2$ such that (i) they have the same surface area  (up to a $\Theta(1)$ factor), but (ii) at noise rate $t= \Theta(\log n)$, the noise stability of $\Psi_1$ is exponentially smaller than that of $\Psi_2$. We now define the functions $\Psi_1$ and $\Psi_2$. 

\begin{definition}~\label{def:Psi}
Let functions $\Psi_1, \Psi_2: \mathbb{R}^n \rightarrow \{-1,1\}$ be defined as follows: 
\begin{enumerate}
\item Let $T := n^{1/4}$ and define $\Psi_1(x) := \prod_{i=1}^T \sign(x_i)$. 
\item $\Psi_2: \mathbb{R}^n \rightarrow \{-1,1\}$ is the function defined by Nazarov in \cite{Nazarov:03}; it is the indicator function of a convex body with surface area  $\Theta(n^{1/4})$.
\end{enumerate}
\end{definition} 
We observe that the boundary of $\Psi_1^{-1}(1)$ consists of $T$ hyperplanes that pass through the origin, and consequently we have that $\mathsf{surf}(\Psi_1) = T/\sqrt{2\pi},$ where the constant $1/\sqrt{2\pi}$ is the value of the pdf of a standard univariate $N(0,1)$ Gaussian at zero.\ignore{
}
Thus, we have that $\mathsf{surf}(\Psi_1) = \Theta(\mathsf{surf}(\Psi_2))$.  However, the following claim shows that the noise stabilities of these two functions are very different from each other at large noise rates:
\begin{claim}~\label{clm:noise-stability-gap}
For $t \ge 0$, $\mathsf{Stab}_{t} (\Psi_1) = ({\frac 2 \pi} \cdot \arcsin(e^{-t}))^{n^{1/4}}$ and $\mathsf{Stab}_{t} (\Psi_2) = \Omega(n^{\red{-2}} \cdot e^{-2t})$. In particular, for $t = \Theta(1)$ we have that $\mathsf{Stab}_t(\Psi_1) = e^{-\Theta(n^{1/4})}$ and $\mathsf{Stab}_{t} (\Psi_2)  = \Omega(1/\red{n^2})$. 
\end{claim}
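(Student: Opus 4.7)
The plan is to treat the two assertions entirely separately, since they rely on different tools.

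For $\Psi_1$, the strategy is to exploit the product structure. Since $\Psi_1(x)=\prod_{i=1}^{T}\mathrm{sign}(x_i)$ depends only on the first $T$ coordinates, and the Ornstein-Uhlenbeck operator $P_t$ acts coordinatewise and independently, the noise stability factorizes:
\[
\mathsf{Stab}_t(\Psi_1)=\prod_{i=1}^{T}\Ex_{(\bu_i,\bv_i)}\bigl[\mathrm{sign}(\bu_i)\mathrm{sign}(\bv_i)\bigr],
\]
where $(\bu_i,\bv_i)$ is a pair of $N(0,1)$ Gaussians with correlation $e^{-t}$. I would then invoke Sheppard's formula (equivalently, Grothendieck's identity) which says that for such a correlated pair, $\Ex[\mathrm{sign}(\bu)\mathrm{sign}(\bv)]=\tfrac{2}{\pi}\arcsin(e^{-t})$. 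Substituting $T=n^{1/4}$ gives the closed form $\mathsf{Stab}_t(\Psi_1)=\bigl(\tfrac{2}{\pi}\arcsin(e^{-t})\bigr)^{n^{1/4}}$.

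For $\Psi_2$, the strategy is much shorter: $\Psi_2$ is by construction the indicator of a convex body, so the bound $\mathsf{Stab}_t(\Psi_2)\ge e^{-2t}/n^{2}$ is immediate from \Cref{corr:noise-stab}. No special property of Nazarov's construction beyond convexity is needed for this direction.

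Finally, to deduce the quantitative separation at $t=\Theta(1)$, I would observe that $\tfrac{2}{\pi}\arcsin(e^{-t})$ is a constant strictly less than $1$ when $t$ is bounded away from $0$, so $\mathsf{Stab}_t(\Psi_1)=\bigl(\tfrac{2}{\pi}\arcsin(e^{-t})\bigr)^{n^{1/4}}=e^{-\Theta(n^{1/4})}$, while $e^{-2t}=\Theta(1)$ yields $\mathsf{Stab}_t(\Psi_2)=\Omega(1/n^{2})$. There is no real obstacle here; the only nontrivial ingredient is Sheppard's formula, which is classical. The novelty is precisely the use of \Cref{corr:noise-stab}, which is what makes the lower bound on $\mathsf{Stab}_t(\Psi_2)$ nontrivial in the high noise rate regime and hence makes the separation possible.
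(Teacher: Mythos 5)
Your proposal is correct and matches the paper's proof essentially step for step: the factorization of $\mathsf{Stab}_t(\Psi_1)$ over the $T$ coordinates, the appeal to Sheppard's formula for the bivariate sign-sign expectation, and the direct invocation of \Cref{corr:noise-stab} (in its general-convex form, giving the $n^{-2}$ factor) for $\Psi_2$. There is no substantive difference between your argument and the one in the paper.
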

\begin{proof}
The lower bound on the noise stability of $\Psi_2$ is simply \Cref{corr:noise-stab}. The noise stability of $\Psi_1$ can be computed as 
\begin{eqnarray}
\mathsf{Stab}_{t} (\Psi_1) &=& \Ex_{\bg, \bg' \sim \Nn}[\Psi_1(\bg) \cdot \Psi_1(e^{-t} \bg + \sqrt{1-e^{-2t}} \bg')] \nonumber\\ 
&=& \prod_{i=1}^T \Ex_{\bg, \bg' \sim \Nn}[ \sign(\bg_i) \sign (e^{-t} \bg_i + \sqrt{1-e^{-2t}} \bg'_i)] \label{eq:stab-t-product}
\end{eqnarray}
The well known Sheppard's Formula  (see e.g.~Example~11.19 of \cite{ODBook}) states that
\[
\Ex_{\bg_i, \bg'_i \sim N(0,1)}[ {\sign}(\bg_i) {\sign} (e^{-t} \bg_i + \sqrt{1-e^{-2t}} \bg'_i)] = \frac{2}{\pi} \arcsin(e^{-t}). 
\]
Plugging this back into \eqref{eq:stab-t-product}, we get the claim. 
\end{proof}


\section{Correlation of a fixed vector with a random unit vector}
In this section, we prove the following lemma. 
\begin{lemma}~\label{lem:corr}
Let $v \in \mathbb{S}^{n-1}$ be a fixed vector and $\bu \in \mathbb{S}^{n-1}$ be a uniformly drawn element of $\mathbb{S}^{n-1}$. For $0 < \eps < 1$ and $1/2 \ge \beta \ge \alpha >\frac{1}{\sqrt{n}}$ such that $\beta = (1+ \epsilon) \alpha$, we have\ignore{\rnote{This was
\[
1 \ge \frac{\Pr[|\langle v, \bu \rangle| \ge \alpha]}{\Pr[|\langle v, \bu \rangle| \ge  \beta]}  \ge 1 {-} O(n \alpha^2 \epsilon). 
\]
but that inequality goes the wrong way, doesn't it --- the numerator is larger than the denominator?  Things I changed are in \red{red} below.
}}
\[
1 \le \frac{\Pr[|\langle v, \bu \rangle| \ge \alpha]}{\Pr[|\langle v, \bu \rangle| \ge  \beta]}  \le 1 + O(n \alpha^2 \epsilon)
\]
provided that $n\cdot \alpha^2 \cdot \epsilon \le\frac{ 1}{8e^2}$.
\end{lemma}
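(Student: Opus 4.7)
The plan is to reduce the claim to an estimate on the one-dimensional marginal of $\bu$ in the direction $v$. By rotational invariance, $\langle v, \bu\rangle$ has density $f(t) = c_n(1-t^2)^{(n-3)/2}$ on $[-1,1]$, where $c_n = \Gamma(n/2)/(\sqrt{\pi}\,\Gamma((n-1)/2))$ is the usual normalizing constant. Consequently
\[
\frac{\Pr[|\langle v,\bu\rangle|\ge\alpha]}{\Pr[|\langle v,\bu\rangle|\ge\beta]} = \frac{\int_\alpha^1 (1-t^2)^{(n-3)/2}\,dt}{\int_\beta^1 (1-t^2)^{(n-3)/2}\,dt},
\]
so the constant $c_n$ cancels and everything reduces to estimating this ratio of integrals. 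The lower bound $\ge 1$ is immediate since $\alpha\le\beta$.

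For the upper bound, I would rewrite the ratio as $1 + \frac{N}{D}$ where $N := \int_\alpha^\beta (1-t^2)^{(n-3)/2}\,dt$ and $D := \int_\beta^1 (1-t^2)^{(n-3)/2}\,dt$, and then show $N/D = O(n\alpha^2\epsilon)$. Since the integrand is decreasing on $[0,1]$, I get the easy upper bound
\[
N \le (\beta-\alpha)(1-\alpha^2)^{(n-3)/2} = \alpha\epsilon\cdot(1-\alpha^2)^{(n-3)/2}.
\]
For the denominator I would use a short-interval lower bound: the logarithmic derivative of $(1-t^2)^{(n-3)/2}$ is $-(n-3)t/(1-t^2)$, so on the interval $[\beta,\beta+\delta]$ with $\delta = (1-\beta^2)/((n-3)\beta)$, the integrand stays at least a constant factor of $(1-\beta^2)^{(n-3)/2}$. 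This yields
\[
D \ge \Omega\!\left(\frac{(1-\beta^2)^{(n-3)/2}}{n\beta}\right),
\]
where I have used $\beta\le 1/2$ to absorb the factor of $1-\beta^2$.

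Combining these estimates gives
\[
\frac{N}{D} \le O\!\left(n\beta\alpha\epsilon \cdot \left(\frac{1-\alpha^2}{1-\beta^2}\right)^{(n-3)/2}\right).
\]
To bound the density-ratio factor, I would write $(1-\alpha^2)/(1-\beta^2) = 1 + (\beta^2-\alpha^2)/(1-\beta^2) \le 1 + 4(\beta^2-\alpha^2) \le 1 + 12\alpha^2\epsilon$, using $\beta\le 1/2$ and $\epsilon\le 1$; then $((1-\alpha^2)/(1-\beta^2))^{(n-3)/2} \le \exp(6n\alpha^2\epsilon)$, which under the hypothesis $n\alpha^2\epsilon \le 1/(8e^2)$ is an absolute constant. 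Combined with $\beta\le 2\alpha$, this yields $N/D = O(n\alpha^2\epsilon)$, as claimed.

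The main obstacle is keeping the constants in the short-interval lower bound on $D$ honest: a naive tail bound on $\chi^2$ gives an estimate off by a factor of $n$, so it is important to exploit log-concavity of the integrand on a scale of order $1/(n\beta)$ rather than comparing to a global Gaussian tail. Everything else is routine monotonicity/bookkeeping given the hypothesis $1/\sqrt{n} < \alpha \le \beta \le 1/2$.
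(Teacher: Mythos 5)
Your argument is correct and rests on the same core ingredients as the paper's proof (the exact sphere-cap formula $\Pr[\langle v,\bu\rangle\ge t]\propto\int_t^1(1-z^2)^{(n-3)/2}\,dz$, the decomposition of the ratio as $1+N/D$ with $N=\int_\alpha^\beta$ and $D=\int_\beta^1$, and a short-interval lower bound on a tail integral over a window of length $\Theta(1/(n\cdot\mathrm{point}))$), but the two proofs differ in where the short interval is anchored and, consequently, in how the side condition $n\alpha^2\epsilon\le 1/(8e^2)$ is used. The paper anchors at $\alpha$: it shows $\int_\alpha^1\ge \frac{1}{e^2}\cdot\frac{1}{n\alpha}\cdot F(\alpha)$ and $N\le\alpha\epsilon\,F(\alpha)$, and then uses the hypothesis to guarantee that the implicit lower bound $D=\int_\alpha^1-N\ge(\frac{1}{e^2 n\alpha}-\alpha\epsilon)F(\alpha)$ is strictly positive (and comparable to $\frac{F(\alpha)}{n\alpha}$), so it never needs to compare $F(\alpha)$ and $F(\beta)$. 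You instead anchor at $\beta$, obtaining the denominator bound $D\gtrsim F(\beta)/(n\beta)$ directly, which means you pick up the density ratio $F(\alpha)/F(\beta)=\bigl((1-\alpha^2)/(1-\beta^2)\bigr)^{(n-3)/2}$; your use of the hypothesis is then to cap this ratio as $\exp(O(n\alpha^2\epsilon))=O(1)$. Both routes reach the same bound $N/D=O(n\alpha^2\epsilon)$ (after using $\beta\le 2\alpha$). Yours is arguably a bit more transparent, because the lower bound on $D$ is obtained directly rather than as a difference of two estimates and the role of the density ratio is made explicit, at the cost of one extra exponential estimate; the paper's version is slightly more compact but hides the density-ratio issue inside the subtraction step. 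Your sketch is fine as stated; when fleshing out the ``stays at least a constant factor'' step you should check that $\beta+\delta$ stays bounded away from $1$ (it does, using $\beta\le 1/2$ and $\delta\lesssim\sqrt{n}/(n-3)$), and note that the paper's lower bound $\alpha>1/\sqrt n$ is what makes the window length $\delta=(1-\beta^2)/((n-3)\beta)$ comparable to, rather than larger than, $\beta$.
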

\begin{proof}
It is well-known (see~\cite{Baum:90}) and easy to verify that 
\begin{equation} \label{eq:baum}
\Pr[\langle v, \bu \rangle \ge \alpha] = \frac{A_{n-2}}{A_{n-1}} \int_{z=\alpha}^{1} (1-z^2)^{\frac{n-3}{2}}dz. 
\end{equation}
Here $A_{n-1}$ is the surface area of the sphere $\mathbb{S}^{n-1}$. By symmetry, this implies that 
\begin{equation}
\frac{\Pr[|\langle v, \bu \rangle| \ge \alpha]}{\Pr[|\langle v, \bu \rangle| \ge  \beta]} = \frac{\int_{z=\alpha}^{1} (1-z^2)^{\frac{n-3}{2}}dz}{\int_{z=\beta}^{1} (1-z^2)^{\frac{n-3}{2}}dz. }
\end{equation}
Define $F(\alpha)$ as 
\[
F(\alpha) = (1-\alpha^2)^{\frac{n-3}{2}}. 
\]
Define $\Delta = \frac{1}{n \alpha}$.  Observe that $\Delta \le \alpha$ (for our choice of $\alpha$) and $\Delta \alpha = \frac{1}{n}$.  Using this, we have
\[
(1-(\alpha + \Delta)^2) \ge (1-\alpha^2) (1-4 \alpha \Delta). 
\]
This implies 
\begin{equation}~\label{eq:Falphadelta}
F(\alpha + \Delta) = (1-(\alpha + \Delta)^2)^{\frac{n-3}{2}} \ge (1-\alpha^2)^{\frac{n-3}{2}} \cdot (1-4\alpha \Delta)^{\frac{n-3}{2}} \geq F(\alpha) \cdot \frac{1}{e^2}. 
\end{equation}
Then, using \eqref{eq:Falphadelta}, 
\begin{eqnarray}~\label{eq:Falphadelta2}
\int_{z=\alpha}^{1} (1-z^2)^{\frac{n-3}{2}}dz \ge \int_{z=\alpha}^{z=\alpha + \Delta} (1-z^2)^{\frac{n-3}{2}}dz \ge \frac{\Delta}{e^2} F(\alpha). 
\end{eqnarray}
On the other hand, 
\begin{equation}~\label{eq:Falphadelta3}
\int_{z=\alpha}^{z=\beta} (1-z^2)^{\frac{n-3}{2}} dz \le (\beta-\alpha) F(\alpha) = \epsilon \cdot \alpha \cdot F(\alpha) 
\end{equation}
Note that the assumption  $n \alpha^2 \epsilon \le 1/(8e^2)$  translates to $\epsilon \alpha \le \frac{\Delta}{8e^2}$. 
Combining \eqref{eq:Falphadelta3}, \eqref{eq:Falphadelta2} and this observation, we get 
\ignore{\rnote{This was 
\[
1 \ge \frac{\Pr[|\langle v, \bu \rangle| \ge \alpha]}{\Pr[|\langle v, \bu \rangle| \ge  \beta]}  \ge  1 - O(n \alpha^2 \epsilon). 
\]
}
}
\[
1 \le \frac{\Pr[|\langle v, \bu \rangle| \ge \alpha]}{\Pr[|\langle v, \bu \rangle| \ge  \beta]}  \le 1 + {\frac {\eps \alpha}{\Delta/e^2 - \eps \alpha}} \le 1 + O(n \alpha^2 \epsilon).
\]
\ignore{\rnote{Right now I am confused:  suppose $\Delta=\alpha=1/\sqrt{n}$ and $\eps=0.9$, then it seems the denominator can be negative, which shouldn't be possible.}
}

\end{proof} 

\end{document}